\title{Reordering Derivatives of Trace Closures of Regular Languages (Full Version)}
\author{Hendrik Maarand}{Department of Software Science, Tallinn University of Technology, Estonia}{hendrik@cs.ioc.ee}{https://orcid.org/0000-0002-1967-4297}{}
\author{Tarmo Uustalu}{School of Computer Science, Reykjavik University, Iceland \and Department of Software Science, Tallinn University of Technology}{tarmo@ru.is}{https://orcid.org/0000-0002-1297-0579}{}
\authorrunning{H. Maarand and T. Uustalu}
\keywords{Mazurkiewicz traces, trace closure, regular languages, finite automata,
 language derivatives, scattering rank, star-connected expressions}
\newcommand{\ssp}{\hspace*{5mm}}
\newcommand{\dg}{\circ}
\newcommand{\eqdf}{=_{\mathrm{df}}}
\newcommand{\Nat}{\mathbb{N}}
\newcommand{\Bool}{\mathbb{B}}
\newcommand{\bff}{\mathsf{ff}}
\newcommand{\btt}{\mathsf{tt}}
\renewcommand{\P}{\mathcal{P}}
\newcommand{\Mf}{\mathcal{M}_\mathrm{f}}
\newcommand{\RE}{\mathsf{RE}}
\newcommand{\sem}[1]{\llbracket #1 \rrbracket}
\newcommand{\shuffle}{\mathrel{{\sqcup} \hspace{-0.18em} {\sqcup}}}
\newcommand{\one}{\mathbf{1}}
\newcommand{\eps}{\varepsilon}
\newcommand{\imp}{\Longrightarrow}
\newcommand{\bwdimp}{\Longleftarrow}
\newcommand{\con}{\wedge}
\newcommand{\dis}{\vee}
\newcommand{\dn}{{\downarrow}}
\newcommand{\N}{N}
\newcommand{\R}{R}
\newcommand{\squig}{\rightsquigarrow}
\newcommand{\RR}{\mathbf{R}}
\newcommand{\lsim}{\mathrel{{\sim}{\lhd}}}
\newcommand{\rsim}{\mathrel{{\rhd}{\sim}}}
\newcommand{\Lex}{\mathrm{Lex}}
\begin{document}

\maketitle

\begin{abstract}
  We provide syntactic derivative-like operations, defined by
  recursion on regular expressions, in the styles of both Brzozowski
  and Antimirov, for trace closures of regular languages. Just as the
  Brzozowski and Antimirov derivative operations for regular
  languages, these syntactic reordering derivative operations yield
  deterministic and nondeterministic automata respectively. But trace
  closures of regular languages are in general not regular, hence
  these automata cannot generally be finite.  Still, as we show, for
  star-connected expressions, the Antimirov and Brzozowski automata,
  suitably quotiented, are finite. We also define a refined version of
  the Antimirov reordering derivative operation where
  parts-of-derivatives (states of the automaton) are nonempty lists of
  regular expressions rather than single regular expressions. We
  define the uniform scattering rank of a language and show that, for
  a regexp whose language has finite uniform scattering rank, the
  truncation of the (generally infinite) refined Antimirov automaton,
  obtained by removing long states, is finite without any quotienting,
  but still accepts the trace closure. We also show that
  star-connected languages have finite uniform scattering rank.
\end{abstract}

\section{Introduction}

Traces were introduced to concurrency theory by Mazurkiewicz
\cite{Mazurkiewicz77,Mazurkiewicz95} as an alternative to words. A
word can be seen as a linear order that is labelled with letters of
the alphabet.  Intuitively, the main idea of traces is that the linear
order, corresponding to sequentiality, is replaced with a partial
order. Sets of words (or word languages) can be used to describe the
behaviour of concurrent systems. Similarly, sets of traces (or trace
languages) can also be used for this purpose. The difference is that
descriptions in terms of traces do not distinguish between different
linear extensions (words) of the same partial order (trace)---they are
considered equivalent. Different linear extensions of the same partial
order can be seen as different observations of the same behaviour.

Given a word language $L$ and a letter $a$, the derivative of $L$
along $a$ is the language consisting of all the words $v$ such that
$av$ belongs to $L$. An essential difference between words and traces
is that a nonempty word (a linear order) has its first letter as the
unique minimal element, but a nonempty trace (a partial order) may
have several minimal elements. A trace from a trace language can be
derived along any of its minimal letters. Clearly, a minimal letter of
a trace need not be the first letter of a word representing this
trace.

It is well-known that the derivative of a regular word language along
a letter is again regular. Brzozowski \cite{Brzozowski64} showed that
a regexp for it can be computed from a regexp for the given language,
and Antimirov \cite{Antimirov96} then further optimized this result.
We show that these syntactic derivative operations generalize to trace
closures (i.e., closures under equivalence) of regular word languages
in the form of syntactic \emph{reordering derivative} operations.

The syntactic derivative operations for regular word languages provide
ways to construct automata from a regexp. The Brzozowski derivative
operation is a function on regexps while the Antimirov derivative
operation is a relation. Accordingly, they yield deterministic and
nondeterministic automata. The set of Brzozowski derivatives of a
regexp (modulo appropriate equations) and the set of Antimirov
parts-of-derivatives are finite, hence so are the resulting
automata. Our generalizations to trace closures of regular languages
similarly give deterministic and nondeterministic automata, but these
cannot be finite in general. Still, as we show, for a star-connected
expression, the Antimirov and Brzozowski automata, suitably
quotiented, are finite. We also develop a finer version of the
Antimirov reordering derivative, where parts-of-derivatives are
nonempty lists of regexps rather than single regexps, and we show that
the set of expressions that can appear in these lists for a given
initial regexp is finite.  We introduce a new notion of \emph{uniform
  scattering rank} of a language (a variant of Hashiguchi's scattering
rank \cite{Hashiguchi91}) and show that, for a regexp whose language
has finite uniform rank, a truncation of the refined reordering
Antimirov automaton accepts its trace closure despite the removed
states, and is finite, without any quotienting.

This is an extension of the conference paper \cite{confversion} with
proofs of the most important propositions and background material on
classical language derivatives and trace closures of regular
languages.


\section{Preliminaries on Word Languages} 

An \emph{alphabet} $\Sigma$ is a finite set (of \emph{letters}). A
\emph{word} over $\Sigma$ is a finite sequence of letters. The set
$\Sigma^*$ of all words over $\Sigma$ is the free monoid on $\Sigma$
with the empty word $\eps$ as the unit and concatenation of words
(denoted by ${\cdot}$ that can be omitted) as the multiplication.  By
$\pi_X(u)$ we mean the projection of a word $u$ to a subalphabet
$X \subseteq \Sigma$, i.e., $\pi_X(u)$ discards from $u$ all letters
which are not in $X$.  We write $| u |$ for the length of a word $u$
and also $| X |$ for the size of a subalphabet $X$. By $| u |_a$ we
mean $| \pi_a (u) |$, i.e., the number of occurrences of $a$ in $u$.
By $\Sigma(u)$ we denote the set of letters that appear in $u$.

A \emph{(word) language} is a subset of $\Sigma^*$. The empty word and
concatenation of words lift to word languages via
$\one \eqdf \{ \eps \}$ and
$L \cdot L' \eqdf \{ uv \mid u \in L \con v \in L' \}$.

\subsection{Regular Languages}

The set $\RE$ of \emph{regular expressions} (in short, \emph{regexps})
over $\Sigma$ is given by the grammar
$ E, F ::= a \mid 0 \mid E + F \mid 1 \mid EF \mid E^*$ where $a$
ranges over $\Sigma$.

The \emph{word-language semantics} of regular expressions is given by
a function $\sem{\_} : \RE \to \P \Sigma^*$ defined recursively by
\[
\small
\begin{array}{rcl@{\qquad}rcl}
\sem{a}      & \eqdf & \{ a \} 
& \sem{1}      & \eqdf & \one \\
\sem{0}      & \eqdf & \emptyset 
& \sem{E F}   & \eqdf & \sem{E} \cdot \sem{F} \\
\sem{E + F} & \eqdf & \sem{E} \cup \sem{F} 
& \sem{E^*}    & \eqdf & \mu X.\, \one \cup \sem{E} \cdot X 
\end{array}
\]

A word language $L$ is said to be \emph{regular} (or \emph{rational})
if $L = \sem{E}$ for some regexp $E$.
Kleene algebras are defined by an equational theory.  
It was shown by Kozen~\cite{Kozen94} that the set
$\{\sem{E} \mid E \in \RE\}$ of all regular languages together with
the language operations $\emptyset$, $\cup$, $\one$, ${\cdot}$,
$(\_)^*$ is the free Kleene algebra on $\Sigma$.  An important
property for us is that $E \doteq F$ iff $\sem{E} = \sem{F}$ where
$\doteq$ refers to valid equations in the Kleene algebra theory.

Kleene's theorem~\cite{Kleene56} says that a word language is rational
iff it is \emph{recognizable}, i.e., accepted by a finite
deterministic automaton (acceptance by a finite nondeterministic
automaton is an equivalent condition because of
determinizability~\cite{RabinS59}).

\subsection{Derivatives of a Language} \label{sec:derivatives-of-a-language}

A word language $L$ is said to be \emph{nullable} $L \dn$, if
$\eps \in L$.  The \emph{derivative} (or \emph{left
  quotient})\footnote{We use the word `derivative' both for languages
  and expressions, reserving the word `quotient' for quotients of sets
  by equivalence relations.} of $L$ along a word $u$ is defined by
$D_u L \eqdf \{v \mid uv \in L\}$.
For any $L$, we have $D_\eps L = L$ as well as
$D_{uv} L = D_v (D_u L)$ for any $u, v \in \Sigma^*$, i.e., the
operation $D : \P \Sigma^* \times \Sigma^* \to \P \Sigma^*$ is a right
action of $\Sigma^*$ on $\P\Sigma^*$.
%
We also have
$L = \{ \eps \mid L \dn \} \cup \bigcup \{\{a\} \cdot D_a L \mid a \in
\Sigma \}$,
and for any $u \in \Sigma^*$, we have $u \in L$ iff $(D_u L) \dn$.

Derivatives of regular languages are regular. A remarkable fact is
that they can be computed syntactically, on the level of regular
expressions. There are two constructions for this, due to Brzozowski
\cite{Brzozowski64} and Antimirov \cite{Antimirov96}. We review
these in the next two subsections.
The Brzozowski and Antimirov derivative operations yield deterministic
resp.\ nondeterministic automata accepting the language of a regular
expression $E$. The Antimirov automaton is finite. The Brzozowski
automaton becomes finite when quotiented by associativity,
commutativity and idempotence for $+$. Identified up to the Kleene
algebra theory, the states of the Brzozowski automaton correspond to
the derivatives of the language $\sem{E}$. Regular languages can be
characterized as languages with finitely many derivatives.

\subsection{Brzozowski Derivative}

Nullability and derivative are semantic notions, defined about
languages. However, Brzozowski~\cite{Brzozowski64} noticed that for
regular languages, one can compute nullability and the derivatives
syntactically, on the level of regular expressions.

\begin{definition} The \emph{syntactic nullability} and the
  \emph{Brzozowski derivative} of a regexp are given by functions
  ${\dn} : \RE \to \Bool$, $D : \RE \times \Sigma \to \RE$ and
  $D : \RE \times \Sigma^* \to RE$ defined recursively by
\[
\begin{array}{rclrcl}
b \dn & \eqdf & \bff & 
D_a b & \eqdf & \mathsf{if~} a = b \mathsf{~then~} 1 \mathsf{~else~} 0 \\
0 \dn & \eqdf & \bff &
D_a 0 & \eqdf & 0 \\
(E + F) \dn & \eqdf & E \dn \vee F \dn &
D_a (E + F) & \eqdf & D_a E + D_a F \\
1 \dn & \eqdf & \btt &
D_a 1 & \eqdf & 0 \\
(E F) \dn & \eqdf & E \dn \wedge F \dn &
D_a (E F) & \eqdf & \mathsf{if~} E \dn \mathsf{~then~} (D_a E) F + D_a F \mathsf{~else~} (D_a E) F \\  
(E^*) \dn & \eqdf & \btt &
D_a (E^*) & \eqdf & (D_a E) E^* \\[2ex]
& & & 
D_\eps E & \eqdf & E \\
& & & 
D_{ua} E & \eqdf & D_a (D_u E) 
\end{array}
\]
\end{definition}

\begin{proposition} For any $E$,
\begin{enumerate}
\item $\sem{E} \dn = E \dn$;
\item for any $a \in \Sigma$, $D_a \sem{E} = \sem{D_a E}$;
\item for any $u \in \Sigma^*$, $D_u \sem{E} = \sem{D_u E}$.
\end{enumerate}
\end{proposition}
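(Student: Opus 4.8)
The plan is to prove all three parts by induction on the structure of the regexp $E$, proving (1) and (2) simultaneously first, and then deriving (3) by a separate, trivial induction on the word $u$.

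First I would establish part (1), $\sem{E}\dn = E\dn$, by induction on $E$. Each case is a direct unfolding of both definitions: for the base cases $a$, $0$, $1$ one checks that $\eps$ is (not) in the corresponding language; for $E+F$, $EF$, $E^*$ one uses the semantic facts $\eps \in \sem{E}\cup\sem{F}$ iff $\eps\in\sem{E}$ or $\eps\in\sem{F}$, that $\eps\in \sem{E}\cdot\sem{F}$ iff $\eps\in\sem{E}$ and $\eps\in\sem{F}$ (since a word $uv=\eps$ forces $u=v=\eps$), and that $\eps$ always lies in $\sem{E^*}=\mu X.\,\one\cup\sem{E}\cdot X$. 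In each inductive case the claim follows by matching the Boolean connective in the definition of ${\dn}$ on regexps with the corresponding property of the language operation, invoking the induction hypothesis on the immediate subexpressions.

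Next I would prove part (2), $D_a\sem{E} = \sem{D_a E}$, again by induction on $E$, this time using part (1) freely wherever the definition of $D_a$ branches on nullability. The base cases: $D_a\sem{b} = \{v \mid av \in \{b\}\}$, which is $\{\eps\}$ if $a=b$ and $\emptyset$ otherwise, matching $\sem{D_a b}$; similarly $D_a\emptyset=\emptyset=\sem{0}$ and $D_a\one = \emptyset = \sem{0}$. For $E+F$ we use that $D_a$ distributes over union, $D_a(\sem{E}\cup\sem{F}) = D_a\sem{E}\cup D_a\sem{F}$, and apply the IH. The key case is concatenation: one needs the identity
\[
D_a(\sem{E}\cdot\sem{F}) \;=\; (D_a\sem{E})\cdot\sem{F} \;\cup\; \{\,v \in D_a\sem{F} \mid \eps\in\sem{E}\,\},
\]
i.e., $D_a(KL) = (D_aK)L$ when $\eps\notin K$ and $(D_aK)L \cup D_aL$ when $\eps\in K$ — a standard fact about left quotients, proved by splitting a word $w\in KL$ as $w=uv$ with $u\in K$, $v\in L$ and examining whether the letter $a$ is consumed inside $u$ or, when $u=\eps$, inside $v$. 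Combining this with part (1) ($\eps\in\sem{E}$ iff $E\dn$) and the IH on $E$ and $F$ matches exactly the two branches of $D_a(EF)$. For the star case, I would use the unfolding $\sem{E^*} = \one \cup \sem{E}\cdot\sem{E^*}$, so $D_a\sem{E^*} = D_a(\sem{E}\cdot\sem{E^*})$ (the $\one$ summand contributes nothing, as $a\neq\eps$); since $\eps\in\sem{E^*}$, the concatenation identity gives $(D_a\sem{E})\cdot\sem{E^*}\cup D_a\sem{E^*}$. Here one must be slightly careful: naively this is recursive in $D_a\sem{E^*}$ itself. The clean way is to verify directly that $(D_a\sem{E})\cdot\sem{E^*}$ already equals $D_a\sem{E^*}$, by observing that any nonempty word of $\sem{E^*}$ whose first letter is $a$ decomposes as $a$ followed by (tail of some $\sem{E}$-factor) followed by (the rest), so its $a$-derivative lies in $(D_a\sem{E})\cdot\sem{E^*}$, and conversely; then the IH on $E$ gives $(D_a\sem{E})\cdot\sem{E^*} = \sem{D_a E}\cdot\sem{E^*} = \sem{(D_a E)E^*} = \sem{D_a(E^*)}$.

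Finally, part (3) follows by induction on the word $u$. For $u=\eps$, $D_\eps\sem{E} = \sem{E} = \sem{D_\eps E}$ by definition. For $u = u'a$, we have $D_{u'a}\sem{E} = D_a(D_{u'}\sem{E})$ since $D$ is a right action of $\Sigma^*$ on languages (stated in the preliminaries), which by the IH equals $D_a\sem{D_{u'}E}$, which by part (2) applied to the regexp $D_{u'}E$ equals $\sem{D_a(D_{u'}E)} = \sem{D_{u'a}E}$.

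The only real obstacle is the concatenation and star cases of part (2): getting the case split on nullability to line up with the semantic left-quotient identity, and handling the apparent self-reference in the star case cleanly rather than circularly. Everything else is routine unfolding of definitions together with the induction hypotheses.
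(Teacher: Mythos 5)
Your proof is correct and is the standard argument: the paper states this classical result of Brzozowski without giving a proof, and your structural induction for (1) and (2) --- including the left-quotient identity for concatenation keyed on nullability and the direct (non-circular) treatment of the star case --- together with the word induction for (3) is exactly the expected route. No gaps.
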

 
\begin{corollary} For any $E$,
\begin{enumerate}
\item $\sem{E} = \{\eps \mid E \dn\} \cup \bigcup \{ \{a\} \cdot \sem{D_a E} 
  \mid a \in \Sigma\}$;
\item for any $a \in \Sigma$, $v \in \Sigma^*$, $av \in \sem{E}$ iff $v \in \sem{D_a E}$;
\item for any $u, v \in \Sigma^*$, $uv \in \sem{E}$ iff $v \in \sem{D_u E}$;
\item for any $u \in \Sigma^*$, $u \in \sem{E}$ iff $(D_u E) \dn$.
\end{enumerate}
\end{corollary}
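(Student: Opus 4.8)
The plan is to derive all four items of the corollary directly from the preceding Proposition together with the purely semantic facts about language derivatives recalled in Section~\ref{sec:derivatives-of-a-language}; no further induction on $E$ is needed, since the inductive content has already been discharged in proving the Proposition.

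I would first handle items~2 and~3, which are the same argument at two levels of generality. For item~2: by the definition $D_a \sem{E} = \{ v \mid av \in \sem{E}\}$ we have $av \in \sem{E}$ iff $v \in D_a \sem{E}$, and Proposition item~2 rewrites the latter set as $\sem{D_a E}$, giving $av \in \sem{E}$ iff $v \in \sem{D_a E}$. For item~3: likewise $uv \in \sem{E}$ iff $v \in D_u \sem{E}$ by the definition of $D_u$, and Proposition item~3 identifies $D_u \sem{E}$ with $\sem{D_u E}$. (Alternatively, item~3 follows from item~2 by induction on $u$, using $D_\eps E = E$, $D_{ua} E = D_a(D_u E)$, and the fact that $D$ is a right action on $\P\Sigma^*$.) Item~4 is then the $v = \eps$ instance of item~3: $u \in \sem{E}$ iff $\eps \in \sem{D_u E}$, and $\eps \in \sem{D_u E}$ is by definition $\sem{D_u E} \dn$, which equals $(D_u E) \dn$ by Proposition item~1 applied to the regexp $D_u E$.

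For item~1 I would instantiate the semantic decomposition $L = \{\eps \mid L \dn\} \cup \bigcup\{\{a\} \cdot D_a L \mid a \in \Sigma\}$, recalled in Section~\ref{sec:derivatives-of-a-language}, at $L = \sem{E}$, then rewrite $\sem{E} \dn$ as $E \dn$ using Proposition item~1 and each $D_a \sem{E}$ as $\sem{D_a E}$ using Proposition item~2.

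There is no real obstacle at the level of the corollary: each clause is a one-step rewriting, and the only bookkeeping point is that Proposition item~1 gets used at two different expressions (at $E$ for clause~1 of the corollary, at $D_u E$ for clause~4). The genuine work sits in the Proposition itself, proved by structural induction on $E$; there the two interesting cases are $E = E_1 E_2$, where one splits on $E_1 \dn$ to match the two branches in the definition of $D_a(E_1 E_2)$ and uses the semantic law $D_a(L_1 \cdot L_2) = (D_a L_1) \cdot L_2 \cup \{\eps \mid L_1 \dn\} \cdot (D_a L_2)$, and $E = E_1^*$, where one unfolds the fixpoint $\sem{E_1^*} = \one \cup \sem{E_1} \cdot \sem{E_1^*}$ once before differentiating.
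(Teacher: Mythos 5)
Your proof is correct and takes exactly the route the paper intends: the corollary is stated without proof precisely because each clause is an immediate rewriting of the corresponding semantic fact about $D_u L$ from Section 2.2 via the preceding Proposition ($\sem{E}\dn = E\dn$ and $D_u\sem{E} = \sem{D_u E}$). Your bookkeeping (applying Proposition item 1 at $D_u E$ for clause 4) and the sketch of the Proposition's own induction are both accurate.
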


The Brzozowski derivative operation gives a method for turning a
regular expression into a deterministic automaton. For a regexp $E$,
the set of states is $Q^E = \{D_u E \mid u \in \Sigma^*\}$, the
initial state is $q^E_0 = E$, the final states are
$F^E = \{E' \in Q^E \mid E' \dn \}$ and the transition function
$\delta^E$ is defined by $D$ restricted to $Q^E$.

This automaton is generally not finite, but its quotient by a suitable
syntactically defined equivalence relation on the state set is finite,
as we will see in the next subsection.

\subsection{Antimirov Derivative}

Antimirov \cite{Antimirov96} optimized Brzozowski's construction
essentially constructing a nondeterministic finite automaton (NFA)
instead of a DFA, with a smaller number of states and, crucially,
without having to identify states up to equations.

Antimirov's syntactic derivative operation is a multivalued function,
in other words, a relation. Antimirov spoke of ``partial derivatives''
or ``linear factors'', we prefer to use the term
``parts-of-derivatives''.

\begin{definition} The \emph{Antimirov parts-of-derivatives} of a regular
  expression along a letter or a word are given the relations
  ${\to} \subseteq \RE \times \Sigma \times \RE$ and
  ${\to^*} \subseteq \RE \times \Sigma^* \times \RE$ defined
inductively by
\[
\small
\begin{array}{l}
\infer{a \to (a, 1)}{
}
\quad
\infer{E + F \to (a, E')}{
  E \to (a, E')
}
\quad
\infer{E + F \to (a, F')}{
  F \to (a, F')
}
\\[2ex]
\infer{E F \to (a, E' F)}{
  E \to (a, E')
}
\quad
\infer{E F \to (a, F')}{
  E \dn
  &
  F \to (a, F')
}
\quad
\infer{E^* \to (a, E' E^*)}{
  E \to (a, E')
} \\[2ex]
\infer{E \to^* (\eps, E)}{
}
\quad
\infer{E \to^* (ua, E'')}{
  E \to^* (u, E') 
  &
  E' \to (a, E'')
}
\end{array}
\]   
\end{definition}

The Antimirov parts-of-derivatives compute the semantic derivative
collectively.\footnote{If we took languages to be multisets of words
  (i.e., introduced the notion of a word occurring in a language some
  number of times ) and adopted the obvious multisets-of-words semantics of
  regular expressions, the Antimirov parts-of-derivatives would
  compute regular expressions for a partition of the semantic
  derivative. In the sets-of-words semantics, however, overlaps are possible, so
  we do not get a partition.}

\begin{proposition} 
\label{prop:antimirov-correct}
For any $E$, 
\begin{enumerate}
\item for any $a \in \Sigma$, $D_a \sem{E} = \bigcup \{\sem{E'} \mid E \to (a, E')\}$;
\item for any $u \in \Sigma^*$,
  $D_u \sem{E} =  \bigcup \{\sem{E'} \mid E \to^* (u, E')\}$.
\end{enumerate}
\end{proposition}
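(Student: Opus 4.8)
The plan is to prove both parts by induction, with part (1) serving as the base case for part (2). For part (1), I would fix a letter $a$ and prove the set equality $D_a \sem{E} = \bigcup \{\sem{E'} \mid E \to (a, E')\}$ by structural induction on the regexp $E$. The base cases $E = b$ (for $b \in \Sigma$), $E = 0$, and $E = 1$ are immediate from the definitions: for $E = b$, both sides equal $\{\eps\}$ if $a = b$ and $\emptyset$ otherwise, since the only rule producing a part-of-derivative of $b$ is the axiom $a \to (a,1)$; for $E = 0$ and $E = 1$ there are no applicable rules and $D_a \sem{0} = D_a \sem{1} = \emptyset$. For the inductive step I would treat each of the three compound forms.

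For $E = F + G$, I would use that $D_a (\sem{F} \cup \sem{G}) = D_a \sem{F} \cup D_a \sem{G}$ together with the fact that the parts-of-derivatives of $F + G$ along $a$ are exactly those of $F$ together with those of $G$, so the union on the right splits accordingly, and the induction hypothesis finishes it. For $E = F G$, the key semantic identity is $D_a (\sem{F} \cdot \sem{G}) = (D_a \sem{F}) \cdot \sem{G} \cup (\text{if } \sem{F}\dn \text{ then } D_a \sem{G} \text{ else } \emptyset)$; I would match this against the two inference rules for $FG$, using that $E' \mapsto E' G$ sends $\sem{E'}$ to $\sem{E'}\cdot\sem{G} = \sem{E' G}$, that by the earlier proposition $\sem{F}\dn = F\dn$, and the induction hypothesis applied to $F$ and to $G$. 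For $E = F^*$, I would use $D_a \sem{F^*} = (D_a \sem{F}) \cdot \sem{F^*}$, which holds because any word in $\sem{F^*}$ starting with $a$ factors as (a nonempty $\sem{F}$-block starting with $a$) followed by a word in $\sem{F^*}$; matching against the single rule $F^* \to (a, F' F^*)$ and using the induction hypothesis for $F$ completes this case. I expect the $FG$ case to be the main obstacle, since it is the only one where nullability genuinely interacts with the rule set, and one must be careful that the union over parts-of-derivatives of $FG$ correctly recombines the two contributions without double-counting issues (which is harmless here since we work with sets, not multisets).

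For part (2), I would induct on the word $u$. The base case $u = \eps$ is trivial: $D_\eps \sem{E} = \sem{E}$, and the only derivation of $E \to^* (\eps, E')$ forces $E' = E$. For the inductive step, write $u = u' a$. Then $D_{u'a} \sem{E} = D_a (D_{u'} \sem{E})$ by the right-action property of $D$. By the induction hypothesis, $D_{u'} \sem{E} = \bigcup \{\sem{E'} \mid E \to^* (u', E')\}$, so $D_a$ of this is $\bigcup \{ D_a \sem{E'} \mid E \to^* (u', E')\}$ since $D_a$ commutes with unions. Applying part (1) to each $E'$ gives $\bigcup \{ \sem{E''} \mid E \to^* (u', E') \text{ and } E' \to (a, E'') \}$, which by the inference rule defining $\to^*$ on $u'a$ is exactly $\bigcup \{\sem{E''} \mid E \to^* (u'a, E'')\}$, as required. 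The only thing to check carefully is that $D_a$ distributes over arbitrary (here, finite) unions, which is immediate from the definition $D_a L = \{v \mid av \in L\}$.
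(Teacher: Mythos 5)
Your proof is correct and follows the standard route (structural induction on $E$ for part~1, using the semantic identities $D_a(L\cdot L') = (D_a L)\cdot L' \cup \{v \mid L\dn \wedge v \in D_a L'\}$ and $D_a(L^*) = (D_a L)\cdot L^*$; then induction on $u$ for part~2, using the right-action property and distributivity of $D_a$ over unions), which is exactly the argument the paper leaves implicit for this reviewed classical result of Antimirov. The only quibble is your framing that part~1 is ``the base case'' for part~2 --- it is in fact the lemma used in the inductive step --- but your execution of that step is correct.
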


\begin{corollary} For any $E$,
\begin{enumerate}
\item $av \in \sem{E} = \{u \mid \exists E'.\, E \to (a, E') \con v \in \sem{E'}\}$.
\item $uv \in \sem{E} = \{u \mid \exists E'.\, E \to^* (u, E') \con v \in \sem{E'}\}$.
\item $u \in \sem{E} = \{u \mid \exists E'.\, E \to^* (u, E') \con E' \dn\}$.
\end{enumerate}
\end{corollary}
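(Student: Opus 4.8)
The plan is to read off all three claims from Proposition~\ref{prop:antimirov-correct} by unfolding the definition of the semantic derivative; the inductive work has already been done there (and, for nullability, in the Brzozowski correctness proposition), so no further induction is needed.

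For the first claim I would start from the defining equation $D_a L = \{v \mid av \in L\}$, which immediately gives that $av \in \sem{E}$ iff $v \in D_a\sem{E}$. Proposition~\ref{prop:antimirov-correct}(1) then rewrites $D_a\sem{E}$ as $\bigcup\{\sem{E'} \mid E \to (a,E')\}$, and membership in this union is by definition exactly the assertion that there is some $E'$ with $E \to (a,E')$ and $v \in \sem{E'}$. The second claim is obtained in the same way, with the letter $a$ replaced by a word $u$, the relation $\to$ by $\to^*$, and Proposition~\ref{prop:antimirov-correct}(1) by Proposition~\ref{prop:antimirov-correct}(2); here one uses $D_u L = \{v \mid uv \in L\}$ directly, so the right-action law for $D$ is not even invoked.

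For the third claim I would combine the second with the two elementary facts recalled in Section~\ref{sec:derivatives-of-a-language}: that $u \in L$ iff $(D_u L)\dn$, i.e.\ $\eps \in D_u L$; and, in order to convert the semantic nullability test on a part-of-derivative into the syntactic one, that $\sem{E'}\dn = E'\dn$. Chaining these, $u \in \sem{E}$ iff $\eps \in D_u\sem{E}$ iff (by Proposition~\ref{prop:antimirov-correct}(2)) there is an $E'$ with $E \to^* (u,E')$ and $\eps \in \sem{E'}$, i.e.\ $\sem{E'}\dn$, i.e.\ $E'\dn$.

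Every step is a definitional unfolding or a direct appeal to a statement already proved, so I do not anticipate any genuine obstacle; the only point that warrants a moment's care is in the third claim, where one must not forget to pass from the semantic nullability predicate on languages to the syntactic one on expressions, a move licensed precisely by the clause $\sem{E}\dn = E\dn$ of the Brzozowski correctness proposition.
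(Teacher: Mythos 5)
Your proposal is correct and matches what the paper intends: the corollary is stated without proof precisely because it follows from Proposition~\ref{prop:antimirov-correct} by unfolding $D_u L = \{v \mid uv \in L\}$ and the definition of union, with the third item additionally using $u \in L \iff (D_u L)\dn$ and the clause $\sem{E'}\dn = E'\dn$ of the Brzozowski nullability correctness, exactly as you describe. No gaps.
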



The parts-of-derivatives of a regexp $E$ induce a nondeterministic
automaton. The state set is
$Q^E \eqdf \{E' \mid \exists u \in \Sigma^*.\, E \to^* (u, E')\}$.
The set of initial states is $I^E \eqdf \{E\}$. The set of final
states is $F^E \eqdf \{E' \in Q^E \mid E' \dn \}$. Finally, the
transition relation is defined by
$E' \to^E (a, E'') \eqdf E' \to (a, E'')$ for $E', E'' \in Q^E$.

The state set $Q^E$ is shown finite by proving it to be a subset of
another set that is straightforwardly seen to be finite.

\begin{definition} For any $E$, we define a set $E^{\to^*}$ of regexps
  recursively by
\[
\begin{array}{rcl}
E^{\to^*}       & \eqdf & \{E \} \cup E^{\to^+} \\ [2ex]

a^{\to^+}       & \eqdf & \{ 1 \} \\
0^{\to^+}       & \eqdf & \emptyset \\
(E + F)^{\to^+} & \eqdf & E^{\to^+} \cup F^{\to^+} \\
1^{\to^+}       & \eqdf & \emptyset \\
(EF)^{\to^+}    & \eqdf & E^{\to^+} \cdot \{F\} \cup F^{\to^+} \\
(E^*)^{\to^+}   & \eqdf & E^{\to^+} \cdot \{E^*\}
\end{array}
\]
\end{definition}

\begin{proposition} For any $E$,
\begin{enumerate}
\item $E^{\to^*}$ is finite, in fact, of cardinality linear in the
  size of $E$;
\item $Q^E \subseteq E^{\to^*}$.
\end{enumerate}
\end{proposition}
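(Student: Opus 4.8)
The plan is to prove both claims by structural induction on $E$, establishing them simultaneously since the definition of $E^{\to^*}$ and the recursive behaviour of $\to$ are both by recursion on regexp structure.

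For part~(1), finiteness and the linear size bound follow by a routine induction: each clause in the definition of $E^{\to^+}$ either produces a set of bounded size ($a^{\to^+}$, $0^{\to^+}$, $1^{\to^+}$) or combines the sets for immediate subexpressions by union or by right-concatenation with a fixed single regexp, and $|S \cdot \{F\}| \le |S|$. Tracking the cardinality carefully, one gets $|E^{\to^+}|$ bounded by (roughly) the number of occurrences of letters in $E$, hence $|E^{\to^*}|$ is linear in the size of $E$. This part I expect to be entirely mechanical.

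The substance is part~(2), $Q^E \subseteq E^{\to^*}$. First I would reformulate: it suffices to show that whenever $E \to^* (u, E')$, then $E' \in E^{\to^*}$. By the inductive definition of $\to^*$ (reflexivity plus one-step extension), and since $E \in E^{\to^*}$ trivially, this reduces to a single-step statement closed under composition, namely: if $E' \in E^{\to^*}$ and $E' \to (a, E'')$, then $E'' \in E^{\to^*}$. Because $E^{\to^*} = \{E\} \cup E^{\to^+}$, the cleanest route is to prove the stronger, self-contained claim

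\begin{quote}
if $E \to (a, E')$ then $E' \in E^{\to^+}$, \quad and \quad if $F \in E^{\to^+}$ and $F \to (a, F')$ then $F' \in E^{\to^+}$,
\end{quote}

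i.e.\ that $E^{\to^+}$ is closed under taking one further Antimirov step. Both halves go by induction on $E$, inspecting each derivation rule for $\to$ against the matching clause of $E^{\to^+}$. For instance, for $EF$: a step $EF \to (a, E'F)$ comes from $E \to (a, E')$, so by the induction hypothesis $E' \in E^{\to^+}$, hence $E'F \in E^{\to^+}\cdot\{F\} \subseteq (EF)^{\to^+}$; a step $EF \to (a, F')$ needs $E\dn$ and $F \to (a, F')$, so $F' \in F^{\to^+} \subseteq (EF)^{\to^+}$; and for the closure half one must also observe that an element of $(EF)^{\to^+}$ is either of the form $E'F$ with $E' \in E^{\to^+}$ — and a further step either rewrites inside $E'$, landing again in $E^{\to^+}\cdot\{F\}$, or (if $E'\dn$) passes to $F^{\to^+}$ — or already lies in $F^{\to^+}$, which is closed by the induction hypothesis for $F$. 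The $E^*$ case is analogous: elements of $(E^*)^{\to^+}$ have the form $E'E^*$ with $E' \in E^{\to^+}$, and a further step either stays in $E^{\to^+}\cdot\{E^*\}$ or, when $E'\dn$, takes a fresh step $E^* \to (a, E''E^*)$ with $E'' \in E^{\to^+}$, again landing in $(E^*)^{\to^+}$.

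The main obstacle is bookkeeping rather than conceptual difficulty: one must set up the induction so that the ``closure under one more step'' property is available as an induction hypothesis for subexpressions in exactly the cases ($EF$ with $E\dn$, and $E^*$) where a derivative of a subexpression can itself fire a new top-level rule. Phrasing the claim as a closure property of $E^{\to^+}$ (rather than merely ``every reachable state lies in $E^{\to^+}$'') is what makes the induction go through cleanly; once that is set up, every case is a direct match between a $\to$-rule and a defining clause of $(\_)^{\to^+}$.
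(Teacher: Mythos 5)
Your proposal is correct and matches the paper's approach: the paper's entire proof is ``Both parts by induction on $E$'', and your elaboration---a mechanical cardinality count for part (1), and for part (2) the reduction to the two mutually-proved facts that a first step from $E$ lands in $E^{\to^+}$ and that $E^{\to^+}$ is closed under further steps, both by induction on $E$---is exactly the standard way to discharge that induction, with all cases checking out.
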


\begin{proof}
Both parts by induction on $E$.
\end{proof}

\begin{corollary}
  For any $E$, the Antimirov automaton is finite.
\end{corollary}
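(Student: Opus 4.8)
The plan is to reduce the claim to the finiteness of the state set $Q^E$ of the Antimirov automaton and then read that off from the preceding proposition. Recall that this automaton is the tuple $(Q^E, \Sigma, {\to^E}, I^E, F^E)$, and that an automaton of this kind counts as finite exactly when its alphabet and its state set are finite: in that case the transition relation ${\to^E} \subseteq Q^E \times \Sigma \times Q^E$, the set of initial states $I^E = \{E\}$, and the set of final states $F^E \subseteq Q^E$ are finite as well. Since $\Sigma$ is finite by assumption, everything hinges on $Q^E$.

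First I would invoke the previous proposition, which states both that $Q^E \subseteq E^{\to^*}$ and that $E^{\to^*}$ is finite --- in fact of cardinality linear in the size of $E$. A subset of a finite set is finite, so $Q^E$ is finite, with cardinality at most linear in the size of $E$. Together with the finiteness of $\Sigma$ this gives the finiteness of the Antimirov automaton, as desired.

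I do not expect the corollary to present any obstacle of its own; all the substance lies in the proposition it rests on, which is proved by induction on the structure of $E$. Were one to carry out that induction, the only mildly delicate cases are $EF$ and $E^*$: there one must check that post-composing the parts-of-derivatives of a subexpression with a fixed regexp (with $F$, respectively with $E^*$) keeps them inside the inductively defined superset $E^{\to^*}$, i.e., inside $E^{\to^+} \cdot \{F\}$ respectively $E^{\to^+} \cdot \{E^*\}$. The definition of $E^{\to^+}$ is arranged precisely so that this closure property holds, while the base cases and the case for $+$ are immediate.
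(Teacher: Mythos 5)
Your proof is correct and follows exactly the paper's route: the corollary is an immediate consequence of the preceding proposition ($Q^E \subseteq E^{\to^*}$ and $E^{\to^*}$ finite, both by induction on $E$), plus finiteness of $\Sigma$. Nothing further is needed.
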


We note that the Antimirov automaton, constructed as above, while
canonical, is generally not trim: every state is reachable, but not
every state is generally coreachable (i.e., not every state needs to
have a path to some final state). A state $E'$ is not coreachable if
and only if $\sem{E'} = \emptyset$.  This is the case precisely when
$E'$ equals $0$ in the theory of idempotence of $+$ and the left and
right zero laws of $0$ wrt. $\cdot$.  The Antimirov automaton is
trimmed by removing the states that are not coreachable.

Now we can also show that a suitable quotient of the Brzozowski
automaton is finite.

For this we prove a syntactic version of
Proposition~\ref{prop:antimirov-correct} relating the Brzozowski
derivative and the Antimirov parts-of-derivatives. 

\begin{proposition} For any $E$, 
\begin{enumerate}
\item for any $a \in \Sigma$, $D_a E \doteq \sum \{E' \mid E \to (a, E')\}$;
\item for any $u \in \Sigma^*$,
  $D_u E \doteq \sum \{E' \mid E \to^* (u, E')\}$.
\end{enumerate}
(using the semilattice equations for $0, +$, that $0$ is left zero, and
distributivity of $\cdot$ over $+$ from the right).
\end{proposition}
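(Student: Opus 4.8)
The plan is to establish part~1 by structural induction on $E$, and then to obtain part~2 by induction on the word $u$, feeding part~1 into the induction step through the observation that $D_a$ commutes with finite sums up to the fragment's provable equality. Throughout, $\sum S$ for a finite set $S$ of regexps (with $\sum \emptyset = 0$) is well defined up to $\doteq$ by associativity, commutativity and idempotence of $+$, and the only moves I will make are rearranging such sums and pushing a factor in or out of one; this keeps all equational reasoning inside the advertised fragment.

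For part~1 the base cases are immediate: for $E = a$ the only applicable rule is $a \to (a,1)$, so $\sum\{E' \mid a \to (a,E')\} = \sum\{1\} = 1 = D_a a$; for $E = b$ with $b \neq a$, and for $E = 0$ and $E = 1$, no rule applies, so the set is empty, $\sum \emptyset = 0$, matching $D_a b$, $D_a 0$, $D_a 1$. For $E = F + G$ one uses that $\{E' \mid F + G \to (a,E')\}$ is the union of $\{E' \mid F \to (a,E')\}$ and $\{E' \mid G \to (a,E')\}$; splitting a sum over a union into the sum of the two subsums needs only associativity, commutativity and idempotence of $+$ (idempotence absorbing any element shared by both sets), after which the induction hypothesis together with $D_a(F+G) = D_a F + D_a G$ closes the case. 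For $E = F^*$ the only rule gives $\{E' \mid F^* \to (a,E')\} = \{F' F^* \mid F \to (a,F')\}$; here I would first prove the auxiliary identity $\sum\{F' F^* \mid F \to (a,F')\} \doteq \big(\sum\{F' \mid F \to (a,F')\}\big)\,F^*$, which is exactly right distributivity of $\cdot$ over $+$ when the set is nonempty and the left zero law $0\,F^* \doteq 0$ when it is empty, and then conclude by the induction hypothesis and $D_a(F^*) = (D_a F)\,F^*$.

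The case $E = F G$ is the \emph{crux} and the step I expect to require the most care, because of the case split on nullability. Here $\{E' \mid FG \to (a,E')\}$ decomposes as $\{F' G \mid F \to (a,F')\} \cup S$, where $S = \{G' \mid G \to (a,G')\}$ if $F \dn$ and $S = \emptyset$ otherwise. Reusing the auxiliary identity above (with $F^*$ replaced by $G$, and $0\,G \doteq 0$ for the empty case) gives $\sum\{F' G \mid F \to (a,F')\} \doteq (D_a F)\,G$; and $\sum S \doteq D_a G$ when $F \dn$, by the induction hypothesis for $G$, while $\sum S = \sum\emptyset = 0$ otherwise. Recombining, with the semilattice laws for $0,+$ absorbing the spurious $+\,0$ in the non-nullable branch, yields exactly the right-hand side of the defining clause for $D_a(FG)$, once one notes that the boolean $F \dn$ occurring here is the same syntactic nullability used in that clause. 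The overlap between the two sets in the union, if any, is again absorbed by idempotence.

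For part~2 I would induct on $u$. The base case $u = \eps$ is immediate: $D_\eps E = E$ while $\{E' \mid E \to^* (\eps,E')\} = \{E\}$, so $\sum\{E\} = E$. For the step $u = v a$, I first record that the defining clauses $D_a(E_1 + E_2) = D_a E_1 + D_a E_2$ and $D_a 0 = 0$ make $D_a$ a congruence for the fragment's provable equality---a routine check over the remaining axioms (left zero and right distributivity), using idempotence of $+$---and in particular yield $D_a\big(\sum T\big) \doteq \sum\{D_a F \mid F \in T\}$ for any finite $T$. Now $D_{va} E = D_a(D_v E)$ by definition; applying the induction hypothesis and this commutation gives $D_{va} E \doteq \sum\{D_a E' \mid E \to^* (v,E')\}$; applying part~1 to each summand (under congruence of $\doteq$ for $+$) rewrites this to $\sum\{\,\sum\{E'' \mid E' \to (a,E'')\} \mid E \to^* (v,E')\,\}$; and this nested sum flattens, by associativity, commutativity and idempotence of $+$ together with the defining clause for $\to^*$, to $\sum\{E'' \mid E \to^* (va,E'')\}$, completing the induction.
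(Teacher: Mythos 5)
Your proof is correct. The paper states this proposition without proof (it is classical background due to Antimirov), and your argument --- structural induction on $E$ for the single-letter case with the careful bookkeeping of which fragment axioms are used in the $FG$ and $F^*$ cases, followed by induction on $u$ using that $D_a$ is a congruence for the fragment and distributes over finite sums --- is exactly the standard argument the paper implicitly relies on.
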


\begin{corollary}
  For any $E$, the Brzozowski automaton, suitably quotiented, is
  finite.
\end{corollary}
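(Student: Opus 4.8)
The plan is to take the ``suitable quotient'' to be the quotient by the congruence $\approx$ on $\RE$ generated by \emph{exactly} the equations used in the proposition just stated: associativity, commutativity and idempotence of $+$ with $0$ as unit, $0$ as a left zero for $\cdot$, and right distributivity of $\cdot$ over $+$. Three things then need checking: that $\approx$ (restricted to $Q^E$) is respected by the Brzozowski transition function $D_{(\_)}$, so that transitions descend to the quotient; that it is respected by the syntactic nullability predicate, so that the set of final states descends; and that the resulting state set $Q^E/{\approx}$ is finite. The first two give that the quotiented automaton is well defined and still accepts $\sem{E}$ (the initial state is $[E]_\approx$, and a word $u$ is accepted from a class $[E']_\approx$ iff $(D_u E')\dn$, which is a property of the class since both $D_{(\_)}$ and $\dn$ descend); the third is where the preceding two propositions do the work.

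For well-definedness of final states, note $\approx \subseteq {\doteq}$, so $\approx$-related regexps have equal languages, hence equal syntactic nullability by the nullability clause of the Brzozowski proposition; thus $F^E$ is a union of $\approx$-classes. For the transitions I would show $E' \approx E''$ implies $D_a E' \approx D_a E''$ by checking it on the generators of $\approx$, using that $D_a$ commutes with $+$ by definition. The semilattice and left-zero cases are immediate (e.g.\ $D_a(0{\cdot}E) = (D_a 0)E = 0{\cdot}E \approx 0 = D_a 0$). The one case needing care is right distributivity $(E+F)G \approx EG + FG$: here the defining clause for $D_a$ of a product branches on nullability, so one runs through the subcases according to $E\dn$ and $F\dn$ and reconciles $(D_a E + D_a F)G\ (+ D_a G)$ with $(D_a E)G + (D_a F)G\ (+ D_a G)$ using right distributivity together with idempotence/commutativity of $+$. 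This bookkeeping is the only part that is not completely routine, and it is the step I expect to be the main (mild) obstacle.

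For finiteness, I would invoke the proposition just proved: for every $u$, $D_u E \doteq \sum\{E' \mid E \to^* (u,E')\}$, and inspecting its proof shows the identity in fact holds up to $\approx$, i.e.\ $[D_u E]_\approx = [\sum S_u]_\approx$ with $S_u = \{E' \mid E \to^* (u,E')\}$. Reading $\sum$ as an iterated $+$ with $\sum\emptyset = 0$, the class $[\sum S_u]_\approx$ depends only on $S_u$ as a set, by associativity, commutativity, idempotence and unit laws for $+$. By the earlier proposition, $S_u \subseteq Q^E \subseteq E^{\to^*}$, and $E^{\to^*}$ is finite; hence $S_u$ ranges over the finite set $\P(E^{\to^*})$, so $\{[D_u E]_\approx \mid u \in \Sigma^*\}$ is finite, of size at most $2^{|E^{\to^*}|}$. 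This is exactly the claim. Finally I would remark that the bound descends a fortiori to the quotient by the full Kleene algebra theory $\doteq$: since $\approx \subseteq {\doteq}$, the $\doteq$-quotient is a further quotient of the finite $\approx$-quotient, and $D_{(\_)}$ and $\dn$ descend to it because $\doteq$-equivalent regexps have equal languages and language derivative commutes with the syntactic one. So ``suitably quotiented'' may be read either as ``by $\approx$'' or ``by $\doteq$''; in both readings the automaton is finite and still accepts $\sem{E}$.
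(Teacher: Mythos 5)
Your proposal is correct and is essentially the paper's own argument: the preceding proposition gives $D_u E \doteq \sum\{E' \mid E \to^* (u,E')\}$ modulo exactly the listed equations, the summands range over the finite set $E^{\to^*}$, and hence the quotiented state set injects into the (finite) powerset of $E^{\to^*}$. The paper leaves the well-definedness of the quotient automaton implicit, so your extra checks that $D_a$ and ${\dn}$ descend to $\approx$-classes are a welcome elaboration rather than a different route.
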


\begin{proof}
  Just notice that the powerset of a finite set is finite too.
\end{proof}

This quotient does not give the minimal deterministic automaton (given
by semantic derivatives of $\sem{E}$). The minimal deterministic
automaton is obtained from the Brzozowski automaton by quotienting it
by the full Kleene algebra theory.

\section{Trace Closures of Regular Languages}

\subsection{Trace Closure of a Word Language}

An \emph{independence alphabet} is an alphabet $\Sigma$ together with
an irreflexive and symmetric relation
$I \subseteq \Sigma \times \Sigma$ called the \emph{independence}
relation. The complement $D$ of $I$, which is reflexive and symmetric,
is called \emph{dependence}.
We extend independence to words by saying that two words $u$ and $v$
are independent, $u I v$, if $a I b$ for all $a, b$ 
such that $a \in \Sigma(u)$ and $b \in \Sigma(v)$.

Let $\sim^I \subseteq \Sigma^* \times \Sigma^*$ be the least
congruence relation on the free monoid $\Sigma^*$ such that $aIb$
implies $ab \sim^I ba$ for all $a, b \in \Sigma$. If $u I v$, then
$uv \sim^I vu$.

A \emph{(Mazurkiewicz) trace} is an equivalence class of words wrt.\
$\sim^I$. The equivalence class of a word $w$ is denoted by $[w]^I$.

A word $a_1 \ldots a_n$ where $a_i \in \Sigma$ yields a directed
node-labelled acyclic graph as follows. Take the vertex set to be
$V \eqdf \{ 1, \ldots , n\}$ and label vertex $i$ with $a_i$. Take the
edge set to be $E \eqdf \{ (i, j) \mid i < j \con a_i D a_j \}$. This
graph $(V, E)$ for a word $w$ is called the \emph{dependence graph} of
$w$ and is denoted by $\langle w \rangle_D$. If $w \sim^I z$, then the
dependence graphs of $w$ and $z$ are isomorphic, i.e., traces can be
identified with dependence graphs up to isomorphism.

The set $\Sigma^*/{\sim^I}$ of all traces is the free partially
commutative monoid on $(\Sigma, I)$.  If $I = \emptyset$, then
$\Sigma^*/{\sim^I} \cong \Sigma^*$, the set of words, i.e., we
recover the free monoid. If $I = \{ (a,b) \mid a \neq b \}$, then
$\Sigma^*/{\sim^I} \cong \Mf(\Sigma)$, the set of finite multisets
over $\Sigma$, i.e., the free commutative monoid.

A \emph{trace language} is a subset of $\Sigma^*/{\sim^I}$.  Trace
languages are in bijection with word languages that are \emph{(trace)
  closed} in the sense that, if $z \in L$ and $w \sim^I z$, then also
$w \in L$. If $T$ is a trace language, then its flattening
$L \eqdf \bigcup T$ is a closed word language. On the other hand, the
trace language corresponding to a closed word language $L$ is
$T \eqdf \{t \in \Sigma^*/{\sim^I} \mid \exists z \in t.\, z \in L\} =
\{t \in \Sigma^*/{\sim^I} \mid \forall z \in t.\, z \in L\}$.


Given a general (not necessarily closed) word language $L$, we define its
\emph{(trace) closure} $[L]^I$ as the least closed word language that
contains $L$. Clearly
$[L]^I = \{ w \in \Sigma^* \mid \exists z \in L.\, w \sim^I z \}$
and also
$[L]^I = \bigcup \{t \in \Sigma^*/{\sim^I} \mid \exists z \in t.\, z
\in L\}$.
For any $L$, we have $[[L]^I]^I = [L]^I$, so $[\_]^I$ is a closure
operator. Note also that $L$ is closed iff $[L]^I = L$.

As seen in Section~\ref{sec:derivatives-of-a-language}, the derivative
of a word language is the set of all suffixes for a prefix. We now
look at what the prefixes and suffixes of a word as a representative
of a trace should be. For a word $vuv'$ such that $v I u$, we can
consider $u$ to be its prefix, up to reordering, and $vv'$ to be the
suffix. This is because an equivalent word $uvv'$ strictly has $u$ as
a prefix and $vv'$ as the suffix. Similarly, we may also want to
consider $u'$ to be a prefix of $vuv'$ when $u' \sim^I u$ since
$u'vv' \sim^I uvv' \sim^I vuv'$. Note that if $a$ is such a prefix of
$z$, then, by irreflexivity of $I$, this $a$ is the first $a$ of
$z$. In general, when $u$ is a prefix of $z$, then the letter
occurrences in $u$ uniquely map to letter occurrences in $z$.  We
scale these ideas to allow $u$ to be scattered in $z$ as
$z = v_0u_1v_1 \ldots u_nv_n$ in either the sense that
$u = u_1 \ldots u_n$ or $u \sim^I u_1 \ldots u_n$. We also define
degree-bounded versions of scattering that become relevant in
Section~\ref{sec:uniform-scattering-rank}.

\begin{definition}
\label{def:scat}
 For all $u_1, \ldots , u_n \in \Sigma^+, v_0 \in \Sigma^*, v_1 \ldots , v_{n-1} \in \Sigma^+, v_n \in \Sigma^*, z
  \in \Sigma^*$, \\
 $u_1, \ldots, u_n \lhd z \rhd v_0, \ldots, v_n \eqdf  z = v_0u_1v_1 \ldots u_nv_n \con 
\forall i.\, \forall j < i.\,  v_j I u_i$.
\end{definition}

\begin{definition} For all $u, v, z \in \Sigma^*$,
\begin{enumerate}
\item $u \lhd z \rhd v \eqdf \exists n \in \Nat, u_1, \ldots , u_n, v_0, \ldots
  , v_n.\,
  u = u_1\ldots u_n \con v = v_0 \ldots v_n \con \\ \hspace*{5mm}  
   u_1, \ldots, u_n \lhd z \rhd v_0, \ldots, v_n$;  
\item $u \lsim z \rhd v \eqdf \exists u'.\, u \sim^I u' \con u' \lhd z \rhd v$; 
\item $u \lsim z \rsim v \eqdf \exists u', v'.\, u \sim^I u' \con u' \lhd z \rhd v' \con v' \sim^I v$.
\end{enumerate}
\end{definition}

In all three cases, we talk about $u$ being a prefix and $v$ being a
suffix of $z$, up to reordering, or $uv$ being \emph{scattered} in
$z$ with \emph{degree} $n$.

\begin{lemma} For any $u, v, z \in \Sigma^*$, 
\begin{enumerate}
\item $u \lhd z \rhd v \iff \exists! n \in \Nat, u_1, \ldots , u_n, v_0, \ldots
  , v_n.\,  u = u_1\ldots u_n \con v = v_0 \ldots v_n \con \\ \hspace*{5mm} 
  u_1, \ldots, u_n \lhd z \rhd v_0, \ldots, v_n$;
\item $u \lsim z \rhd v \iff \exists! u'.\, u \sim^I u' \con u' \lhd z \rhd v$; 
\item $u \lsim z \rsim v \iff \exists! u', v'.\, u \sim^I u' \con u' \lhd z \rhd v' \con v' \sim^I v$.
\end{enumerate}
\end{lemma}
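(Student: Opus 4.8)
The plan is to prove part (1) first, since parts (2) and (3) then follow by combining (1) with uniqueness facts about $\sim^I$-representatives. For part (1), the right-to-left direction is trivial ($\exists!$ implies $\exists$), so the work is to show that if $u \lhd z \rhd v$ holds via two decompositions $u = u_1 \cdots u_n$, $v = v_0 \cdots v_n$ and $u = \tilde u_1 \cdots \tilde u_m$, $v = \tilde v_0 \cdots \tilde v_m$, then $n = m$ and the pieces coincide. First I would observe that each decomposition, via $z = v_0 u_1 v_1 \cdots u_n v_n$, picks out a set of letter occurrences in $z$ — namely the occurrences belonging to the $u_i$-blocks — together with a linear order on $z$'s occurrences (the one read off $z$ itself). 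The key claim is that the \emph{set} of occurrences selected is forced: an occurrence of a letter in $z$ belongs to the ``prefix part'' iff, reading $z$ left to right, it cannot be commuted leftward past all the occurrences preceding it that are not yet in the prefix part. More operationally, I would argue that the decomposition is determined greedily: $u_1$ must be the maximal independent-with-everything-before prefix block, $v_0$ is what precedes it, and so on. The condition $\forall i.\,\forall j<i.\, v_j \mathrel{I} u_i$ is exactly what pins this down, because it says every $u_i$-occurrence is independent of every earlier $v_j$-occurrence, hence can be slid left to form the genuine prefix $u = u_1 \cdots u_n$ of an equivalent word.

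The cleanest route to the actual uniqueness is probably to induct on $|z|$ (or on $n$). In the base case $z = \eps$ forces $n = 0$ (since each $u_i \in \Sigma^+$) and $v = \eps$. For the step, I would look at the first letter $a$ of $z$. If $a$ is the first letter of $u_1$ in one decomposition: I claim it must be the first letter of $\tilde u_1$ in the other as well, because if instead $a$ were part of some $\tilde v_j$, then since that $a$-occurrence lies in $z$ before all occurrences of $\tilde u_i$ for $i > j$ and (by the independence condition) is independent of each such $\tilde u_i$ — but also the decompositions describe the \emph{same} word $u$, so the $a$ that starts $u_1$ is the first $a$-occurrence of $z$ (by irreflexivity of $I$, it cannot be commuted past an earlier $a$), forcing a contradiction with it lying inside $\tilde v_j$ unless $j = 0$ and $\tilde v_0$ is empty up to that point. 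Peeling off this first letter reduces $z$, $u$, and both decompositions consistently, so the induction closes. The symmetric case where $a$ starts $v_0$ (equivalently $\tilde v_0$) is handled the same way. Throughout, the crucial combinatorial fact is the one already emphasized in the surrounding text: \emph{when $u$ is a prefix of $z$ up to reordering, the letter occurrences in $u$ map uniquely to letter occurrences in $z$} — I would isolate this as the workhorse sublemma and then uniqueness of the $v_i$'s follows because the $v_i$'s are precisely the maximal runs of $z$'s occurrences lying between consecutive $u$-occurrences under this forced map.

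For part (2): given $u \lsim z \rhd v$, suppose $u \sim^I u'$ with $u' \lhd z \rhd v$ and $u \sim^I u''$ with $u'' \lhd z \rhd v$. By part (1) applied to $u'$ (resp. $u''$) as the prefix word, the occurrence map into $z$ is determined; but the set of $u$-occurrences in $z$ depends only on $u$ (as a trace) and on the independence condition with the fixed suffix $v$, not on which representative we chose. Concretely, both $u'$ and $u''$ select the same set $S$ of occurrences of $z$ (the complement of the occurrences forming $v$ in their positions), and the order on $S$ induced by $z$ is a single fixed linear order; a word $w$ with $w \sim^I u$ and $w$ compatible with being slid to the front of $z$ is then exactly the word read off $S$ in $z$-order, hence $u' = u''$. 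For part (3), I would apply the same reasoning once on the prefix side (via (2)) to fix $u'$, and then a mirror-image argument on the suffix side — reading $z$ right to left with the roles of $\lhd$ and $\rhd$ swapped — to fix $v'$; the only subtlety is checking that the suffix-side condition is the genuine mirror of Definition~\ref{def:scat}, which it is since $I$ is symmetric.

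The main obstacle I expect is making the ``forced occurrence map'' argument fully rigorous: the intuitive picture (greedily commuting independent letters to the front) is clear, but turning ``$u$ is a prefix of $z$ up to reordering'' into a genuinely well-defined, provably-unique injection on letter occurrences requires care, essentially a small theory of the ``earliest embedding'' of a trace-prefix into a word. Once that sublemma is in hand, the $\exists!$ statements in (1)–(3) are short; without it, one risks circular or hand-wavy reasoning about which letters ``belong to'' the prefix.
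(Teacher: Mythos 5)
The paper states this lemma without proof, so there is no official argument to compare against; judged on its own, your plan is essentially sound and identifies the right mechanism. The reduction from block decompositions to sets of letter occurrences is correct: since every $u_i$ and every interior $v_i$ is required to be nonempty, the $u_i$ are exactly the maximal runs of $u$-positions and the $v_j$ the complementary gaps, so uniqueness of the whole tuple $(n,u_1,\dots,u_n,v_0,\dots,v_n)$ reduces to uniqueness of the set $S$ of positions carrying $u$. Your ``earliest embedding'' sublemma is the right workhorse, and irreflexivity is exactly what makes it go: if $S$ and $\tilde S$ both satisfy the conditions and $p=\min S<\tilde p=\min\tilde S$, then $z_p$ and $z_{\tilde p}$ are both the first letter of $u$, while $p\notin\tilde S$ and $p<\tilde p$ force $z_p \mathrel{I} z_{\tilde p}$, contradicting irreflexivity; delete position $p$ and induct. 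Your presentation of this step in part (1) is more tangled than it needs to be (pivoting on the first letter of $z$ rather than on $\min S$ creates bookkeeping when a block becomes empty after peeling), but the idea is correct.

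The one step that does not hold up as written is the justification in part (2) that $u'$ and $u''$ ``select the same set $S$ of occurrences of $z$ (the complement of the occurrences forming $v$ in their positions)''. The positions occupied by $v$ are exactly as underdetermined as those occupied by $u'$ (with $z=aab$ and $v=a$, both $\{2,3\}$ and $\{1,3\}$ have complements reading off $v$), so as stated this is circular. The claim itself is true, but it needs the same induction as part (1) with one extra ingredient: if position $1$ lies in $S'$ but not in $S''$, then $z_1=a$ is the first letter of $u'$; since $u''\sim^I u'$, the two words have the same number of occurrences of each letter, so some $p\in S''$ has $z_p=a$ with $p>1$ and $1\notin S''$, again contradicting irreflexivity. (In the subcase where $1$ lies in both sets you also need left cancellativity of the trace monoid to keep the residual prefix words $\sim^I$-equivalent for the induction hypothesis.) The same induction, run with both the prefix and the suffix words varying up to $\sim^I$, gives part (3) directly; that is simpler and safer than the word-reversal mirroring you propose, which needs extra care because reversal exchanges which boundary blocks are permitted to be empty.
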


\begin{definition} For all $u, v, z \in \Sigma^*$ and $N \in \Nat$,
\begin{enumerate}
\item $u \lhd_N z \rhd v \eqdf \exists n \leq N, u_1, \ldots , u_n, v_0, \ldots
  , v_n.\, u_1, \ldots, u_n \lhd z \rhd v_0, \ldots, v_n$; 
\item (and $u \lsim_N z \rhd v$ and $u \lsim_N z \rsim v$ are defined analogously).
\end{enumerate}
\end{definition}

\begin{example}
  Let $\Sigma \eqdf \{ a, b, c \}$ and $a I b$ and $a I c$. Take
  $z \eqdf aabcba$. We have $ab \lhd z \rhd acba$ since
  $a, b \lhd z \rhd \eps, a, cba$.  We can visualize this by
  underlining the subwords of $u \eqdf ab$ in
  $z = \eps \underline{a}a\underline{b}cba$.  This scattering is valid
  because $\eps I a$, $\eps I b$ and $a I b$: recall that
  Def.~\ref{def:scat} requires all underlined subwords $u_i$ to be
  independent with all non-underlined subwords $v_i$ to their left in
  $z$. Similarly we have $aa, a \lhd z \rhd \eps, bcb, \eps$ because
  $z = \eps \underline{aa}bcb\underline{a}\eps$, $\eps I aa$,
  $\eps I a$ and $bcb I a$. Note that neither
  $aa\underline{b}cb\underline{a}\eps$ nor $aabc\underline{ba}\eps$
  satisfies the conditions about independence and thus there is no $v$
  such that $ba \lhd z \rhd v$. We do have $ba \lsim z \rhd acba$
  though, since $ba \sim^I ab$ and $a, b \lhd z \rhd \eps, a, cba$.
\end{example}

\begin{proposition} \label{prop:scat-uv} For all $u, v, z \in \Sigma^*, 
  uv \sim^I z \iff u \lsim z \rsim v$.
\end{proposition}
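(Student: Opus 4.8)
I would prove the two implications separately, each by a suitable induction.

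For the direction $\Leftarrow$, I would first prove the auxiliary claim that whenever $u_1, \ldots, u_n \lhd z \rhd v_0, \ldots, v_n$ in the sense of Def.~\ref{def:scat}, we have $z \sim^I (u_1 \cdots u_n)(v_0 \cdots v_n)$, by induction on the degree $n$. The base case $n = 0$ is just $z = v_0$. For $n \geq 1$, split off the leading $v_0 u_1$: the tail $z' = v_1 u_2 v_2 \cdots u_n v_n$ is a degree-$(n{-}1)$ scattering with blocks $u_2, \ldots, u_n$ and $v_1, \ldots, v_n$ (the independence side conditions are inherited), so the induction hypothesis gives $z' \sim^I (u_2 \cdots u_n)(v_1 \cdots v_n)$; since $v_0 \mathrel{I} u_i$ for every $i \geq 1$ (the $j = 0$ instances of the side condition of Def.~\ref{def:scat}), we may commute $v_0$ rightwards past the whole block $u_1 \cdots u_n$, obtaining $z = v_0 u_1 z' \sim^I (u_1 \cdots u_n)(v_0 v_1 \cdots v_n)$. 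Given now $u \lsim z \rsim v$, pick $u', v'$ with $u \sim^I u'$, $u' \lhd z \rhd v'$ and $v' \sim^I v$; the claim gives $z \sim^I u'v'$, and since $\sim^I$ is a monoid congruence, $u'v' \sim^I uv$, whence $uv \sim^I z$.

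For the direction $\Rightarrow$, I would induct on $|z|$ (equivalently on $|uv|$, as $\sim^I$ preserves length). If $u = \eps$, then $v \sim^I z$ and the degree-$0$ scattering $\eps \lhd z \rhd z$ works. Otherwise write $z = a z'$ with $a \in \Sigma$. Since $uv \sim^I a z'$, a Levi-type decomposition of this trace equation (a standard fact about trace monoids; alternatively one argues directly that the first occurrence of $a$ in $uv$ is a minimal vertex of $\langle uv\rangle_D$, so that all letters preceding it are independent of $a$) puts us in one of two cases: (a) $u \sim^I a u^\circ$ and $u^\circ v \sim^I z'$; or (b) $\Sigma(u) \mathrel{I} a$, $v \sim^I a v^\circ$, and $u v^\circ \sim^I z'$. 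In case (a) apply the induction hypothesis to $u^\circ v \sim^I z'$, obtaining a scattering $\tilde u_1, \ldots, \tilde u_n \lhd z' \rhd w_0, \ldots, w_n$ with $\tilde u_1 \cdots \tilde u_n \sim^I u^\circ$ and $w_0 \cdots w_n \sim^I v$, and prepend $a$ to $z = a z'$ as a new leading $u$-block (merging it into $\tilde u_1$ when $w_0 = \eps$, so that no internal $v$-block becomes empty); the new prefix is $a \tilde u_1 \cdots \tilde u_n \sim^I a u^\circ \sim^I u$ and the suffix is unchanged. In case (b) apply the induction hypothesis to $u v^\circ \sim^I z'$, and instead absorb $a$ into the leading $v$-block, turning $w_0$ into $a w_0$; now the prefix is unchanged ($\sim^I u$) and the new suffix is $a w_0 w_1 \cdots w_n \sim^I a v^\circ \sim^I v$.

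The conceptual content of the hard direction is merely ``peel off a minimal letter of $z$, recurse, and reinsert it either into the scattered prefix (case (a)) or into the first gap (case (b))''; I expect the real work to be the bookkeeping around the reinsertion. One must re-check, after reindexing, that the side condition $v_j \mathrel{I} u_i$ of Def.~\ref{def:scat} still holds: in case (b) this requires $a \mathrel{I} \Sigma(\tilde u_i)$ for every $i$, which follows from $\Sigma(u) \mathrel{I} a$ together with $\Sigma(\tilde u_i) \subseteq \Sigma(\tilde u_1 \cdots \tilde u_n) = \Sigma(u)$; in case (a) the fresh $u$-block $a$ sits at the very front and so is vacuously independent of everything to its left, but the shifted indices still have to be verified. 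The other delicate point is respecting the nonemptiness of all $u$-blocks and of the non-extreme $v$-blocks, which is precisely what forces the small sub-case split on whether $w_0$ is empty (and a couple of degenerate situations when the recursive scattering has degree $0$). None of this is deep, but it is where the care is needed, and getting the reinsertion to produce a bona fide scattering is the main obstacle.
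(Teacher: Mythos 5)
Your proof is correct. The paper states Proposition~\ref{prop:scat-uv} without giving a proof, so there is no official argument to compare against; but your argument is the natural one and it goes through. The $\Leftarrow$ direction via the auxiliary claim $z \sim^I (u_1\cdots u_n)(v_0\cdots v_n)$ is exactly right (the commutation of $v_0$ past $u_1\cdots u_n$ is licensed by the $j=0$ instances of the side condition). For $\Rightarrow$, the only ingredient you use without proof beyond the minimality argument is left-cancellativity of the free partially commutative monoid (to pass from $a u^\circ v \sim^I a z'$ to $u^\circ v \sim^I z'$), which is indeed standard; and your bookkeeping for the reinsertion is sound: in case (a) the fresh block $a$ (or the merged block $a\tilde u_1$ when $w_0=\eps$) has only $\eps$ to its left, in case (b) the independence $a \mathrel{I} \Sigma(\tilde u_i)$ follows from $\Sigma(\tilde u_1\cdots\tilde u_n)=\Sigma(u) \mathrel{I} a$, and the nonemptiness constraints on internal $v$-blocks are respected by the $w_0=\eps$ sub-case split. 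The degenerate situations you flag (recursive degree $0$) can only occur with $u^\circ=\eps$ resp.\ $u=\eps$ and are handled as you indicate.
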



\subsection{Trace-Closing Semantics of Regular Expressions}

We now define a nonstandard word-language semantics of regexps that
directly interprets $E$ as the trace closure $[\sem{E}]^I$ of its
standard regular word-language denotation of $\sem{E}$.

We have $[\{a\}]^I = \{a\}$, $[\emptyset]^I = \emptyset$,
$[L \cup L']^I = [L]^I \cup [L']^I$ and $[\one]^I = \one$. But for
general $I$, we do not have $[L \cdot L']^I = [L]^I \cdot [L']^I$.
For example, for $\Sigma \eqdf \{a, b\}$ and $a I b$, we 
have $[\{a\}]^I = \{a\}$,  $[\{b\}]^I = \{b\}$ whereas
$[\{ab\}]^I = \{ ab, ba\} \neq \{ ab\} = [\{a\}]^I \cdot [\{b\}]^I$. 
Hence we need a different concatenation operation.

\begin{definition} ~
\begin{enumerate}
\item The $I$-\emph{reordering concatenation of words}
  $\cdot^I : \Sigma^* \times \Sigma^* \to \P \Sigma^*$ is defined by 
\[
\begin{array}{rcl}
\eps \cdot^I v   & \eqdf & \{ v \} \\
u \cdot^I \eps   & \eqdf & \{ u \} \\
a u \cdot^I b v  & \eqdf &
  \{a\} \cdot (u \cdot^I b v) \cup \{b \mid a u I b \} \cdot (a u \cdot^I v) \\[2ex]
\end{array}
\] 
\item The lifting of $I$-reordering concatenation to \emph{languages} is defined by 
\[
L \cdot^I L' \eqdf \bigcup \{ u \cdot^I v \mid u \in L \con v \in L' \}
\] 
\end{enumerate}
\end{definition}
Note that $\{ b \mid auIb \}$ acts as a test: it is either $\emptyset$
or $\{b\}$.

\begin{example}
  Let $\Sigma \eqdf \{ a , b \}$ and $a I b$. Then
  $a \cdot^I b = \{ ab, ba \}$, $aa \cdot^I b = \{ aab, aba, baa \}$,
  $a \cdot^I bb = \{ abb, bab, bba \}$ and
  $ab \cdot^I ba = \{ abba \}$. The last example shows that although
  $I$-reordering concatenation is defined quite similarly to shuffle,
  it is different. 
\end{example}

\begin{proposition} \label{prop:scat} For any $u, v, z \in \Sigma^*$,
  $z \in u \cdot^I v \iff u \lhd z \rhd v$.
\end{proposition}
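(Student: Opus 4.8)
The plan is to prove the biconditional directly, by induction on $|u| + |v|$, following the three clauses in the definition of $\cdot^I$. Note that on the $\cdot^I$ side $z \in u \cdot^I v$ forces $|z| = |u| + |v|$, and on the scattering side any decomposition $z = v_0 u_1 v_1 \ldots u_n v_n$ has $|z| = |u| + |v|$ as well, so in both directions each recursive appeal will strictly decrease $|u| + |v|$. For the base cases: if $u = \eps$ then $u \cdot^I v = \{v\}$, while $u \lhd z \rhd v$ holds iff $z = v$ (the degree must be $0$, since each $u_i \in \Sigma^+$); and if $u \neq \eps$ but $v = \eps$ then $u \cdot^I v = \{u\}$, while $u \lhd z \rhd v$ holds iff $z = u$ (all $v_j = \eps$, degree $1$, $u_1 = u$, using $\eps\, I\, u$). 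So both sides agree in the base cases.

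The main case is $u = au'$, $v = bv'$. For $\bwdimp$, assume $u \lhd z \rhd v$ is witnessed by $z = v_0 u_1 v_1 \ldots u_n v_n$. If $v_0 \neq \eps$, then $z$ and $v$ both start with the first letter of $v_0$, which is $b$; writing $v_0 = b\hat v_0$, $z = b\hat z$, the tuple $\hat z = \hat v_0 u_1 v_1 \ldots u_n v_n$ witnesses $au' \lhd \hat z \rhd v'$ (independence survives since $\Sigma(\hat v_0) \subseteq \Sigma(v_0)$), and $au'\, I\, b$ because $b \in \Sigma(v_0)$ and $v_0\, I\, u_i$ for every $i$; the IH gives $\hat z \in au' \cdot^I v'$, hence $z = b\hat z \in au' \cdot^I bv'$. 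If $v_0 = \eps$, then $z$ starts with the first letter of $u_1$, which is $a$; writing $z = a\hat z$, I would split on whether $u_1 = a\,u_1''$ with $u_1'' \neq \eps$ (replace $u_1$ by $u_1''$, keeping degree $n$) or $u_1 = a$ (drop $u_1$ and shift indices, degree $n-1$, new leading $v$-piece $v_1$); in both sub-cases one gets $u' \lhd \hat z \rhd bv'$, the IH gives $\hat z \in u' \cdot^I bv'$, and $z = a\hat z \in au' \cdot^I bv'$.

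For $\imp$, assume $z \in au' \cdot^I bv' = \{a\}\cdot(u' \cdot^I bv') \cup \{b \mid au'\, I\, b\}\cdot(au' \cdot^I v')$. If $z = a\hat z$ with $\hat z \in u' \cdot^I bv'$, the IH yields a witness for $u' \lhd \hat z \rhd bv'$; if its leading $v$-piece is empty I prepend $a$ to its first $u$-piece, otherwise I insert $a$ as a fresh first $u$-piece preceded by an empty $v$-piece (this keeps all interior $v$-pieces nonempty, and the only new independence obligation is $\eps\, I\, a$) — either way one obtains $au' \lhd z \rhd bv'$. If $z = b\hat z$ with $au'\, I\, b$ and $\hat z \in au' \cdot^I v'$, I take the IH witness for $au' \lhd \hat z \rhd v'$ and absorb $b$ into its leading $v$-piece $v_0$; the new obligations $bv_0\, I\, u_i$ follow from $v_0\, I\, u_i$ together with $au'\, I\, b$ and $\Sigma(u_i) \subseteq \Sigma(au')$, giving $au' \lhd z \rhd bv'$.

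The hard part will be none of the semantics but the bookkeeping forced by the canonical-form constraints in Definition~\ref{def:scat} (all $u_i$ nonempty, all interior $v_j$ nonempty): adding or stripping one letter sometimes merges two blocks of a scattering and sometimes creates a new one, so each inductive step must branch on the emptiness of the outermost pieces and adjust the degree $n$ accordingly. Keeping track of these cases, and checking that the independence conditions $v_j\, I\, u_i$ are preserved or discharged by $au'\, I\, b$ in each, is the only real work; the strict decrease of $|u| + |v|$ (which equals $|z|$ throughout) makes the induction go through without circularity.
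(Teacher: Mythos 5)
Your proof is correct: the induction on $|u|+|v|$ that mirrors the three clauses of the recursive definition of $\cdot^I$, with the case analysis on emptiness of the outermost pieces to merge or create blocks and the observation that $v_0\,I\,u_i$ (resp.\ $au'\,I\,b$) discharges the new independence obligations, is exactly the expected argument. The paper states Proposition~\ref{prop:scat} without giving a proof, so there is nothing to diverge from; your write-up fills that gap in the intended way.
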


\begin{proposition} \label{prop:reord-concat}
  For any languages $L$ and $L'$, $[L \cdot L']^I = [L]^I \cdot^I [L']^I$.
\end{proposition}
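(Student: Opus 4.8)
The plan is to prove $[L \cdot L']^I = [L]^I \cdot^I [L']^I$ by a chain of set-theoretic reasoning that reduces everything to Proposition~\ref{prop:scat} and Proposition~\ref{prop:scat-uv}, together with basic facts about $\sim^I$ being a congruence. First I would unfold the left-hand side: $w \in [L \cdot L']^I$ means there exists $z \in L \cdot L'$ with $w \sim^I z$, i.e.\ there are $u \in L$ and $v \in L'$ with $w \sim^I uv$. By Proposition~\ref{prop:scat-uv}, $w \sim^I uv$ is equivalent to $u \lsim w \rsim v$, which by definition means there are $u', v'$ with $u \sim^I u'$, $u' \lhd w \rhd v'$, and $v' \sim^I v$. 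By Proposition~\ref{prop:scat}, $u' \lhd w \rhd v'$ is equivalent to $w \in u' \cdot^I v'$. So membership of $w$ in the left-hand side is equivalent to the existence of $u \in L$, $v \in L'$, $u'$, $v'$ with $u \sim^I u'$, $v \sim^I v'$, and $w \in u' \cdot^I v'$.

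Next I would unfold the right-hand side. By definition of the lifting of $\cdot^I$ to languages, $w \in [L]^I \cdot^I [L']^I$ means there are $u' \in [L]^I$ and $v' \in [L']^I$ with $w \in u' \cdot^I v'$. Now $u' \in [L]^I$ means there is $u \in L$ with $u \sim^I u'$ (using symmetry of $\sim^I$), and similarly $v' \in [L']^I$ means there is $v \in L'$ with $v \sim^I v'$. Hence membership of $w$ in the right-hand side unfolds to exactly the same condition as above: existence of $u \in L$, $v \in L'$, $u'$, $v'$ with $u \sim^I u'$, $v \sim^I v'$, $w \in u' \cdot^I v'$. The two sides therefore describe the same set, and the proposition follows. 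In fact the argument is a clean biconditional at each step, so the proof is essentially a matter of carefully threading the quantifiers through the three characterisations (definition of $[\_]^I$, Proposition~\ref{prop:scat}, Proposition~\ref{prop:scat-uv}).

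The only mildly delicate point — and what I would flag as the main obstacle — is making sure the quantifier structure lines up symmetrically on the two sides, in particular that the $u'$ in $u \lsim w \rsim v$ coming from Proposition~\ref{prop:scat-uv} plays the role of the element of $[L]^I$ and the $v'$ plays the role of the element of $[L']^I$, without any hidden asymmetry between how the prefix and the suffix are allowed to be reordered. Here Proposition~\ref{prop:scat-uv} does the real work: it tells us that the combined relation $u \lsim w \rsim v$ (reorder $u$, scatter it through $w$ with the independence side-conditions on the pieces, then reorder the residue to $v$) is \emph{exactly} trace equivalence $uv \sim^I w$, which is manifestly symmetric in its treatment of the factors up to $\sim^I$. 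So once that proposition is in hand, there is no residual asymmetry to worry about, and the remaining steps are routine. Alternatively, one could prove the two inclusions $\subseteq$ and $\supseteq$ separately, but since every step is an ``iff'', presenting it as a single chain of equivalences is cleaner.
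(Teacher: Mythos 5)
Your proof is correct and is exactly the intended argument: the paper states this proposition without an explicit proof, but it places Propositions~\ref{prop:scat-uv} and~\ref{prop:scat} immediately beforehand precisely so that $[L\cdot L']^I = [L]^I \cdot^I [L']^I$ follows by the chain of equivalences you give (unfold $[\_]^I$, apply Proposition~\ref{prop:scat-uv}, unfold $\lsim\ \rsim$, apply Proposition~\ref{prop:scat}, and re-fold the right-hand side). Your handling of the quantifiers and of the symmetry of $\sim^I$ is accurate, so there is nothing to add.
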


Evidently, if $I = \emptyset$, then reordering concatenation is just
ordinary concatenation: $u \cdot^\emptyset v = \{ uv \}$. 
For $I = \Sigma \times \Sigma$, which
is forbidden in independence alphabets, as $I$ is required to be
irreflexive, it is shuffle: $u \cdot^{\Sigma\times \Sigma} v = u \shuffle v$. 
For general $I$, it has properties similar
to concatenation. In particular, we have
\[
\begin{array}{rcl@{\quad}rcl}
\one \cdot^I L & = & L 
&  \emptyset \cdot^I L & = & \emptyset \\
L \cdot^I \one & = & L 
&  (L_1 \cup L_2) \cdot^I L & = & L_1 \cdot^I L \cup L_2 \cdot^I L  \\
(L \cdot^I L') \cdot^I L'' & = & L \cdot^I (L' \cdot^I L'') 
&  (L_1 \shuffle L_2) \cdot^I (L_1' \shuffle L_2') & \subseteq & 
   (L_1 \cdot^I L_1') \shuffle (L_2 \cdot^I L_2')
\end{array}
\]
but also other equations of the concurrent Kleene algebra theory
introduced in \cite{HoareMSW11}.

We are ready to introduce the closing semantics of regular
expressions.

\begin{definition} The \emph{trace-closing semantics}
  $\sem{\_}^I : \RE \to \P \Sigma^*$ of regular expressions is defined
  recursively by
\[
\small
\begin{array}{rcl@{\qquad}rcl}
\sem{a}^I & \eqdf & \{ a \} 
& \sem{1}^I & \eqdf & \one  \\
\sem{0}^I & \eqdf & \emptyset 
& \sem{E F}^I & \eqdf & \sem{E}^I \cdot^I \sem{F}^I \\
\sem{E + F}^I & \eqdf & \sem{E}^I \cup \sem{F}^I 
& \sem{E^*}^I & \eqdf & \mu X.\, \one \cup \sem{E}^I \cdot^I X
\end{array}
\]
\end{definition}

Compared to the standard semantics of regular expressions, the
difference is in the handling of the $EF$ case (and consequently also
the $E^*$ case) due to the cross-commutation that happens in
concatenation of traces and must be accounted for by ${\cdot^I}$.

With $I = \emptyset$, we fall back to the standard interpretation of
regular expressions: $\sem{E}^\emptyset = \sem{E}$. For $I$ a general
independence relation, we obtain the desired property that the
semantics delivers the trace closure of the language of the regexp.

\begin{proposition}
  For any $E$, $\sem{E}^I$ is trace closed; moreover,
  $\sem{E}^I = [\sem{E}]^I$.
\end{proposition}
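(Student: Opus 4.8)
The plan is to prove both parts by induction on the structure of the regexp $E$, establishing $\sem{E}^I = [\sem{E}]^I$ directly; trace-closedness of $\sem{E}^I$ then follows since the right-hand side is manifestly closed (recall $[[L]^I]^I = [L]^I$). The base cases $a$, $0$, $1$ are immediate from the identities $[\{a\}]^I = \{a\}$, $[\emptyset]^I = \emptyset$, $[\one]^I = \one$ already noted in the text. For $E + F$, I would use $[L \cup L']^I = [L]^I \cup [L']^I$ together with the induction hypotheses and the fact that $\sem{\_}^I$ and $\sem{\_}$ both send $+$ to $\cup$.

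The case $EF$ is where the work lies, and it is exactly what Proposition~\ref{prop:reord-concat} was set up for: compute
\[
\sem{EF}^I = \sem{E}^I \cdot^I \sem{F}^I = [\sem{E}]^I \cdot^I [\sem{F}]^I = [\sem{E} \cdot \sem{F}]^I = [\sem{EF}]^I,
\]
where the second step is the two induction hypotheses and the third is Proposition~\ref{prop:reord-concat}. So the $EF$ case is essentially bookkeeping once that proposition is in hand.

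The genuinely delicate case is $E^*$, because both semantics define it as a least fixed point but of \emph{different} operators — $\mu X.\, \one \cup \sem{E}^I \cdot^I X$ versus, after closure, the closure of $\mu X.\, \one \cup \sem{E} \cdot X$. I would argue that $[\_]^I$, being a closure operator that moreover commutes with $\cup$ and satisfies $[L \cdot L']^I = [L]^I \cdot^I [L']^I$, transports the one fixed point to the other. Concretely: writing $\Phi(X) = \one \cup \sem{E} \cdot X$ and $\Psi(Y) = \one \cup \sem{E}^I \cdot^I Y$, one checks using the induction hypothesis $\sem{E}^I = [\sem{E}]^I$ and Proposition~\ref{prop:reord-concat} that $[\Phi(X)]^I = \Psi([X]^I)$ for all $X$; hence $[\_]^I$ maps the ascending Kleene chain of $\Phi$ (starting from $\emptyset$, whose closure is $\emptyset$) pointwise onto the ascending Kleene chain of $\Psi$, and since $[\_]^I$ preserves the directed unions computing these suprema (closure commutes with arbitrary unions here), the two least fixed points correspond: $[\mu X.\,\Phi(X)]^I = \mu Y.\,\Psi(Y)$, i.e.\ $[\sem{E^*}]^I = \sem{E^*}^I$. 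The main obstacle is making this fixed-point transfer rigorous — in particular justifying that $[\_]^I$ commutes with the (countable, directed) unions defining the Kleene iterates and that $\Psi$ is continuous so that the iteration indeed reaches the least fixed point; both hold because $[\_]^I$ distributes over arbitrary unions and $L \cdot^I (\_)$ is union-continuous, but this should be stated carefully. An alternative, if one prefers to avoid explicit chains, is a two-sided inclusion argument: $\supseteq$ follows since $\sem{E^*}^I \supseteq \sem{E^*}$ (take $I$-concatenation to include plain concatenation) and $\sem{E^*}^I$ is closed, while $\subseteq$ follows by fixed-point induction, showing $[\sem{E^*}]^I$ is a $\Psi$-algebra, i.e.\ $\one \cup \sem{E}^I \cdot^I [\sem{E^*}]^I \subseteq [\sem{E^*}]^I$, which again reduces via Proposition~\ref{prop:reord-concat} and the induction hypothesis to the closedness of $[\sem{E^*}]^I$ under plain concatenation with $\sem{E}$ — a property of $\sem{E^*}=\sem E^*$ inherited through the closure.
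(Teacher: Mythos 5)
Your proposal is correct and follows exactly the route the paper sets up for this proposition: induction on $E$, using the identities $[\{a\}]^I = \{a\}$, $[\emptyset]^I = \emptyset$, $[L \cup L']^I = [L]^I \cup [L']^I$, $[\one]^I = \one$ together with Proposition~\ref{prop:reord-concat} for concatenation, and transporting the least fixed point along the union-preserving closure operator for the star case. Your main fixed-point-transfer argument is complete as stated (both operators are union-continuous, so the Kleene chains suffice); only the sketched ``alternative'' two-sided argument has a minor circularity in its $\supseteq$ direction, since it presupposes that $\sem{E^*}^I$ is closed, but that is an aside and does not affect the proof.
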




\subsection{Properties of Trace Closures of Regular Languages}

Trace closures of regular languages are theoretically interesting due to
their intricate properties and have therefore been studied in a number
of works, e.g.,
\cite{BertoniMS82,Ochmanski85,AalbersbergW86,Sakarovitch87,Hashiguchi91,KlunderOS05}. For
a thorough survey, see Ochma\'nski's handbook
chapter~\cite{Ochmanski95}.

The most important property for us is that the trace closure of a regular language is
not necessarily regular. 

\begin{proposition} There exists a regular language $L$ such that $[L]^I$ is not regular.
\end{proposition}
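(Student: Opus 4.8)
The plan is to exhibit a concrete small independence alphabet and a concrete regular language whose trace closure fails to be regular, and to verify non-regularity by a pumping-type counting argument on the projections onto the letters. The canonical witness is the alphabet $\Sigma \eqdf \{a, b\}$ with the single independence pair $a I b$ (so $b I a$ as well, by symmetry), and the regular language $L \eqdf \sem{(ab)^*} = \{ (ab)^n \mid n \in \Nat \}$. I claim $[L]^I$ is not regular.

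First I would identify what $[L]^I$ actually is. Since $a I b$, the relation $\sim^I$ lets us freely transpose adjacent $a$ and $b$. A word $w$ is equivalent to some $(ab)^n$ precisely when $w$ has exactly $n$ occurrences of $a$ and $n$ occurrences of $b$ \emph{and} every prefix of $w$ has at least as many $a$'s as $b$'s — this is the standard characterisation of the trace (dependence graph) of $(ab)^n$, whose Hasse diagram is two disjoint chains $a_1 < a_2 < \dots < a_n$ and $b_1 < b_2 < \dots < b_n$ with $a_i$ and $b_i$ incomparable but $b_i$ below $a_{i+1}$ forced only through the counting. Concretely, $[L]^I = \{ w \in \{a,b\}^* \mid |w|_a = |w|_b \con \forall \text{ prefix } p \text{ of } w.\ |p|_a \geq |p|_b \}$, i.e.\ the Dyck-like language with $a$ as the open bracket and $b$ as the close bracket. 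I would state this equivalence as a short lemma and justify it by noting that reordering an adjacent $ba$ to $ab$ never decreases a prefix count, and conversely any word with balanced counts and the prefix property can be sorted into $(ab)^n$ by repeatedly moving the leftmost out-of-place letter.

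Then non-regularity follows by the pumping lemma (or Myhill–Nerode): the words $a^n$ for distinct $n$ are pairwise Nerode-inequivalent with respect to $[L]^I$, since $a^n b^n \in [L]^I$ but $a^m b^n \notin [L]^I$ for $m \neq n$ (if $m < n$ the prefix property fails; if $m > n$ the counts are unbalanced). Hence $[L]^I$ has infinitely many left quotients and cannot be regular. Alternatively, intersecting $[L]^I$ with the regular language $\sem{a^* b^*}$ yields $\{ a^n b^n \mid n \in \Nat\}$, which is a textbook non-regular language, and regular languages are closed under intersection.

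I do not expect a genuine obstacle here; the only point requiring care is the proof of the Dyck-language characterisation of $[L]^I$, i.e.\ that the prefix-balance condition is \emph{exactly} the $\sim^I$-closure of $\{(ab)^n\}$ rather than something larger or smaller. I would handle this by induction on word length for the ``sortable'' direction and by the prefix-count monotonicity observation for the ``closed'' direction; everything else is routine. One could shorten the whole argument by simply citing that trace closures of regular languages need not be regular is folklore (it already appears in \cite{BertoniMS82}), but giving the explicit $(ab)^*$ witness makes the paper self-contained.
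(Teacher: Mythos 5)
Your witness is the same as the paper's ($\Sigma = \{a,b\}$, $a\,I\,b$, $L = \sem{(ab)^*}$), but your characterisation of $[L]^I$ is wrong, and the lemma you propose to prove is false. Since $I$ is symmetric, $\sim^I$ permits transposing adjacent $a$ and $b$ in \emph{both} directions: $ab \sim^I ba$ and $ba \sim^I ab$. Your prefix-dominance condition ($|p|_a \geq |p|_b$ for every prefix $p$) is therefore not invariant under $\sim^I$ --- already $ba \sim^I ab$, so $ba \in [L]^I$, yet $ba$ violates the condition at its first prefix. Your justification (``reordering an adjacent $ba$ to $ab$ never decreases a prefix count'') only considers one direction of the rewrite and would not survive an attempt to actually prove closure under $\sim^I$. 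The correct (and simpler) characterisation, which is the one the paper uses, is that the trace monoid on $\{a,b\}$ with $a\,I\,b$ is the free commutative monoid, so the equivalence class of $(ab)^n$ consists of \emph{all} words with $n$ occurrences of $a$ and $n$ of $b$; hence $[L]^I = \{ w \mid |w|_a = |w|_b \}$ with no prefix condition. (Your Hasse-diagram picture is also off: since $a$ and $b$ are the only letters and they are independent, the dependence graph of $(ab)^n$ is just two disjoint chains with no cross constraints at all.)

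The damage is contained: your final non-regularity step (Myhill--Nerode on the words $a^n$, or intersection with $\sem{a^*b^*}$ to obtain $\{a^nb^n \mid n \in \Nat\}$) applies verbatim to the correct language $\{ w \mid |w|_a = |w|_b \}$, so the proposition still follows once the characterisation lemma is replaced by the correct one. But as written, the proof contains a false intermediate claim resting on a misreading of the symmetry of the independence relation, and that needs to be fixed before the argument is sound.
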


\begin{proof}
Consider $\Sigma \eqdf \{a, b\}$, $a I b$.
Let $L \eqdf \sem{(ab)^*}$. The language
$[L]^I = \{ u \mid | u |_a = | u |_b \}$ is not regular.
\end{proof}

The class of trace closures of regular languages over an independence
alphabet behaves quite differently from the class of regular languages
over an alphabet. Here are some results demonstrating this. 


\begin{theorem}[Bertoni et al. \cite{BertoniMS86}, Aalbersberg and Welzl \cite{AalbersbergW86},
  Sakarovitch \cite{Sakarovitch87}]
  (cf.~\cite[Thm.~6.2.5]{Ochmanski95}) The class of trace closures of
  regular languages over $(\Sigma, I)$ is closed under complement iff
  $I$ is quasi-transitive (i.e., its reflexive closure is transitive).
\end{theorem}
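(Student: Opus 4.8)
We sketch a proof. The plan is to recast the whole statement inside the trace monoid. Write $\mathcal{M} \eqdf \Sigma^*/{\sim^I}$ for the trace monoid and $\mathcal{C}$ for the class of trace closures of regular languages over $(\Sigma,I)$. First I would observe that, since the quotient map $\Sigma^* \to \mathcal{M}$ is a surjective monoid morphism and ``regular'' is the same as ``rational subset of $\Sigma^*$'', flattening sets up a bijection between $\mathcal{C}$ and the family $\mathrm{Rat}(\mathcal{M})$ of rational subsets of $\mathcal{M}$, under which ``$\mathcal{C}$ closed under complement'' becomes ``$\mathrm{Rat}(\mathcal{M})$ closed under complementation inside $\mathcal{M}$''. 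I would also note that $\mathcal{C}$ is always closed under union, since $[L]^I \cup [L']^I = [L \cup L']^I$, and recall that $I$ is quasi-transitive exactly when $I \cup \mathrm{id}$ is an equivalence relation, i.e.\ when the independence graph $(\Sigma,I)$ is a disjoint union of cliques, hence contains no induced path on three vertices.

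For the ``if'' direction I would exploit this clique decomposition. Let $C_1, \dots, C_r$ be the classes of $I \cup \mathrm{id}$: all distinct letters within a class are independent and all letters from distinct classes are dependent. Grouping a word into its maximal same-class blocks and reading each block as a multiset then identifies $\mathcal{M}$ with the free product $\Mf(C_1) * \cdots * \Mf(C_r)$ of the free commutative monoids on the classes. By the Ginsburg--Spanier theorem the rational subsets of a finitely generated free commutative monoid are exactly the semilinear sets, and semilinear sets are closed under complement; and the property ``rational subsets are closed under complement'' is inherited by free products. Hence $\mathrm{Rat}(\mathcal{M})$ is closed under complement, and therefore so is $\mathcal{C}$.

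For the ``only if'' direction I would not attack complementation head-on but instead show that $\mathcal{C}$ fails to be closed under \emph{intersection}; combined with closure under union this already precludes closure under complement, because $X \cap Y = \overline{\overline{X}\cup\overline{Y}}$. Failure of quasi-transitivity yields pairwise distinct letters $a,b,c$ with $a I b$, $b I c$ and $a D c$. I would take $L_1 \eqdf \sem{((ab)+c)^*}$ and $L_2 \eqdf \sem{(a+(cb))^*}$. Using that $b$ commutes with $a$ and with $c$ (so occurrences of $b$ slide freely and can be matched up with the $a$'s, resp.\ with the $c$'s) while the set of letters occurring is a $\sim^I$-invariant, one checks that $[L_1]^I = \{ w \in \{a,b,c\}^* \mid |w|_a = |w|_b \}$ and $[L_2]^I = \{ w \in \{a,b,c\}^* \mid |w|_b = |w|_c \}$, so that $M \eqdf [L_1]^I \cap [L_2]^I$ equals $\{ w \in \{a,b,c\}^* \mid |w|_a = |w|_b = |w|_c \}$. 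To see $M \notin \mathcal{C}$, suppose $M = [L]^I$ with $L$ regular; then $L \subseteq M \subseteq \{a,b,c\}^*$, so $L$ is regular over $\{a,b,c\}$, and because $a I b$, $b I c$, $a D c$ the trace monoid over $\{a,b,c\}$ is isomorphic to $\{a,c\}^* \times \Nat$ (send a trace to its $\{a,c\}$-projection paired with its number of $b$'s). Under this isomorphism $M$ corresponds to a rational subset (being the image of the regular language $L$), namely $\{ (u,k) \mid k = |u|_a = |u|_c \}$; projecting onto the first factor is a morphism and so preserves rationality, which would make $\{ u \in \{a,c\}^* \mid |u|_a = |u|_c \}$ regular --- a contradiction. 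Hence $M \notin \mathcal{C}$, so $\mathcal{C}$ is not closed under intersection and therefore not under complement.

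I expect the genuine obstacle to sit in the ``if'' direction: everything there reduces to the assertion that rational subsets of a free product of finitely generated free commutative monoids are closed under complement, and proving this from scratch takes real work --- one would give an automaton/normal-form presentation of such rational subsets over the ``block-type'' alphabet and show that it can be determinised and complemented while invoking Boolean closure of semilinear sets block by block. This is the content carried by the works cited for the theorem (see also \cite{Ochmanski95}). The ``only if'' direction, by contrast, is elementary once one has the two language identities for $[L_1]^I$ and $[L_2]^I$ and the translation between $\mathcal{C}$ and $\mathrm{Rat}(\mathcal{M})$.
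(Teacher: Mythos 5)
The paper states this theorem purely as a background result from the literature and supplies no proof, so there is no in-paper argument to compare yours against; I assess it on its own terms. Your ``only if'' direction is correct and essentially complete. Reducing failure of complement-closure to failure of intersection-closure via De Morgan and the (always valid) union-closure is sound; non-quasi-transitivity does produce pairwise distinct $a, b, c$ with $a I b$, $b I c$, $a D c$; the identities $[\sem{((ab)+c)^*}]^I = \{ w \mid | w |_a = | w |_b \}$ and $[\sem{(a+(cb))^*}]^I = \{ w \mid | w |_b = | w |_c \}$ check out, since the trace monoid over $\{a,b,c\}$ is $\{a,c\}^* \times \Nat$ and a word of $((ab)+c)^*$ realizes any prescribed $\{a,c\}$-projection together with $| w |_b = | w |_a$; and the final step correctly uses $L \subseteq M \subseteq \{a,b,c\}^*$ to confine a hypothetical witness to the subalphabet before projecting onto $\{a,c\}^*$ to contradict regularity of $\{ u \mid | u |_a = | u |_c \}$. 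This is the standard counterexample, cleanly executed.

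The one genuine gap is in the ``if'' direction, and you have located it yourself: the entire weight rests on the assertion that closure of $\mathrm{Rat}(\cdot)$ under complement is inherited by free products of monoids. The surrounding reductions are all fine --- quasi-transitivity makes $I \cup \mathrm{id}$ an equivalence whose classes $C_1, \ldots, C_r$ are cliques, the trace monoid is then $\Mf(C_1) * \cdots * \Mf(C_r)$, and Ginsburg--Spanier gives Boolean closure of $\mathrm{Rat}(\Mf(C_i))$ --- but the free-product preservation statement is itself the substantial theorem here (it is, in essence, what the cited work of Sakarovitch establishes), and your text names it rather than proves it. So what you have is a correct and well-organized proof outline in which one direction is fully discharged and the other is reduced to a single nontrivial lemma that still needs the normal-form/automaton construction over block types you allude to.
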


\begin{theorem}[Bertoni et al. \cite{BertoniMS82}, Aalbersberg and
  Welzl \cite{AalbersbergW86} (``if'' part); Aalbersberg and Hoogeboom
  \cite{AalbersbergH89}]
  (cf.~\cite[Thm.~6.2.5]{Ochmanski95}) The problem of whether the
  trace closures of two regular languages over $(\Sigma, I)$ are equal
  is decidable iff $I$ is quasi-transitive.
\end{theorem}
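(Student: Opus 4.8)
The plan is to prove the two implications separately, leaning in both cases on the two preceding theorems about complementation.

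\textbf{Decidability when $I$ is quasi-transitive.} Quasi-transitivity says precisely that the reflexive closure $I \cup \mathrm{id}_\Sigma$ is an equivalence relation; let $\Sigma_1, \ldots, \Sigma_k$ be its classes. Then distinct letters from the same class are independent while letters from different classes are dependent, so the trace monoid $\Sigma^*/{\sim^I}$ is the free product $\Mf(\Sigma_1) * \cdots * \Mf(\Sigma_k)$ of the free commutative monoids on the classes. Now $[\sem{E}]^I = [\sem{F}]^I$ holds iff the trace-closed word language $([\sem{E}]^I \setminus [\sem{F}]^I) \cup ([\sem{F}]^I \setminus [\sem{E}]^I)$ is empty, and emptiness of a trace closure $[L]^I$ is trivially decidable, since $[L]^I = \emptyset$ iff $L = \emptyset$. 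So it suffices to show that trace closures of regular languages over a quasi-transitive $I$ are \emph{effectively} closed under the Boolean operations: closure under union is immediate from $[L \cup L']^I = [L]^I \cup [L']^I$, closure under complement is the content of the preceding theorem (I would verify that the cited construction is effective --- it proceeds by a semilinearity argument on the factors $\Mf(\Sigma_i)$, whose rational subsets are the semilinear sets, glued along the free-product structure), and closure under intersection then follows by De Morgan. Running these constructions on $E$ and $F$ and testing the resulting regexp for emptiness gives the decision procedure.

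\textbf{Undecidability when $I$ is not quasi-transitive.} If $I \cup \mathrm{id}_\Sigma$ fails to be transitive then, using that $I$ is irreflexive and symmetric, one extracts three \emph{distinct} letters $a, b, c \in \Sigma$ with $a \mathrel{I} b$, $b \mathrel{I} c$ and $\lnot(a \mathrel{I} c)$. I would reduce Post's Correspondence Problem (equivalently, emptiness of the intersection of two context-free languages over a binary alphabet) to the equivalence problem, using this configuration as the decisive gadget: because $b$ commutes past both $a$ and $c$ while $a$ and $c$ are dependent, a word over $\{a,b,c\}$ fixes the relative order of its $a$'s and $c$'s but lets the $b$-delimited blocks slide across them. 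Encoding the two lists of a PCP instance $P$ into the $a$/$c$ pattern and using occurrences of $b$ (and, where convenient, further letters of $\Sigma$) as movable separators, I would produce two regular languages $L_P, L_P'$, computable from $P$, whose trace closures are equal iff $P$ has no solution; a decision procedure for equality of trace closures would then decide PCP.

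\textbf{Main obstacle.} The decidability half is comparatively soft once effective complementation is granted. The real work is the undecidability reduction: one has to arrange the encoding so that a \emph{regular} well-formedness language picks out exactly the words that represent, up to $\sim^I$-reordering, a genuine PCP solution --- despite the reshuffling permitted by $\cdot^I$ --- and so that this already succeeds for an arbitrary non-quasi-transitive $(\Sigma, I)$, i.e.\ using only the three-letter seed together with whatever other letters $\Sigma$ happens to contain.
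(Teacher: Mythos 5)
First, a point of reference: the paper does not prove this theorem at all --- it is stated purely as a cited classical result (Bertoni et al., Aalbersberg--Welzl, Aalbersberg--Hoogeboom) in the background section on properties of trace closures, so there is no in-paper proof to compare against. Judged on its own, your decidability half is the standard strategy and is essentially sound: quasi-transitivity of $I$ makes $I \cup \mathrm{id}_\Sigma$ an equivalence relation, the trace monoid becomes a free product of free commutative monoids, and equality of $[\sem{E}]^I$ and $[\sem{F}]^I$ reduces to emptiness of a symmetric difference, which is decidable once the class of trace closures of regular languages is \emph{effectively} closed under complement (and hence, with union, under all Boolean operations), using that $[L]^I = \emptyset$ iff $L = \emptyset$. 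The one dependency you correctly flag but do not discharge is effectiveness: the preceding theorem as stated is a pure closure property, and you would indeed need to inspect the semilinearity-based construction to confirm it yields a computable regexp.

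The genuine gap is in the undecidability half. Extracting distinct $a, b, c$ with $a \mathrel{I} b$, $b \mathrel{I} c$, $\neg(a \mathrel{I} c)$ is right, but everything after that is a plan rather than a proof, as your own ``main obstacle'' paragraph concedes: the PCP gadget is never constructed, and building one that survives the reordering permitted by $\sim^I$ while staying regular is exactly where all the difficulty lives. The cleaner and historically standard route is to observe that, since $b$ commutes with both $a$ and $c$ while $a$ and $c$ are dependent, the traces over $\{a,b,c\}$ are in bijection with pairs (word over $\{a,c\}$, number of $b$'s); rational trace languages over this $P_3$ pattern therefore correspond to rational subsets of $\{a,c\}^* \times \mathbb{N}$, and the equivalence problem for such rational relations is a known undecidable problem (Ibarra; Fischer--Rosenberg style results), to which one reduces directly. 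One must also check that the reduction is unharmed by the remaining letters of an arbitrary non-quasi-transitive $(\Sigma, I)$, e.g.\ by confining the constructed languages to $\{a,b,c\}^*$. As it stands, your undecidability argument is an outline of intent, not an argument, so the proposal is incomplete.
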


\begin{theorem}[Sakarovitch \cite{Sakarovitch92}] 
  (cf.~\cite[Thm.~6.2.7]{Ochmanski95}) The problem of whether the
  trace closure of the language of a regexp over $(\Sigma, I)$ is
  regular is decidable iff $I$ is quasi-transitive.
\end{theorem}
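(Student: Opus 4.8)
The plan is to recast the assertion as a decision problem about rational trace languages. For a regexp $E$, the word language $[\sem{E}]^I$ is regular if and only if the trace language $\{[w]^I \mid w \in \sem{E}\}$ --- which is \emph{rational}, being the image of the rational language $\sem{E}$ under the canonical projection $\Sigma^* \to \Sigma^*/{\sim^I}$ --- is \emph{recognizable}, i.e.\ recognized by a finite monoid. So what must be decided is the recognizability of a rational trace language given by a regexp, and the claim is that this is possible exactly when $I$ is quasi-transitive.

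For the ``if'' direction, assume $I$ is quasi-transitive. Then the reflexive closure of $I$ is an equivalence relation, so $\Sigma$ splits into \emph{independence cliques} $\Sigma_1, \ldots, \Sigma_m$, with distinct letters of one clique independent and letters of distinct cliques dependent; every trace then has a unique \emph{block normal form} $B_1 \cdots B_k$ in which each $B_j$ is a nonempty multiset over a single clique and consecutive blocks come from distinct cliques, and the trace monoid is the free product $\Mf(\Sigma_1) \ast \cdots \ast \Mf(\Sigma_m)$. On top of this I would argue as follows. First, every left quotient $D_w [\sem{E}]^I$ is again trace closed, hence is the trace closure of a regular language, and in fact a regexp for a regular generator of it is computable from $E$ by the syntactic reordering (Brzozowski-style) derivative developed in this paper. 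Second, by the cited result that, for quasi-transitive $I$, equality of trace closures of regular languages is decidable, we obtain an oracle that decides whether two such quotients coincide. Third --- and this is the crux --- one shows, exploiting the block structure, that there is a \emph{computable} bound $N(E)$ with the property that whenever $[\sem{E}]^I$ is regular it has at most $N(E)$ distinct left quotients, equivalently its minimal deterministic automaton has at most $N(E)$ states. Granting this, the decision procedure is a breadth-first exploration of the left quotients of $[\sem{E}]^I$, represented by the regexps produced by the reordering derivative and identified by means of the equality oracle: $[\sem{E}]^I$ is regular precisely if this exploration closes up into an automaton of at most $N(E)$ states, which it does in finitely many steps, and otherwise the number of distinct quotients provably exceeds $N(E)$, so the search halts with a negative answer. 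I expect the real difficulty to lie in the computable bound $N(E)$: one has to show that a regular trace closure cannot be arbitrarily complex relative to $E$, the point being that the clique/block structure of a quasi-transitive $(\Sigma,I)$ forces a regular $[\sem{E}]^I$ to decompose, up to a size bounded in $|E|$, into products of languages that are regular inside single cliques (where rationality and recognizability of subsets of a free commutative monoid are tightly linked) glued along the dependence relation.

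For the ``only if'' direction, assume $I$ is not quasi-transitive and fix three necessarily distinct letters $a, b, c$ with $a I b$, $b I c$ and $a D c$. It suffices to treat the sub-independence-alphabet on $\{a,b,c\}$: for a language over $\{a,b,c\}$ its trace closure inside $\Sigma$ agrees with its closure inside $\{a,b,c\}$, and regularity is insensitive to the ambient alphabet. Over $\{a,b,c\}$ the trace monoid is $\{a,c\}^* \times \Nat$ --- a trace is pinned down by its $\{a,c\}$-projection, since $a$ and $c$ never commute, together with its number of $b$'s, which float past everything --- and, by Mezei's theorem, its recognizable subsets are exactly the finite unions $\bigcup_i R_i \times P_i$ with each $R_i \subseteq \{a,c\}^*$ regular and each $P_i \subseteq \Nat$ ultimately periodic. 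The plan is to reduce an undecidable problem --- Post's correspondence problem, via the undecidability of recognizability of rational subsets of $\{a,c\}^* \times \Nat$ (equivalently, of rational transductions with unary output) --- to the regularity problem, constructing from an instance a regexp $E$ over $\{a,b,c\}$ for which $[\sem{E}]^I$ is regular if and only if the instance has no solution; the $\{a,c\}^*$ coordinate is used to carry the two competing concatenations of the instance and the $b$-count coordinate to synchronise their lengths. I expect this reduction to be the main obstacle: with only a single $\Nat$-coordinate at one's disposal the standard two-word encodings of the correspondence problem have to be simulated through counting alone, and one must check with care that a solution of the instance forces non-recognizability --- typically by pumping a hypothetical recognizing automaton against the block structure --- while the absence of a solution leaves a genuine finite union of products of the form above.
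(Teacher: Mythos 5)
The paper does not prove this theorem at all --- it is imported verbatim from Sakarovitch \cite{Sakarovitch92} (cf.~\cite[Thm.~6.2.7]{Ochmanski95}) as background --- so there is no in-paper proof to compare yours against. Judged on its own terms, your proposal correctly identifies the relevant structures: the recasting as recognizability of a rational trace language, the decomposition of $\Sigma$ into independence cliques and of the trace monoid into a free product of free commutative monoids when $I$ is quasi-transitive, and, in the non-quasi-transitive case, the submonoid $\{a,c\}^* \times \Nat$ together with Mezei's description of its recognizable subsets. But in both directions the argument stops exactly at the step that carries all of the difficulty, which you yourself flag as ``the real difficulty'' and ``the main obstacle''. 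That makes this a roadmap rather than a proof.

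Concretely: in the ``if'' direction everything rests on the asserted existence of a computable bound $N(E)$ such that $[\sem{E}]^I$, whenever regular, has at most $N(E)$ left quotients. No argument is given for this claim, and it is essentially as strong as the theorem itself --- granted the bound, decidability is immediate from the equivalence oracle, while without it your breadth-first exploration has no termination criterion for the negative answer (the equivalence test only makes regularity \emph{semi}-decidable). The known proofs do not go this way; they work with the structure of rational subsets of the free product directly (in each commutative factor, rational sets are semilinear and recognizability of semilinear sets is decidable by Ginsburg--Spanier, and this is then lifted through the free product). In the ``only if'' direction you correctly reduce to the undecidability of recognizability for rational subsets of $\{a,c\}^* \times \Nat$, but that is precisely the ``last'' decision problem that Sakarovitch's 1992 paper was written to settle: with a unary second component the standard Fischer--Rosenberg-style two-tape encodings fail, and the reduction that does work is delicate. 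Announcing ``reduce from PCP, using the $b$-count to synchronise lengths'' without exhibiting the regexp $E$ and verifying both implications leaves that half of the theorem unproven.
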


A closed language is regular iff the corresponding trace language is
accepted by a finite asynchronous (a.k.a.\ Zielonka)
automaton~\cite{Zielonka87,Zielonka95}. In Section~\ref{subsec:starconn}, we
will see further characterizations of regular closed languages based
on star-connected expressions.

\subsection{Rational and Recognizable Languages of Monoids}

Trace languages are a special case of languages of monoids. 

A subset $T$ of a monoid $M$ is called an $M$-language. 

An $M$-language $T$ is called \emph{rational} if $ T = \sem{E}^M$ for
some regular expression $E$ over $M$. Here
$\sem{\_}^M : \RE(M) \to \P M$ interprets any element $m$ of $M$ as
$\{m\}$, the $0, +$ constructors of regular expressions by $\emptyset$
and $\cup$, the $1, {\cdot}$ constructors as mandated by the monoid
structure, and $(\_)^*$ as the appropriate least fixpoint.

An $M$-language $T$ is called \emph{recognizable} if there is a
deterministic finite $M$-automaton accepting $T$. An deterministic
$M$-automaton is given by a state set $Q$, an initial state
$q_0 \in Q$, a set of final states $F \subseteq Q$, and a right action
$\delta$ of $M$ on $Q$. An element $m \in M$ is accepted by the
automaton if $\delta_m q_0 \in F$.

Kleene's celebrated theorem says that, for languages of free monoids
on finite sets (i.e., word languages over finite alphabets),
rationality and recognizability are equivalent conditions (and we can
thus just speak about regularity). For a general monoid, however, the
two notions are different.

\begin{theorem}[Kleene~\cite{Kleene56}] Let $M$ be the free monoid
  $\Sigma^*$ on a finite set $\Sigma$. An $M$-language $T$ is rational
  iff $T$ is recognizable.
\end{theorem}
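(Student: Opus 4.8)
The plan is to prove the two implications separately, reusing for one of them the automata constructions already developed above.

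\emph{Rational implies recognizable.} Here I would simply invoke the derivative constructions: given a regexp $E$ over $\Sigma$, the Antimirov automaton is a finite nondeterministic automaton accepting $\sem{E}$, which by determinizability \cite{RabinS59} yields a finite deterministic automaton accepting $\sem{E}$ (equivalently, one can take the suitably quotiented Brzozowski automaton directly). Since for $M = \Sigma^*$ the monoid interpretation $\sem{\_}^M$ coincides with the word-language semantics $\sem{\_}$, every rational $M$-language is recognizable. One could instead give Thompson's recursion on $E$ producing an $\eps$-NFA and then eliminate $\eps$-transitions and determinize, but this is subsumed by the material already in the paper.

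\emph{Recognizable implies rational.} Let $(Q, q_0, F, \delta)$ be a deterministic finite automaton with $\delta$ a right action of $\Sigma^*$ on $Q$, accepting $T$, and take $Q = \{1, \ldots, n\}$ without loss of generality. For $k \in \{0, \ldots, n\}$ and $i, j \in Q$, let $R^k_{ij} \subseteq \Sigma^*$ be the set of words $w$ such that reading $w$ from $i$ ends in $j$ while every proper intermediate state visited has index at most $k$. I claim each $R^k_{ij}$ is rational, by induction on $k$. For $k = 0$ we get the finite language $\{a \in \Sigma \mid \delta_a i = j\} \cup \{\eps \mid i = j\}$, which is rational. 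For the step, every path counted by $R^k_{ij}$ either never uses state $k$ as a proper intermediate state, or can be cut at its successive visits to $k$; this yields the identity $R^k_{ij} = R^{k-1}_{ij} \cup R^{k-1}_{ik} \cdot (R^{k-1}_{kk})^* \cdot R^{k-1}_{kj}$, whose right-hand side is rational since rational languages are closed under $\cup$, $\cdot$ and $(\_)^*$ by the definition of regexps. Finally $T = \bigcup_{j \in F} R^n_{q_0 j}$ is a finite union of rational languages, hence rational.

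The main obstacle is this second direction, concretely the soundness-and-completeness check for the decomposition of $R^k_{ij}$: one must verify that cutting a path at each visit to state $k$ puts it in bijection with words of $R^{k-1}_{ik} \cdot (R^{k-1}_{kk})^* \cdot R^{k-1}_{kj}$ and that intermediate-state indices are tracked correctly across the factors. An alternative that trades this bookkeeping for algebraic bookkeeping is to set up the linear system $X_i = \bigcup_{a \in \Sigma} \{a\} \cdot X_{\delta_a i} \cup B_i$, with $B_i \eqdf \one$ if $i \in F$ and $B_i \eqdf \emptyset$ otherwise, and solve it by Gaussian-style elimination using Arden's rule that $X = A \cdot X \cup B$ has a unique solution $A^* \cdot B$ whenever $\eps \notin A$; here the crux becomes the invariant that throughout elimination the right-hand sides stay rational and every self-loop coefficient $A$ remains $\eps$-free, so that Arden's rule keeps applying.
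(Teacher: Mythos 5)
The paper does not prove this theorem at all: it is quoted as classical background and attributed to Kleene \cite{Kleene56}, so there is no in-paper argument to compare against. Your proposal is a correct and complete proof plan of the standard kind. For the direction ``rational $\Rightarrow$ recognizable'' you rightly observe that the paper's own machinery suffices: the Antimirov automaton for $E$ is a finite NFA accepting $\sem{E}$, and either Rabin--Scott determinization or, directly, the quotiented Brzozowski automaton gives a finite \emph{deterministic} automaton, which is what the paper's definition of recognizable (a right action of $M$ on a finite state set) demands; the identification $\sem{E}^{\Sigma^*} = \sem{E}$ is immediate. For the converse you give the McNaughton--Yamada transitive-closure construction: the decomposition $R^k_{ij} = R^{k-1}_{ij} \cup R^{k-1}_{ik} \cdot (R^{k-1}_{kk})^* \cdot R^{k-1}_{kj}$ is the correct key identity, the base case is a finite (hence rational) language because $\Sigma$ is finite, and $T = \bigcup_{j \in F} R^n_{q_0 j}$ finishes the argument; the path-cutting bijection you flag is routine bookkeeping (cut a path at each visit to state $k$; the pieces have intermediate states of index at most $k-1$, and conversely concatenation of such pieces stays within index $k$). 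Your Arden's-rule alternative is equally standard and also works, with the $\eps$-freeness of self-loop coefficients guaranteed because the system's coefficients are sums of single letters and elimination only composes them by concatenation with such. Either route is fine; what your proof buys relative to the paper is simply that the ``hard'' direction left implicit by the citation is made explicit, while the easy direction is shown to be subsumed by the derivative constructions the paper develops anyway.
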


\begin{theorem}[McKnight~\cite{McKnight64}] Let $M$ be finitely
  generated. If an $M$-language $T$ is recognizable, then $T$ is
  rational.
\end{theorem}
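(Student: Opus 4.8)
The plan is to pull the recognizing automaton back along a finite generating set to an ordinary finite automaton over a finite alphabet, invoke Kleene's theorem there, and then push the resulting regular expression forward to $M$.

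First, since $M$ is finitely generated, I would fix a finite set $A$ together with a surjective monoid homomorphism $h : A^* \to M$ (the evaluation map, sending a word of generators to its product in $M$); without loss of generality $A \subseteq M$ and $h$ restricts to the inclusion on single-letter words. Given a deterministic finite $M$-automaton $(Q, q_0, F, \delta)$ with $m \in T$ iff $\delta_m\, q_0 \in F$, I would define $\delta'_w\, q \eqdf \delta_{h(w)}\, q$ for $w \in A^*$. Because $h$ is a homomorphism and $\delta$ is a right action of $M$ on $Q$, one checks that $\delta'$ is a right action of the free monoid $A^*$ on $Q$ ($\delta'_\eps\, q = \delta_1\, q = q$ and $\delta'_{w_1 w_2}\, q = \delta_{h(w_1)h(w_2)}\, q = \delta'_{w_2}(\delta'_{w_1}\, q)$); thus $(Q, q_0, F, \delta')$ is an ordinary deterministic finite automaton over the finite alphabet $A$. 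Writing $L \eqdf \{w \in A^* \mid \delta'_w\, q_0 \in F\}$ for its accepted word language, surjectivity of $h$ yields $h(L) = T$: if $w \in L$ then $\delta_{h(w)}\, q_0 = \delta'_w\, q_0 \in F$, so $h(w) \in T$; conversely any $m \in T$ equals $h(w)$ for some $w$, whence $\delta'_w\, q_0 = \delta_m\, q_0 \in F$ and $w \in L$.

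Next, by Kleene's theorem $L = \sem{E}$ for some regexp $E$ over $A$, and since $A \subseteq M$ the very same $E$ is a regexp over $M$. The remaining step is the lemma $\sem{E}^M = h(\sem{E})$, which I would prove by induction on $E$: the image map $h : \P A^* \to \P M$ preserves $\{a\}$ (since $h(a) = a$), $\emptyset$, $\cup$ and $\one$ (since $h(\eps) = 1$), preserves ${\cdot}$ because $h$ is a monoid homomorphism, and preserves $(\_)^*$ because it preserves arbitrary unions and $L^* = \bigcup_{n \in \Nat} L^n$ (equivalently, because $h$, being a left adjoint, commutes with the relevant least fixpoint). Combining everything, $\sem{E}^M = h(\sem{E}) = h(L) = T$, so $T$ is rational.

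I do not expect a genuine obstacle here; the two points needing a careful line each are the verification that $\delta'$ is an action of $A^*$ (so that Kleene's theorem applies verbatim) and the star case of the image lemma. It is worth noting where the hypothesis is used: finite generation gives both a \emph{finite} alphabet $A$, so that Kleene's theorem is available, and \emph{surjectivity} of $h$, so that $h(L) = T$. The converse implication fails in general, so the hypothesis cannot be dropped.
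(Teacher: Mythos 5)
Your proof is correct. Note that the paper does not prove this statement at all---it is quoted as background with a citation to McKnight---so there is no in-paper argument to compare against; what you give is the standard proof of McKnight's theorem. The two pivotal points are exactly the ones you isolate: the finite generating set yields a surjection $h : A^* \to M$ with $A$ finite, so the pulled-back automaton $(Q, q_0, F, \delta')$ is an ordinary DFA over a finite alphabet and Kleene's theorem applies, and the push-forward step works because taking $h$-images of languages preserves $\emptyset$, $\cup$, singletons, $\one$, concatenation (as $h$ is a homomorphism) and star (via $L^* = \bigcup_{n \in \Nat} L^n$), so $h(\sem{E}) = \sem{E}^M$ by induction on $E$ and hence $T = h(L)$ is rational. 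Your closing remark on where finiteness and surjectivity are each used, and on the failure of the converse (which the paper illustrates with trace monoids), is also accurate.
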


Given a monoid $M$ and a congruence $\equiv$ on $M$, the set
$M/{\equiv}$ is a monoid too. We view $M/{\equiv}$-languages as sets of
equivalence classes wrt.\, $\equiv$.

\begin{proposition}
\label{prop:rat-rec}
  Given a monoid $M$ and a congruence $\equiv$ on it.
\begin{enumerate}
\item The $M/{\equiv}$-language $\sem{E}^{M/{\equiv}}$ of a regular
  expression $E$ is expressible via its $M$-language $\sem{E}^M$ by
  $\sem{E}^{M/{\equiv}} = \{t \in M/{\equiv} \mid \exists u \in t.\, u \in \sem{E}^M \}$.
 %

\item A $M/{\equiv}$-language $T$ is recognizable iff its flattening
  $\bigcup T$ into an $M$-language is recognizable.
\end{enumerate}
\end{proposition}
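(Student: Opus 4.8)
The plan is to handle the two parts separately, the first by a routine induction on $E$ and the second by a Myhill--Nerode-style argument that works over an arbitrary monoid.

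For part 1, write $q : M \to M/{\equiv}$ for the canonical projection, a surjective monoid homomorphism, and let $q_* : \P M \to \P (M/{\equiv})$ be the induced direct-image map, $q_*(L) \eqdf \{q(u) \mid u \in L\}$. First I would check that $q_*$ is a homomorphism of Kleene algebras: it clearly preserves $\emptyset$ and arbitrary unions, it preserves $\one$ since $q(1_M) = 1_{M/{\equiv}}$, it preserves ${\cdot}$ because $q$ is multiplicative, and it sends a singleton language $\{m\}$ to the singleton $\{q(m)\}$ (matching the interpretation of the atom $m$ of $E$ over $M/{\equiv}$). For $(\_)^*$, since $q_*$ preserves unions it is continuous, so it commutes with the least fixpoints $\mu X.\, \one \cup L \cdot X$ defining the Kleene star: one unfolds both fixpoints as the union of the ascending chain starting from $\emptyset$ and uses $q_*(\one \cup L \cdot X) = \one \cup q_*(L) \cdot q_*(X)$ at each stage. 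An easy induction on $E$ then gives $\sem{E}^{M/{\equiv}} = q_*(\sem{E}^M)$, and $q_*(\sem{E}^M) = \{t \in M/{\equiv} \mid \exists u \in t.\, u \in \sem{E}^M\}$ because $t = q(u)$ holds exactly when $u \in t$.

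For part 2, the forward direction is a reindexing: given a deterministic finite $M/{\equiv}$-automaton $(Q, q_0, F, \delta)$ recognising $T$, the tuple $(Q, q_0, F, \delta')$ with $\delta'_m \eqdf \delta_{q(m)}$ is a deterministic finite $M$-automaton (a right action of $M$ because $q$ is a homomorphism), and $m$ is accepted by it iff $q(m) \in T$ iff $m \in \bigcup T$; hence $\bigcup T$ is recognisable. For the converse, set $L \eqdf \bigcup T$ and note that $L$ is $\equiv$-closed, being a union of $\equiv$-classes. Here I would invoke the Myhill--Nerode characterisation, which is valid over an arbitrary monoid: a recognisable $M$-language $L$ has only finitely many derivatives $D_u L \eqdf \{v \in M \mid uv \in L\}$ (each being determined by the state reached on reading $u$ in any recognising automaton), and these form the state set of the minimal deterministic automaton $(\{D_u L \mid u \in M\}, D_{1_M} L, \{D_u L \mid u \in L\}, \delta)$ with $\delta_m(D_u L) \eqdf D_{um} L$, which accepts $L$. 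The crucial point is that, because $\equiv$ is a congruence and $L$ is $\equiv$-closed, $u \equiv u'$ implies $D_u L = D_{u'} L$ (indeed $uv \equiv u'v$, so $uv \in L \iff u'v \in L$); hence $\delta_m = \delta_{m'}$ whenever $m \equiv m'$, so $\delta$ factors as $\delta_m = \bar\delta_{q(m)}$ for a well-defined right action $\bar\delta$ of $M/{\equiv}$ on this finite state set. The resulting deterministic finite $M/{\equiv}$-automaton accepts $T$: $q(m)$ is accepted iff $\delta_m(D_{1_M} L) \in F$ iff $m \in L$ iff (as $L$ is $\equiv$-closed) $q(m) \subseteq L$ iff $q(m) \in T$.

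I expect the only genuine subtlety to be in the converse of part 2. An arbitrary finite automaton recognising $\bigcup T$ need not respect $\equiv$ --- reading $\equiv$-equivalent elements may land in distinct states --- so one cannot simply quotient the letter side of the given automaton; passing to the residual (minimal) automaton, whose states only remember a derivative and are therefore automatically $\equiv$-invariant, is what makes the factorisation through $M/{\equiv}$ go through, and this is the one place where recognisability (finiteness of the set of derivatives), rather than mere acceptance by some automaton, is really used. The Kleene-star case of part 1 relies only on the routine continuity argument sketched above.
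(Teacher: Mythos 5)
Your proof is correct. The paper states this proposition as standard background and offers no proof of its own, so there is nothing to diverge from: your argument --- the direct image along the quotient homomorphism $q : M \to M/{\equiv}$ commuting with the rational operations and with the least fixpoint for part~1, and the finite derivative (Myhill--Nerode) automaton of the $\equiv$-closed language $\bigcup T$ factoring through $M/{\equiv}$ for part~2 --- is the standard one and is sound, including your correctly identified subtlety that recognisability (finiteness of the set of derivatives), rather than acceptance by an arbitrary automaton, is what makes the quotiented right action well defined and finite.
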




For the monoid $\Sigma^*/{\sim^I}$ of traces, which is the free
partially commutative monoid, the classes of rational and recognizable
languages are different, the class of rational languages is a proper
subclass of that of recognizable languages. In view of
Proposition~\ref{prop:rat-rec}, a trace language $T$ is rational iff
$T = \{t \in \Sigma^*/{\sim^I} \mid \exists u \in t.\, u \in L\}$ or, equivalently,
$\bigcup T = [L]^I$ for some regular word language $L$ (in the
alternative terminology of Aalbersberg and Welzl \cite{AalbersbergW86},
such a trace language $T$ is called \emph{existentially regular}), and
recognizable iff $\bigcup T = L$ for some regular word language $L$
(such a trace language is called \emph{consistently regular}).

The question of when a rational trace language is recognizable is
nontrivial. We have just seen that, reformulated in terms of word
languages, it becomes: given a regular language $L$, when is its trace
closure $[L]^I$ regular?

\section{Reordering Derivatives}

We are now ready to generalize the Brzozowski and Antimirov
constructions for trace closures of regular languages. To this end, we
switch to what we call reordering derivatives.

\subsection{Reordering Derivative of a Language}

Let $(\Sigma, I)$ be a fixed independence alphabet. We generalize the
concepts of (semantic) nullability and derivative of a
language to concepts of reorderable part and reordering
derivative. 

\begin{definition} \label{def:reord-deriv} We define the
  \emph{$I$-reorderable part} of a language $L$ wrt.\ a word $u$ by
  $\R^I_u L \eqdf \{ v \in L \mid v I u \} $ and the
  \emph{$I$-reordering derivative} along $u$ by
  $D^I_u L \eqdf \{v \mid \exists z \in L.\, u \lsim z \rhd v\}$.
\end{definition}
By Prop.~\ref{prop:scat}, we can equivalently say that
$D^I_u L = \{v \mid \exists z \in L.\, z \in [u]^I \cdot^I v \}$. For
a single-letter word $a$, we get
$D^I_a L = \{ v_lv_r \mid v_lav_r \in L \con v_l I a \} = \{v \mid
\exists z \in L.\, z \in a \cdot^I v \}$.
That is, we require some reordering of $u$ (resp.\ $a$) to be a
prefix, up to reordering, of some word $z$ in $L$ with $v$ as the
corresponding strict suffix.  (In other words, for the sake of
precision and emphasis, we allow reordering of letters within $u$ and
across $u$ and $v$, but not within $v$.)

\begin{example}
  Let $\Sigma \eqdf \{ a, b, c \}$ and $a I b$. Take
  $L \eqdf \{ \eps, a, b, ca, aa, bbb, babca, abbaba \}$. We have
  $\R^I_{a}L=\R^I_{aa}L = \{ \eps, b, bbb \}$,
  $D^I_{a}L = \{ \eps, a, bbca, bbaba \}$ and $D^I_{aa}L = \{ \eps, bbba \}$.
\end{example}

In the special case $I = \emptyset$, we have
$\R^\emptyset_\eps L = L$, $\R^\emptyset_u L = \{\eps \mid L \dn \}$
for any $u \neq \eps$, and $D^\emptyset_u L = D_u L$.
In the general case, the reorderable part and reordering
derivative enjoy the following properties.

\begin{lemma} For every $L$, $L'$, for any $u \in \Sigma^*$, if
  $L \subseteq L'$, then $R^I_u L \subseteq R^I_u L'$ and
  $D^I_u L \subseteq D^I_u L'$.
\end{lemma}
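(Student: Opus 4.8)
The plan is to prove the monotonicity of $\R^I_u$ and $D^I_u$ directly from their set-theoretic definitions in Definition~\ref{def:reord-deriv}, since both are defined by a comprehension that is plainly monotone in $L$. No induction on expressions or on $u$ is needed; this is a one-line set-inclusion argument in each case, and the only mild subtlety is making the existential witness in the $D^I_u$ case explicit.

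First I would handle the reorderable part. Assume $L \subseteq L'$ and fix $u \in \Sigma^*$. If $v \in \R^I_u L$, then by definition $v \in L$ and $v \mathrel{I} u$. Since $L \subseteq L'$, we get $v \in L'$, and $v \mathrel{I} u$ still holds, so $v \in \R^I_u L'$. Hence $\R^I_u L \subseteq \R^I_u L'$.

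Next I would treat the reordering derivative. Assume again $L \subseteq L'$ and fix $u$. If $v \in D^I_u L$, then by definition there is some $z \in L$ with $u \lsim z \rhd v$. Since $L \subseteq L'$, this same $z$ lies in $L'$, and it witnesses $v \in D^I_u L'$. Hence $D^I_u L \subseteq D^I_u L'$. (If one prefers, the equivalent characterization $D^I_u L = \{v \mid \exists z \in L.\, z \in [u]^I \cdot^I v\}$ noted after Definition~\ref{def:reord-deriv} makes the monotonicity equally transparent: only the membership $z \in L$ depends on $L$, monotonically.)

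There is essentially no obstacle here: the statement is an immediate consequence of the fact that both operations are ``$\exists z \in L.\ \Phi(z)$''-style comprehensions (or, for $\R^I_u$, a ``$v \in L \con \Phi(v)$'' comprehension), and such comprehensions are monotone in $L$ by construction. The only thing to be careful about is not to conflate the roles of $u$ and $v$: the definitions fix $u$ and quantify over the witness $z$ (and implicitly over a reordering $u'$ of $u$), so the witness for membership in $D^I_u L'$ is literally the same $z$ that witnessed membership in $D^I_u L$, with the reordering data unchanged.
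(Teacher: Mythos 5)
Your proof is correct and is exactly the intended argument: both $\R^I_u$ and $D^I_u$ are defined by comprehensions in which $L$ occurs only positively (as $v \in L$ resp.\ $\exists z \in L$), so monotonicity is immediate, which is why the paper states this lemma without further proof. Nothing is missing.
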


\begin{lemma} For every $L$,
  \begin{enumerate}
  \item $R^I_\eps L = L$; for every $u, v \in \Sigma^*$,
    $R^I_v (R^I_u L) = R^I_{uv} L$;
  \item for every $u, u' \in \Sigma^*$, $R^I_{\Sigma(u)} L = R^I_{\Sigma (u')} L$.
  \end{enumerate}
\end{lemma}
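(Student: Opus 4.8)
The plan is to reduce everything to one elementary fact about independence of words: by definition, $w \mathrel{I} u$ holds exactly when $a \mathrel{I} b$ for all $a \in \Sigma(w)$ and $b \in \Sigma(u)$, so the predicate ${-}\mathrel{I}{-}$ factors through the letter sets of its arguments. Two immediate consequences do all the work. First, $w \mathrel{I} \eps$ holds for every $w$, since the defining condition is vacuous ($\Sigma(\eps) = \emptyset$). Second, because $\Sigma(uv) = \Sigma(u) \cup \Sigma(v)$, we have $w \mathrel{I} (uv)$ iff $w \mathrel{I} u$ and $w \mathrel{I} v$. Throughout I would use only the unfolding $R^I_u L = \{ w \in L \mid w \mathrel{I} u \}$ from Definition~\ref{def:reord-deriv}.

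For part~1 I would unfold directly. For the unit law, $R^I_\eps L = \{ w \in L \mid w \mathrel{I} \eps \} = \{ w \in L \} = L$ by the first consequence above. For the composition law, unfolding twice gives $R^I_v(R^I_u L) = \{ w \in R^I_u L \mid w \mathrel{I} v \} = \{ w \in L \mid w \mathrel{I} u \con w \mathrel{I} v \}$, and by the second consequence this equals $\{ w \in L \mid w \mathrel{I} (uv) \} = R^I_{uv} L$. (As with language derivatives, this is a right action read as $R^I_{uv} = R^I_v \dg R^I_u$.)

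For part~2 the same factoring is the whole point: since the membership test $w \mathrel{I} u$ depends on $u$ only through $\Sigma(u)$, the set $R^I_u L$ depends on $u$ only through $\Sigma(u)$; hence $R^I_u L = R^I_{u'} L$ whenever $\Sigma(u) = \Sigma(u')$, which is the claimed equation once $R^I_{\_} L$ is read as a function of the letter set $\Sigma(u)$ (equivalently, setting $R^I_X L \eqdf \{ w \in L \mid a \mathrel{I} b \text{ for all } a \in \Sigma(w), b \in X \}$ and noting $R^I_u L = R^I_{\Sigma(u)} L$). I do not expect a genuine obstacle — the lemma is definitional bookkeeping; the only points that need a little care are the vacuous base case $w \mathrel{I} \eps$ and keeping track that $\Sigma$ of a concatenation is a union, so that independence against a word splits conjunctively over its pieces.
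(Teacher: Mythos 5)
Your proof is correct; the paper states this lemma without proof, and your argument---unfolding $R^I_u L = \{ w \in L \mid w \mathrel{I} u \}$ and using that independence of words factors through their letter sets, so that $w \mathrel{I} \eps$ holds vacuously and $w \mathrel{I} uv$ iff $w \mathrel{I} u \con w \mathrel{I} v$---is exactly the routine verification intended. Your reading of part~2, namely that $R^I_u L$ depends on $u$ only through $\Sigma(u)$ (so $R^I_u L = R^I_{u'} L$ whenever $\Sigma(u) = \Sigma(u')$), is the right one: it is precisely what makes the paper's subsequent extension of $R^I$ to subsets $X \subseteq \Sigma$ well defined.
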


We extend $\R^I$ to subsets of $\Sigma$: by $\R^I_X L$, we mean
$\R^I_u L$ where $u$ is any enumeration of $X$.





\begin{lemma} For every $L$,
\begin{enumerate}
\item $D^I_\eps
  L = L$; for any $u, v \in \Sigma^*$, $D^I_v (D^I_u L) = D^I_{uv} L$;
\item for any $u,
  u' \in \Sigma^*$ such that $u \sim^I u'$, we have $D^I_u L =
  D^I_{u'} L$.
\end{enumerate}
\end{lemma}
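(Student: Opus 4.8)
## Proof Plan for the Final Lemma

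The plan is to prove both parts by working directly with the definition $D^I_u L = \{v \mid \exists z \in L.\, u \lsim z \rhd v\}$ and unfolding the scattering relations through Definition~\ref{def:scat}. For part (1), the equation $D^I_\eps L = L$ is immediate: the only scattering of $\eps$ in $z$ is with degree $n = 0$, giving $z = v_0$ and the condition $v_j I u_i$ vacuously true, so $\eps \lsim z \rhd v$ holds iff $z = v$, hence $D^I_\eps L = L$. The substantive claim is the action law $D^I_v(D^I_u L) = D^I_{uv} L$. First I would prove the inclusion $\subseteq$: given $w \in D^I_v(D^I_u L)$, there is some intermediate word $z' \in D^I_u L$ with $v \lsim z' \rhd w$, and in turn some $z \in L$ with $u \lsim z \rhd z'$. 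The task is to splice the two scatterings of $u$ in $z$ (with suffix $z'$) and of $v$ in $z'$ (with suffix $w$) into a single scattering of $uv$ in $z$ with suffix $w$. Unfolding, $z = v_0 u_1 v_1 \cdots u_n v_n$ where $u = u_1 \cdots u_n$ and $u'_1 \cdots u'_n \sim^I u$ (being careful with the $\lsim$ version, which allows reordering within the prefix), and $z' = v_0 v_1 \cdots v_n$ (the concatenation of the leftover pieces, up to the placement dictated by the scattering). Then $z' = w_0 v'_1 w_1 \cdots v'_m w_m$ with the $v'$-pieces enumerating (a reordering of) $v$. Since the $v'$-pieces live inside the $v_j$'s of $z$, I can locate them back in $z$ and interleave them with the $u_i$'s to obtain a scattering of $uv$ in $z$ of degree at most $n + m$, whose validity (the independence side-conditions $v_j I u_i$) follows from the two given sets of side-conditions together with the fact that any $v$-piece sits to the right of the $u$-pieces it needs to commute past. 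The reverse inclusion $\supseteq$ is the mirror image: a scattering of $uv$ in $z$ is cut after the last $u$-letter to produce an intermediate word, yielding $z' \in D^I_u L$ and then $w \in D^I_v(D^I_u L)$.

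Part (2), that $u \sim^I u'$ implies $D^I_u L = D^I_{u'} L$, follows almost immediately from the definition of $\lsim$ once part (1) is unavailable as a shortcut --- but in fact it is even more direct. Recall $u \lsim z \rhd v$ means $\exists u''.\, u \sim^I u'' \con u'' \lhd z \rhd v$. Since $\sim^I$ is an equivalence relation, if $u \sim^I u'$ then the set of $u''$ with $u \sim^I u''$ coincides with the set of $u''$ with $u' \sim^I u''$; hence $u \lsim z \rhd v \iff u' \lsim z \rhd v$, and therefore $D^I_u L = D^I_{u'} L$ by definition. No interleaving argument is needed here.

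The main obstacle is the bookkeeping in the $\subseteq$ direction of part (1): one must track which $v_j$-block of $z$ each $v'$-piece of the intermediate word came from, and verify that the merged decomposition still satisfies the "every underlined block is independent of every non-underlined block to its left" condition of Definition~\ref{def:scat}. Concretely, a $v$-piece $v'_k$ originally sitting inside $v_j \subseteq z$ must, in the merged scattering, commute past exactly those $u_i$-blocks with $i > j$ (which it does, because the original scattering of $u$ in $z$ guarantees $v_j I u_i$ for all $j < i$, and a sub-block of $v_j$ inherits this independence) as well as past the material to its left coming from the inner scattering (handled by that scattering's own side-conditions). Keeping the indices straight --- and handling the reordering slack introduced by $\sim^I$ in the $\lsim$ relation, which means the $u$-pieces need not appear in $z$ in their original order within $u$ --- is the delicate part; everything else is routine unfolding. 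It may be cleanest to first establish the pure-$\lhd$ version of the action law (no reordering in the prefix) and then lift it using part (2) together with a lemma that $u \lsim z \rhd v$ can be witnessed by decomposing through an intermediate reordering; alternatively, Proposition~\ref{prop:scat} lets one phrase the whole argument in terms of $\cdot^I$, where associativity of $\cdot^I$ (already stated in the excerpt) does much of the work.
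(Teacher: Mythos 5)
The paper states this lemma without proof, so there is nothing to compare against; judged on its own, your plan is essentially correct. Part (2) and $D^I_\eps L = L$ are exactly as immediate as you say, and for the action law your splicing argument for $\subseteq$ identifies the right key point: a fragment of a $v$-block sitting inside some $v_j$ of the outer scattering inherits $v_j$'s independence from the $u_i$ with $i > j$, which is what lets you both commute the interleaved prefix back to the form $u''v''\sim^I uv$ and verify the left-independence side conditions of Definition~\ref{def:scat}. Two quibbles. First, in the $\supseteq$ direction the phrase ``cut after the last $u$-letter'' is wrong if read literally: the witness $s \sim^I uv$ with $s \lhd z \rhd w$ need not have its $u$-letters preceding its $v$-letters (e.g.\ $uv = ab \sim^I ba$ when $a I b$), so the $u$-part of $s$ is not a prefix of $s$. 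The correct move is to apply Proposition~\ref{prop:scat-uv} to $uv \sim^I s$ to get $u \lsim s \rsim v$, i.e.\ a decomposition $s = p_0 q_1 p_1 \ldots q_k p_k$ with $q_1\ldots q_k \sim^I u$, $p_0 \ldots p_k \sim^I v$ and $p_j I q_i$ for $j<i$; one then removes the $q$-positions from $z$ to get $z'$ and checks (by the same inheritance argument) that $u \lsim z \rhd z'$ and $v \lsim z' \rhd w$. Second, the claimed degree $n+m$ of the merged scattering is not right in general, since a single $v$-block of the inner scattering may be fragmented by several interleaved $u_i$; this is immaterial here because the lemma imposes no degree bound, but it matters for the bounded variants later in the paper. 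The cleanest writeup is the route you mention last: by Proposition~\ref{prop:scat}, $D^I_u L = \{v \mid \exists z \in L.\ z \in [u]^I \cdot^I v\}$, and then the action law follows from $[u]^I \cdot^I [v]^I = [uv]^I$ (Proposition~\ref{prop:reord-concat} applied to singletons) together with associativity of $\cdot^I$, which absorbs all of the index bookkeeping.
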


\begin{proposition} For every $L$, 
\begin{enumerate}
\item for any  $u \in \Sigma^*$, $D_u ([L]^I) = [D^I_u L]^I$; \\
  if $L$ is closed (i.e., $[L]^I = L$), then, for any
  $u \in \Sigma^*$, $D^I_u L$ is closed and $D_u L = D^I_u L$;
\item for any $u, v \in \Sigma^*$, $uv \in [L]^I$ iff $v \in [D^I_u L]^I$;
\item for any  $u \in \Sigma^*$, $u \in [L]^I$ iff $(D^I_u L) \dn$; 
\item $[L]^I = \{\eps \mid L \dn\} \cup \bigcup_{a \in \Sigma} \{a\} \cdot [D^I_a L]^I$.
\end{enumerate}
\end{proposition}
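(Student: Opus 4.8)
The plan is to prove the four items in sequence, with item~(1) doing most of the work and the remaining items following as relatively easy corollaries.

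\textbf{Item (1).} First I would establish the key identity $D_u ([L]^I) = [D^I_u L]^I$. The $\supseteq$ direction: take $v \in D^I_u L$, so there is $z \in L$ with $u \lsim z \rhd v$; unfolding the definition, $u \sim^I u'$ and $u' \lhd z \rhd v$ for some $u'$, which by Prop.~\ref{prop:scat} means $z \in u' \cdot^I v$, hence (by Prop.~\ref{prop:scat-uv} or directly) $u'v \sim^I z$, so $uv \sim^I z \in L$, giving $uv \in [L]^I$, i.e.\ $v \in D_u([L]^I)$; then close up on the right, using that $D_u$ of a closed language is closed (which follows from $\sim^I$ being a congruence, so right-multiplication respects it). The $\subseteq$ direction: take $v \in D_u([L]^I)$, so $uv \in [L]^I$, meaning $uv \sim^I z$ for some $z \in L$; by Prop.~\ref{prop:scat-uv}, $uv \sim^I z$ gives $u \lsim z \rsim v$, i.e.\ $u \sim^I u'$, $u' \lhd z \rhd v'$, $v' \sim^I v$ for some $u', v'$. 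The subtlety is that we get $v'$, not $v$ itself, scattered in $z$; but $v' \in D^I_u L$ by definition, and $v \sim^I v'$, so $v \in [D^I_u L]^I$. This is the main obstacle: carefully tracking the difference between $\lsim z \rhd$ and $\lsim z \rsim$ and seeing that the trace closure on the outside absorbs exactly the discrepancy. For the ``moreover'' clause, if $L$ is closed then $[L]^I = L$, so $D_u L = D_u([L]^I) = [D^I_u L]^I \supseteq D^I_u L$; conversely $D^I_u L \supseteq D_u L$ is direct from the definitions (take $u' = u$, $n = 1$ with $v_0 = \eps$, i.e.\ the trivial scattering $z = uv$), and then $D^I_u L = D_u L$ forces $D^I_u L$ closed since $D_u L$ is.

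\textbf{Items (2) and (3).} Item~(2) is immediate from item~(1): $uv \in [L]^I$ iff $v \in D_u([L]^I)$ (definition of classical derivative) iff $v \in [D^I_u L]^I$ (by item~(1)). Item~(3) is the $v = \eps$ special case of item~(2) together with the basic facts, recalled in Section~\ref{sec:derivatives-of-a-language}, that $\eps \in M$ iff $M \dn$ and that a language is nullable iff its trace closure is (since $\eps \sim^I w$ only for $w = \eps$): $u \in [L]^I$ iff $u\eps \in [L]^I$ iff $\eps \in [D^I_u L]^I$ iff $(D^I_u L) \dn$.

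\textbf{Item (4).} Here I would instantiate the standard decomposition $M = \{\eps \mid M\dn\} \cup \bigcup_{a} \{a\}\cdot D_a M$, recalled in the preliminaries, at $M = [L]^I$. This gives $[L]^I = \{\eps \mid [L]^I \dn\} \cup \bigcup_a \{a\} \cdot D_a([L]^I)$. Now rewrite $[L]^I \dn$ as $L \dn$ (nullability is preserved by trace closure) and rewrite each $D_a([L]^I)$ as $[D^I_a L]^I$ by item~(1). That yields exactly $[L]^I = \{\eps \mid L\dn\} \cup \bigcup_{a \in \Sigma} \{a\} \cdot [D^I_a L]^I$, completing the proof.
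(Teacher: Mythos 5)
Your proof is correct. The paper states this proposition without proof, so there is nothing to compare against line by line, but your argument is the natural (and surely intended) one: the whole content is the identity in item~(1), which you correctly reduce to Proposition~\ref{prop:scat-uv} ($uv \sim^I z \iff u \lsim z \rsim v$), and you rightly identify and handle the one real subtlety, namely that the backward direction only yields some $v' \sim^I v$ with $u \lsim z \rhd v'$, so the outer closure $[\,\cdot\,]^I$ is exactly what absorbs the discrepancy between $\lsim z \rhd$ and $\lsim z \rsim$. Items (2)--(4) then follow from (1) together with the standard facts about $D_u$ recalled in Section~\ref{sec:derivatives-of-a-language}, as you say.
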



\begin{example}\label{ex:sem-deriv-infinite} 
  Let $\Sigma \eqdf \{ a , b \}$ and $a I b$. Take $L$ to be the regular
  language $\sem{(ab)^*}$. We have already noted that the language
  $[L]^I = \{ u \mid | u |_a = | u |_b \}$ is not regular.  For any
  $n \in \Nat$, $D^I_{b^n} L = \{a^n\} \cdot L = \sem{a^n(ab)^*}$ whereas
  $D_{b^n} ([L]^I) = \{a^n\} \cdot^I [L]^I = \{ u \mid | u |_a = | u |_b +
  n\}$.
  We can see that $[L]^I$ has infinitely many derivatives, none of which are regular, and
  $L$ has infinitely many reordering derivatives, all regular.
\end{example}

\subsection{Brzozowski Reordering Derivative}

The reorderable parts and reordering derivatives of regular languages
turn out to be regular. We now show that they can be computed
syntactically, generalizing the classical syntactic nullability and
Brzozowski derivative operations \cite{Brzozowski64}.

\begin{definition} The \emph{$I$-reorderable part} and the
  \emph{Brzozowski $I$-reordering derivative} of a regexp are given by
  functions $\R^I, D^I : \RE \times \Sigma \to \RE$ and
  $\R^I, D^I : \RE \times \Sigma^* \to \RE$ defined recursively by
\[
\small
\begin{array}{rcl@{\qquad}rcl}
R^I_a b & \eqdf & \mathsf{if~} a I b \mathsf{~then~} b \mathsf{~else~} 0 & 
D^I_a b & \eqdf & \mathsf{if~} a = b \mathsf{~then~} 1 \mathsf{~else~} 0 \\
R^I_a 0 & \eqdf & 0 &
D^I_a 0 & \eqdf & 0 \\
R^I_a (E + F) & \eqdf & R^I_a E + R^I_a F &
D^I_a (E + F) & \eqdf & D^I_a E + D^I_a F \\
R^I_a 1 & \eqdf & 1 &
D^I_a 1 & \eqdf & 0 \\
R^I_a (E F) & \eqdf & (R^I_a E) (R^I_a F) &
D^I_a (E F) & \eqdf & (D^I_a E) F + (R^I_a E) (D^I_a F) \\  
R^I_a (E^*) & \eqdf & (R^I_a E)^* &
D^I_a (E^*) & \eqdf & (R^I_a E)^* (D^I_a E) E^* \\[2ex]
R^I_\eps E & \eqdf & E &
D^I_\eps E & \eqdf & E \\
R^I_{ua} E & \eqdf & R^I_a (R^I_u E) &
D^I_{ua} E & \eqdf & D^I_a (D^I_u E) 
\end{array}
\]
\end{definition}

The regexp $R_u E$ is nothing but $E$ with all occurrences of letters
dependent with $u$ replaced with $0$. The definition of $D$ is more
interesting. Compared to the classical Brzozowski derivative, the
nullability condition $E \dn$ in the $EF$ case has been replaced with
concatenation with the reorderable part $\R^I_a E$, and the $E^*$ case
has also been adjusted.

The functions $R$ and $D$ on regexps compute their
semantic counterparts on the corresponding regular languages.

\begin{proposition}\label{prop:brzozowski-sem-deriv} For any $E$,
\begin{enumerate}
\item for any $a \in \Sigma$, $R^I_a \sem{E} = \sem{R^I_a E}$ and $D^I_a \sem{E} = \sem{D^I_a E}$;
\item for any $u \in \Sigma^*$, $R^I_u \sem{E} = \sem{R^I_u E}$ and $D^I_u \sem{E} = \sem{D^I_u E}$.
\end{enumerate}
\end{proposition}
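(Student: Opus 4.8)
The plan is to prove Proposition~\ref{prop:brzozowski-sem-deriv} by structural induction on the regexp $E$, first establishing part~(1) for single letters $a \in \Sigma$ and then deriving part~(2) for arbitrary words $u$ by a secondary induction on the length of $u$ using the compositionality lemmas $R^I_v(R^I_u L) = R^I_{uv} L$ and $D^I_v(D^I_u L) = D^I_{uv} L$ on languages together with the defining equations $R^I_{ua} E = R^I_a(R^I_u E)$ and $D^I_{ua} E = D^I_a(D^I_u E)$ on regexps. So the whole weight of the argument is in part~(1).

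For part~(1), I would handle the reorderable-part claim $R^I_a \sem{E} = \sem{R^I_a E}$ and the reordering-derivative claim $D^I_a \sem{E} = \sem{D^I_a E}$ in parallel, since the recursive clause for $D^I_a(EF)$ and $D^I_a(E^*)$ refers to $R^I_a$, so the two statements need to be carried through the induction together (the $R$ claim being the simpler ``helper''). The base cases $a$, $0$, $1$ are immediate unfoldings of the definitions; the $E+F$ case is routine since all the operations in sight distribute over $\cup$ and over $+$. The interesting cases are $EF$ and $E^*$. For $R^I_a(EF)$: one must show $R^I_a(\sem{E}\cdot\sem{F}) = (R^I_a\sem{E})\cdot(R^I_a\sem{F})$ at the language level, i.e.\ that a word $vw$ with $v \in \sem E$, $w \in \sem F$ is independent with $a$ iff both $v$ and $w$ are --- which is just the definition of independence extended to words, since $\Sigma(vw) = \Sigma(v)\cup\Sigma(w)$. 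For $D^I_a(EF)$: the key semantic fact is that a scattering $a \lsim z \rhd v$ with $z = z_1 z_2$, $z_1 \in \sem E$, $z_2 \in \sem F$ splits into two cases according to whether the distinguished $a$ lands inside $z_1$ or inside $z_2$; in the second case the prefix of $z_1$ that gets reordered past $a$ must be all of $z_1$ and it must be independent with $a$, which is exactly why the clause reads $(R^I_a E)(D^I_a F)$ rather than $E\,(D^I_a F)$. This case-split on where $a$ falls, combined with Prop.~\ref{prop:scat} ($z \in a \cdot^I v \iff a \lhd z \rhd v$) and the displayed language identities $\one \cdot^I L = L$, $(L_1 \cup L_2)\cdot^I L = L_1 \cdot^I L \cup L_2 \cdot^I L$, $(L\cdot^I L')\cdot^I L'' = L\cdot^I(L'\cdot^I L'')$, is what I would write out carefully.

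For $E^*$: I would use $\sem{E^*} = \one \cup \sem{E}\cdot\sem{E^*}$ (a fixpoint property), apply the already-established $EF$ clause reasoning with $\sem{E}$ in the head position and $\sem{E^*}$ in the tail, and observe $R^I_a\,\one = \one$, $D^I_a\,\one = \emptyset$, so that unfolding once gives $D^I_a\sem{E^*} = (D^I_a\sem E)\cdot\sem{E^*} \cup (R^I_a\sem E)\cdot(D^I_a\sem{E^*})$; solving this recursion (it has the shape $X = A \cup B\cdot X$ with least solution $B^*\cdot A$, valid since we are over a Kleene algebra) yields $X = (R^I_a\sem E)^*\cdot(D^I_a\sem E)\cdot\sem{E^*}$, matching $\sem{(R^I_a E)^*(D^I_a E)E^*}$ after invoking the induction hypotheses for $R^I_a E$ and $D^I_a E$. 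One subtlety worth a remark: this uses that $D^I_a\sem{E^*}$ is the \emph{least} fixpoint of $X \mapsto (D^I_a\sem E)\cup(R^I_a\sem E)\cdot^I X$, which needs a short argument relating $D^I_a$ of a $\mu$-defined language to a $\mu$-defined language in $X$ --- essentially monotonicity of $D^I_a$ (the first displayed lemma: $L \subseteq L' \Rightarrow D^I_u L \subseteq D^I_u L'$) plus continuity.

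I expect the main obstacle to be the $EF$ case of the $D^I_a$ claim: making the informal ``either $a$ lands in the $E$-part or in the $F$-part, and in the latter case the whole $E$-part commutes past it'' precise requires unwinding Definition~\ref{def:scat} of scattering and the reordering-concatenation definition, checking that the independence side-conditions $v_j I u_i$ line up exactly with the test $\{b \mid auIb\}$ / the $R^I_a$ operator, and that no scattering is double-counted or lost across the split. Everything else --- the $R$ claim in all cases, the base and $+$ cases, and the word-level induction of part~(2) --- is bookkeeping on top of the language-level identities already collected in the excerpt.
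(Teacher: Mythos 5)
Your proposal is correct and follows the standard route (the paper itself omits the proof of this proposition): a simultaneous structural induction on $E$ for the single-letter case, carrying the $R^I_a$ and $D^I_a$ claims together because the $EF$ and $E^*$ clauses of $D^I_a$ mention $R^I_a$, with the case split on where the distinguished $a$ lands for $EF$ and the fixpoint/continuity argument for $E^*$, followed by the word-length induction for part~(2) via the action laws. The only blemish is notational: since everything here lives in the standard semantics, the $E^*$ recursion is $X = (D^I_a\sem{E})\cdot\sem{E^*} \cup (R^I_a\sem{E})\cdot X$ with ordinary concatenation throughout, so the stray $\cdot^I$ there and the appeal to the $\cdot^I$ identities in the $EF$ case are unnecessary (the direct unwinding of the scattering definition, which you also describe, is what does the work).
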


\begin{proposition}\label{prop:brz-autom} For any $E$,
\begin{enumerate}
\item for any $a \in \Sigma$, $v \in \Sigma^*$, $av \in \sem{E}^I \iff v \in \sem{D^I_a E}^I$;
\item for any $u, v \in \Sigma^*$, $uv \in \sem{E}^I \iff v \in \sem{D^I_u E}^I$;
\item for any $u \in \Sigma^*$, $u \in \sem{E}^I \iff (D^I_u E) \dn$.
\end{enumerate}
\end{proposition}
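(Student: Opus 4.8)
The plan is to derive all three statements from Proposition~\ref{prop:brzozowski-sem-deriv} together with the basic properties of the semantic reordering derivative, in particular the last proposition of Section~4.1 (the one stating $D_u([L]^I) = [D^I_u L]^I$, that $uv \in [L]^I \iff v \in [D^I_u L]^I$, that $u \in [L]^I \iff (D^I_u L)\dn$, and the decomposition of $[L]^I$). First I would record that $\sem{E}^I = [\sem{E}]^I$ by the proposition closing Section~3.2, so that everything about $\sem{E}^I$ can be rephrased in terms of trace closures of the ordinary regular language $\sem{E}$. Combining this with Proposition~\ref{prop:brzozowski-sem-deriv}, I get the key bridge identity: for any $u$,
\[
[D^I_u \sem{E}]^I = [\sem{D^I_u E}]^I = \sem{D^I_u E}^I .
\]

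For part~1, I would instantiate the semantic fact $av \in [L]^I \iff v \in [D^I_a L]^I$ with $L \eqdf \sem{E}$; the left side is $av \in \sem{E}^I$ and the right side is $v \in [D^I_a\sem{E}]^I = \sem{D^I_a E}^I$ by the bridge identity, which is exactly the claim. Part~2 is identical but starts from $uv \in [L]^I \iff v \in [D^I_u L]^I$ for a general word $u$. Part~3 follows the same pattern from $u \in [L]^I \iff (D^I_u L)\dn$: the left side is $u \in \sem{E}^I$; for the right side I need $(D^I_u \sem{E})\dn \iff (D^I_u E)\dn$, which holds because nullability of a language is invariant under trace closure ($\eps \in L \iff \eps \in [L]^I$, since $[\eps]^I = \{\eps\}$) and $[D^I_u\sem{E}]^I = \sem{D^I_u E}^I$ is trace closed, so $(D^I_u\sem{E})\dn \iff [D^I_u\sem{E}]^I\dn \iff \sem{D^I_u E}^I\dn$; finally $\sem{D^I_u E}^I\dn \iff (D^I_u E)\dn$ because $\sem{F}^I\dn \iff \eps\in[\sem{F}]^I \iff \eps\in\sem{F} \iff F\dn$ (or one can cite that syntactic nullability is sound for $\sem{\_}$ and note closure does not affect it).

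I do not expect a genuine obstacle here: every ingredient is already available in the excerpt, and the argument is essentially bookkeeping that transports the semantic statements of Section~4.1 across the syntax/semantics correspondence of Proposition~\ref{prop:brzozowski-sem-deriv} and the identification $\sem{E}^I = [\sem{E}]^I$. The only point requiring a line of care is part~3, where one must check that replacing the semantic reordering derivative by its syntactic counterpart does not change nullability; this is immediate once one observes that both describe the same language up to trace closure and that trace closure preserves membership (and non-membership) of $\eps$. An alternative, more uniform route would be to prove part~2 first for all $u$ by induction on $|u|$ using the one-letter identities of Proposition~\ref{prop:brzozowski-sem-deriv} and the action law $D^I_{ua} = D^I_a \circ D^I_u$, then obtain parts~1 and~3 as the cases $|u|=1$ and $v=\eps$ respectively; I would present whichever is shorter, but the direct transport from the semantic proposition seems cleanest.
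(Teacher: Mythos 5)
Your proposal is correct and follows exactly the route the paper intends (the paper states this proposition without proof, as an immediate consequence of Proposition~\ref{prop:brzozowski-sem-deriv}, the identity $\sem{E}^I = [\sem{E}]^I$, and the semantic properties of $D^I_u$ from Section~4.1). The bridge identity $[D^I_u\sem{E}]^I = \sem{D^I_u E}^I$ and the nullability check in part~3 (which reduces to $\sem{D^I_u E}\dn \iff (D^I_u E)\dn$ via $D^I_u\sem{E} = \sem{D^I_u E}$ and the correctness of syntactic nullability) are exactly the right bookkeeping steps.
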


\begin{example}
\label{ex:brz}
Let $\Sigma \eqdf \{ a, b \}$, $a I b$ and $E \eqdf aa + ab + b$.
\[
\begin{array}{rcl}
\small
D^I_b E & = & D^I_b aa + D^I_b ab + D^I_b b \\
        & = & ((D^I_b a)a + (R^I_b a)(D^I_b a)) + ((D^I_b a)b + (R^I_b a)(D^I_b b)) + D^I_b b \\
        & = & (0a + a0) + (0b + a1) + 1 
        \doteq  a + 1 \\[2ex]

D^I_b (E^*) & = & (R^I_b E)^*(D^I_b E)E^* \\
           & = & (aa + a0 + 0)^*((0a + a0) + (0b + a1) + 1)E^* 
           \doteq  (aa)^*(a + 1)E^* \\[2ex]

D^I_{bb} (E^*) & \doteq & D^I_b ((aa)^*(a + 1)E^*) 
              \doteq (aa)^*(a + 1)(aa)^*(a + 1)E^*
\end{array}
\]
\end{example}

As with the classical Brzozowski derivative, we can use the reordering
Brzozowski derivative to construct deterministic automata. For a
regexp $E$, take $Q^E \eqdf \{D^I_u E \mid u \in \Sigma^*\}$,
$q^E_0 \eqdf E$, $F^E \eqdf \{E' \in Q^E \mid E' \dn\}$,
$\delta^E_a E'\eqdf D^I_a E'$ for $E' \in Q^E$. By
Prop.~\ref{prop:brz-autom}, this automaton accepts the closure
$\sem{E}^I$. But even quotiented by the full Kleene algebra theory,
the quotient of $Q^E$ is not necessarily finite, i.e., we may be able
to construct infinitely many different languages by taking reordering
derivatives. For the regexp from Example~\ref{ex:sem-deriv-infinite},
we have $D^I_{b^n} ((ab)^*) \doteq a^n(ab)^*$, so it has infinitely many
Brzozowski reordering derivatives even up to the Kleene algebra
theory. This is only to be expected, as the closure $\sem{(ab)^*}^I$
is not regular and cannot possibly have an accepting finite automaton.



\subsection{Antimirov Reordering Derivative}\label{sec:antimirov-expr}

Like the classical Brzozowski derivative that was optimized by
Antimirov \cite{Antimirov96},
the Brzozowski reordering derivative construction can be optimized by
switching from functions on regexps to multivalued functions or
relations.

\begin{definition} The \emph{Antimirov $I$-reordering parts-of-derivatives}
  of a regexp along a letter and a word are relations
  ${\to^I} \subseteq \RE \times \Sigma \times \RE$ and
  ${\to^{I*}} \subseteq \RE \times \Sigma^* \times \RE$ defined
  inductively by
\[
\small
\begin{array}{l}
\infer{a \to^I (a, 1)}{
}
\quad
\infer{E + F \to^I (a, E')}{
  E \to^I (a, E')
}
\quad
\infer{E + F \to^I (a, F')}{
  F \to^I (a, F')
}
\\[2ex]
\infer{E F \to^I (a, E' F)}{
  E \to^I (a, E')
}
\quad
\infer{E F \to^I (a, (R^I_a E) F')}{
  F \to^I (a, F')
}
\quad
\infer{E^* \to^I (a, (R^I_a E)^* E' E^*)}{
  E \to^I (a, E')
}
\\[2ex]
\infer{E \to^{I*} (\eps, E)}{
}
\quad
\infer{E \to^{I*} (ua, E'')}{
  E \to^{I*} (u, E')
  &
  E' \to^I (a, E'')
}
\end{array}
\]
\end{definition}
Here $\R^I$ is defined as before. Similarly to the Brzozowski reordering
derivative from the previous subsection, the condition $E\dn$
in the second $EF$ rule has has been replaced with concatenation with
$\R^I_a E$, and the $E^*$ rule has been adjusted.

Collectively, the Antimirov reordering parts-of-derivatives of a
regexp $E$ compute the semantic reordering derivative of the
language $\sem{E}$.

\begin{proposition}\label{prop:antimirov-sem-deriv} For any $E$,
\begin{enumerate}
\item for any $a \in \Sigma$,
  $D^I_a \sem{E} = \bigcup \{ \sem{E'} \mid E \to^{I} (a, E') \}$; 

\item for any $u \in \Sigma^*$,
  $D^I_u \sem{E} = \bigcup \{ \sem{E'} \mid E \to^{I*} (u, E') \}$. 
\end{enumerate}
\end{proposition}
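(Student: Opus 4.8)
The statement to prove is Proposition~\ref{prop:antimirov-sem-deriv}: that the Antimirov $I$-reordering parts-of-derivatives compute the semantic reordering derivative. The plan is to prove part~(1) by structural induction on $E$, and then derive part~(2) from part~(1) by induction on the word $u$, using the action property $D^I_v(D^I_u L) = D^I_{uv} L$ (from the lemma on reordering derivatives of languages) together with Proposition~\ref{prop:brzozowski-sem-deriv} to relate the syntactic $\R^I$ to the semantic $\R^I$ wherever it appears in the derivation rules. Since we already have Proposition~\ref{prop:brzozowski-sem-deriv} (the Brzozowski reordering derivative is semantically correct) and the classical fact that Antimirov parts-of-derivatives collectively compute the Brzozowski derivative, one clean route for part~(1) is to first establish a \emph{syntactic} bridging lemma $D^I_a E \doteq \sum \{E' \mid E \to^I (a, E')\}$ (mirroring the classical Proposition relating $D_a E$ and $\sum\{E' \mid E \to (a,E')\}$), and then pass to semantics via $\sem{\_}$ commuting with $+$ and the fact that $E \doteq F$ implies $\sem{E} = \sem{F}$. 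I would probably present it this way, as it isolates the genuinely new content (the $\R^I$-decorated rules for $EF$ and $E^*$) into a purely equational check.

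For the syntactic bridging lemma, the induction on $E$ proceeds case by case. The base cases $a$, $0$, $1$ and the case $E + F$ are immediate from the definitions and the classical argument, since the $\R^I$ decorations do not intervene there. The interesting cases are $EF$ and $E^*$. For $EF$: the parts-of-derivatives split into those of the form $E'F$ coming from $E \to^I (a, E')$, summing to $(\sum_{E'} E') F \doteq (D^I_a E) F$ by the induction hypothesis and right-distributivity, and those of the form $(\R^I_a E) F'$ coming from $F \to^I (a, F')$, summing to $(\R^I_a E)(\sum_{F'} F') \doteq (\R^I_a E)(D^I_a F)$, again by the induction hypothesis and distributivity. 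Adding the two gives exactly $D^I_a(EF)$ as defined. For $E^*$: the parts-of-derivatives are $(\R^I_a E)^* E' E^*$ for $E \to^I (a, E')$, which sum (pulling the fixed prefix $(\R^I_a E)^*$ and suffix $E^*$ out through distributivity, noting the sum is over the middle factor) to $(\R^I_a E)^* (D^I_a E) E^*$, matching the definition of $D^I_a(E^*)$. The equational reasoning needs only the semilattice laws for $0, +$, left-zero for $0$, and right-distributivity of $\cdot$ over $+$ — exactly the fragment already flagged in the classical Proposition.

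From the syntactic lemma, part~(1) follows: $D^I_a \sem{E} = \sem{D^I_a E}$ by Proposition~\ref{prop:brzozowski-sem-deriv}, and $\sem{D^I_a E} = \sem{\sum\{E' \mid E \to^I (a,E')\}} = \bigcup\{\sem{E'} \mid E \to^I (a,E')\}$ by the bridging lemma and $\sem{\_}$ distributing over finite sums. For part~(2), induct on $u$: the base case $u = \eps$ is $D^I_\eps \sem{E} = \sem{E} = \bigcup\{\sem{E'} \mid E \to^{I*}(\eps, E')\}$ since the only $\eps$-part-of-derivative is $E$ itself; for $u = wa$, use $D^I_{wa}\sem{E} = D^I_a(D^I_w \sem{E})$, apply the induction hypothesis to rewrite $D^I_w\sem{E}$ as $\bigcup\{\sem{E'} \mid E \to^{I*}(w,E')\}$, then apply part~(1) to each $E'$ inside the union, and finally observe that $\{E'' \mid E \to^{I*}(wa, E'')\} = \bigcup\{ \{E'' \mid E' \to^I (a, E'')\} \mid E \to^{I*}(w,E')\}$ by the definition of $\to^{I*}$, so the double union collapses correctly. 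This last step needs the distributivity of semantic $D^I_a$ over unions, which is immediate from its set-builder definition, $D^I_a(\bigcup_i L_i) = \bigcup_i D^I_a L_i$.

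\textbf{Main obstacle.} The routine parts are genuinely routine; the one place that needs care is making sure the $E^*$ case of the syntactic bridging lemma actually goes through with only right-distributivity. The subtlety is that in $(\R^I_a E)^* E' E^*$ the summation variable $E'$ sits in the \emph{middle}, so one must factor out the common left prefix $(\R^I_a E)^*$ as well as the common right suffix $E^*$; pulling out a common right factor is fine by right-distributivity, but pulling out a common \emph{left} factor $(\R^I_a E)^*$ from a sum requires left-distributivity, which the paper has been careful \emph{not} to assume. I would handle this by reassociating so that the sum is first collapsed against the right factor $E^*$ using right-distributivity, obtaining $\sum_{E'} (\R^I_a E)^* E' E^* \doteq \sum_{E'} ((\R^I_a E)^* E') E^* \doteq \big(\sum_{E'} (\R^I_a E)^* E'\big) E^*$, and then note that $\sum_{E'}(\R^I_a E)^* E'$ need only be shown \emph{semantically} equal to $(\R^I_a E)^*(D^I_a E)$ — at the point of invoking it for part~(1) we are already working under $\sem{\_}$, where left-distributivity holds — so the cleanest packaging is to state the bridging lemma already at the semantic level for the $E^*$ case (or simply to prove part~(1) directly by induction on $E$ without routing through a purely syntactic lemma, handling $EF$ and $E^*$ semantically using $\sem{R^I_a E}$ from Proposition~\ref{prop:brzozowski-sem-deriv}). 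Either way the fix is a one-line observation, but it is the only spot where the where the $I$-reordering version departs nontrivially from Antimirov's original argument.
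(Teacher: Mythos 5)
Your proof is correct, and the subtlety you isolate is the right one. The paper does not print a proof of this proposition, but its intended route is the direct one you mention at the end: prove (1) by induction on $E$ entirely at the semantic level, using the language-level identities $D^I_a(L \cdot L') = (D^I_a L)\cdot L' \cup (\R^I_a L)\cdot(D^I_a L')$ and $D^I_a(L^*) = (\R^I_a L)^*\cdot(D^I_a L)\cdot L^*$ together with distributivity of $\cdot$ over arbitrary unions (which holds on both sides for languages), and then get (2) by induction on $u$ via $D^I_{ua} = D^I_a \circ D^I_u$ and $D^I_a(\bigcup_i L_i) = \bigcup_i D^I_a L_i$, exactly as in your last paragraph. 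Your primary packaging --- first proving the syntactic bridge $D^I_a E \doteq \sum\{E' \mid E \to^I(a,E')\}$ and then composing with Proposition~\ref{prop:brzozowski-sem-deriv} --- is a genuinely different decomposition; it is in effect deriving this proposition from Proposition~\ref{prop:brzozowski-antimirov-equivalence}, which the paper states only afterwards and for a different purpose (transferring finiteness from the Antimirov to the Brzozowski automaton). That route is sound but buys you nothing here and costs you the equational bookkeeping you then have to patch.

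Two small points on that bookkeeping. First, your claim that the syntactic lemma needs only right-distributivity and left-zero is not quite right, and the left-distributivity issue you flag for $E^*$ already arises in the $EF$ case: collapsing $\sum_{F'} (\R^I_a E)F'$ to $(\R^I_a E)(\sum_{F'} F')$ pulls a common \emph{left} factor out of the sum, and handling an empty sum there needs $(\R^I_a E)\cdot 0 \doteq 0$, i.e.\ that $0$ is a \emph{right} zero. The paper's own parenthetical in Proposition~\ref{prop:brzozowski-antimirov-equivalence} confirms this: unlike the classical version, it asks for ``$0$ is zero'' and ``distributivity of $\cdot$ over $+$'' without the one-sided restrictions. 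Second, your fix --- do the offending collapses semantically, where $\cdot$ distributes over $\cup$ on both sides --- is correct and covers both cases; but once you concede that, you are effectively running the direct semantic induction anyway, which is why the paper's route is the cleaner one.
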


\begin{proposition} 
\label{prop:antim-autom}
For any $E$, 
\begin{enumerate} 
\item for any $a \in \Sigma$, $v \in \Sigma^*$, 
$av \in \sem{E}^I \iff \exists E'.\, E \to^I (a, E') \con v \in \sem{E'}^I$;
\item for any $u, v \in \Sigma^*$, 
$uv \in \sem{E}^I \iff \exists E'.\, E \to^{I*} (u, E') \con v \in
\sem{E'}^I$;
\item for any $u \in \Sigma^*$, 
$u \in \sem{E}^I \iff \exists E'.\, E \to^{I*} (u, E') \con E' \dn$.
\end{enumerate}
\end{proposition}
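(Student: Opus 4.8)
The plan is to derive Proposition~\ref{prop:antim-autom} from Proposition~\ref{prop:antimirov-sem-deriv} together with the basic facts about reordering derivatives of languages already established, namely that $uv \in [L]^I \iff v \in [D^I_u L]^I$, that $u \in [L]^I \iff (D^I_u L)\dn$, and that $\sem{E}^I = [\sem{E}]^I$. Recall also from Proposition~\ref{prop:antimirov-sem-deriv} that taking the union of $\sem{E'}$ over all Antimirov reordering parts-of-derivatives $E'$ of $E$ along $u$ recovers the semantic reordering derivative $D^I_u\sem{E}$.

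First I would prove part~(1). Fix $a \in \Sigma$ and $v \in \Sigma^*$. By Proposition~\ref{prop:brz-autom}(1) (or directly from the language-level facts), $av \in \sem{E}^I$ iff $v \in [D^I_a\sem{E}]^I$. Now substitute the Antimirov characterization: $D^I_a\sem{E} = \bigcup\{\sem{E'} \mid E \to^I (a,E')\}$ by Proposition~\ref{prop:antimirov-sem-deriv}(1). Since trace closure commutes with unions, $[D^I_a\sem{E}]^I = \bigcup\{[\sem{E'}]^I \mid E \to^I(a,E')\} = \bigcup\{\sem{E'}^I \mid E \to^I(a,E')\}$, using $\sem{E'}^I = [\sem{E'}]^I$. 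Hence $v$ lies in this union iff there is some $E'$ with $E \to^I (a,E')$ and $v \in \sem{E'}^I$, which is exactly the claim.

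Part~(2) is the same argument with $u$ in place of $a$: $uv \in \sem{E}^I$ iff $v \in [D^I_u\sem{E}]^I$ (this is the language-level derivative fact applied to $L = \sem{E}$, noting $[L]^I = \sem{E}^I$), then expand $D^I_u\sem{E}$ via Proposition~\ref{prop:antimirov-sem-deriv}(2), push the closure through the union, and rewrite $[\sem{E'}]^I$ as $\sem{E'}^I$. Alternatively, part~(2) follows from part~(1) by induction on $u$, using $D^I_{ua}\sem{E} = D^I_a(D^I_u\sem{E})$ and the inductive composition of the relation $\to^{I*}$; I would pick whichever is shorter to write, but the direct route via Proposition~\ref{prop:antimirov-sem-deriv}(2) is cleanest. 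For part~(3), instantiate part~(2) with $v = \eps$: $u \in \sem{E}^I$ iff there is $E'$ with $E \to^{I*}(u,E')$ and $\eps \in \sem{E'}^I$, and $\eps \in \sem{E'}^I$ iff $\sem{E'}^I\dn$ iff $\sem{E'}\dn$ (closure does not affect membership of $\eps$) iff $E'\dn$ by Proposition~1 on syntactic nullability.

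The only subtlety — not really an obstacle, but the point that needs care — is the interplay between trace closure and the Antimirov parts-of-derivatives: Proposition~\ref{prop:antimirov-sem-deriv} gives a statement about $\sem{E'}$ (standard semantics), not $\sem{E'}^I$, so the step that replaces $[\bigcup_{E'}\sem{E'}]^I$ with $\bigcup_{E'}\sem{E'}^I$ must explicitly invoke both the commutation of $[\_]^I$ with unions and the identity $[\sem{E'}]^I = \sem{E'}^I$. Everything else is a mechanical unfolding of definitions, so the proof is short once that bookkeeping is made explicit.
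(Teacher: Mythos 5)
Your proposal is correct and matches the intended argument: the paper omits an explicit proof of this proposition precisely because it follows by chaining the language-level facts ($uv \in [L]^I \iff v \in [D^I_u L]^I$, $\eps \in [L]^I \iff \eps \in L$, $\sem{E}^I = [\sem{E}]^I$) with Proposition~\ref{prop:antimirov-sem-deriv}, pushing $[\_]^I$ through the union exactly as you do. Your bookkeeping remark about needing both the commutation of closure with unions and the identity $[\sem{E'}]^I = \sem{E'}^I$ is the right point of care, and your handling of part~(3) via $v = \eps$ is also as intended.
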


\begin{example}
  Let us revisit Example~\ref{ex:brz}.
  The Antimirov reordering parts-of-derivatives of $E$ along $b$ are
  $a1$ and $1$:
\[
\small
\begin{array}{l}
  \infer{aa + ab + b \to^I (b, a1)}{
  \infer{ab + b \to^I (b, a1)}{
  \infer{ab \to^I (b, a1)}{
  \infer{b \to^I (b, 1)}{}
  }
  }
  }
  \qquad
  \infer{aa + ab + b \to^I (b, 1)}{
  \infer{ab + b \to^I (b, 1)}{
  \infer{b \to^I (b, 1)}{}
  }
  }
\end{array}
\]
The Antimirov reordering parts-of-derivatives of $E^*$ along $b$ are
therefore $E_b^*(a1)E^*$ and $E_b^*1E^*$ where
$E_b \eqdf R^I_b E = aa + a0 + 0$. Recall that, for the Brzozowski
reordering derivative, we computed
$D^I_b E = (0a + a0) + (0b + a1) + 1$ and
$D^I_b E^* = E_b^* ((0a + a0) + (0b + a1) + 1)E^*$.
\end{example}

Like the classical Antimirov construction, the Antimirov reordering
parts-of-derivatives of a regexp $E$ give a nondeterministic automaton
by
$Q^E \eqdf \{E' \mid \exists u \in \Sigma^*.\, E \to^{I*}(u, E')\}$,
$I^E \eqdf \{E\}$, $F^E \eqdf \{E' \in Q^E \mid E' \dn\}$,
$E' \to^E (a, E'') \eqdf E' \to^I (a, E'')$ for $E', E'' \in Q^E$.
This automaton accepts $\sem{E}^I$ by Prop.~\ref{prop:antim-autom},
but is generally infinite, also if quotiented by the full Kleene
algebra theory.
Revisiting Example \ref{ex:sem-deriv-infinite} again, $(ab)^*$ must
have infinitely many Antimirov reordering parts-of-derivatives modulo
the Kleene algebra theory since $\sem{(ab)^*}^I$ is not regular and
cannot have a finite accepting nondeterministic
automaton. Specifically, it has
$(a0)^*((a1) \ldots ((a0)^*((a1)(ab)^*)) \ldots ) \doteq a^n(ab)^*$
as its single reordering part-of-derivative along $b^n$.

However, if quotienting the Antimirov automaton for $E$ by some sound
theory (a theory weaker than the Kleene algebra theory) makes it
finite, then the Brzozowski automaton can also be quotiented to become
finite.

\begin{proposition}\label{prop:brzozowski-antimirov-equivalence} For any $E$,
\begin{enumerate}
\item for any $a \in \Sigma$,
  $D^I_a E \doteq \sum \{ E' \mid E \to^{I} (a, E') \}$; 

\item for any $u \in \Sigma^*$,
  $D^I_u E \doteq \sum \{ E' \mid E \to^{I*} (u, E') \}$ 
\end{enumerate}
(using the semilattice equations for $0, +$, that $0$ is zero, and
distributivity of $\cdot$ over $+$).
\end{proposition}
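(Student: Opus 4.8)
The plan is to prove part (1) first — the single-letter case — by structural induction on $E$, and then derive part (2) by a straightforward induction on the length of $u$, exactly as the classical Brzozowski–Antimirov correspondence is handled. For part (1), the induction hypothesis is that for all proper subexpressions $F$ of $E$ and all letters $a$, we have $D^I_a F \doteq \sum\{F' \mid F \to^I (a, F')\}$; we must also carry along, or invoke, the fact that $R^I_a$ is a fixed syntactic operation (literally $E$ with letters dependent on $a$ zeroed out), so $R^I_a$ commutes with $+$ and $\cdot$ up to $\doteq$ — indeed up to syntactic identity. The base cases $E = a$, $E = b$ ($b \neq a$), $E = 0$, $E = 1$ are immediate: in each, the right-hand sum is either $\{1\}$, $\emptyset$, or a singleton matching $D^I_a E$ after applying the zero laws.

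First I would handle the easy inductive cases. For $E = F + G$: the parts-of-derivatives of $F+G$ along $a$ are exactly the union of those of $F$ and of $G$ (two rules), so $\sum\{E' \mid F+G \to^I (a,E')\} = \sum\{F' \mid F \to^I(a,F')\} + \sum\{G' \mid G\to^I(a,G')\} \doteq D^I_a F + D^I_a G = D^I_a(F+G)$, using associativity/commutativity/idempotence of $+$ to regroup the sum, and the IH. The genuinely interesting cases are $E = FG$ and $E = F^*$, and I expect the $F^*$ case to be the main obstacle. For $E = FG$: the parts-of-derivatives are the $E'G$ for each part-of-derivative $E'$ of $F$, together with the $(R^I_a F) G'$ for each part-of-derivative $G'$ of $G$. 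So the right-hand sum is $\bigl(\sum\{F' \mid F\to^I(a,F')\}\bigr)G + (R^I_a F)\bigl(\sum\{G' \mid G\to^I(a,G')\}\bigr)$, after pulling the common right factor $G$ out of the first group and the common left factor $R^I_a F$ out of the second group using distributivity of $\cdot$ over $+$ (on both sides — hence the hypotheses listed). By the IH this is $\doteq (D^I_a F)G + (R^I_a F)(D^I_a G)$, which is exactly $D^I_a(FG)$ by definition.

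The $F^*$ case requires the same manipulation plus care with the adjusted rule. The parts-of-derivatives of $F^*$ along $a$ are the $(R^I_a F)^* E' F^*$ for each part-of-derivative $E'$ of $F$; factoring out the common left factor $(R^I_a F)^*$ and common right factor $F^*$ via distributivity gives $\doteq (R^I_a F)^* \bigl(\sum\{F' \mid F\to^I(a,F')\}\bigr) F^* \doteq (R^I_a F)^* (D^I_a F) F^* = D^I_a(F^*)$ by the IH and the definition of the Brzozowski reordering derivative on $F^*$. The subtlety to watch is the edge case where $F$ has no parts-of-derivatives along $a$ (the empty sum, i.e.\ $0$): then both sides must reduce to $0$, which they do since $(R^I_a F)^* \cdot 0 \cdot F^* \doteq 0$ by the zero laws, and the Brzozowski side is $(R^I_a F)^* (D^I_a F) F^*$ with $D^I_a F \doteq 0$ by IH. Once part (1) is established, part (2) follows by induction on $|u|$: the base case $u = \eps$ is $D^I_\eps E = E \doteq \sum\{E' \mid E\to^{I*}(\eps,E')\} = E$; for $u = wa$, we compute $D^I_{wa}E = D^I_a(D^I_w E) \doteq D^I_a\bigl(\sum\{E' \mid E\to^{I*}(w,E')\}\bigr)$ by IH, then push $D^I_a$ through the sum (it commutes with $+$ up to $\doteq$, which itself needs a small separate lemma — or is immediate from part (1) applied formwise combined with the $+$-rules), and finally apply part (1) to each summand and reassemble, matching the definition of $\to^{I*}$ on $wa$. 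The only real obstacle is bookkeeping: making sure the factoring-out steps in the $FG$ and $F^*$ cases are justified by exactly the cited equations and that empty-sum corner cases collapse correctly.
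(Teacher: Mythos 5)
Your proof is correct and follows the standard argument that the paper leaves implicit (it states this proposition without proof, just as it does for the classical Brzozowski--Antimirov correspondence): structural induction on $E$ for the single-letter case, with the $FG$ and $F^*$ cases handled by factoring via both distributivity laws and the zero laws for empty sums, followed by induction on $u$ for the word case. You correctly flag the one point that genuinely needs checking, namely that $D^I_a$ (and $R^I_a$) respect $\doteq$ for the cited equations so that the inductive hypothesis can be pushed under $D^I_a$ in part~(2).
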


\begin{corollary}
  If some quotient of the Antimirov automaton for $E$ (accepting
  $\sem{E}^I$) is finite, then also some quotient of the Brzozowski
  automaton is finite.
\end{corollary}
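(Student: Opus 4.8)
The plan is to run the argument that turns Proposition~\ref{prop:brzozowski-antimirov-equivalence} into the classical-style fact ``the Brzozowski automaton, suitably quotiented, is finite'', but relativised to the given finite quotient of the Antimirov automaton. By Proposition~\ref{prop:brzozowski-antimirov-equivalence}(2), for every $u \in \Sigma^*$ the Brzozowski reordering derivative $D^I_u E$ equals, already in the subtheory generated by the equations listed there (semilattice laws for $0, +$, the zero laws, and distributivity of $\cdot$ over $+$), the sum $\sum S_u$ of the \emph{finite} set $S_u \eqdf \{E' \mid E \to^{I*} (u, E')\}$ of Antimirov reordering parts-of-derivatives of $E$ along $u$; here $S_u$ is a subset of the Antimirov state set $Q^E$. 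So, modulo this fixed sound subtheory, every Brzozowski state is represented by a finite subset of $Q^E$; moreover, by the definition of $\to^{I*}$, the $a$-successor of a representative $S_u$ is $S_{ua} = \{E'' \mid \exists E' \in S_u.\, E' \to^I (a, E'')\}$, and, since the equations used preserve syntactic nullability, $(D^I_u E)\dn$ holds iff some member of $S_u$ is nullable. The classical one-liner (``the powerset of a finite set is finite'') is unavailable because here $Q^E$ is infinite; the fix is to push everything through the given finite quotient of $Q^E$.

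Concretely, let $\approx$ be the equivalence on $Q^E$ witnessing the assumed finite quotient of the Antimirov automaton, so $\approx$ has finitely many classes, respects final states, and is a congruence for the Antimirov transition relation (these being exactly what makes the Antimirov quotient a finite automaton still accepting $\sem{E}^I$). To each representative $S_u$ I attach $\overline{S_u} \eqdf \{[E']_{\approx} \mid E' \in S_u\}$, an element of the finite set $\P(Q^E/{\approx})$. The plan is to quotient the Brzozowski automaton further so that the state reached on $u$ is identified with the state reached on $u'$ whenever $\overline{S_u} = \overline{S_{u'}}$; the resulting state set is a quotient image of $\P(Q^E/{\approx})$, hence finite, which is the conclusion. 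What then remains is to check that this is a genuine quotient automaton accepting $\sem{E}^I$: (i) $a$-transitions descend, since the passage $S_u \mapsto S_{ua}$ acts class-wise and $\approx$ is a congruence for $\to^I$, so $\overline{S_u}$ determines $\overline{S_{ua}}$; (ii) finality descends, since finality of $S_u$ depends only on nullability of its members and $\approx$ preserves nullability; (iii) the quotient still accepts $\sem{E}^I$, by (i), (ii), Proposition~\ref{prop:brz-autom} for the unquotiented Brzozowski automaton, and the fact that each identification step is language-preserving. Composing this with the ``sum-decomposition'' identification of Proposition~\ref{prop:brzozowski-antimirov-equivalence} exhibits the required finite quotient of the Brzozowski automaton.

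The step I expect to be the main obstacle is getting the bookkeeping of (i)--(iii) precise while pinning down what ``a quotient of the Brzozowski automaton'' is allowed to be. Since the equations of Proposition~\ref{prop:brzozowski-antimirov-equivalence} include distributivity, a regexp $D^I_u E$ does not determine its set of top-level summands, so the assignment $D^I_u E \mapsto \overline{S_u}$ need not be single-valued; one has to argue that this is harmless --- the identifications still yield a well-defined congruence on the (deterministic) Brzozowski automaton because $D^I_{ua}E$ is determined by $D^I_u E$, and the number of classes is still at most $|\P(Q^E/{\approx})|$ since every class meets at least one $\overline{S_u}$ --- or, alternatively, to set things up through the equational theory as in the proof of the classical corollary, where one only needs the map $S \mapsto [\sum S]$ from $\P(Q^E/{\approx})$ onto the quotient state set to be surjective, not injective. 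The remaining wrinkles --- non-coreachable Antimirov states, i.e.\ $0$-like parts-of-derivatives, being absorbed uniformly --- are exactly those already present in the trimming remark for the Antimirov automaton and in the classical corollary, and bring nothing new in the reordering setting.
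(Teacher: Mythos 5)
Your argument is correct and is essentially the paper's intended one: the corollary is derived from Proposition~\ref{prop:brzozowski-antimirov-equivalence} exactly as its classical counterpart, whose entire proof in the paper is ``just notice that the powerset of a finite set is finite too'', here relativised to the assumed finite quotient of the Antimirov state set. Your additional bookkeeping about the map $S_u \mapsto \overline{S_u}$, congruence for transitions, and preservation of finality only makes explicit what the paper leaves implicit.
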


\subsection{Star-Connected Expressions}
\label{subsec:starconn}

Star-connected expressions are important as they characterize regular
closed languages. A corollary of that is a further characterization of
such languages in terms of a ``concurrent'' semantics of regexps that
interprets Kleene star nonstandardly as ``concurrent star''.


\begin{definition}
  A word $w \in \Sigma^*$ is connected if its dependence graph
  $\langle w \rangle_D$ is connected. A language
  $L \subseteq \Sigma^*$ is connected if every word $w \in L$ is
  connected.
\end{definition}

\begin{definition} ~
\begin{enumerate}
\item \emph{Star-connected expressions} are a subset of the set of all regexps defined inductively by:
  $0$, $1$ and $a \in \Sigma$ are star-connected. If $E$ and $F$ are
  star-connected, then so are $E + F$ and $EF$. If $E$ is
  star-connected and $\sem{E}$ is connected, then $E^*$ is
  star-connected.
\item A language $L$ is said to be \emph{star-connected} if $L = \sem{E}$ for some star-connected regexp.
\end{enumerate}
\end{definition}


Ochma\'nski \cite{Ochmanski85} proved that a closed language is
regular iff it is the closure of a star-connected language.  This
means that, for any regexp $E$, the language $\sem{E}^I$ is regular
iff there exists a star-connected expression $E'$ such that
$\sem{E}^I = \sem{E'}^I$. It is important to realize that generally
$E \neq E'$ and also $\sem{E} \neq \sem{E'}$. Ochma\'nski's proof was
as follows.

For a linear order $\leq$ on $\Sigma$, a word $z$ is a
\emph{lexicographic normal form} if
$\forall w \in [z]^I.\, z \leq_{\mathrm{lex}} w$ where
$\leq_{\mathrm{lex}}$ is the lexicographic order on $\Sigma^*$ induced
by $\leq$. We write $\Lex^I$ for the set of all lexicographic normal
forms.



\begin{lemma} (cf.\ \cite[Props.~6.3.4,
  6.3.10]{Ochmanski95}) \label{lem:ochmanski}
\begin{enumerate}
\item $\Lex^I$ is regular. 
\item For any regular expression $E$, if $\sem{E} \subseteq \Lex^I$, then
  $E$ is star-connected.
\end{enumerate}
\end{lemma}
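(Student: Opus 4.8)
The plan is to prove the two parts of Lemma~\ref{lem:ochmanski} separately, the first by a direct automata construction and the second by induction on the structure of $E$.

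For part~(1), the plan is to exhibit a finite automaton (or an inductive regular-expression description) for $\Lex^I$. The key observation is that a word $z = a_1 \ldots a_n$ fails to be in lexicographic normal form precisely when it admits a \emph{local} improvement: there is a position $i$ and a letter $b$ with $b < a_i$ such that $b$ occurs later in $z$, say at position $j > i$, with $a_i I a_{i+1} I \cdots I a_{j-1} I b$ (so that $b$ can be commuted leftward past $a_i, \ldots, a_{j-1}$), producing a lexicographically smaller equivalent word. Conversely, if no such local improvement exists at any position, then $z$ is already minimal in $[z]^I$ (this is the standard fact that lexicographic normal forms are characterized by the absence of such ``forbidden patterns''). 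A word avoids all these forbidden patterns iff for every letter $a$ and every $b < a$ that is independent of $a$, the word contains no occurrence of $a$ followed — after a block of letters all independent of both $a$ and $b$ — by a $b$; this is a local, sliding-window condition that a finite automaton can check by remembering, in its state, the set of ``pending'' letters that could still be commuted leftward. Hence $\Lex^I$ is recognizable, and by Kleene's theorem regular. I would present this as: define the finite set of forbidden factors (words of the form $a\,w\,b$ with $b < a$, $a I b$, and every letter of $w$ independent of both $a$ and $b$), note their union with arbitrary context is regular, and take the complement.

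For part~(2), the plan is induction on the structure of the regexp $E$ such that $\sem{E} \subseteq \Lex^I$. The cases $0$, $1$, $a$ are immediate since these are star-connected by definition. For $E + F$: both $\sem{E} \subseteq \Lex^I$ and $\sem{F} \subseteq \Lex^I$, so the induction hypothesis applies to each summand, giving that $E$ and $F$ are star-connected, hence so is $E + F$. For $EF$: again $\sem{E} \subseteq \sem{EF} \cdot ?$... more carefully, every word in $\sem{E}$ is a prefix of a word in $\sem{EF}$ and every word in $\sem{F}$ is a suffix, and one checks that a factor of a lexicographic normal form is again a lexicographic normal form (since the forbidden-factor characterization from part~(1) is inherited by factors); hence $\sem{E}, \sem{F} \subseteq \Lex^I$ and the induction hypothesis applies. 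For the crucial case $E^*$ with $\sem{E^*} \subseteq \Lex^I$: one must show both that $\sem{E} \subseteq \Lex^I$ (so the induction hypothesis makes $E$ star-connected) and that $\sem{E}$ is connected. The first is again by the factor-closure property. For connectedness: suppose some $w \in \sem{E}$ were disconnected, so $w \sim^I w_1 w_2$ with $\Sigma(w_1) I \Sigma(w_2)$ and both nonempty; then consider $w^2 \in \sem{E^*}$ and show it has an equivalent word that is lexicographically smaller than the normal form obtained by concatenating two copies of the normal form of $w$ — intuitively, the second copy's $w_1$-part can be shuffled into the first copy, producing a reordering that breaks minimality, contradicting $\sem{E^*} \subseteq \Lex^I$.

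The main obstacle I anticipate is the $E^*$ case of part~(2), specifically making precise the connectedness argument: one needs to extract from the failure of connectedness of some $w \in \sem{E}$ a concrete word in $\sem{E^*}$ that is \emph{not} in lexicographic normal form, and this requires care about which letter of the independent ``second component'' is smallest and how far leftward it can actually be commuted across a full copy of $w$. The cleanest route is probably to pick $w$ with a disconnection $w \sim^I uv$, $u I v$, and look at $vv \in$ (a reordering of) $\sim^I$ within $ww \in \sem{E^*}$: the word $ww \sim^I uvuv \sim^I uuvv$, and if $v$ contains any letter strictly smaller than the first letter of $u$, then $uuvv$ is not minimal, while if not, a symmetric argument on $vu$ versus $uv$ inside a single factor handles it — the bookkeeping here is the delicate part, and I would lean on the forbidden-factor characterization established in part~(1) to keep it manageable.
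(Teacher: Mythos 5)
The paper itself gives no proof of this lemma --- it is imported wholesale from Ochma\'nski's handbook chapter (Props.~6.3.4 and 6.3.10) --- so your proposal has to stand on its own. For part~(1) your plan (the Anisimov--Knuth forbidden-factor characterization, then complementation) is the standard and correct one, but your forbidden factor is misstated: in a factor $x\,u\,b$ with $b<x$ witnessing non-minimality, the letters of $u$ need only be independent of $b$ (the letter being commuted leftward), not of $x$ as well. Your stronger condition makes the set of forbidden factors too small and hence the complement too large. Concretely, take $\Sigma=\{a,b,c\}$ with $a<b<c$, where $a\,I\,b$ and $b\,I\,c$ are the only independent pairs (so $a\,D\,c$): then $cab\sim^I bca$ and $bca<_{\mathrm{lex}}cab$, so $cab\notin\Lex^I$, yet $cab$ contains no factor of your form, since the intermediate letter $a$ is dependent on $c$. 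This is fixable, but as written the construction defines the wrong language.

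Part~(2) has two genuine gaps. First, the statement you are inducting on is literally false at the level of expressions: $\sem{(ab)^*0}=\emptyset\subseteq\Lex^I$, but $(ab)^*0$ is not star-connected when $a\,I\,b$. Your $EF$ case silently assumes $\sem{F}\neq\emptyset$ when claiming every word of $\sem{E}$ is a prefix of a word of $\sem{EF}$. The lemma must be read as saying the \emph{language} $\sem{E}$ is star-connected (denoted by \emph{some} star-connected expression); with that reading and explicit emptiness cases your induction can be repaired, though this is precisely why the classical proof works instead with a trim automaton for $\sem{E}$ and shows that every word read around a loop is connected. Second, and more seriously, the connectedness claim in the $E^*$ case --- a disconnected $w\in\Lex^I$ has some power outside $\Lex^I$ --- is the mathematical heart of the lemma, and your sketch of it does not work. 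With the same alphabet as above, $w=abc\in\Lex^I$ is disconnected into the components $[ac]$ and $[b]$. Taking $u\sim^I ac$, $v=b$, the word $v$ contains no letter below the first letter of $u$, and ``a symmetric argument inside a single factor'' has nothing to act on because $w$ itself is already a normal form; taking $u=b$, $v\sim^I ac$ instead, the observation that $uuvv=bbacac$ is non-minimal is beside the point, since what must be shown non-minimal is $w^2=abcabc$, and in fact $bbacac>_{\mathrm{lex}}abcabc$. The true witness is $abbcac<_{\mathrm{lex}}abcabc$, obtained by commuting the second copy's $b$ leftward across the trailing $ca$ of the first copy --- a cross-boundary interleaving that your manipulations of $uuvv$ and of $vu$ versus $uv$ never produce. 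A correct argument must locate a forbidden factor of $w^2$ straddling the seam between the two copies, with a case analysis on which independent component carries the relevant small letter; that bookkeeping is exactly what is missing.
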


\begin{theorem}[Ochma\'nski \cite{Ochmanski85}]
  (cf.\ \cite[Thm.~6.3.13]{Ochmanski95}) \label{thm:star} For any closed
  language $L$, the following are equivalent:
\begin{enumerate}
\item $L$ is regular;
\item $L \cap \Lex^I$ is regular;
\item $L$ is star-connected.
\end{enumerate}
\end{theorem}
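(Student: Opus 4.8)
Theorem~\ref{thm:star}: for a closed language $L$, the conditions (1) $L$ is regular, (2) $L \cap \Lex^I$ is regular, and (3) $L$ is star-connected are equivalent.

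**Overall plan.** I would prove the cycle $(1) \Rightarrow (2) \Rightarrow (3) \Rightarrow (1)$, leaning on Lemma~\ref{lem:ochmanski} for the two nontrivial steps and on Proposition~\ref{prop:reord-concat} (together with the fact that $\sem{E}^I = [\sem{E}]^I$) for the last one.

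\begin{proof}[Proof sketch]
$(1) \Rightarrow (2)$: Since $L$ is regular and, by Lemma~\ref{lem:ochmanski}(1), $\Lex^I$ is regular, the intersection $L \cap \Lex^I$ is regular because the class of regular languages is closed under intersection.

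$(2) \Rightarrow (3)$: Suppose $L \cap \Lex^I$ is regular, so $L \cap \Lex^I = \sem{E}$ for some regexp $E$. Then $\sem{E} = L \cap \Lex^I \subseteq \Lex^I$, so by Lemma~\ref{lem:ochmanski}(2) the expression $E$ is star-connected; hence $L \cap \Lex^I$ is a star-connected language. It remains to see that $L$ itself is star-connected, i.e.\ that $L = \sem{E'}$ for some star-connected $E'$. The key observation is that, because $L$ is closed, $L = [L \cap \Lex^I]^I$: every trace (equivalence class) meeting $L$ contains exactly one lexicographic normal form, which therefore lies in $L \cap \Lex^I$, and conversely the closure of $L \cap \Lex^I$ cannot escape the closed set $L$. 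Now one takes the regexp $E$ for $L \cap \Lex^I$ and massages it into a regexp $E'$ with $\sem{E'} = [\sem{E}]^I = L$: replacing each subexpression $F$ by a regexp whose language is the closure $[\sem{F}]^I$ can be done using reordering concatenation (Proposition~\ref{prop:reord-concat}) for the $F_1 F_2$ case and, crucially, using star-connectedness of $E$ to keep the starred subexpressions connected so that $[\sem{F^*}]^I$ is again the language of a star-connected expression. This is exactly the content of Ochma\'nski's argument and is the step I expect to be the main obstacle.

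$(3) \Rightarrow (1)$: If $L$ is star-connected, then $L = \sem{E}$ for a star-connected regexp $E$. Since $L$ is closed, $L = [L]^I = [\sem{E}]^I = \sem{E}^I$. But Ochma\'nski's theorem (stated just before the lemma) says precisely that the closure of a star-connected language is regular; equivalently, one shows directly by induction on the star-connected $E$ that $[\sem{E}]^I$ is regular, the only nontrivial case being $E^*$ with $\sem{E}$ connected, where connectedness is what prevents the closure of the iterated concatenation from blowing up (as it does for $(ab)^*$ with $a I b$). Hence $L$ is regular.
\end{proof}

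**Where the difficulty lies.** The routine directions are $(1)\Rightarrow(2)$ and the reduction in $(3)\Rightarrow(1)$ to the already-cited Ochma\'nski theorem. The real work is $(2)\Rightarrow(3)$: one must turn a star-connected expression for $L \cap \Lex^I$ into a star-connected expression for all of $L$, which amounts to showing that the closure operation, applied along the syntax tree, preserves star-connectedness — and this is exactly where the connectedness side-condition on Kleene star in the definition of star-connected expressions is indispensable. If one is willing to invoke Ochma\'nski's theorem as a black box (it is stated in the excerpt), this step becomes a short appeal to that theorem plus the identity $L = [L \cap \Lex^I]^I$ for closed $L$; otherwise it requires reproducing the syntactic closure construction using $\cdot^I$.
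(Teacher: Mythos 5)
Your skeleton matches the paper's: the cycle $(1)\Rightarrow(2)\Rightarrow(3)\Rightarrow(1)$, with Lemma~\ref{lem:ochmanski}(1) plus closure of regular languages under intersection for the first step and Lemma~\ref{lem:ochmanski}(2) for the second. Two points of divergence deserve comment.

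First, your primary argument for $(3)\Rightarrow(1)$ is circular: the ``Ochma\'nski theorem stated just before the lemma'' is the prose form of exactly the equivalence being proved here, so it cannot be invoked as a black box. Your fallback --- induction on a star-connected $E$ showing $[\sem{E}]^I$ regular, with connectedness rescuing the star case --- is the right instinct but not yet a proof: a direct induction on ``the trace closure is regular'' does not go through at $E^*$ without a strengthened invariant. The paper's route supplies that invariant via Hashiguchi's scattering rank: one shows that star-connected languages have rank (the star case, where $L$ connected implies $L^*$ has rank, is precisely where connectedness is used; cf.\ Prop.~\ref{prop:starconn-rank}), and then Prop.~\ref{prop:hashiguchi} (regular plus rank implies regular closure) yields $(3)\Rightarrow(1)$.

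Second, your worry in $(2)\Rightarrow(3)$ --- that Lemma~\ref{lem:ochmanski}(2) only makes $L\cap\Lex^I$ star-connected, and that upgrading to a star-connected expression for $L$ itself is the ``main obstacle'' --- is well-founded, but the resolution is not the syntactic closure construction you sketch. That construction cannot work in general: Prop.~\ref{prop:rank-not-starconn} exhibits a closed regular $L$ (all even-length words over $\{a,b\}$ with $a\,I\,b$) that is \emph{not} star-connected in the literal sense of the definition. Condition (3) must therefore be read, as in Ochma\'nski's original formulation and the paper's surrounding prose, as ``$L$ is the trace closure of a star-connected language''; with that reading, the identity $L=[L\cap\Lex^I]^I$ for closed $L$, which you correctly supply, already finishes the step and no massaging of $E$ is required.
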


\begin{proof}
  (1) $\Rightarrow$ (2) is a consequence of Lemma~\ref{lem:ochmanski}(1)
  as the intersection of regular languages is regular. (2) $\imp$ (3)
  follows from Lemma~\ref{lem:ochmanski}(2). For (3) $\imp$ (1),
  Ochma\'nski employed Hachiguchi's notion of rank of a language and
  Hachiguchi's lemma, which we will study in Def.~\ref{def:rank} and
  Prop.~\ref{prop:hashiguchi} below, and proved that, if $L$ is closed
  and connected, then $L^*$ has rank.
\end{proof}

The nonstandard \emph{concurrent-star trace-language semantics} of
regular expressions
$\sem{\_}^\mathrm{con} : \RE(\Sigma) \to \mathcal{P} \Sigma^*$ is like
$\sem{\_}$ except that the star constructor is interpreted
nonstandardly as the \emph{concurrent star} operation. Informally, the
concurrent star of a language iterates not the given language but the
language of connected components of its words.

The concurrent star of a connected language coincides with its Kleene
star. The idea of this nonstandard semantics is to make
non-star-connected regular expressions harmless, so as to obtain the
following replacement for Kleene's theorem.

\begin{theorem}[Ochma\'nski \cite{Ochmanski85}]
  (cf.~\cite[Thm.~6.3.16]{Ochmanski95}) A closed langugage $L$ is regular iff
  $L = [\sem{E}^\mathrm{con}]^I$ for some regexp $E$.
\end{theorem}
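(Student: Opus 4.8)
The plan is to derive this result from Ochmański's characterization (Theorem~\ref{thm:star}) together with the key property of the concurrent-star semantics, namely that $[\sem{E}^{\mathrm{con}}]^I$ is always the closure of a star-connected language and that $\sem{\_}^{\mathrm{con}}$ and $\sem{\_}$ agree on star-connected expressions. Concretely, I would prove two implications.

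For the ``if'' direction, suppose $L = [\sem{E}^{\mathrm{con}}]^I$ for some regexp $E$. First I would argue, by induction on the structure of $E$, that $\sem{E}^{\mathrm{con}}$ is star-connected: the base cases $0$, $1$, $a$ are immediate; for $E + F$ and $EF$ the inductive hypotheses combine via the closure of star-connected expressions under $+$ and $\cdot$; for $E^*$, the concurrent star interprets the star not on $\sem{E}^{\mathrm{con}}$ itself but on the language of connected components of its words, which is connected by construction, so applying Kleene star to it yields a star-connected expression. Hence $\sem{E}^{\mathrm{con}} = \sem{E'}$ for some star-connected $E'$, so $L = [\sem{E'}]^I$ is the closure of a star-connected language, and by Theorem~\ref{thm:star} (specifically (3) $\Rightarrow$ (1), noting $L$ is closed as a trace closure) $L$ is regular.

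For the ``only if'' direction, suppose $L$ is closed and regular. By Theorem~\ref{thm:star}, (1) $\Rightarrow$ (3) gives a star-connected regexp $E'$ with $L = [\sem{E'}]^I$. Now I would show that $\sem{E'}^{\mathrm{con}} = \sem{E'}$ by induction on the derivation that $E'$ is star-connected: the only case where the two semantics could differ is $E^*$, but star-connectedness of $E^*$ forces $\sem{E}$ to be connected, so the language of connected components of its words is just $\sem{E}$ itself, and thus the concurrent star coincides with the Kleene star; the other constructors are handled homomorphically using the inductive hypotheses. Therefore $L = [\sem{E'}]^I = [\sem{E'}^{\mathrm{con}}]^I$, and we may take $E \eqdf E'$.

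The main obstacle is making precise the two structural facts about the concurrent star that I invoked without having the formal definition of $\sem{\_}^{\mathrm{con}}$ in hand: (i) that the ``language of connected components'' operation always yields a connected language expressible by a star-connected subexpression, so that $\sem{E}^{\mathrm{con}}$ is genuinely star-connected; and (ii) that on connected $\sem{E}$ the concurrent star equals the Kleene star (``the concurrent star of a connected language coincides with its Kleene star,'' as stated in the text). Both are essentially definitional once the concurrent star is spelled out, but care is needed to confirm that the connected-components construction on a regular language is itself representable by a regexp whose star we are entitled to take while staying inside the star-connected fragment; this is exactly the content that makes ``non-star-connected expressions harmless.'' Everything else is a routine structural induction, and the closedness of $L$ needed to apply Theorem~\ref{thm:star} is free on both sides since trace closures and languages of the form $[\sem{E}^{\mathrm{con}}]^I$ are closed by construction.
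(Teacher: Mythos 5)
The paper states this theorem without proof (it is quoted from Ochma\'nski), so your argument has to stand on its own. The overall architecture---reduce both directions to Theorem~\ref{thm:star} plus two structural facts about the concurrent star---is the right one, and your ``only if'' direction is essentially complete: for a star-connected $E'$ one shows $\sem{E'}^{\mathrm{con}} = \sem{E'}$ by induction on the star-connectedness derivation, the star case using that a connected word is its own unique connected component, so the concurrent star degenerates to the Kleene star on connected languages.

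The genuine gap is in the star case of the ``if'' direction. You argue that since the connected-components language $C(M)$ of $M \eqdf \sem{E}^{\mathrm{con}}$ is connected, ``applying Kleene star to it yields a star-connected expression.'' But to form a star-connected expression $F^*$ you need both that $\sem{F}$ is connected \emph{and} that $F$ itself is star-connected, and connectedness of a language does not imply star-connectedness. For example, with $a I b$ and $c$ dependent on both, every word of $\sem{(ab)^*c}$ is connected, yet this language cannot be star-connected: its trace closure $\{wc \mid |w|_a = |w|_b\}$ is not regular, whereas star-connected languages have regular trace closures (Props.~\ref{prop:starconn-rank} and~\ref{prop:hashiguchi}). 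So you still owe the lemma that $C(M)$ is the language of some star-connected regexp. The clean way to discharge it is to prove that $C(M)$ is \emph{regular} whenever $M$ is---this is the genuinely nontrivial step, a standalone lemma in Ochma\'nski's development rather than a definitional unfolding---and then to observe that $C(M)$ is closed (its elements are whole traces) and connected, so Theorem~\ref{thm:star} (1)$\Rightarrow$(3) makes it star-connected and its star is a legal star-connected expression. Your caveat paragraph correctly locates this obstacle but misclassifies it as ``essentially definitional''; it is the actual mathematical content behind the claim that the concurrent star renders non-star-connected expressions harmless.
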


\subsection{Automaton Finiteness for Star-Connected Expressions}

We now show that the set of Antimirov reordering parts-of-derivatives
of a star-connected expression is finite modulo suitable equations.







\begin{lemma} \label{lem:conn-triv}
If a language $L$ is connected, then for any $u \in \Sigma^+$, 
$R^I_u (D^I_u L) \subseteq \one$.
\end{lemma}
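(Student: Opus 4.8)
The plan is to unfold the definitions of $R^I_u$, $D^I_u$, and $\lsim\cdots\rhd$, and to show that if $L$ is connected then any word $v$ which is both a reordering derivative of $L$ along $u$ and independent of $u$ must be empty. Suppose $v \in R^I_u(D^I_u L)$, i.e., $v \in D^I_u L$ and $v\,I\,u$. By Definition~\ref{def:reord-deriv} there is some $z \in L$ with $u \lsim z \rhd v$; unfolding, there are $u' \sim^I u$ and a scattering $u' = u_1\ldots u_n$, $v = v_0\ldots v_n$ with $u_1,\ldots,u_n \lhd z \rhd v_0,\ldots,v_n$, so that $z = v_0 u_1 v_1 \ldots u_n v_n$ and $v_j\,I\,u_i$ for all $j < i$. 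First I would argue that, since $v\,I\,u$ and $u \sim^I u'$ (so $\Sigma(u) = \Sigma(u')$, hence $\Sigma(v)\,I\,\Sigma(u')$), we also have $v_j\,I\,u_i$ for \emph{all} $i,j$ (not just $j<i$), because each $\Sigma(v_j) \subseteq \Sigma(v)$ and each $\Sigma(u_i) \subseteq \Sigma(u')$.

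The key step is then a connectedness argument about the dependence graph $\langle z\rangle_D$ of $z$. Because every letter occurring in some $v_j$ is independent of every letter occurring in some $u_i$, there are no dependence edges in $\langle z\rangle_D$ between the ``$u$-part'' vertices (those coming from the $u_i$) and the ``$v$-part'' vertices (those coming from the $v_j$). Hence $\langle z\rangle_D$ decomposes as a disjoint union (no edges crossing) of the subgraph induced by the $u$-part and the subgraph induced by the $v$-part. Since $L$ is connected, $\langle z\rangle_D$ is connected, so one of these two parts must be empty. The $u$-part is nonempty: indeed $u \in \Sigma^+$ forces $u' \in \Sigma^+$, so $n \ge 1$ and each $u_i \in \Sigma^+$, giving at least one $u$-part vertex. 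Therefore the $v$-part is empty, i.e., $v_0 = \ldots = v_n = \eps$, so $v = \eps \in \one$. This proves $R^I_u(D^I_u L) \subseteq \one$.

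The main obstacle is making the ``no crossing edges forces a disconnection'' step precise: one must be careful that the vertices of $\langle z\rangle_D$ partition cleanly into the occurrences contributed by the $u_i$'s and those contributed by the $v_j$'s (this is immediate from the factorization $z = v_0 u_1 v_1 \ldots u_n v_n$ as a concatenation of words), and that an edge $(i,j)$ of $\langle z\rangle_D$ with $i$ in one part and $j$ in the other would require $a_i\,D\,a_j$ with $a_i \in \Sigma(u')$, $a_j \in \Sigma(v)$ or vice versa, contradicting $\Sigma(u')\,I\,\Sigma(v)$. Everything else is routine unfolding; a one-line alternative is to invoke Proposition~\ref{prop:scat-uv}/Proposition~\ref{prop:scat} to rephrase $u \lsim z \rhd v$ with $v\,I\,u$ as $z \in [u]^I \cdot^I v$ and observe that independence of $u$ and $v$ makes this set consist of interleavings whose dependence graph is the disjoint union of $\langle u\rangle_D$ and $\langle v\rangle_D$, again forcing $v = \eps$ by connectedness of $z$.
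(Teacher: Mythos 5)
Your proof is correct and follows essentially the same route as the paper's: both exploit that a word $z\in L$ witnessing $v\in D^I_u L$ has a connected dependence graph, so if additionally $v\,I\,u$ there can be no dependence edge between the $u$-part and the $v$-part of $z$, forcing the $v$-part to be empty. Your write-up is just a more explicit unfolding of the scattering definition (and handles the $v=\eps$ edge case a bit more carefully than the paper's one-liner).
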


\begin{proof}
  Because $L$ is connected, if $w \in D^I_u L$, then $a D b$ for some
  $a \in u$ and $b \in w$. For such $w$ to also be in
  $R^I_u (D^I_u L)$, we also need that $w I u$. This is only possible
  if $w = \eps$.
\end{proof}

\begin{lemma} \label{lem:triv-synt}
  For any $E$, if $\sem{E} \subseteq \one$, then either $E \doteq 0$ or
  $E \doteq 1$ (using the equations involving $0$ and $1$ only (e.g., 
  $0 + 1 \doteq 1$ and $0^* \doteq 1$ etc.) and
  that $0$ is zero).
\end{lemma}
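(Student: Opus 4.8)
The plan is to proceed by structural induction on $E$, exploiting the fact that the statement already covers both possible shapes of a language contained in $\one$: the empty language (to be matched by $E \doteq 0$) and $\one = \{\eps\}$ itself (to be matched by $E \doteq 1$). Throughout I would use that the restricted equational theory is sound, i.e.\ $E' \doteq F'$ implies $\sem{E'} = \sem{F'}$; this is what lets me decide, given the induction hypothesis applied to a subexpression, which of its two conclusions actually holds — $E' \doteq 1$ forces $\sem{E'} = \one \neq \emptyset$, and $E' \doteq 0$ forces $\sem{E'} = \emptyset$.

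The base cases are quick: the case $E = a$ is vacuous since $\sem{a} = \{a\} \not\subseteq \one$, and $E = 0$, $E = 1$ are immediate. For $E = E_1 + E_2$ I would note $\sem{E_1}, \sem{E_2} \subseteq \sem{E} \subseteq \one$, apply the induction hypothesis to each summand, and finish with the $0,1$-equations for $+$ ($0 + 0 \doteq 0$, $0 + 1 \doteq 1 \doteq 1 + 0$, $1 + 1 \doteq 1$). For $E = F^*$ I would observe that $\eps \in \sem{F^*}$ always, so $\sem{F^*} = \one$ and hence $\sem{F} \subseteq \sem{F^*} = \one$; the induction hypothesis gives $F \doteq 0$ or $F \doteq 1$, and then $F^* \doteq 0^* \doteq 1$ or $F^* \doteq 1^* \doteq 1$.

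The step I expect to be the main obstacle is the concatenation case $E = E_1 E_2$, because there I only know $\sem{E_1} \cdot \sem{E_2} \subseteq \one$, not that either factor's language sits inside $\one$. Here I would case on whether $\eps \in \sem{E_1}\sem{E_2}$. If not, the product is empty, so one of $\sem{E_1}$, $\sem{E_2}$ is empty; say $\sem{E_1} = \emptyset \subseteq \one$, apply the induction hypothesis, use soundness to conclude $E_1 \doteq 0$, and finish with the left zero law $0 \cdot E_2 \doteq 0$ (symmetrically with the right zero law if $\sem{E_2} = \emptyset$). If $\eps \in \sem{E_1}\sem{E_2}$, then both factors are nonempty, and nonemptiness of $\sem{E_2}$ together with $\sem{E_1}\sem{E_2} \subseteq \one$ forces $\sem{E_1} \subseteq \one$ (pick $v \in \sem{E_2}$; every $u \in \sem{E_1}$ must satisfy $uv = \eps$), and symmetrically $\sem{E_2} \subseteq \one$; the induction hypothesis plus soundness (ruling out $\doteq 0$ by nonemptiness) gives $E_1 \doteq 1$ and $E_2 \doteq 1$, so $E_1 E_2 \doteq 1 \cdot 1 \doteq 1$. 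The only real subtlety is the bookkeeping of this emptiness case split and the systematic appeal to soundness to disambiguate the induction hypothesis; no nontrivial computation is involved, and only the unit laws for $1$, the annihilation laws for $0$, and $0,1,+$-equations are used.
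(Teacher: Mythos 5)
Your proof is correct, and since the paper states Lemma~\ref{lem:triv-synt} without giving a proof (it is left as a routine induction), your structural induction is exactly the argument the paper intends: the only allowed equations you invoke are ground $0,1$-identities ($0+1 \doteq 1$, $1 \cdot 1 \doteq 1$, $0^* \doteq 1$, $1^* \doteq 1$, \dots) plus the zero laws, as required. The appeal to soundness of the restricted theory to decide which disjunct of the induction hypothesis holds is legitimate (one could equivalently strengthen the statement to ``$\sem{E} = \emptyset$ implies $E \doteq 0$ and $\sem{E} = \one$ implies $E \doteq 1$'' to avoid it), and your handling of the concatenation case, splitting on emptiness of the product, is sound.
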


\begin{lemma} \label{lem:conn-triv-synt-cor} For any $E$, $E'$
  and $u \in \Sigma^+$, if $\sem{E}$ is connected and
  $E \to^{I*} (u, E')$, then $R^I_u E' \doteq 0$ or
  $R^I_u E' \doteq 1$ (using the equations involving $0$ and $1$ only
  and that $0$ is zero).
\end{lemma}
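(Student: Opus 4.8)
The plan is to combine the three preceding lemmas in the obvious way, with the one additional ingredient being a syntactic consequence of the semantic statement of Lemma~\ref{lem:conn-triv} that is pushed through Proposition~\ref{prop:antimirov-sem-deriv}. First I would observe that, since $\sem{E}$ is connected and $u \in \Sigma^+$, Lemma~\ref{lem:conn-triv} applied to the language $L \eqdf \sem{E}$ gives $R^I_u(D^I_u \sem{E}) \subseteq \one$. Next, for the fixed $E'$ with $E \to^{I*}(u, E')$, I would use Proposition~\ref{prop:antimirov-sem-deriv}(2) together with Proposition~\ref{prop:brzozowski-sem-deriv} (for $R^I$): we have $\sem{E'} \subseteq \bigcup\{\sem{E''} \mid E \to^{I*}(u,E'')\} = D^I_u \sem{E}$, and $R^I$ is monotone on languages (the first lemma of Section~4.1 on reordering derivatives), so $R^I_u \sem{E'} = \sem{R^I_u E'} \subseteq R^I_u(D^I_u \sem{E}) \subseteq \one$. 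This is the key reduction: the semantics of $R^I_u E'$ is contained in $\one$.

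Having established $\sem{R^I_u E'} \subseteq \one$, I would then invoke Lemma~\ref{lem:triv-synt} directly on the regexp $R^I_u E'$, which yields $R^I_u E' \doteq 0$ or $R^I_u E' \doteq 1$, using only the equations involving $0$ and $1$ and the zero law — exactly the fragment of the Kleene algebra theory claimed in the statement. That closes the argument.

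The only point requiring a little care — and the one I would flag as the main obstacle — is making sure the chain $\sem{R^I_u E'} \subseteq R^I_u(D^I_u \sem{E})$ is actually justified: one needs $R^I_u$ (on languages) to commute with the syntactic $R^I_u$ on $E'$ (Proposition~\ref{prop:brzozowski-sem-deriv}(2)), monotonicity of the semantic $R^I_u$, and the inclusion $\sem{E'} \subseteq D^I_u \sem{E}$ coming from Proposition~\ref{prop:antimirov-sem-deriv}(2). Each of these is already available in the excerpt, so assembling them is routine; the subtlety is purely that $\one$ (as used in Lemma~\ref{lem:conn-triv}) must be read as the language $\{\eps\}$, matching the hypothesis $\sem{E} \subseteq \one$ of Lemma~\ref{lem:triv-synt}, which it does. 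No induction on $E$ or on the derivation is needed here — everything is black-boxed through the earlier results.
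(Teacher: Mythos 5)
Your proposal is correct and follows essentially the same route as the paper's own proof: derive $\sem{E'} \subseteq D^I_u\sem{E}$ from Proposition~\ref{prop:antimirov-sem-deriv}, apply monotonicity of $R^I_u$ and Proposition~\ref{prop:brzozowski-sem-deriv} to get $\sem{R^I_u E'} \subseteq R^I_u(D^I_u\sem{E}) \subseteq \one$ via Lemma~\ref{lem:conn-triv}, and conclude with Lemma~\ref{lem:triv-synt}. The only difference is that you spell out the monotonicity and commutation steps that the paper leaves implicit, which is harmless.
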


\begin{proof}
  From $E \to^{I*} (u, E')$ by Proposition~\ref{prop:antimirov-sem-deriv},
  $\sem{E'} \subseteq D^I_{u} \sem{E}$. Hence by
  Lemma~\ref{lem:conn-triv}, we get
  $\sem{R^I_{u} E'} = R^I_{u} \sem{E'} \subseteq R^I_{u} (D^I_{u}
  \sem{E}) \subseteq \one$.
  By Lemma~\ref{lem:triv-synt}, $R^I_u E' \doteq 0$ or
  $R^I_u E' \doteq 1$.
\end{proof}

\begin{lemma} \label{lem:star-naive}
  For any $E$, $E'$ and $u \in \Sigma^*$, if $E^* \to^{I*} (u, E')$, then
  there exist $n \in \Nat$, $E_1, \ldots, E_n$,
  $\emptyset \subset X_0, \ldots, X_{n-1} \subseteq \Sigma$ and
  $u_1, \ldots, u_n \in \Sigma^+$ such that
  \[
  E' \doteq (R^I_{X_0} E)^* (R^I_{X_1} E_1) (R^I_{X_1} E)^* \ldots
  (R^I_{X_{n-1}} E_{n-1}) (R^I_{X_{n-1}} E)^* E_{n} E^*
  \]
  where $X_{i-1} \supseteq X_i \cup \Sigma(u_i)$ and
  $E \to^{I*} (u_i, E_i)$ for all $i$ (using only associativity of
  $\cdot$).
\end{lemma}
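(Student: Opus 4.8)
The plan is to proceed by induction on the length of $u$, tracking the shape of $E'$ as prescribed by the statement. For the base case $u = \eps$ we have $E^* \to^{I*} (\eps, E^*)$, so the only part-of-derivative is $E'= E^*$; this matches the claimed form with $n = 0$ (taking $E_n = 1$, the empty product of the middle factors, and reading the formula as $E_0 E^*$ with $E_0 \doteq 1$ --- here one uses associativity of $\cdot$ and the unit laws, which are available since $0$ is zero and the $0,1$-equations are in play via the surrounding lemmas, though strictly the statement says ``only associativity of $\cdot$'', so I would phrase the $n=0$ case directly as $E' = E^*$). For the inductive step, suppose $E^* \to^{I*} (va, E'')$, decomposed as $E^* \to^{I*} (v, E')$ followed by $E' \to^I (a, E'')$. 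By the induction hypothesis $E'$ has the stated form with some $n$, $E_1,\dots,E_n$, $X_0 \supseteq \cdots \supseteq X_{n-1}$ and $u_1,\dots,u_n$. Now I analyse how a single Antimirov reordering step acts on a product of the form $(R^I_{X_0}E)^* (R^I_{X_1}E_1)(R^I_{X_1}E)^* \cdots E_n E^*$.

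The key computation is to push the derivative through the leading block $(R^I_{X_0}E)^*$. Using the $EF$ rules (both of them) and the $E^*$ rule for $\to^I$, a step on $(R^I_{X_0}E)^* G$ (where $G$ is the remaining product) either (i) fires inside $(R^I_{X_0}E)^*$, producing $(R^I_a(R^I_{X_0}E))^* E' (R^I_{X_0}E)^* G$ where $R^I_{X_0}E \to^I (a, E')$, or (ii) fires inside $G$ after prepending the reorderable part $R^I_a((R^I_{X_0}E)^*) = (R^I_a(R^I_{X_0}E))^* = (R^I_{X_0 \cup\{a\}}E)^*$ and recursing. Since $R^I_{\{a\}}(R^I_{X_0}E) = R^I_{X_0 \cup \{a\}}E$, case (i) gives a new leading block $(R^I_{X_0 \cup\{a\}}E)^*$, a new factor coming from a step of $E$ (note $R^I_{X_0}E \to^I(a,E')$ forces $a \notin X_0$ --- dependence --- and $E \to^I(a, E'')$ with $E' = R^I_{X_0}E''$, so the new $E_1$ of the output is this $E''$ and the new $X_0$ is $X_0 \cup\{a\}$, the new $u_1$ is $a$), and the old block $(R^I_{X_0}E)^*$ becomes the second block with index set still $\supseteq$ the new third index set. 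Case (ii) peels off the leading block with its index set enlarged to $X_0 \cup\{a\}$ and then one repeats the analysis on the next block $(R^I_{X_1}E_1)(R^I_{X_1}E)^*\cdots$; here the factor $R^I_{X_1}E_1$ may itself get a derivative taken, in which case one replaces $E_1$ by one of its $\to^I$-successors (and $R^I$ commutes with $\to^I$ up to $\doteq$ by Prop.~\ref{prop:brzozowski-sem-deriv}-style reasoning, or more simply one first derives $E_1$ and then applies $R^I_{X_1 \cup\{a\}}$). In every branch the monotonicity $X_0 \supseteq X_1 \supseteq \cdots$ and the conditions $X_{i-1} \supseteq X_i \cup \Sigma(u_i)$ are preserved, with the new letter $a$ added to all index sets up to the point where the derivative fired.

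The main obstacle I anticipate is bookkeeping the index sets and the nonemptiness condition $\emptyset \subset X_i$ uniformly across all branches: when the step fires inside the $k$-th block, the blocks $0,\dots,k-1$ all get $a$ adjoined (so they stay nonempty), block $k$ splits into a new pair of blocks whose index sets are $X_{k}\cup\{a\}$ and $X_{k}$ respectively, and blocks $k+1,\dots,n$ are untouched; one must check that $X_{k}\cup\{a\} \supseteq (X_{k}\cup\{a\}) \cup \Sigma(a)$ trivially and that the new $u$'s list is $u_1,\dots,u_k, a, u_{k+1},\dots$ with the dependence/independence side conditions from Def.~\ref{def:scat} still valid --- this last point is where one needs that $a$ is dependent on something in each earlier block (otherwise the reorderable part would have killed it), which is exactly what makes the $R^I$ factors collapse correctly and is the analogue of the reasoning in Lemma~\ref{lem:conn-triv}. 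The rest is a routine but somewhat lengthy case distinction on which of the finitely many factors of the product the single Antimirov step acts on, together with repeated use of associativity of $\cdot$ to normalise the resulting expression into the stated shape.
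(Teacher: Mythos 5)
Your proof is correct and takes the approach the paper evidently intends (the paper states this lemma without an explicit proof): induction on the derivation of $E^* \to^{I*} (u, E')$, with a case split on which factor of the normal form the single-letter step fires in, using $R^I_a (R^I_X E) = R^I_{X \cup \{a\}} E$ and the fact that a $\to^I$-step out of $R^I_X E$ is $R^I_X$ applied to a $\to^I$-step out of $E$; the bookkeeping of the inclusion chain, the $\to^{I*}$ facts for the new $E_i$, and nonemptiness of the $X_i$ is all handled as you describe. One aside is spurious, though harmless: the lemma imposes no independence side conditions in the sense of Def.~\ref{def:scat}, and your claim that one needs ``$a$ dependent on something in each earlier block'' is neither required nor true in general --- connectedness plays no role here and only enters in Lemma~\ref{lem:star-conseq-equal} --- so the only obligations are exactly the ones you verify elsewhere in your argument.
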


\begin{lemma} \label{lem:star-conseq-equal}
  For any $E$, $E'$ and $u \in \Sigma^*$, if $\sem{E}$ is connected,
  $E^* \to^{I*} (u, E')$ and, for the development of $E'$ from the
  previous lemma, we have $X_{i-1} = X_i$ for some $i$, then
  $R^I_{X_i} E_i \doteq 0$ or $R^I_{X_i} E_i \doteq 1$ (using the equations involving $0$ and $1$ only, 
  that $0$ is zero).
\end{lemma}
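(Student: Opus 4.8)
The plan is to reduce this to Lemma~\ref{lem:conn-triv-synt-cor}. Recall from Lemma~\ref{lem:star-naive} that the development of $E'$ comes with data $n$, $E_1, \ldots, E_n$, chains $\emptyset \subset X_0, \ldots, X_{n-1} \subseteq \Sigma$ and nonempty words $u_1, \ldots, u_n \in \Sigma^+$ such that $X_{i-1} \supseteq X_i \cup \Sigma(u_i)$ and $E \to^{I*} (u_i, E_i)$ for every $i$. The hypothesis $X_{i-1} = X_i$ for some fixed $i$ forces, via the containment $X_{i-1} \supseteq \Sigma(u_i)$, that $\Sigma(u_i) \subseteq X_i = X_{i-1}$; in particular $R^I_{X_i} E_i = R^I_{\Sigma(u_i)} (R^I_{X_i \setminus \Sigma(u_i)} E_i)$ up to the commutation of $R^I$-reorderings, so the claim about $R^I_{X_i} E_i$ will follow from a claim about $R^I_{\Sigma(u_i)} E_i$.

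The next step is to apply Lemma~\ref{lem:conn-triv-synt-cor} with the letter-set $u_i \in \Sigma^+$: since $\sem{E}$ is connected and $E \to^{I*} (u_i, E_i)$, that lemma gives $R^I_{u_i} E_i \doteq 0$ or $R^I_{u_i} E_i \doteq 1$ (here I use that $R^I_X$ for a subalphabet $X$ is $R^I_w$ for any enumeration $w$ of $X$, by Lemma~2 on reorderable parts, so $R^I_{u_i} E_i$ and $R^I_{\Sigma(u_i)} E_i$ agree). Then I push this through the remaining reordering $R^I_{X_i \setminus \Sigma(u_i)}$: applying $R^I_Y$ to a regexp provably equal (using only the $0$/$1$ equations and that $0$ is zero) to $0$ or $1$ yields a regexp provably equal to $0$ or $1$, because $R^I_Y 0 = 0$ and $R^I_Y 1 = 1$ by definition and $R^I_Y$ respects $\doteq$ restricted to those equations — more carefully, I would argue $\sem{R^I_{X_i} E_i} = R^I_{X_i} \sem{E_i} \subseteq R^I_{\Sigma(u_i)} \sem{E_i} \subseteq \one$ (the first inclusion since $R^I_{X_i}$ only removes more letters, being monotone in the subalphabet, and $\sem{R^I_{\Sigma(u_i)} E_i} \subseteq \one$ by the previous step), and then invoke Lemma~\ref{lem:triv-synt} to conclude $R^I_{X_i} E_i \doteq 0$ or $R^I_{X_i} E_i \doteq 1$ using only the equations involving $0$ and $1$ and that $0$ is zero.

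I expect the only genuine subtlety to be the bookkeeping around subalphabet reorderings: matching $R^I_{X_i} E_i$ with $R^I_{\Sigma(u_i)} E_i$ via $R^I_{X_i} = R^I_{\Sigma(u_i)} \circ R^I_{X_i \setminus \Sigma(u_i)}$ (up to the commutativity/idempotence properties of $R^I$ established earlier) and making sure the resulting equalities genuinely live in the restricted equational theory — but routing the final step through the semantics and Lemma~\ref{lem:triv-synt}, as above, sidesteps the need to track which syntactic equations are used inside $R^I$. So the main obstacle is merely choosing the cleanest of these two routes; the mathematical content is entirely carried by Lemma~\ref{lem:conn-triv-synt-cor}.
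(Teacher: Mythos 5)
Your proposal is correct and follows essentially the same route as the paper: observe $\Sigma(u_i) \subseteq X_{i-1} = X_i$, invoke Lemma~\ref{lem:conn-triv-synt-cor} on $E \to^{I*}(u_i, E_i)$ to get $R^I_{u_i} E_i \doteq 0$ or $\doteq 1$, and conclude for $R^I_{X_i} E_i$. The paper leaves the final ``therefore'' implicit, whereas you justify it (soundly) via monotonicity of $R^I$ in the subalphabet and Lemma~\ref{lem:triv-synt}; this is just a more detailed rendering of the same argument.
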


\begin{proof}
  We have $\Sigma (u_i) \subseteq X_{i-1} = X_i$.  From
  $E \to^{I*} (u_i, E_i)$, by Lemma~\ref{lem:conn-triv-synt-cor}
  either $R^I_{u_i} E_i \doteq 0$ or $R^I_{u_i} E_i \doteq 1$.  Therefore also
  $R^I_{X_i} E_i \doteq 0$ or $R^I_{X_i} E_i \doteq 1$.
\end{proof}


\begin{lemma} \label{lem:starconn-at-most-sigma} For any $E$,
  $E'$ and $u \in \Sigma^*$, if $\sem{E}$ is connected and
  $E^* \to^{I*} (u, E')$, then there exist $n \leq |\Sigma|$,
  $E_1, \ldots, E_n$ and $\emptyset \subset X_0, \ldots, X_{n-1} \subseteq \Sigma$ such that 
  \[
  E' \doteq (R^I_{X_0} E)^* (R^I_{X_1} E_1) (R^I_{X_1} E)^* \ldots
  (R^I_{X_{n-1}} E_{n-1}) (R^I_{X_{n-1}} E)^* E_{n} E^*
  \]
  and $X_{i-1} \supset X_i$ for all $i$
  (using, in addition to the equations mentioned in the lemmata above,
  unitality of $1$ and the equation $F^* \cdot F^* \doteq F^*$).
\end{lemma}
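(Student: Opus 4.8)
The plan is to start from the representation of $E'$ given by Lemma~\ref{lem:star-naive}, namely
\[
E' \doteq (R^I_{X_0} E)^* (R^I_{X_1} E_1) (R^I_{X_1} E)^* \ldots (R^I_{X_{n-1}} E_{n-1}) (R^I_{X_{n-1}} E)^* E_n E^*
\]
with $\emptyset \subset X_0, \ldots, X_{n-1} \subseteq \Sigma$, $u_1, \ldots, u_n \in \Sigma^+$, $X_{i-1} \supseteq X_i \cup \Sigma(u_i)$, and $E \to^{I*}(u_i, E_i)$ for all $i$. The only obstruction to immediately reading off a bound of $|\Sigma|$ on $n$ is that the chain of subsets $X_0 \supseteq X_1 \supseteq \cdots \supseteq X_{n-1}$ need not be strictly decreasing. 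So the core of the argument is a normalisation procedure that collapses every ``stutter'' $X_{i-1} = X_i$, after which the chain becomes strictly decreasing and, since all $X_i$ are nonempty subsets of $\Sigma$, its length $n$ is at most $|\Sigma|$.

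First I would handle a single stutter. Suppose $X_{i-1} = X_i$ for some $i$ (with $1 \le i \le n-1$). By Lemma~\ref{lem:star-conseq-equal}, connectedness of $\sem{E}$ forces $R^I_{X_i} E_i \doteq 0$ or $R^I_{X_i} E_i \doteq 1$. In the first case the whole product contains a factor $\doteq 0$, so $E' \doteq 0$; then $n = 0$ works trivially (the empty product with $E_0 \eqdf 0$, or more carefully we just note $0$ is already in the required form with no $X_j$'s — I would state this degenerate case explicitly). In the second case $R^I_{X_i} E_i \doteq 1$, and by unitality of $1$ the factor $(R^I_{X_i} E_i)$ disappears, leaving two adjacent stars $(R^I_{X_{i-1}} E)^* (R^I_{X_i} E)^* = (R^I_{X_i} E)^* (R^I_{X_i} E)^*$ (using $X_{i-1} = X_i$), which by the equation $F^* \cdot F^* \doteq F^*$ coalesce into a single $(R^I_{X_i} E)^*$. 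The net effect is that the index $i-1$ (equivalently $i$) is removed: we obtain a shorter development with index set $X_0 \supseteq \cdots \supseteq X_{i-2} \supseteq X_{i+1} \supseteq \cdots \supseteq X_{n-1}$ and suitably re-indexed $E_j$'s. Crucially, all the side conditions are preserved: the retained words $u_j$ still satisfy $X_{j-1} \supseteq X_j \cup \Sigma(u_j)$ (for the new neighbour pair $X_{i-2} \supseteq X_{i+1} \cup \Sigma(u_{i+1})$ this follows from $X_{i-2} \supseteq X_{i-1} = X_i \supseteq X_{i+1} \cup \Sigma(u_{i+1})$), and $E \to^{I*}(u_j, E_j)$ still holds for the retained $j$'s. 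Also the first and last factors are untouched unless $i-1 = 0$, in which case removing index $0$ replaces the leading $(R^I_{X_0} E)^*$ by $(R^I_{X_1} E)^*$, still of the right shape.

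Then I would iterate: repeatedly apply the above collapse step as long as some equality $X_{i-1} = X_i$ remains. Each step strictly decreases $n$, so the process terminates, either in the degenerate case $E' \doteq 0$ (handled separately) or in a development where $X_0 \supset X_1 \supset \cdots \supset X_{n-1}$ is a strictly decreasing chain of nonempty subsets of $\Sigma$. A strictly decreasing chain of subsets of a set of size $|\Sigma|$ has length at most $|\Sigma|$ (in fact, nonempty subsets give length at most $|\Sigma|$, since the chain $\Sigma \supsetneq \cdots \supsetneq \{x\}$ is the longest such). Hence $n \le |\Sigma|$, which is the claimed bound. The equational reasoning used is exactly associativity of $\cdot$ (inherited from Lemma~\ref{lem:star-naive}), the $0$/$1$ equations and $0$ being zero (from Lemma~\ref{lem:star-conseq-equal}, invoked for the degenerate subcase), unitality of $1$, and $F^* \cdot F^* \doteq F^*$, matching the statement.

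The main obstacle is purely bookkeeping: verifying that collapsing a stutter genuinely preserves the invariants $X_{i-1} \supseteq X_i \cup \Sigma(u_i)$ and $E \to^{I*}(u_i, E_i)$, and in particular getting the boundary cases right — when the collapsed index is $0$ (so the leading star changes) and when $E' \doteq 0$ (so no $X_j$'s survive). I expect no conceptual difficulty beyond carefully re-indexing, but the argument should be written so that the transitivity step $X_{i-2} \supseteq X_{i-1} = X_i \supseteq X_{i+1} \cup \Sigma(u_{i+1})$ is visibly the reason the side conditions survive across a deletion.
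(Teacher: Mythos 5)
Your proposal is correct and follows essentially the same route as the paper: the paper's proof is just the one-line observation that Lemmata~\ref{lem:star-naive} and \ref{lem:star-conseq-equal} combine with the fact that at most $|\Sigma|-1$ of the inclusions $X_{i-1} \supseteq X_i$ can be proper, and your write-up fills in exactly the intended details (collapsing each stutter via unitality of $1$ and $F^*F^* \doteq F^*$, or degenerating to $0$, then counting the strictly decreasing chain of nonempty subsets). The bookkeeping you flag, including the boundary cases, is handled correctly.
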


\begin{proof}
  From Lemmata~\ref{lem:star-naive}, \ref{lem:star-conseq-equal} noting 
    that at most $| \Sigma | - 1$ of the inclusions
    $X_{i-1} \supseteq X_i$ can be proper.
\end{proof}

\begin{definition} We define functions
  $(\_)^{\to+}, (\_)^{\to*} : \RE \to \P \RE$ by
\[
\small
\begin{array}{rcl}
  a^{\to+}       & \eqdf & \{ 1 \} \\ 
  0^{\to+}       & \eqdf & \emptyset \\
  (E + F)^{\to+} & \eqdf & E^{\to+} \cup F^{\to+} \\
  1^{\to+}       & \eqdf & \emptyset \\
  (EF)^{\to+}    & \eqdf & E^{\to+} \cdot \{ F \} \cup 
                           \bigcup \{ R^I_X (E^{\to*}) \cdot F^{\to+} \mid 
                           \emptyset \subset X \subseteq \Sigma \} \\
  (E^*)^{\to+}   & \eqdf & \{ (R^I_{X_0} E)^* (R^I_{X_1} E_1) \ldots 
                           (R^I_{X_{n-1}} E_{n-1}) (R^I_{X_{n-1}} E)^* E_{n} E^* \mid \\
               &       & \ssp  
                           n > 0, \emptyset \subset X_i \subseteq \Sigma, X_{i-1} \supseteq X_i, E_i \in E^{\to+} \} \\[2ex]
  E^{\to*}     & \eqdf & \{ E \} \cup E^{\to+}
\end{array}
\]
\end{definition}

\begin{proposition} \label{prop:antimirov-successors} For any $E$,
  $E'$ and $u \in \Sigma^*$, if $E \to^{I*} (u, E')$, then there
  exists $E''$ such that $E' \doteq E''$ and $E'' \in E^{\to*}$ (using
  only the equations mentioned in the above lemmata).
\end{proposition}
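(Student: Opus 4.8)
The plan is to prove Proposition~\ref{prop:antimirov-successors} by structural induction on $E$, closely paralleling the classical argument that $Q^E \subseteq E^{\to^*}$ (the proposition just after the classical Antimirov automaton), but with the reordering $\R^I$-decorations carried through and, crucially, using the star-case lemmata (Lemmata~\ref{lem:star-naive}--\ref{lem:starconn-at-most-sigma}) to keep the $E^*$ clause of $(\_)^{\to+}$ finite. It suffices to prove the companion statement that for each $a \in \Sigma$, if $E' \in E^{\to*}$ and $E' \to^I (a, E'')$ then $E'' \doteq E'''$ for some $E''' \in E^{\to+}$; together with $E \in E^{\to*}$ and the obvious fact that $E^{\to+} \subseteq E^{\to*}$, a trivial induction on the length of $u$ (using the inductive definition of $\to^{I*}$) then yields the proposition. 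So I would first reduce to this single-letter "one-step closure" claim and then prove it by induction on $E$.

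For the base cases: $0^{\to+} = \emptyset$ and $1^{\to+} = \emptyset$ are fine because $0$ and $1$ have no outgoing $\to^I$ transitions, and similarly $1 \in E^{\to*}$ has no transitions; for $E = a$, the only part-of-derivative is $1$, and $1 \in a^{\to+} = \{1\}$. For $E + F$: a transition $E + F \to^I (a, G')$ comes from $E \to^I (a, G')$ or $F \to^I (a, G')$, so by IH $G' \doteq G''$ with $G'' \in E^{\to+} \cup F^{\to+} = (E+F)^{\to+}$; and any element of $(E+F)^{\to+} = E^{\to+} \cup F^{\to+}$ inherits its one-step closure from the IH for $E$ or $F$. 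For $EF$: the transitions out of $EF$ itself are $EF \to^I (a, E'F)$ from $E \to^I (a, E')$ and $EF \to^I (a, (\R^I_a E)F')$ from $F \to^I (a, F')$; in the first case $E' \doteq E'' \in E^{\to+}$ so $E'F \doteq E''F \in E^{\to+} \cdot \{F\}$; in the second case $\R^I_a E$ has the form $\R^I_X E$ with $X = \{a\}$ nonempty and $F' \doteq F'' \in F^{\to+}$, so $(\R^I_a E)F' \doteq (\R^I_{\{a\}} E)F'' \in \R^I_{\{a\}}(E^{\to*}) \cdot F^{\to+}$ — here I'd note $E \in E^{\to*}$ so $\R^I_{\{a\}} E \in \R^I_{\{a\}}(E^{\to*})$. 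Then transitions out of elements of $(EF)^{\to+}$: an element $E''F$ with $E'' \in E^{\to+}$ behaves like the $EF$ case with $E$ replaced by (something semantically a part-of-derivative of) $E$, so I'd need to observe that $E^{\to+}$ is itself closed under $\to^I$-successors up to $\doteq$ (the IH) and that $\R^I$ commutes with everything up to $\doteq$; and an element of $\R^I_X(E^{\to*}) \cdot F^{\to+}$ is of the form $(\R^I_X E_0) F''$ with $E_0 \in E^{\to*}$, $F'' \in F^{\to+}$, and transitions either hit the left factor — giving $(\R^I_X E_0') F''$ or $(\R^I_a \R^I_X E_0)(\R^I_X F''')$-shaped terms, which I reabsorb using $\R^I_a \R^I_X = \R^I_{X \cup \{a\}}$ and monotonicity of the index set — or, if the left factor is $\doteq 0$ or has nullable reorderable part, move to $F^{\to+}$-successors; I'd make the "or $0$" alternative precise by the zero laws, since $(\_)^{\to+}$ is only required to capture successors up to $\doteq$ using the zero equations. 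The $E^*$ case is where the real work lives: a transition $E^* \to^I (a, (\R^I_a E)^* E' E^*)$ with $E \to^I (a, E')$ is already almost an element of $(E^*)^{\to+}$ (take $n = 1$, $X_0 = \{a\}$), modulo rewriting $E'$ to a member of $E^{\to+}$ by IH; and for a general element of $(E^*)^{\to+}$ of the displayed shape, Lemma~\ref{lem:star-naive} tells us exactly what a one-step $\to^I$-successor looks like — either a transition on one of the $\R^I_{X_{i-1}} E$ starred factors (which prepends a new block, growing the chain but with the new index set still containing a nonempty $X$, re-absorbed into the same schema), a transition on one of the $\R^I_{X_i} E_i$ factors (handled by IH for $E^{\to+}$), or a transition on the trailing $E^*$ (the $n=1$ sub-case again). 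Every such successor fits the schema in $(E^*)^{\to+}$ with $n$ possibly increased, but still positive, still with $\emptyset \subset X_i \subseteq \Sigma$ and $X_{i-1} \supseteq X_i$, so membership is preserved; finiteness of the set is a separate matter handled by Lemma~\ref{lem:starconn-at-most-sigma} and is not needed for this proposition.

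The main obstacle I anticipate is the bookkeeping in the $EF$ and $E^*$ inductive steps: one must carefully track how $\R^I_a$ applied to a compound expression distributes over $+$, $\cdot$, $(\_)^*$ (it commutes up to \emph{syntactic identity} by the recursive definition of $\R^I$ on regexps, which helps), how $\R^I_a \R^I_X E = \R^I_{X \cup \{a\}} E$, and — the genuinely delicate point — that when a left factor's reorderable part is nullable and we take an $F$-side transition, the resulting term $(\R^I_a(\cdots))(\cdots)$ can be rewritten, \emph{using only the $0$/$1$ equations and that $0$ is zero}, into something already generated by $(EF)^{\to+}$ or $(E^*)^{\to+}$. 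I would isolate this as the one place where the "up to $\doteq$ (restricted equations)" slack is actually consumed, pointing back to Lemmata~\ref{lem:triv-synt}, \ref{lem:conn-triv-synt-cor}, \ref{lem:star-conseq-equal} for the relevant normalizations, and otherwise keep the induction purely schematic. I'd also remark that connectedness of $\sem{E}$ is \emph{not} needed for this proposition (it is only needed later, to bound $n \leq |\Sigma|$ and get finiteness); here we merely need that the schema in $(E^*)^{\to+}$ is \emph{closed} under successors, which is a syntactic closure property independent of connectedness.
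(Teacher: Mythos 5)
Your case analysis contains most of the right ingredients, but the top-level architecture has a genuine gap: the reduction to the one-step claim ``if $E' \in E^{\to*}$ and $E' \to^I (a, E'')$ then $E'' \doteq E'''$ for some $E''' \in E^{\to+}$'' followed by ``a trivial induction on the length of $u$'' does not compose. After one derivation step you only know that the actual syntactic state $E_1$ satisfies $E_1 \doteq E_1''$ for some $E_1'' \in E^{\to*}$; the next step of $\to^{I*}$ is taken from $E_1$ itself, and $\to^I$ is defined on syntax and is not invariant under $\doteq$ (already associativity changes the nesting of products and hence the exact successors, which is precisely why Lemma~\ref{lem:star-naive} is only stated up to associativity). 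So your one-step claim, whose hypothesis is literal membership in $E^{\to*}$, never gets applied to the states that actually arise after the first letter. To repair this you must either (i) additionally prove that the restricted equational theory is a bisimulation up to $\doteq$ for $\to^I$ --- every equation used must be shown to preserve one-step successors up to $\doteq$ and up to successors that are $\doteq 0$ --- which is real extra work you do not mention, or (ii) drop the one-step reduction entirely and argue by structural induction on $E$ with a shape lemma for the \emph{iterated} derivatives of each constructor: this is exactly what Lemma~\ref{lem:star-naive} supplies for $E^*$, and an (unstated but easier) analogue supplies for $EF$, namely that any $E'$ with $EF \to^{I*}(u,E')$ is $\doteq$ to $E_1 F$ with $E \to^{I*}(u,E_1)$, or to $(R^I_X E_1)F_1$ with $\emptyset \subset X$, $E \to^{I*}(u',E_1)$, $F \to^{I*}(u'',F_1)$ and $u'' \neq \eps$. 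Route (ii) is the one the paper's lemma infrastructure is built for.

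A related slip occurs in your star case: Lemma~\ref{lem:star-naive} characterizes the iterated derivatives of $E^*$ itself; it says nothing about the one-step successors of an arbitrary element of $(E^*)^{\to+}$, which is how you invoke it. Used correctly it makes the star case immediate without any successor computation: given $E^* \to^{I*}(u,E')$, the lemma yields the displayed development with $E \to^{I*}(u_i, E_i)$ and $u_i \in \Sigma^+$; the structural induction hypothesis (strengthened so that a nonempty word lands you in $E^{\to+}$ rather than merely $E^{\to*}$, as you rightly arrange) rewrites each $E_i$ to a member of $E^{\to+}$; and congruence of the restricted theory under the letter-for-letter-or-$0$ substitution $R^I_{X_i}$ finishes. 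Your observations that connectedness of $\sem{E}$ is irrelevant to this proposition and that finiteness is a separate matter handled by Lemma~\ref{lem:starconn-at-most-sigma} are correct.
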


\begin{proposition} \label{prop:starconn-antimirov-finite} If $E$ is
  star-connected, then a suitable sound quotient of the state set
  $\{ E' \mid \exists u \in \Sigma^*.\ E \to^{I*} (u, E') \}$ of the
  Antimirov automaton for $E$ (accepting $\sem{E}^I$) is
  finite.
\end{proposition}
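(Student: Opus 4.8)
The plan is to exhibit one concrete sound quotient that works: the quotient of the Antimirov automaton by the congruence $\doteq$ generated by the equations that the preceding lemmata invoke, namely the semilattice laws for $0$ and $+$, the two‑sided zero laws for $0$, distributivity of $\cdot$ over $+$, the equations involving only $0$ and $1$ (such as $0+1\doteq 1$ and $0^*\doteq 1$), unitality of $1$ for $\cdot$, associativity of $\cdot$, and $F^*F^*\doteq F^*$. Every one of these holds in all Kleene algebras, so by Kozen's completeness theorem~\cite{Kozen94} $E'\doteq E''$ implies $\sem{E'}=\sem{E''}$, hence $\sem{E'}^I=[\sem{E'}]^I=[\sem{E''}]^I=\sem{E''}^I$; by Proposition~\ref{prop:antim-autom} this means $\doteq$‑related states of the Antimirov automaton accept the same language, so collapsing them leaves the accepted language $\sem{E}^I$ intact. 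It therefore suffices to prove that, for star‑connected $E$, the reachable state set $Q^E=\{E'\mid\exists u.\ E\to^{I*}(u,E')\}$ is finite modulo $\doteq$. (Proposition~\ref{prop:antimirov-successors} already bounds $Q^E$ by $E^{\to*}$, but $E^{\to*}$ itself need not be finite modulo $\doteq$ even when $E$ is star‑connected, so the reachability‑aware bounds of the lemmata are genuinely needed.)

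I would prove this by induction on the structure of the star‑connected expression $E$, using two auxiliary observations: that each operation $E'\mapsto R^I_X E'$ on regexps preserves $\doteq$ (it commutes with all regexp constructors and sends each letter to itself or to $0$, hence respects every equation valid in all Kleene algebras), and that reachability is closed under further derivation. For $E\in\{0,1\}\cup\Sigma$ the set $Q^E$ is literally finite. For $E=F+G$, a part‑of‑derivative of $F+G$ along a nonempty word is a part‑of‑derivative of $F$ or of $G$, so $Q^{F+G}\subseteq\{F+G\}\cup Q^F\cup Q^G$, finite modulo $\doteq$ by the induction hypothesis. For $E=F^*$ (so $\sem{F}$ connected and $F$ star‑connected), Lemma~\ref{lem:starconn-at-most-sigma} gives that every state reachable from $F^*$ is $\doteq$ to an expression $(R^I_{X_0}F)^*(R^I_{X_1}F_1)(R^I_{X_1}F)^*\ldots(R^I_{X_{n-1}}F_{n-1})(R^I_{X_{n-1}}F)^*F_nF^*$ with $n\le|\Sigma|$ and $X_0\supset X_1\supset\cdots\supset X_{n-1}$ a strictly descending chain; tracing back through Lemma~\ref{lem:star-naive}, each $F_i$ occurring here arises as $F\to^{I*}(u_i,F_i)$ for some $u_i\in\Sigma^+$, so $F_i\in Q^F$. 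Since $Q^F$ is finite modulo $\doteq$ (induction hypothesis), $R^I_X F$ is a fixed regexp for each $X$, $R^I_X$ preserves $\doteq$, and there are only finitely many strictly descending chains of nonempty subsets of $\Sigma$, only finitely many such expressions occur modulo $\doteq$; together with $F^*$ itself and states $\doteq 0$, this makes $Q^{F^*}$ finite modulo $\doteq$.

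The case $E=FG$ is the one I expect to demand the most care. The strategy is to show that the family $\mathcal{S}=\{FG\}\cup\{F'G\mid F'\in Q^F\}\cup\{(R^I_X F')G'\mid X\subseteq\Sigma,\ F'\in Q^F,\ G'\in Q^G\}$ is closed, modulo $\doteq$ (and modulo allowing states $\doteq 0$), under the Antimirov transition relation and contains $FG$; then $Q^{FG}\subseteq\mathcal{S}$ modulo $\doteq$, and $\mathcal{S}$ is finite modulo $\doteq$ because $Q^F$, $Q^G$ are finite modulo $\doteq$ by the induction hypothesis, there are finitely many subsets $X$, and $R^I_X$ preserves $\doteq$. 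Verifying the closure means reading off the two Antimirov product rules: from $F'G$ one reaches $F''G$ (with $F''\in Q^F$) and $(R^I_{\{b\}}F')G'$ (with $G'\in Q^G$); from $(R^I_X F')G'$ one reaches, via the second product rule, $(R^I_{X\cup\{b\}}F')G''$ with $G''\in Q^G$ — the prefix accumulates the newly consumed letter — and, via the first product rule, states $HG'$ where $H$ is a $b$‑part‑of‑derivative of $R^I_X F'$. For the last clause one needs the sub‑fact that every $b$‑part‑of‑derivative of $R^I_X E$ is $\doteq$ either to $0$ or to $R^I_X E''$ for some $b$‑part‑of‑derivative $E''$ of $E$ (provable by a routine induction on $E$: applying $R^I_X$ can only zero out or delete derivatives, never create new ones), so $H\doteq 0$ or $H\doteq R^I_X F''$ with $F''\in Q^F$, keeping us inside $\mathcal{S}$. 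Making this accumulating‑prefix bookkeeping, and in particular that sub‑fact, fully precise is the \emph{main obstacle}; the rest of the argument is assembly of the lemmata already proved.
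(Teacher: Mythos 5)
Your overall strategy is sound and is, at bottom, the same as the paper's: the paper packages your per-case closed families into the single recursively defined set $E^{\to*}$ (whose product clause $\bigcup_X R^I_X(E^{\to*})\cdot F^{\to+}$ is exactly your $\mathcal{S}$, and whose star clause, restricted to $n\le|\Sigma|$ via Lemma~\ref{lem:starconn-at-most-sigma}, is exactly your star case), proves that every reachable state is $\doteq$ to a member of it (Proposition~\ref{prop:antimirov-successors}), and reads off finiteness of the restricted set. Your unfolding of this into a structural induction on the star-connected expression is a legitimate reorganization, and it has the virtue of making explicit two facts the paper leaves implicit: that $R^I_X$ preserves $\doteq$ (true, since $R^I_X$ acts as a substitution of letters by letters or $0$ and all the equations used are valid in every Kleene algebra), and the commutation of $R^I_X$ with taking parts-of-derivatives.

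One step, however, needs to be repaired rather than merely tidied. In the product case you argue: $\mathcal{S}$ is closed under $\to^I$ modulo $\doteq$, hence $Q^{FG}\subseteq\mathcal{S}$ modulo $\doteq$. That inference is not valid as stated, because $\to^I$ is an inductively defined relation on literal regexps and is not (and is not shown to be) invariant under $\doteq$. Concretely: from $(R^I_X F')G'$ the first product rule leads to the literal state $HG'$ where $H$ is a literal part-of-derivative of the literal regexp $R^I_X F'$; your one-step sub-fact gives $H\doteq 0$ or $H\doteq R^I_X F''$, but the \emph{next} transition is taken from $H$, not from $R^I_X F''$, and nothing stated so far controls the parts-of-derivatives of $H$. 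The fix is to make the invariant syntactic: let the left components of $\mathcal{S}$ range over the closure of $\{F\}$ under both $\to^I$-steps and the operations $R^I_a$, and prove the normal-form statement ($\doteq 0$ or $\doteq R^I_X F''$ with $F''\in Q^F\cup\{F\}$) for every member of that closure by induction on its construction, your one-step commutation lemma supplying the inductive step. The same literal-versus-$\doteq$ care is what ultimately justifies calling the quotient ``sound''. You flag this as the main obstacle, which is fair; but it is a genuine strengthening of the sub-fact you state (from one derivative step to all reachable states), not bookkeeping --- and it is precisely the content the paper leaves inside the unproved Proposition~\ref{prop:antimirov-successors}. The sum and star cases, and the soundness of the quotient via semantic equality of $\doteq$-related states, are fine as you present them.
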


\begin{proof}
  By Lemma~\ref{lem:starconn-at-most-sigma}, for a star-connected
  expression $E$, we only need to consider $n \le | \Sigma |$ in the
  definition of $(E^*)^{\to+}$ for
  Proposition~\ref{prop:antimirov-successors} to hold. This
  restriction makes the set $E^{\to*}$ finite.
\end{proof}

\section{Uniform Scattering Rank of a Language} \label{sec:uniform-scattering-rank}

We proceed to defining the notion of uniform scattering rank of a
language and show that star-connected expressions define languages
with uniform scattering rank.

\subsection{Scattering Rank vs.\ Uniform Scattering Rank}

The notion of scattering rank of a language (a.k.a.\ distribution rank, $k$-block
testability) was introduced by Hashiguchi~\cite{Hashiguchi91}. 

\begin{definition} \label{def:rank}
  A language $L$ has \emph{($I$-scattering) rank} at most $N$ if \\
  $\forall u, v .\, uv \in [L]^I \imp \exists z \in L.\, 
  u \lsim_N z \rsim v$.
\end{definition}

We say that $L$ has rank if it has rank at most $N$ for some
$N \in \Nat$. If it does, then, for the least such $N$, we say that
$L$ has rank $N$.

The only languages with rank 0 are $\emptyset$ and $\one$. If a nontrivial language
$L$ is closed, it has rank 1: for any $uv \in [L]^I$, we have also
have $uv \in L$ and $u \lhd uv \rhd \eps, v$.



Having rank  is a sufficient condition for regularity of the
trace closure of a regular language. But it is not a necessary
condition.

\begin{proposition}[Hashiguchi \cite{Hashiguchi91}]
  (cf.~\cite[Prop.~6.3.2]{Ochmanski95}) \label{prop:hashiguchi} If a
  regular language $L$ has rank, then $[L]^I$ is regular.
\end{proposition}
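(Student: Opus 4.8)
The plan is to show that if a regular language $L$ has $I$-scattering rank at most $N$, then the closure $[L]^I$ is accepted by a finite automaton, which by Kleene's theorem makes it regular. The automaton I would build is essentially a bounded-degree scattering automaton: since $L$ is regular, fix a finite DFA $\mathcal{A} = (Q, q_0, \delta, F)$ for $L$. The states of the new automaton will record, for a prefix $u$ read so far, enough information about how $u$ can be scattered into words of $L$ with degree at most $N$. Concretely, a state should be a bounded-size object capturing, for every way of choosing $n \le N$ ``gap positions'', the set of DFA states reachable after consuming one of the interleaved pieces $v_0 u_1 v_1 \ldots u_k$ together with the subalphabet constraints coming from the independence conditions $v_j \mathrel{I} u_i$ for $j < i$ in Definition~\ref{def:scat}. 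Because $N$ is fixed, the independence constraints only involve subsets of the finite alphabet $\Sigma$, and the DFA has finitely many states, the resulting state space is finite.

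The key steps, in order, would be: (1) Using Proposition~\ref{prop:scat-uv}, reformulate ``$uv \in [L]^I$'' as ``$\exists z \in L.\ u \lsim z \rsim v$'', and then use the rank hypothesis to replace this by ``$\exists z \in L.\ u \lsim_N z \rsim v$'' — this is the crucial reduction, bounding the scattering degree uniformly by $N$ regardless of $u$ and $v$. (2) Define, for a word $u$, a ``profile'' $\rho(u)$ that summarizes all bounded-degree partial scatterings of $u$ into $\mathcal{A}$: for each $k \le N$ and each assignment of which DFA-states are reachable by the consumed parts, record both the DFA state and the set of letters that the not-yet-consumed gap material $v_0 \ldots v_{k}$ must avoid (equivalently, the union of alphabets of the committed $v_j$'s that later $u_i$'s must be independent from). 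Show $\rho(u)$ takes only finitely many values. (3) Show $\rho$ is right-congruent: $\rho(u)$ determines $\rho(ua)$ for every letter $a$, because appending $a$ either extends the current $u_i$-piece, starts a new $u_{i+1}$-piece (subject to independence with the recorded forbidden set, and incrementing the degree counter up to the cap $N$), or... — in all cases the update depends only on $\rho(u)$ and $a$. (4) Designate as accepting those profiles $\rho(u)$ that witness some $z \in L$ with $u \lsim_N z \rhd \eps$-extendable-by-an-independent-suffix, i.e.\ matching the ``$u \lsim_N z \rsim v$'' condition with the appropriate bookkeeping for $v$; by step (1) this automaton accepts exactly $[L]^I$.

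The main obstacle I expect is step (2)–(3): getting the profile definition exactly right so that it is simultaneously (a) finite-valued, (b) a genuine right congruence, and (c) sufficient to decide membership in $[L]^I$. The delicate point is the asymmetry in Definition~\ref{def:scat} — the independence requirement $v_j \mathrel{I} u_i$ only goes one way (earlier $v$'s with later $u$'s), and the $u_i$ are required nonempty while $v_0, v_n$ may be empty and $v_1, \ldots, v_{n-1}$ nonempty — so the profile must track which ``slot'' we are currently filling and the accumulated forbidden alphabet. I would handle this by making a profile a function from the (finite) set of degree-$\le N$ scattering ``shapes'' to $Q \times \mathcal{P}(\Sigma) \times \{\text{in-}u\text{-piece}, \text{in-}v\text{-piece}\}$, and carefully verifying the transition update against Definition~\ref{def:scat}. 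An alternative, possibly cleaner route — and one the authors may well prefer — is to avoid building the automaton by hand and instead use the Antimirov reordering derivative machinery of Section~\ref{sec:antimirov-expr}: show directly that finite rank forces the reachable set of reordering parts-of-derivatives (or a suitable truncation thereof, as the abstract's ``refined Antimirov automaton'' suggests) to be finite up to the Kleene-algebra theory, invoking Proposition~\ref{prop:antim-autom} for correctness. Either way, the heart of the argument is the uniform bound $N$ turning an a priori unbounded search over scatterings into a finite-state one.
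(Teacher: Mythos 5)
The paper does not actually prove this proposition---it is imported from Hashiguchi and Ochma\'nski---so your sketch has to stand on its own, and it has a genuine gap: it is precisely the gap that this paper is organized around. Your construction keeps, as the automaton state after reading a prefix $u$ of $w$, a profile of the degree-$\le N$ partial scatterings of $u$ into words of $L$, and step (3) requires this profile to be right-congruent. But rank at most $N$ only guarantees that \emph{for each split} $w = uv$ there is \emph{some} witness $z \in L$ with $u \lsim_N z \rsim v$; different splits of the same $w$ may need different witnesses, and a witness for the split $(ua, v')$ need not extend any degree-$\le N$ configuration reachable for the split $(u, av')$. So the truncated profile is not right-congruent, and the automaton rejects words it should accept. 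Proposition~\ref{prop:fin-rank-no-urank} is a concrete refutation: $\sem{E}$ there has rank $2$, yet for $w = a^nb^nca^nb^n$ with $n > N$ every $z \in \sem{E}$ equivalent to $w$ forces some prefix of $w$ to be scattered into more than $N$ blocks of $z$, so every $N$-truncated profile automaton---including the $N$-truncated refined Antimirov automaton you offer as the ``cleaner route''---misses $w$. The paper says explicitly that this is why plain rank ``does not quite work'' and introduces uniform rank; your argument, repaired, proves the proposition with ``rank'' replaced by ``uniform rank'' (Corollary~\ref{cor:holes-bounded} together with Proposition~\ref{prop:urank-refined-antim-finite}), not the statement as given.

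A correct proof must avoid demanding that the bounded data update locally. The standard route is a Myhill--Nerode count: by Proposition~\ref{prop:scat-uv} and rank $\le N$, $D_u([L]^I) = \{v \mid \exists z \in L.\ u \lsim_N z \rsim v\}$, and, fixing a DFA for $L$, this is the trace closure of a finite union of concatenations $K_0 K_1 \cdots K_n$ with $n \le N$, where each $K_j$ is of the form $\{v \in Y_j^* \mid \delta(r_j, v) = p_j\}$ for DFA states $r_j, p_j$ and a subalphabet $Y_j \subseteq \Sigma$ recording the independence constraints against the later blocks $u_{j+1}, \ldots, u_n$. The index data ranges over a fixed finite set, so $D_u([L]^I)$ takes only finitely many values as $u$ varies, which gives regularity. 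The crucial point is that the rank hypothesis is invoked afresh for every $u$; one never computes the data for $ua$ from the data for $u$, so the non-uniformity of the witnesses is harmless. Your sketch, by insisting on an incrementally maintained bounded profile, cannot reach this conclusion from mere rank.
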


\begin{proposition} \label{prop:closure-reg-no-rank}
  There exist regular languages $L$ such that $[L]^I$
  is regular but $L$ is without a rank.
\end{proposition}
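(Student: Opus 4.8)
The plan is to exhibit a concrete witness: a regular language $L$ over an independence alphabet $(\Sigma, I)$ whose trace closure $[L]^I$ is regular, yet for which no bound $N$ works in Definition~\ref{def:rank}. The natural candidate is built from the example that already appears in the excerpt, namely $\Sigma = \{a, b\}$ with $a I b$ and $L = \sem{(ab)^*}$, whose closure is $[L]^I = \{u \mid |u|_a = |u|_b\}$. That closure is \emph{not} regular, so $(ab)^*$ itself will not do; but recall (Example~\ref{ex:sem-deriv-infinite}) that the reordering derivatives $D^I_{b^n} L = \sem{a^n(ab)^*}$ are all regular. The idea is therefore to work over a slightly larger alphabet, or with a slightly modified language, so that the closure \emph{collapses} to something regular while the scattering degree needed to witness membership still grows without bound. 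Concretely, I would take a fresh letter $c$ dependent on everything (or reuse $b$ appropriately) so that $[L]^I$ becomes all of $\Sigma^*$ or some obviously regular set, but a word like $b^n a^n$ (or an interleaving of $n$ blocks) can only be recognized as lying in $[L]^I$ by scattering a representative $z \in L$ into $\Theta(n)$ pieces.

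First I would fix the witness language precisely and check that its closure is regular — this should be a short computation using $[\sem{E}]^I = \sem{E}^I$ and the reordering-concatenation identities, or even more directly by describing $[L]^I$ as a Parikh-type condition that happens to be regular in the chosen alphabet. Second, I would pick, for each $N$, a word $uv \in [L]^I$ — something of the shape $u = b^N$, $v = a^N$, or a more delicate alternating word — and argue that in \emph{any} $z \in L$ with $uv \sim^I z$ (equivalently, by Proposition~\ref{prop:scat-uv}, $u \lsim z \rsim v$), the scattering $u = u_1 \cdots u_n$, $v = v_0 \cdots v_n$ forced by the independence constraints must have $n > N$. The combinatorial heart is that every $a$ in $u$ must be ``crossed over'' past the $b$-block to its right in $z$, and the independence discipline $v_j I u_i$ for $j < i$ in Definition~\ref{def:scat} prevents doing this in few large chunks: each chunk boundary is charged to an unavoidable $a$/$b$ adjacency in $z$ that blocks merging.

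The key steps, in order, are: (1) state the witness $(\Sigma, I, L)$ and verify $L$ is regular and $[L]^I$ is regular; (2) for arbitrary $N$, produce $u_N, v_N$ with $u_N v_N \in [L]^I$; (3) show that any $z \in L$ equivalent to $u_N v_N$, when scattered as $u_N \lsim z \rsim v_N$, requires degree strictly greater than $N$, using the irreflexivity and the ordering condition in Definition~\ref{def:scat}; (4) conclude $L$ has no rank, while $[L]^I$ is regular, contradicting the existence of a bound $N$ and hence establishing the proposition. I would keep step (3) self-contained by reasoning directly about dependence graphs: $\langle z \rangle_D$ is an isomorphism invariant of the trace, so I can count alternations there rather than chasing particular representatives.

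The main obstacle I anticipate is step (3): getting the lower bound on the scattering degree \emph{tight and clean}. It is easy to believe that splitting a word like $b^N a^N$ into a prefix-up-to-reordering $b^N$ and a suffix $a^N$ costs many pieces, but one must rule out clever scatterings that exploit other representatives $z$ of the same trace — the quantifier is $\exists z \in L$, so the adversary gets to choose the most favorable representative. The fix is to choose $L$ (and the word $u_N v_N$) so that the trace $[u_N v_N]^I$ has a rigid enough dependence graph that \emph{every} linear extension lying in $L$ forces the interleaving pattern, making the piece-count bound representative-independent. I expect a design where $L$ enforces a strict alternation (so its dependence graphs are essentially paths or near-paths) to make this rigidity argument go through with minimal case analysis; the remaining work is then routine bookkeeping of which $v_j I u_i$ conditions fail when one tries to use fewer than $N+1$ blocks.
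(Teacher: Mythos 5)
There is a genuine gap: your proof never actually produces the witness, and the one concrete modification you float does not work. The whole content of the proposition is the existence of a specific $L$, and your plan ends at ``take a fresh letter $c$ dependent on everything (or reuse $b$ appropriately) so that $[L]^I$ becomes all of $\Sigma^*$ or some obviously regular set.'' Adding a letter dependent on everything to $\sem{(ab)^*}$ does not collapse the closure to something regular: for instance $[\sem{(ab)^*c}]^I = \{uc \mid |u|_a = |u|_b\}$ is still non-regular, because the obstruction to regularity is the Parikh constraint $|u|_a=|u|_b$ on the $\{a,b\}$-part, which a new dependent letter does nothing to remove. The paper's actual witness is $L = \sem{(ab)^*(a^*+b^*)}$ over $\Sigma=\{a,b\}$, $a\,I\,b$: the tail $a^*+b^*$ absorbs any surplus of $a$'s or $b$'s, so $[L]^I=\Sigma^*$ (any $w$ with $|w|_a\ge|w|_b$ is equivalent to $(ab)^{|w|_b}a^{|w|_a-|w|_b}$, and symmetrically), which is regular; yet $L$ still contains $(ab)^n$ and still forces the bad scattering.

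You correctly identify the other danger --- the $\exists z\in L$ quantifier lets the adversary pick a favorable representative --- but you leave it unresolved (``I expect a design where \dots''). With the paper's witness this step is a one-line count: the only word of $L$ equivalent to $a^nb^n$ is $(ab)^n$ (a word $(ab)^k a^m$ or $(ab)^k b^m$ has equal numbers of $a$'s and $b$'s only when $m=0$ and $k=n$), and scattering $u=a^n$ as a prefix-up-to-reordering of $(ab)^n$ with suffix $b^n$ forces $n$ blocks, since no two $a$'s of $(ab)^n$ are adjacent and each intervening $b$ must land in some $v_j$ with $j<i$, which is consistent ($b\,I\,a$) but splits the $a$'s into singleton blocks. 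So your outline has the right shape, but until you commit to a language for which both halves (regular closure, unbounded block count over \emph{all} representatives in $L$) can actually be checked, the proposition is not proved.
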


\begin{proof}
  Consider $\Sigma \eqdf \{a,b\}$, $a I b$. The regular language
  $L \eqdf \sem{(ab)^*(a^*+b^*)}$ is without a rank, since, for any
  $n$, we have $(ab)^n \in L$ and $a^nb^n \in [L]^I$ while the smallest
  $N$ such that $a^n \lsim_N (ab)^n \rsim b^n$ is $n$. Nonetheless,
  $[L]^I = \Sigma^* = \sem{(a+b)^*}$ is regular.
\end{proof}

We wanted to show that a truncation of the refined Antimirov automaton
(which we define in Section~\ref{sec:derivative-uniform-rank}) is
finite for regexps whose language has rank.  But it turns out, as we
shall see, that rank does not quite work for this.  For this reason,
we introduce a stronger notion that we call uniform scattering rank.

\begin{definition}
  A language $L$ has \emph{uniform ($I$-scattering) rank} at most $N$ if \\
  $\forall w \in [L]^I .\ \exists z \in L .\ \forall u, v.\, w = uv \imp
  u \lsim_N z \rsim v$.
\end{definition}

The difference between the two definitions is that, in the uniform
case, the choice of $z$ depends only on $w$ whereas, in the
non-uniform case, it depends on the particular split of $w$ as $w=uv$,
i.e., for every such split of $w$ we may choose a different $z$.

\begin{lemma}
  If $L$ has uniform rank at most $N$, then $L$ has
  rank at most $N$.
\end{lemma}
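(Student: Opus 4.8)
The plan is to unfold the definition of uniform rank and then immediately specialize it to recover the ordinary rank condition. Suppose $L$ has uniform rank at most $N$. I need to show $L$ has rank at most $N$, i.e., that for all $u, v$ with $uv \in [L]^I$ there exists $z \in L$ with $u \lsim_N z \rsim v$.

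So first I would fix arbitrary $u, v \in \Sigma^*$ with $uv \in [L]^I$, and set $w \eqdf uv$, so that $w \in [L]^I$. By the uniform rank hypothesis applied to this $w$, there exists $z \in L$ such that for all $u', v'$ with $w = u'v'$, we have $u' \lsim_N z \rsim v'$. Now I instantiate that universally quantified statement at the particular split $u' \eqdf u$, $v' \eqdf v$, which is legitimate since $w = uv$ by construction. This yields $u \lsim_N z \rsim v$ with $z \in L$, which is exactly what rank at most $N$ demands.

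The proof is essentially a trivial quantifier-reordering argument, so there is no real obstacle; the only thing to be careful about is making sure the witness $z$ produced by the uniform condition is used before its universal quantifier over splits is instantiated — i.e., the order of $\exists z$ and $\forall u, v$ is what makes uniform rank the stronger notion, and pulling out the single $z$ first is precisely the step that makes the ordinary (non-uniform, where $z$ may depend on the split) condition follow. No quotienting, no properties of $\lsim_N$ beyond what is in its definition, and no earlier propositions are actually needed here.
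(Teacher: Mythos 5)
Your proof is correct and matches the paper's intent exactly: the paper states this lemma without proof, remarking only that the difference between the two notions is whether $z$ may depend on the split of $w$, and your quantifier-instantiation argument is precisely the formalization of that remark. Nothing is missing.
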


The converse of the above lemma does not hold---there are languages
with uniform rank greater than rank. Furthermore, there are languages
that have rank but no uniform rank.

\begin{proposition} \label{prop:fin-rank-no-urank}
  Let $\Sigma \eqdf \{ a , b , c \}$, $a I b$ and 
  $E \eqdf a^*b^*c(ab)^*(a^* + b^*) + (ab)^*(a^* + b^*)ca^*b^*$.
  \begin{enumerate}
  \item The language $\sem{E}$ has rank 2.
  \item The language $\sem{E}$ has no uniform rank.
  \end{enumerate}
\end{proposition}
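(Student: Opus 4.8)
The plan is to handle the two parts separately, each by exhibiting an explicit witness (for part 1, a bound $N = 2$ together with, for each factorisation of each word in the closure, a suitable $z$; for part 2, a sequence of words in the closure forcing the rank-witness to be arbitrarily large).

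For part (1), first I would describe $[\sem{E}]^I$ concretely. Since $a I b$, we have $[\sem{(ab)^*(a^*+b^*)}]^I = \{w \mid |w|_a = |w|_b\} \cup \{w \mid |w|_a \geq |w|_b\} \cup \{w \mid |w|_b \geq |w|_a\}$ restricted by the positions of $c$; more precisely $[\sem{E}]^I$ consists of words with exactly one $c$ whose $ab$-projection, on one side of the $c$, is free and on the other side is balanced-or-more-of-one-letter — but since both disjuncts of $E$ are symmetric under swapping the two sides of $c$, the closure is just $\{\,w : w \text{ has exactly one occurrence of } c\,\}$ with no further constraint, because any distribution of $a$'s and $b$'s on the two sides of $c$ can be realised by one of the two disjuncts after reordering. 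Given a target $w = x\,c\,y$ with $x,y \in \{a,b\}^*$, I would pick $z \in \sem{E}$ of the form $(ab)^k(a^*+b^*)\,c\,a^*b^*$ or its mirror, choosing $k$ and the tails so that $z \sim^I w$; then for an arbitrary split $w = uv$ I must show $u \lsim_2 z \rsim v$. The key observation is that $z$ can be chosen to have the shape $p\,c\,q$ where $p$ is a ``normal form'' (all letters that commute are grouped), so that any prefix $u$ of $w$ maps, up to reordering, to a genuine prefix of $z$ possibly split into at most two independent blocks around the letters of $z$ not yet consumed — giving degree $\le 2$. This is the same phenomenon as in the rank-1 argument for closed languages, but here $z \notin$ closed-language, so one reordering block is needed on each of the two sides of the pivot $c$, hence rank exactly $2$; I would also exhibit a specific $u,v$ showing rank $1$ fails (e.g. $w = ab\,c$, where no $z \in \sem{E}$ with $z \sim^I w$ has $ab$ literally as a scattered degree-1 prefix while $c$ is the suffix, because in $\sem{E}$ the letter $c$ never precedes both an $a$ and a $b$ in the required order simultaneously).

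For part (2), the idea is that uniform rank fails because a single $z$ must simultaneously serve \emph{all} splits of a fixed $w$, and the two disjuncts of $E$ force a global choice of which side of $c$ is ``balanced'', whereas a cleverly chosen $w$ needs the left side treated balanced for some splits and the right side treated balanced for others. Concretely I would take $w_n \eqdf (ab)^n\,c\,(ab)^n$ (which lies in $[\sem{E}]^I$, e.g. via $z = (ab)^{2n}\,c \in$ ... wait, that is not in $\sem{E}$; via $z = a^nb^nc(ab)^n a^0 + \ldots$), pick any candidate $z \in \sem{E}$ with $z \sim^I w_n$, and show that for the split $u = (ab)^n c\,a^n$, $v = b^n$ (or its mirror), the minimal degree of a scattering $u \lsim z \rsim v$ is $\Theta(n)$: because $z$, being in $\sem{E}$, has all its $a$'s-before-$b$'s structure on exactly one side of $c$ locked, so the $a^n$ sitting just after $c$ in $u$ must be gathered out of a region of $z$ where $a$'s and $b$'s alternate, requiring $n$ separate blocks. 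Then no finite $N$ works.

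The main obstacle I anticipate is part (2): one must argue about \emph{every} $z \in \sem{E}$ equivalent to $w_n$, not just a convenient one, so the lower bound on scattering degree has to be derived purely from the syntactic shape of words in $\sem{E}$ (exactly one $c$; on each side the $ab$-block structure is either an alternating $(ab)^k$ prefix followed by a monochromatic tail). I would prove a combinatorial lemma: if $z = p\,c\,q$ with $p,q$ of that restricted shape and $z \sim^I w_n$, then for the chosen split the number of maximal independent segments of $z$ needed to reassemble $u$ (with $v$ fixed as the literal suffix tail up to reordering) is bounded below by the number of ``alternations'' $ab$ or $ba$ that $z$ is forced to contain in the region corresponding to $v$'s complement, which is linear in $n$ on whichever side is not the monochromatic tail — and symmetry of the two disjuncts means one of the two symmetric splits of $w_n$ always lands on the bad side. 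Establishing this alternation lower bound rigorously (rather than hand-wavingly) is where the real work lies; the rest is bookkeeping with $\lsim_N$ and $\rsim$.
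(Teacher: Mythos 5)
There are two genuine gaps, one in each part.

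\textbf{Part 1.} Your sketch fixes the witness $z$ as a function of $w$ alone (``pick $z$ \ldots{} so that $z \sim^I w$; then for an arbitrary split $w=uv$ I must show $u \lsim_2 z \rsim v$''). That is the \emph{uniform} rank property, which part~2 says is false for this language, so this plan cannot succeed: e.g.\ for $w = a^nb^nca^nb^n$ the only equivalent words in $\sem{E}$ are $a^nb^nc(ab)^n$ and $(ab)^nca^nb^n$, and neither serves all splits with degree $2$. For plain rank you must choose $z$ \emph{after} the split $w=uv$, and the choice of which summand of $E$ to use is governed by which side of the split the unique $c$ falls on: if $c\in v$, reorder $u$ as $\pi_a(u)\pi_b(u)$ (two blocks) and send $z$ into $a^*b^*c(ab)^*(a^*+b^*)$; if $c\in u$, do the mirror construction into $(ab)^*(a^*+b^*)ca^*b^*$. (Your side remark that $w=abc$, $u=ab$, $v=c$ refutes rank $1$ is also wrong: $z=abc\in\sem{E}$ gives $ab \lhd abc \rhd \eps, c$ at degree $1$. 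A working rank-$1$ counterexample is $w=cba$, $u=cb$, $v=a$.)

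\textbf{Part 2.} Your family $w_n = (ab)^nc(ab)^n$ with the split $u=(ab)^nca^n$, $v=b^n$ does not work. First, $uv = (ab)^nca^nb^n \neq w_n$, so this is not a factorization of $w_n$ and the uniform-rank definition never asks about it. Second, and more importantly, $w_n=(ab)^nc(ab)^n$ actually \emph{has} a uniform degree-$2$ witness, namely $z = a^nb^nc(ab)^n$: every literal prefix $u$ of $w_n$ either lies entirely before $c$ (then $u\sim^I a^jb^{j'}$ scatters into $a^nb^n$ as at most two blocks) or contains $c$ (then $u$ is equivalent to a literal prefix of $z$, degree $1$, because the selected letters after $c$ must be the leftmost $a$'s and leftmost $b$'s of $(ab)^n$, which form a contiguous block). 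And if you repair the factorization by taking the word $(ab)^nca^nb^n$ instead, that word is itself in $\sem{E}$, so it has uniform degree $1$ with $z=w$. The word must have the shape $a^nb^n$ on \emph{both} sides of $c$, e.g.\ $w = a^{N+1}b^{N+1}ca^{N+1}b^{N+1}$: its only equivalents in $\sem{E}$ are $a^{N+1}b^{N+1}c(ab)^{N+1}$ and $(ab)^{N+1}ca^{N+1}b^{N+1}$, and one checks that the split $u=a^{N+1}$ forces two consecutive $a$'s before $c$ in $z$ while the split $v=b^{N+1}$ (at the very end) forces two consecutive $b$'s after $c$ in $z$ — no single candidate satisfies both. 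Your proposed ``alternation lower bound'' lemma is in the right spirit, but as instantiated it is applied to a word for which the conclusion is false.
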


\begin{proof} 
  Note that $c$ behaves like a separator---although $a$ and $b$ are
  independent, neither $a$ nor $b$ commutes with $c$.
  It can be seen that words in $\sem{E}^I$ are of the form $w_lcw_r$ where
  $w_l$ and $w_r$ consist of some number of $a$'s and $b$'s, i.e.,
  $\sem{E}^I = \sem{(a + b)^*c(a + b)^*}$. 
  \begin{enumerate}
  \item Let $uv \in \sem{E}^I$. We have to find $u_1, u_2$ and
    $v_0, v_1, v_2$ so that $u_1u_2 \sim^I u$, $v_0v_1v_2 \sim^I v$,
    $v_0 I u_1$, $v_0v_1 I u_2$ and $v_0u_1v_1u_2v_2 \in \sem{E}$. 
    There are two cases to consider: either $c$ is in the suffix $v$
    or it is in the prefix $u$.
    \begin{itemize}
    \item Case $c \in v$: We have that $u$ consists of only $a$'s and
      $b$'s. Let $x, y \in \Sigma^*$ be such that $v = xcy$. Set
      $u_1 \eqdf \pi_a(u)$, $u_2 \eqdf \pi_b(u)$, $v_0 \eqdf \eps$ and
      $v_1 \eqdf \pi_a(x)$. Let $k \eqdf | y |_a$, $l \eqdf | y |_b$ and
      $m \eqdf \min (k , l)$. Set
      $v_2 \eqdf \pi_b(x) c (ab)^{m} a^{k - m} b^{l - m}$. We have that
      $u_1u_2 = \pi_a(u) \pi_b(u) \sim^I u$. Since
      $\pi_a(x) \pi_b(x) \sim^I x$ and
      $(ab)^ma^{k - m}b^{l - m} \sim^I y$, we also have
      $v_0v_1v_2 = \eps \pi_a(x) \pi_b(x) c (ab)^ma^{k - m}b^{l - m}
      \sim^I xcy = v$.
      Also, $\eps I \pi_a(u)$ and $\eps \pi_a(x) I \pi_b(u)$. Finally,
      $v_0u_1v_1u_2v_2 = \eps \pi_a(u) \pi_a(x) \pi_b(u) \pi_b(x) c
      (ab)^{m} a^{k - m} b^{l - m} \in \sem{a^*b^*c(ab)^*(a^* +
        b^*)}$.

    \item Case $c \in u$: Similar to the previous case. Let $x$ and $y$ be
      such that $u = xcy$. Let $k \eqdf | x |_a$, $l \eqdf | x |_b$ and
      $m \eqdf \min (k , l)$. Set
      $u_1 \eqdf (ab)^ma^{k - m}b^{l - m}c\pi_a(y)$, $u_2 \eqdf \pi_b(y)$,
      $v_0 \eqdf \eps$, $v_1 \eqdf \pi_a(v)$ and $v_2 \eqdf \pi_b(v)$. In this
      case we have
      $v_0u_1v_1u_2v_2 \in \sem{(ab)^*(a^* + b^*)ca^*b^*}$.
    \end{itemize}

  \item Assume that $\sem{E}$ has uniform rank at most $N$. Take
    $w \eqdf a^{N + 1}b^{N + 1}ca^{N + 1}b^{N + 1} \in \sem{E}^I$. By our
    assumption, there is $z \in \sem{E}$ such that, for all $u$ and
    $v$, if $w = uv$, then $u \lsim_N z \rsim v$.

    Take $u \eqdf a^{N + 1}$ and $v \eqdf b^{N + 1}ca^{N + 1}b^{N + 1}$. Thus,
    for some $n \le N$, $z = v_0u_1v_1 \ldots u_nv_n$ and
    $u = a^{N + 1} \sim^I u_1 \ldots u_n$. Since we have $N + 1$
    letters $a$ to divide into $n \leq N$ words, at
    least one $u_i$ must consist of more than one $a$ and thus $z$ must
    contain two consecutive $a$'s that are before $c$.
    
    Take $u' \eqdf a^{N + 1}b^{N + 1}ca^{N + 1}$ and $v' \eqdf b^{N + 1}$.
    Again, for some $n \le N$, $z = v'_0u'_1v'_1 \ldots u'_nv'_n$ and
    $v' = b^{N + 1} \sim^I v'_0 \ldots v'_n$. Note that $c$ must be in
    one of the $u'_i$'s and thus for all $j < i$ it must be that
    $v'_j = \eps$. Hence $v'_0 = \eps$ and we have $N + 1$ letters $b$
    to divide into $n \leq N$ words and thus at least one $v'_i$
    consists of more than one $b$. This means that $z$ must contain
    two consecutive $b$'s that are after $c$.
    
    The only words in $\sem{E}$ equivalent to $w$ are
    $a^{N + 1}b^{N + 1}c(ab)^{N + 1}$ and
    $(ab)^{N + 1}ca^{N + 1}b^{N + 1}$. Neither of these has at least
    two consecutive $a$'s before $c$ as well as at least two
    consecutive $b$'s after $c$, so neither qualifies as
    $z$. Contradiction. \qedhere


  \end{enumerate}
\end{proof}

\subsection{Star-Connected Languages Have Uniform Rank}


Klunder et al.~\cite{KlunderOS05} established that star-connectedness
is a sufficient condition for a regular language to have rank,
although not a necessary one.

\begin{proposition}[Klunder et
  al.~\cite{KlunderOS05}] \label{prop:starconn-rank} Any
  star-connected language has rank.
\end{proposition}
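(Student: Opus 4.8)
The plan is to strengthen Klunder et al.'s result from ordinary rank to uniform rank, and in fact to prove something slightly stronger than what the statement asks: that any star-connected language has \emph{uniform} rank, since later sections need uniform rank, not rank. (If the intended statement really is only about ordinary rank, the argument below a fortiori gives it, using the lemma that uniform rank bounds rank.) The strategy mirrors the structure already used for automaton finiteness in Section~\ref{subsec:starconn}: induct on the structure of a star-connected regexp $E$, tracking a bound $N(E)$ on the uniform rank of $\sem{E}$, with the $E^*$ case being where connectedness is essential.

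First I would handle the base cases: $\sem{0} = \emptyset$ and $\sem{1} = \one$ have uniform rank $0$, and $\sem{a} = \{a\}$ has uniform rank $1$ (indeed, for $w = a$, take $z = a$; any split $uv = a$ is witnessed with degree $\le 1$). For the inductive step on $E + F$, given $w \in [\sem{E+F}]^I = [\sem{E}]^I \cup [\sem{F}]^I$, pick the witness $z$ from whichever of $\sem{E}$, $\sem{F}$ contains a representative; so $N(E+F) = \max(N(E), N(F))$ works. For $EF$: given $w \in [\sem{EF}]^I = [\sem{E}\cdot\sem{F}]^I$, write $w \sim^I w_1 w_2$ with $w_1 \in [\sem{E}]^I$ (via a representative in $\sem E$) and the part of $w$ after it in $[\sem F]^I$; I would use the uniform witnesses $z_1$ for $w_1$ from $\sem E$ and $z_2$ for $w_2$ from $\sem F$ and glue them, so that for \emph{any} split $uv = w$ the scattering of $u$ into $z_1 z_2$ is obtained by concatenating a scattering of the $\sem E$-part into $z_1$ with a scattering of the $\sem F$-part into $z_2$; the degrees add, giving $N(EF) = N(E) + N(F)$. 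One must check the independence side-conditions of Def.~\ref{def:scat} propagate across the join, using that the $v_j$'s coming from the $z_1$ block precede the $z_2$ block and the relevant independence holds because $z_1 z_2$ literally sits inside a word of $\sem E \cdot \sem F$; this is the routine-but-fiddly part.

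The heart of the matter is the $E^*$ case, where $E$ is star-connected and $\sem{E}$ is connected. Here I would argue that $\sem{E^*}$ has uniform rank at most $N(E) \cdot |\Sigma|$ (or some similar explicit bound). The key use of connectedness, exactly as in Lemma~\ref{lem:conn-triv} and Lemma~\ref{lem:starconn-at-most-sigma}, is that when one peels off letters from (a reordering of) a word $w \in [\sem{E^*}]^I$, within a single connected block of $E$ one cannot "leave letters behind" indefinitely: once a letter of a given connected piece has been consumed, the whole piece becomes entangled, so the alphabet of what must still be placed to the right keeps shrinking at block boundaries. Concretely: given $w \in [\sem{E^*}]^I$, choose a representative $z = z^{(1)} z^{(2)} \cdots z^{(k)} \in \sem{E^*}$ with each $z^{(j)} \in \sem{E}$ and each $z^{(j)}$ connected; I want to show that for \emph{every} split $w = uv$, the prefix $u$ scatters into $z$ with bounded degree. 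The clean way is to first pick, for each $z^{(j)}$, a uniform witness internal to $\sem E$ (allowed since each $z^{(j)}\in\sem E$ and we may take $z$ to \emph{be} built from such witnesses), giving each block a scattering of degree $\le N(E)$; then observe that across block boundaries, connectedness forces that only boundedly many blocks can be "active" (partially consumed) for a given split — at most $|\Sigma|$ of them, since each newly-touched block strictly enlarges the set of already-committed letters of $z$ relative to what $v$ still needs, and that set lives in $\Sigma$. Multiplying the per-block bound by the bound on active blocks yields the uniform rank bound for $\sem{E^*}$. I expect the main obstacle to be making the "boundedly many active blocks, uniformly over the split" claim precise: the subtlety flagged in Prop.~\ref{prop:fin-rank-no-urank} is exactly that a naive argument gives a split-dependent $z$; so I must fix $z$ once (depending only on $w$) and then verify the degree bound simultaneously for all prefixes $u$ of $w$, which is where connectedness — via a monotonicity-of-committed-alphabet argument along $z$ — does the real work. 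The bookkeeping of Def.~\ref{def:scat}'s independence conditions through the block decomposition will be the most technical part, but it is the same bookkeeping already carried out in Lemmata~\ref{lem:star-naive}–\ref{lem:starconn-at-most-sigma}, so I would lean on those.
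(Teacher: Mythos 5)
Your overall strategy (structural induction, strengthening to uniform rank so that rank follows a fortiori) is the same as the paper's, and your base, sum and concatenation cases are fine. The gap is in the $E^*$ case, and it is exactly the step you deferred as ``the routine-but-fiddly part''. Bounding the number of \emph{two-colored} (partially consumed) blocks by $|\Sigma|$ via connectedness is the easy half (Lemma~\ref{lem:at-most-sigma-uv}); it holds for any split and any decomposition of $z$ into connected factors, so there is no uniformity problem there. What your ``per-block bound times number of active blocks'' computation ignores is the contribution of the \emph{one-colored} blocks: every maximal run of factors wholly from $u$ that is interrupted by a factor wholly from $v$ starts a fresh block of the scattering, and nothing in your argument bounds the number of such alternations. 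Concretely, take $E \eqdf a+b$ with $a I b$ (so $\sem{E}$ is connected and $E^*$ is star-connected), $w \eqdf a^m b^m$, and build $z$ from the single-letter witnesses in the order $z = (ab)^m$. No block is two-colored for the split $u \eqdf a^m$, $v \eqdf b^m$, yet $u \lsim z \rsim v$ forces degree exactly $m$. So fixing $z$ as ``the concatenation of the per-factor uniform witnesses'' in the given order does not work.

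The missing idea is that you must also choose the \emph{order} of the connected factors $z^{(1)},\ldots,z^{(k)}$ (which is free, since $\sem{E}^*$ is closed under permuting factors, provided $\sim^I$-equivalence with $w$ is preserved). This is the paper's Lemma~\ref{lem:permutation}: there is a single permutation of the factors, depending only on $w$, such that for \emph{every} split $w = uv$ a factor wholly from $u$ is never immediately preceded by a factor wholly from $v$. With that, the one-colored factors between consecutive two-colored ones group into at most one $u$-run followed by one $v$-run, contributing $1$ each, and the degree is bounded by $(|\Sigma|+1)N$ as in Proposition~\ref{prop:starconn-urank}. Proving that one permutation works simultaneously for all splits is itself a nontrivial induction (on the prefix $u$, with a block-moving construction in the critical case), so it cannot be absorbed into the ``monotonicity of committed alphabet'' remark; your sketch as written establishes neither the uniform nor, as presented with a split-independent $z$, even the ordinary rank bound for $E^*$.
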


\begin{proof}
  The language $\{a\}$ has rank $1$. The languages $\emptyset$ and
  $\one$ have rank $0$. If two languages $L_1$ and $L_2$ have ranks at
  most $N_1$ resp.\ $N_2$, then $L_1 \cup L_2$ has rank at most
  $\max(N_1, N_2)$ and $L_1 \cdot L_2$ has rank at most $N_1 + N_2$.
  If a general language $L$ has rank at most $N$, then $L^*$ need not
  have rank. For example, for $\Sigma \eqdf\{a,b\}$, $a I b$, the
  language $\{ab\}$ has rank 1, but $\{ab\}^*$ is without rank. But if
  $L$ is also connected, then $L^*$ turns out to have rank at most
  $(N+1)\cdot |\Sigma|$.  The claim follows by induction on the given
  star-connected expression.
\end{proof}

\begin{proposition} \label{prop:rank-not-starconn} There exist
  regular languages with rank (and also with uniform rank)
  that are not star-connected.
\end{proposition}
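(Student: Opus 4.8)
The plan is to give a concrete witness: take $\Sigma \eqdf \{a,b\}$ with $a I b$ and let $L \eqdf \sem{a^*(ab)^*b^*}$. This is obviously regular, and I claim it has uniform rank (hence rank) but is not star-connected.

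For the rank part, first note that $[L]^I = \sem{(a+b)^*} = \Sigma^*$: a word $a^i(ab)^nb^j$ has $i+n$ letters $a$ and $n+j$ letters $b$, and with $n=0$ this realizes every Parikh vector. Given $w \in [L]^I$, put $z \eqdf a^{|w|_a}b^{|w|_b}$, which lies in $L$ (take $i = |w|_a$, $n=0$, $j=|w|_b$). For any split $w = uv$, the words $u' \eqdf a^{|u|_a}b^{|u|_b}$ and $v' \eqdf a^{|v|_a}b^{|v|_b}$ satisfy $u' \sim^I u$ and $v' \sim^I v$, and one checks directly that $u' \lhd z \rhd v'$ holds with degree at most $2$: scatter $z = a^{|u|_a}\,a^{|v|_a}\,b^{|u|_b}\,b^{|v|_b}$, reading off $u_1 \eqdf a^{|u|_a}$, $u_2 \eqdf b^{|u|_b}$ and $v_0 \eqdf \eps$, $v_1 \eqdf a^{|v|_a}$, $v_2 \eqdf b^{|v|_b}$; the independence requirements of Definition~\ref{def:scat} amount to $\eps I a^{|u|_a}$ and $a^{|v|_a} I b^{|u|_b}$, both immediate since $a I b$. (When some of the four blocks are empty the degree only drops, and the cases where $u$ or $v$ is empty are handled by collapsing adjacent blocks.) Hence $u \lsim_2 z \rsim v$; since $z$ depends only on $w$, this shows $L$ has uniform rank at most $2$, and therefore also rank at most $2$.

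For the non-star-connectedness I would argue by contradiction. Suppose $L = \sem{E}$ for a star-connected $E$. A routine induction on regexps shows that star-connectedness is preserved by the classical Brzozowski derivative along any letter: the only interesting clause is $D_a(F^*) = (D_a F)F^*$, where star-connectedness of $F^*$ forces $F$ to be star-connected, so $D_a F$ is star-connected by the induction hypothesis, and its product with the (still star-connected) $F^*$ is star-connected. Thus $D_{ab}E$ is star-connected, and by syntactic correctness of the classical Brzozowski derivative, $D_{ab}L = D_{ab}\sem{E} = \sem{D_{ab}E}$ is a star-connected language. A short computation gives $D_{ab}L = \sem{(ab)^*b^*}$ (for $abw \in L$, the leading letters force $w$ to have the shape $(ab)^mb^j$). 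By Proposition~\ref{prop:starconn-rank} this language has rank, so by Proposition~\ref{prop:hashiguchi} its trace closure is regular; but $[\sem{(ab)^*b^*}]^I = \{w \mid |w|_a \le |w|_b\}$, whose intersection with the regular language $\sem{a^*b^*}$ is $\{a^mb^n \mid m \le n\}$, which is not regular --- a contradiction.

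I expect the only real subtleties to be (i) getting the scattering degrees and the nonemptiness side conditions of Definition~\ref{def:scat} exactly right for the degenerate splits in the uniform-rank calculation, which is routine but fiddly, and (ii) the lemma that classical Brzozowski derivatives preserve star-connectedness, which is precisely what transports the non-regularity obstruction from the derivative $D_{ab}L$ back to $L$ itself; everything else is bookkeeping. One could equally phrase (ii) via Antimirov parts-of-derivatives, as these too remain star-connected, but the Brzozowski version is the shortest route.
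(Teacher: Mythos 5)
Your proof is correct, but it takes a genuinely different route from the paper. The paper's witness is the closed language $\sem{(aa+ab+ba+bb)^*}$, for which uniform rank $1$ is immediate (any nontrivial closed language has uniform rank $1$), and it only asserts non-star-connectedness, remarking that ``a more involved pumping argument'' shows no star-connected expression denotes this language, without giving that argument. You instead take $\sem{a^*(ab)^*b^*}$, pay a little more for the rank part (an explicit uniform-rank-$2$ scattering via the sorted witness $a^{|w|_a}b^{|w|_b}$, with the degenerate splits correctly flagged as routine collapsing), and in exchange get a fully worked-out non-star-connectedness argument: classical Brzozowski derivatives preserve star-connectedness (an easy induction, correct as you state it, with the only interesting clause being $D_a(F^*)=(D_aF)F^*$), so star-connectedness of $L$ would transport to $D_{ab}L=\sem{(ab)^*b^*}$, whose trace closure $\{w \mid |w|_a \le |w|_b\}$ is not regular, contradicting the chain star-connected $\Rightarrow$ rank (Proposition~\ref{prop:starconn-rank}) $\Rightarrow$ regular closure (Proposition~\ref{prop:hashiguchi}). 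Note that the derivative step is genuinely needed in your setup, since $[L]^I=\Sigma^*$ is regular, so no direct obstruction is available; the derivative localizes the non-regular closure inside $L$. The trade-off: the paper's example makes the rank half trivial but leaves the star-connectedness half as an unproved pumping claim, whereas your argument is slightly longer on the rank side but entirely self-contained modulo results already stated in the paper.
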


\begin{proof}
  Consider $\Sigma \eqdf \{a,b\}$, $a I b$. The language
  $L \eqdf \sem{(aa+ab+ba+bb)^*}$ has rank 1, in fact even uniform
  rank 1, because it is closed. The regular expression
  $(aa+ab+ba+bb)^*$ is clearly not star-connected, since the language
  $\sem{aa+ab+ba+bb}$ contains disconnected words $ab$ and $ba$. But a
  more involved pumping argument also shows that $L$ is not
  star-connected, i.e., that there is no star-connected expression $E$
  such that $L = \sem{E}$.
\end{proof}

We will now show that star-connected languages also have uniform rank,
by refining Klunder et al.'s proof of
Proposition~\ref{prop:starconn-rank}, especially the case
of the Kleene star.


Let us analyze the case $L^*$ where $L$ is a connected language. When
$w \in [L^*]^I$, then there exists $z \in L$ such that $w \sim^I z$.
This further means that there exist $n \in \N$ and
$z_1, \ldots, z_n \in L$ such that $z = z_1 \ldots z_n$ where we can
require that all $z_i$ are nonempty. Since $L$ is connected, each
$z_i$ is also connected. If $w = uv$, then there exist
$u_1, \ldots, u_n$ and $v_1, \ldots, v_n$ such that
$u \sim^I u_1 \ldots u_n$, $v \sim^I v_1 \ldots v_n$ and, for every
$i$, $z_i \sim^I u_iv_i$ and, for every $j < i$, $v_j I u_i$. In other
words, $u_i$ is the part of $z_i$ that belongs to $u$ and $v_i$ is the
part that belongs to $v$. In particular, if $u_i = \eps$ (or
$z_i \sim^I v_i$), then all letters of $z_i$ belong to the suffix $v$,
and similarly if $v_i = \eps$ (or $z_i \sim^I u_i$), then all letters
of $z_i$ belong to the prefix $u$. An important observation for us is
that not more than $|\Sigma|$ of the $z_i$ can be two-colored in the
sense that both $u_i \neq \eps$ and $v_i \neq \eps$.

\begin{lemma} \label{lem:at-most-sigma-uv} 
  Let $w, u, v, z_1 , \ldots , z_n$ be words such that $w = uv$,
  $w \sim^I z_1 \ldots z_n$ and each $z_i$ is nonempty and connected. Let
  $u_1 , \ldots , u_n, v_1 , \ldots , v_n$ be words such that
  $u \sim^I u_1 \ldots u_n$, $v \sim^I v_1 \ldots v_n$, for all $i$,
  $z_i \sim^I u_i v_i$, and, for all $j < i$,
  $v_j I u_i$.
  For at most $| \Sigma |$ of the words $z_i$, it can be that both $u_i \neq \eps$
  and $v_i \neq \eps$.
\end{lemma}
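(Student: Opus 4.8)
The plan is to exploit the fact that each $z_i$ is \emph{connected} so that a ``two-colored'' $z_i$ (one with $u_i\neq\eps$ and $v_i\neq\eps$) must straddle the $u/v$ boundary through a dependence edge, and then to argue that the letters at which these crossings happen are forced to be distinct across different two-colored blocks. Concretely, fix an index $i$ with $u_i\neq\eps$ and $v_i\neq\eps$. Since $z_i\sim^I u_iv_i$ and $z_i$ is connected, its dependence graph $\langle z_i\rangle_D$ is connected; the vertices split into those coming from $u_i$ and those coming from $v_i$, and connectivity forces some edge between the two parts, i.e.\ there are letters $a\in\Sigma(u_i)$ and $b\in\Sigma(v_i)$ with $a\mathrel{D}b$. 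The idea is to pick such a witness letter — say the ``crossing letter'' $b_i\in\Sigma(v_i)$ that is dependent on something in $u_i$ — for each two-colored block $z_i$, and to show $i\mapsto b_i$ is injective, giving at most $|\Sigma|$ two-colored blocks.

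The main step is injectivity of the witness assignment. First I would record the ordering consequence of the hypothesis $v_j\mathrel{I}u_i$ for all $j<i$: if $z_i$ is two-colored then for every $j<i$ we have $v_j\mathrel{I}u_i$, and in particular $v_j$ cannot contain any letter dependent on a letter of $u_i$. Now suppose $i<i'$ are both two-colored with the same crossing letter $b=b_i=b_{i'}\in\Sigma(v_i)\cap\Sigma(v_{i'})$, where $b\mathrel{D}a$ for some $a\in\Sigma(u_i)$ and $b\mathrel{D}a'$ for some $a'\in\Sigma(u_{i'})$. Since $i<i'$, the hypothesis gives $v_i\mathrel{I}u_{i'}$; but $b\in\Sigma(v_i)$ is dependent on $a'\in\Sigma(u_{i'})$ (indeed $b\mathrel{D}b$ by reflexivity of $D$ already gives a conflict if we instead choose the witness in $u$), contradicting $v_i\mathrel{I}u_{i'}$. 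So no letter of $\Sigma$ can serve as the crossing letter for two distinct two-colored blocks, whence there are at most $|\Sigma|$ of them.

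I expect the main obstacle to be pinning down exactly \emph{which} witness letter to assign so that the $v_i\mathrel{I}u_j$ asymmetry bites cleanly: one must be careful whether to take the witness on the $u$-side or the $v$-side of the crossing, because the independence hypothesis $v_j\mathrel{I}u_i$ is directional in $j<i$. The clean choice is: for a two-colored $z_i$, let $a_i\in\Sigma(u_i)$ be any letter such that $a_i\mathrel{D}b$ for some $b\in\Sigma(v_i)$ (such $a_i$ exists by connectedness of $\langle z_i\rangle_D$). If $i<i'$ are both two-colored and $a_i=a_{i'}=a$, pick $b\in\Sigma(v_i)$ with $a\mathrel{D}b$; then $v_i\mathrel{I}u_{i'}$ by hypothesis, yet $a=a_{i'}\in\Sigma(u_{i'})$ and $b\in\Sigma(v_i)$ with $a\mathrel{D}b$ — contradiction. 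Hence $i\mapsto a_i$ is injective into $\Sigma$, and the number of two-colored $z_i$ is at most $|\Sigma|$. A secondary, purely bookkeeping point is to justify the ``connectivity forces a crossing edge'' claim, which follows because $\langle z_i\rangle_D\cong\langle u_iv_i\rangle_D$ and a connected graph whose vertex set is partitioned into two nonempty parts has an edge between them; I would state this as a one-line observation rather than belabor it.
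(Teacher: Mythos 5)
Your proof is correct and follows essentially the same route as the paper's: pick for each two-colored $z_i$ a witness letter $a_i \in \Sigma(u_i)$ with a dependence into $\Sigma(v_i)$ (which exists by connectedness of $\langle u_i v_i\rangle_D \cong \langle z_i\rangle_D$), and use the hypothesis $v_i \mathrel{I} u_{i'}$ for $i < i'$ to conclude that $a_i$ cannot reappear in any later $u_{i'}$, so the assignment is injective into $\Sigma$. Your closing "clean choice" paragraph is exactly the paper's argument; the earlier hesitation about which side to place the witness on is moot, since (as you half-observe) the $v$-side choice also yields the contradiction, via the crossing edge of the later block instead of the earlier one.
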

\begin{proof}
  If, for some $i$, we have that $u_i \neq \eps$ and $v_i \neq \eps$,
  then, since $z_i \sim^I u_iv_i$ is connected, there must exist
  letters $a$ and $b$ such that $a \in u_i$, $b \in v_i$ and $a D b$.
  Since 
  $v_i I u_{i + 1} \ldots u_{n}$, we have
  $a \not \in u_{i + 1} \ldots u_{n}$.
  This means that, if there are $k$ words $z_i$ such that
  $u_i \neq \eps$ and $v_i \neq \eps$, then these words together must
  contain at least $k$ distinct letters.
\end{proof}

Should it happen for some $i$ that $z_i$ and $z_{i + 1}$ are
completely from the prefix $u$ (in the sense that
$v_i = v_{i+1} = \eps$, i.e., $z_i \sim^I u_i$ and
$z_{i+1} \sim^I u_{i+1}$), then $z_i$ and $z_{i+1}$ belong to the same
block of $u$ in the scattering $u \lsim z \rsim v$, which can
potentially help keeping the uniform rank of $L^*$ low. The same holds
for $z_i$ and $z_{i+1}$ that are completely from the suffix $v$: they
belong to the same block of $v$. Having words $z_i$ that are
completely from $u$ interspersed with other types of words $z_i$ (for
example, having all odd-numbered $z_i$ completely from $u$ and all
even-numbered are completely from $v$), in contrast, is not
helpful. It could thus be useful to be able to choose $z$, $n$ and
$z_1 \ldots, z_n$ in such a way that as many as possible of the $z_i$
that are completely from $u$ are adjacent in $z$ for all splits of $w$
of as $w = uv$.

For example, take $\Sigma \eqdf \{ a , b \}$, $a I b$ and
$L \eqdf \Sigma = \sem{a+ b}$. For $w \eqdf a^mb^m \in [L^*]^I$, we
could build $z = a^mb^m \in L^*$ from $n \eqdf 2m$, $z_i \eqdf a$ for
$1 \le i \le m$ and $z_i \eqdf b$ for $m + 1 \le i \le 2m$.  Another
option is to construct $z = (ab)^m$ from $n \eqdf 2m$,
$z_{2i - 1} \eqdf a$ and $z_{2i} \eqdf b$ for $1 \le i \le m$. In the
first case, the letters from $u$ stay together in $z$ for all prefixes
$u$ of $w$ (as $w = z$). In the second case, they can be interleaved
with the letters from $v$ (in the most extreme case $u \eqdf a^m$ and
$v \eqdf b^m$, the words $u$ and $v$ get scattered into $m$ resp.\
$m+1$ blocks in $z$). Note that, when $z_i \sim^I v_i$ and
$z_{i+1} \sim^I u_{i+1}$, then $z_i I z_{i + 1}$.  The next lemma says
that, for given $z$, $n$, $z_1, \ldots, z_n$, the sequence of words
$z_1, \ldots, z_n$ can be permuted into $z'_1, \ldots, z'_n$ with
$z_1 \ldots z_n \sim^I z'_1 \ldots z'_n$ so that, if $z'_i$ is
completely from $u$, then $z'_{i-1}$ is not completely from $v$. As
$z_i \in L$ for all $i$, it is of course the case that
$z' \eqdf z'_1 \ldots z'_n \in L^*$, so $z'$ is as good a witness of
$w \in [L^*]^I$ as $z$.

\begin{lemma} \label{lem:permutation} Let
  $w, u, v, z_1 , \ldots , z_n$ be words such that $w = uv$,
  $w \sim^I z_1 \ldots z_n$, and each $z_i$ is nonempty and
  connected.
  There exists a permutation $\sigma' = z'_1, \ldots, z'_n$ of
  $\sigma \eqdf z_1, \ldots, z_n$ with the following properties:
\begin{enumerate}
\item $z_1 \ldots z_n \sim^I z'_1 \ldots z'_n$;
\item for
  any $u', v'$ such that $u = u'v'$, and for any
  $u_1, \ldots, u_n, v_1, \ldots v_n$ such that
  $u' \sim^I u_1 \ldots u_n$, $v'v \sim^I v_1 \ldots v_n$, for all $i$,
  $z'_i \sim^I u_i v_i$, and, for all $j < i$, $v_j I u_i$, we have: if
  $v_i = \eps$, then 
  $u_{i - 1} \neq \eps$ unless $i =1$.
\end{enumerate}
\end{lemma}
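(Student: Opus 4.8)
The plan is to produce the permutation $\sigma'$ by repeatedly moving blocks that are \emph{completely from $u$} leftward past adjacent blocks that are \emph{completely from $v$}, and to argue that this process terminates and preserves $\sim^I$. Concretely, I would classify, for the given split $w = uv$ and the given $u_i, v_i$, each $z_i$ into one of three kinds: \emph{$u$-pure} if $v_i = \eps$ (equivalently $z_i \sim^I u_i$), \emph{$v$-pure} if $u_i = \eps$ (equivalently $z_i \sim^I v_i$), and \emph{mixed} otherwise. The crucial local fact, already observed in the paragraph preceding the lemma, is that when $z_i$ is $v$-pure and $z_{i+1}$ is $u$-pure, then $z_i \mathbin{I} z_{i+1}$, because all letters of $z_i$ lie in $v$ and all letters of $z_{i+1}$ lie in $u$, and the scattering condition $v_j \mathbin{I} u_k$ for $j < k$ forces these letter sets to be independent. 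Hence the adjacent transposition $\ldots z_i z_{i+1} \ldots \rightsquigarrow \ldots z_{i+1} z_i \ldots$ stays in the same $\sim^I$-class; this is what gives property~(1) throughout.

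The subtle point is that the lemma must hold uniformly: property~(2) quantifies over \emph{all} sub-splits $u = u'v'$ and \emph{all} witnessing decompositions $u_1\ldots u_n, v_1 \ldots v_n$ of that finer split, not just the original one. So I cannot simply fix one coloring and sort. Instead I would first establish a monotonicity observation: if $z_1 \ldots z_n \sim^I z'_1 \ldots z'_n$ is a single adjacent transposition of an independent pair, and if under the split $w = u'v'v$ (with $v'$ a suffix of $u$) some $z'_i$ is $u'$-pure, then one can track how the induced decomposition of the permuted sequence relates to a decomposition of the original; because independent adjacent letters/blocks can be commuted freely, the property ``$z_i$ is completely from the prefix'' is stable in the right sense. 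The clean way to organize this is to note that for a \emph{fixed} word $w = uv$, the set of blockings $z_1\ldots z_n$ with each $z_i$ connected and nonempty, together with the ``completely-from-$u$'' predicate, only depends on the letter-occurrence positions, and the desired property~(2) is really a statement about the relative order in $z$ of occurrences assigned to $u$ versus $v$ — namely, that no $v$-occurrence-block immediately precedes a $u$-occurrence-block in the chosen linearization.

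Thus the main argument I would give is: take $\sigma' = z'_1, \ldots, z'_n$ to be a linearization of the underlying dependence graph (equivalently, a word in $[z_1 \ldots z_n]^I$ refined by the $z_i$-blocking) that is \emph{minimal} with respect to a well-chosen order — e.g., among all permutations reachable from $\sigma$ by transpositions of independent adjacent blocks, pick one minimizing the number of ``bad adjacencies'' $(z'_{i-1}$ is $v$-pure, $z'_i$ is $u$-pure$)$ summed appropriately, or better, minimize lexicographically under the convention ``$u$-assigned letters come before $v$-assigned letters.'' I would then show this minimum is $0$: if some $z'_i$ is $u$-pure with $z'_{i-1}$ $v$-pure, the local independence fact licenses a transposition strictly decreasing the measure, contradicting minimality. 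The reason this works for \emph{all} sub-splits simultaneously is that ``$u$-pure'' for the coarse split $w=uv$ implies ``$u'$-pure or $v'$-pure'' for any refinement $u = u'v'$; the only genuinely new constraint from a refinement concerns blocks straddling the $u'/v'$ boundary, and such a block is necessarily mixed w.r.t.\ the coarse split or lies entirely in $u$, and in the latter case it cannot be immediately preceded by a $v$-assigned block in $z'$ precisely because we already eliminated that pattern.

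The step I expect to be the main obstacle is making property~(2) genuinely \emph{uniform} rather than proving it for the single decomposition induced by $w = uv$ itself. The danger is that a refinement $u = u'v'$ could split a block $z'_i$ that was $u$-pure into a $u'$-part and a $v'$-part, creating a fresh $v'$-pure sub-block that sits just before a $u'$-pure one. I would handle this by strengthening the invariant maintained during the sorting: not only ``no $v$-pure block immediately precedes a $u$-pure block,'' but the sharper ``within every block $z'_i$, the letters (as ordered by any linearization compatible with $\langle z'_i\rangle_D$) that will be assigned to the prefix under \emph{any} split precede those assigned to the suffix'' — which follows automatically because each $z'_i$ is connected and the split of a connected word into prefix/suffix-contributions is forced to respect dependence, so the $u'$-part of $z'_i$ is itself a prefix of $z'_i$ up to reordering. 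Combining this intra-block observation with the inter-block sorting yields the full statement. I would be careful to invoke Proposition~\ref{prop:scat-uv} and the degree-bookkeeping lemmas to translate between the ``$u \lsim z \rsim v$'' formulation and the explicit $u_i, v_i$ decompositions used here.
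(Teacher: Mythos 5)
Your raw material is right---classify each block as prefix-pure, suffix-pure or mixed, observe that a suffix-pure block immediately followed by a prefix-pure block is an independent pair that can be transposed without leaving the $\sim^I$-class, and iterate---but the way you deploy it has a genuine gap. You sort with respect to the single coarse split $w=uv$ and then argue that refinements $u=u'v'$ cannot create new bad adjacencies. That argument fails. The problematic pattern is a pair of adjacent blocks $z'_{i-1},z'_i$ that are \emph{both} entirely from $u$ (hence both $u$-pure for the coarse split and invisible to your measure), where $z'_i$ draws all its letters from $u'$ while $z'_{i-1}$ draws all its letters from $v'$. Concretely: take $\Sigma=\{a,b\}$ with $a I b$, $z_1=b$, $z_2=a$, $w=u=ab$, $v=\eps$. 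The identity permutation has no bad adjacency for the coarse split (the unique decomposition is $u_1=b$, $u_2=a$, $v_1=v_2=\eps$, and $u_1\neq\eps$), and your lexicographic tie-break ``$u$-assigned letters before $v$-assigned letters'' cannot distinguish $b,a$ from $a,b$ because every letter is $u$-assigned. Yet for the refinement $u'=a$, $v'=b$ the unique decomposition is $u_1=\eps$, $v_1=b$, $u_2=a$, $v_2=\eps$, which violates property~(2) at $i=2$. So a minimizer of your measure need not satisfy the lemma; the required permutation is $a,b$. Your proposed intra-block strengthening does not repair this, since here both blocks are single letters and the failure is strictly inter-block.

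The missing idea is that the sorting criterion must be sensitive to the order of the letters \emph{within} $u$, not merely to the $u$-versus-$v$ dichotomy---equivalently, the invariant must be maintained for every prefix $u'$ of $u$, not just for $u$ itself. The paper achieves this by induction on $u$: it extends the prefix one letter at a time, and in the critical case where the block $z'_m$ receiving the new letter has just become prefix-pure while $u_{m-1}=\eps$, it moves the maximal run $z'_m,\ldots,z'_l$ of blocks with empty $v$-parts in front of the maximal run $z'_k,\ldots,z'_{m-1}$ of blocks with empty $u$-parts. Moving whole runs rather than single blocks is what keeps property~(2) true for all shorter prefixes simultaneously, and the independence of the two runs gives property~(1). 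If you want to keep your extremal-principle style, you would have to minimize a genuinely prefix-uniform quantity (e.g., the sequence of positions in $w$ of the letters, block by block) and redo the exchange argument for that order; minimizing bad adjacencies for the one split $w=uv$ is not enough.
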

\begin{proof}
  By induction on $u$.
  \begin{itemize}
  \item Case $\eps$: The identical permutation $\sigma' \eqdf \sigma$
    has property 1 trivially. It also enjoys property 2 since
    $\eps = u'v'$ implies $u' = v' = \eps$, and, for all $i$, we have
    $z_i \sim^I u_iv_i$, $z_i \neq \eps$, $u_i = \eps$ and hence
    $v_i \neq \eps$.

  \item Case $ua$: By induction hypothesis, we have a permutation
    $\sigma' = z'_1, \ldots, z'_n$ of $\sigma$ which has property 1
    and and also has property 2 for all prefixes $u'$ of $w$ up to
    $u$. Now consider the case where $u' \eqdf ua$ and
    $v' \eqdf \eps$. This particular $a$ is in one of the
    $z'_i$, say $z'_m$. The only difference with the
    case $u' \eqdf u$ and $v' \eqdf a$ is that this $a$ is now in the
    $u'$ part of $z'_m$ and no longer in the $v'v$ part. Let
    us also note that every nonempty $u_i$ has remained
    nonempty.

    If $v_m \neq \eps$, then the empty $v_i$
    are exactly the same as in the case $u' \eqdf u$. Hence $\sigma'$
    has property 2 also for the prefix $u' \eqdf ua$.

    If $v_m = \eps$, but $m = 1$ or
    $u_{m-1} \neq \eps$, then $\sigma'$ has property 2
    also for the prefix $u' \eqdf ua$.

    In the critical case $v_m = \eps$, $m \neq 1$ and
    $u_{m-1} = \eps$, we construct a new permutation
    $\sigma'' \eqdf z''_1, \ldots, z''_n$ from $\sigma'$ by moving the
    words $z'_m, \ldots, z'_l$ (where $l$ is the largest such that
    $v_m \ldots v_l = \eps$) in front of of $z'_k, \ldots, z'_{m-1}$
    (where $k$ is the smallest such that
    $u_k \ldots u_{m-1} = \eps$).  Moving all these words rather
    than just $z_m$ alone ensures that the new permutation $\sigma''$
    has property 2 also for all prefixes $u'$ up to $u' = u$ and not
    just only for the prefix $u' = ua$.  The new permutation
    $\sigma''$ also has property 1: indeed, we have
    $z_1 \ldots z_n \sim^I z'_1 \ldots z'_n \sim^I z''_1 \ldots
    z''_n$ 
    as
    $z'_k \ldots z'_{m-1} \sim^I v_k \ldots v_{m-1} ~I~ u_m \ldots
    u_l \sim^I z'_m \ldots
    z'_l$.
%
  \qedhere
  \end{itemize}
\end{proof}

\begin{proposition} 
\label{prop:starconn-urank}
  If $E$ is star-connected, then the language $\sem{E}$ has uniform rank.
\end{proposition}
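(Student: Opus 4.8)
The plan is to prove the statement by structural induction on the star-connected expression $E$, upgrading Klunder et al.'s argument for Proposition~\ref{prop:starconn-rank} so that, for a word $w$ of the closure, a \emph{single} witness $z \in \sem E$ serves \emph{all} splits $w = uv$ simultaneously.

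The base cases are immediate: $\emptyset$ and $\one$ have uniform rank $0$, and $\{a\}$ has uniform rank $1$. For sums, $\sem{E+F} = \sem E \cup \sem F$ has uniform rank the maximum of the two, using, for $w \in [\sem{E+F}]^I = [\sem E]^I \cup [\sem F]^I$, the witness from whichever closure contains $w$. For products I would show $\sem{EF} = \sem E \cdot \sem F$ has uniform rank at most $N_1 + N_2$ when $\sem E, \sem F$ have uniform ranks $N_1, N_2$: by $[\sem E\sem F]^I = [\sem E]^I \cdot^I [\sem F]^I$ (Proposition~\ref{prop:reord-concat}) write $w \in s \cdot^I t$ with $s \in [\sem E]^I$, $t \in [\sem F]^I$, so $s \lhd w \rhd t$ (Proposition~\ref{prop:scat}), whence $w$ factors literally as $v_0 s_1 v_1 \ldots s_m v_m$ with $s = s_1\ldots s_m$, $t = v_0\ldots v_m$ and $w \sim^I st$; take uniform witnesses $z_E \in \sem E$ for $s$ and $z_F \in \sem F$ for $t$ and set $z := z_E z_F \in \sem{EF}$. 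For any split $w = uv$, cutting the linearization $z = z_E z_F$ of the dependence graph of $w$ at the (downward-closed) occurrence set of $u$ induces, inside $z_E$ and inside $z_F$, cuts that are genuine prefix-splits of the words $s$ and $t$ (the $s$-letters and the $t$-letters each appear in $w$ in their original order); applying the uniform-witness properties of $z_E$ and $z_F$ and concatenating the resulting scatterings — the sole independence obligation across the join, that each suffix-block of $z_E$ is independent of each prefix-block of $z_F$, holding because $u$'s occurrence set is downward closed — gives a degree-$\le N_1 + N_2$ scattering of $u$ in $z$.

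The main case is $E^*$, where $\sem E$ is connected by definition of star-connectedness and, by the induction hypothesis, has some uniform rank $N$; I claim $\sem{E^*}$ has uniform rank at most $(N+1)(|\Sigma|+1)$. Given $w \in [\sem{E^*}]^I$, fix a factorization $w \sim^I y_1\ldots y_n$ with each $y_i \in \sem E$ nonempty, hence connected; this partitions the dependence graph of $w$ into blocks $Y_1,\ldots,Y_n$. Let $w_i$ be the restriction of the word $w$ to $Y_i$, so $w_i \sim^I y_i$ and $w_i \in [\sem E]^I$, and by uniform rank $N$ of $\sem E$ pick $z_i \in \sem E$ a uniform witness for the word $w_i$. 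Since $w \sim^I z_1\ldots z_n$, the case of Lemma~\ref{lem:permutation} where its $u$ is $w$ and its $v$ is $\eps$ yields a permutation $z'_1,\ldots,z'_n$ of $z_1,\ldots,z_n$ with $w \sim^I z'_1\ldots z'_n$ (put $z' := z'_1\ldots z'_n \in \sem{E^*}$) such that, in the block colouring induced by \emph{any} split $w = u'v'$, no block lying wholly in the prefix is immediately preceded by one lying wholly in the suffix; reordering blocks preserves the property that the $j$-th block of $z'$ is a uniform witness for the restriction of $w$ to the occurrences it carries. Now fix a split $w = u'v'$: cutting $z'$ at $u'$'s occurrence set colours each block $z'_j$ as $u_j v_j$ with $z'_j \sim^I u_j v_j$, $u' \sim^I u_1\ldots u_n$, $v' \sim^I v_1\ldots v_n$ and $v_j I u_i$ for $j < i$, and since this cut is a genuine prefix-split of the word that $z'_j$ witnesses, $u_j$ scatters in $z'_j$ with degree $\le N$ (degree $\le 1$ if $z'_j$ is wholly in the prefix, $0$ if wholly in the suffix). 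Concatenating these per-block scatterings along $z'$ — the global left-to-right independence conditions following from the local ones together with $v_j I u_i$ for $j < i$ — exhibits $u'$ scattered in $z'$ with $v'$ as suffix, of degree bounded by the number of maximal runs of prefix-contributing blocks plus the blocks spawned inside two-coloured blocks. By Lemma~\ref{lem:at-most-sigma-uv} at most $|\Sigma|$ blocks are two-coloured, contributing $\le N|\Sigma|$ blocks, and by the permutation property each maximal run of prefix-contributing blocks begins at block $1$ or at a two-coloured block, so there are $\le 1 + |\Sigma|$ such runs; hence the degree is $\le N|\Sigma| + |\Sigma| + 1 \le (N+1)(|\Sigma|+1)$, a bound independent of $w$ and of the split.

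The crux — and the reason the per-split argument of Proposition~\ref{prop:starconn-rank} does not already suffice — is producing \emph{one} decomposition of (the trace of) $w$ whose prefix-only blocks stay clustered simultaneously for every split of $w$; this is exactly what Lemma~\ref{lem:permutation} provides (applied with $u = w$), with Lemma~\ref{lem:at-most-sigma-uv} bounding the genuinely two-coloured blocks. The remaining technical obstacle is the colouring bookkeeping: checking that the colouring induced by an \emph{arbitrary} split of $w$ meets the hypotheses of those two lemmata (which follows from downward-closedness of a prefix in the dependence graph) and that the cut each block undergoes is always a genuine word-split to which the per-word uniform-rank hypothesis applies — this is why the block witnesses $z_i$ must be chosen relative to the restrictions $w|_{Y_i}$ of the fixed word $w$ rather than relative to the $y_i$. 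Splicing the per-block scatterings (and the two in the $EF$ case) under the left-to-right independence conditions is then routine.
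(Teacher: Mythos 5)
Your proof is correct and follows essentially the same route as the paper's: induction on $E$, with the crucial $E^*$ case handled by choosing per-block uniform witnesses for the restrictions of $w$, invoking Lemma~\ref{lem:permutation} with $u = w$, $v = \eps$ to cluster the prefix-only blocks for all splits simultaneously, and bounding the degree via Lemma~\ref{lem:at-most-sigma-uv} together with the counting of maximal prefix-runs. You additionally spell out the base, sum and product cases that the paper leaves implicit, and your constant $(N+1)(|\Sigma|+1)$ differs marginally from the paper's $(|\Sigma|+1)N$, but both establish the required finiteness.
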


\begin{proof}
  By induction on $E$. We only look at the case $E^*$.
\begin{itemize}

\item Case $E^*$: From the assumption we have that $E$ is
  star-connected and $\sem{E}$ is connected. By induction hypothesis
  $\sem{E}$ has uniform rank at most $N$ for some $N \in \Nat$. We show that $\sem{E^*}$
  has uniform rank at most $(| \Sigma | + 1)N$. 

  Let $w \in \sem{E^*}^I$. Then there exist unique $n$ and
  $w_1, \ldots , w_n$ such that
  $w \in w_1 \cdot^I \ldots \cdot^I w_n$, $w_i \in \sem{E}^I$, and we
  can also require that $w_i \neq \eps$. By $\sem{E}$ having
  uniform rank at most $N$, for every $i$, there exists a nonempty
  word $z_i \in \sem{E}$ such that, for any split of $w_i$ as $u_iv_i$,
  we have $u_i \lsim_N z_i \rsim v_i$. By connectedness of $\sem{E}$,
  all $z_i$ are connected.

  We take $z^\dg \eqdf z^\dg_1 \ldots z^\dg_n$ where
  $\sigma^\dg \eqdf z^\dg_1, \ldots, z^\dg_n$ is the permutation of
  $\sigma = z_1, \ldots, z_n$ obtained by Lemma~\ref{lem:permutation}
  for $u \eqdf w$ and $v \eqdf \eps$, i.e., for the specific split of
  $w$ as $w\eps$. By  Lemma~\ref{lem:permutation}(1), we have
  $w \sim^I w_1 \ldots w_n \sim^I z_1 \ldots z_n \sim^I z^\dg_1 \ldots
  z^\dg_n = z^\dg$.
  We let $w^\dg_1, \ldots, w^\dg_n$ be the corresponding permutation
  of $w_1, \ldots, w_n$, so we also have
  $w \sim^I w^\dg_0 \ldots w^\dg_n$ and $w^\dg_i \sim^I z^\dg_i$ for
  all $i$.

  We will now show that, for any split of $w$ as $w = uv$, we have
  $u \lsim_{(|\Sigma|+1)N} z^\dg \rsim v$.

  Let $w = uv$ be any split of $w$. There exist unique
  $u_1, \ldots, u_n, v_1, \ldots v_n$ such $u \sim^I u_1\ldots u _n$,
  $v \sim^I v_1\ldots v_n$, for all $i$, $w^\dg_i = u_i v_i$, and for
  all $j< i$, $v_j I u_i$. They give us
  $u_1 \ldots u_n \lhd z^\dg \rhd v_1 \ldots v_n$ and thus
  $u \lsim z^\dg \rsim v$. By Lemma~\ref{lem:permutation}(2) for
  $u' \eqdf u$, $v' \eqdf \eps$, we have that $v_i = \eps$
  implies $u_{i-1} \neq \eps$ unless $i = 1$.  By
  Lemma~\ref{lem:at-most-sigma-uv}, there can be at most $| \Sigma |$
  words $z^\dg_i$ such that both $u_i \neq \eps$ and $v_i \neq \eps$.
  
  Each of these two-colored $z^\dg_i$ contributes at most $N$ to the
  degree of $u \lsim z^\dg \rsim v$, so altogether they contribute at
  most $| \Sigma | N$.

  Between any two-colored $z^\dg_i$ and also before the first and
  after the last one of them, there are some $z^\dg_i$ completely from
  $u$ followed by some $z^\dg_i$ completely from $v$.  Each such
  sequence contributes at most 1 to the degree of
  $u \lsim z^\dg \rsim v$. If there are less than $| \Sigma |$ two-colored
  words, these sequences thus contribute altogether at most $|\Sigma|$
  to the degree.  If there are exactly $| \Sigma |$ two-colored words
  $z^\dg_i$, then the $z^\dg_i$ after the last of them are all
  completely from $v$, so their sequence belongs to the last $v$-block
  generated by the last two-colored $z^\dg_i$ and thus contributes 0.  Again
  altogether these sequences contribute at most $|\Sigma|$.

  Altogether the degree of $u \lhd z^\dg \rhd v$ is thus at most
  $(| \Sigma | + 1 )N$. \qedhere
%
%
\end{itemize}
\end{proof}


\section{Antimirov Reordering Derivative and Uniform Rank}\label{sec:derivative-uniform-rank}

We have seen that the reordering language derivative $D^I_u L$ allows
$u$ to be scattered in a word $z \in L$ as
$u_1, \ldots, u_n \lhd z \rhd v_0, \ldots, v_n$ where
$u \sim^I u_1\ldots u_n$.  We will now consider a version of the
Antimirov reordering derivative operation that delivers lists of
regexps for the possible $v_0, \ldots, v_n$ rather than just single
regexps for their concatenations $v_0\ldots v_n$.


\subsection{Refined Antimirov Reordering Derivative}

The refined reordering parts-of-derivative of a regexp $E$ along a
letter $a$ are pairs of regexps $E_l, E_r$. For any word
$w = av \in \sem{E}^I$, there must be an equivalent word
$z = v_lav_r \in \sem{E}$. Instead of describing the words $v_lv_r$
obtainable by removing a minimal occurrence of $a$ in a word
$z \in \sem{E}$, the refined parts-of-derivative describe the subwords
$v_l, v_r$ that were to the left and right of this $a$ in $z$: it must
be the case that $v_l \in \sem{E_l}$ and $v_r \in \sem{E_r}$ for one of
the pairs $E_l, E_r$. For a longer word $u$, the refined reordering
derivative operation gives lists of regexps $E_0, \ldots, E_n$ fixing
what the lists of subwords $v_0, \ldots, v_n$ can be in words
$z = v_0u_1v_1 \ldots u_nv_n \in \sem{E}$ equivalent to a given word
$w = uv \in \sem{E}^I$.


\begin{definition} The \emph{(unbounded and bounded) refined Antimirov $I$-reordering}
  parts-of-derivatives of a regexp along a letter and a word are given
  by relations
  ${\to^I} \subseteq \RE \times \Sigma \times \RE \times \RE$,
  ${\Rightarrow^{I}} \subseteq \RE^+ \times \Sigma \times \RE^+$, 
  ${\to^{I*}} \subseteq \RE \times \Sigma^* \times \RE^+$, 
  ${\Rightarrow^{I}_N} \subseteq \RE ^{+\leq N+1}\times \Sigma \times \RE^{+\leq N+1}$, and
  ${\to^{I*}_N} \subseteq \RE \times \Sigma^* \times \RE^{+\leq N+1}$
  defined inductively by
\[
\small
\begin{array}{l}
\infer{a \to^I (a; 1, 1)}{
}
\quad
\infer{E + F \to^I (a; E_l, E_r)}{
  E \to^I (a; E_l, E_r)
}
\quad
\infer{E + F \to^I (a; F_l, F_r)}{
  F \to^I (a; F_l, F_r)
}
\\[2ex]
\infer{E F \to^I (a; E_l, E_r F)}{
  E \to^I (a; E_l, E_r)
}
\quad
\infer{E F \to^I (a; (R^I_a E) F_l, F_r)}{
  F \to^I (a; F_l, F_r)
}
\quad
\infer{E^* \to^I (a; (R^I_a E)^* E_l, E_r E^*)}{
  E \to^I (a; E_l, E_r)
}
\end{array}
\]
\[
\small
\begin{array}{l}
\infer{\Gamma, E, \Delta \Rightarrow^{I}_N (a; \R^I_a \Gamma, E_l, E_r, \Delta)}{
  E \to^I (a; E_l, E_r)
  & |\Gamma,\Delta|< N
}
\quad
\infer{\Gamma, E, \Delta \Rightarrow^{I}_N (a; \R^I_a \Gamma, E_r, \Delta)}{
  E \to^I (a; E_l, E_r)
  & 
  E_l \dn
  & 
  |\Gamma| > 0
}
\\[2ex]
\infer{\Gamma, E, \Delta \Rightarrow^{I}_N (a; R^I_a \Gamma, E_l, \Delta)}{
  E \to^I (a; E_l, E_r)
  & 
  E_r \dn
  &
  |\Delta| > 0
}
\quad
\infer{\Gamma, E, \Delta \Rightarrow^{I}_N (a; R^I_a \Gamma, \Delta)}{
  E \to^I (a; E_l, E_r)
  & 
  E_l \dn & E_r \dn & |\Gamma| > 0 & |\Delta| > 0
}
\end{array}
\]
\[
\small
\begin{array}{l}
\infer{E \to^{I*}_N (\eps; E)}{
}
\quad
\infer{E \to^{I*}_N (ua; \Gamma')}{
  E \to^{I*}_N (u; \Gamma)
  &
  \Gamma \Rightarrow^{I}_N (a; \Gamma')
}

\end{array}
\]  
By $\RE^{+\leq N+1}$ we mean nonempty lists of regexps of length at
most $N+1$. The relations $\Rightarrow^{I}$ and $\to^{I*}$ are defined
exactly as $\Rightarrow_N^{I}$ and $\to_N^{I*}$ but with the condition
$|\Gamma, \Delta| < N$ of the first rule of $\Rightarrow_N^{I}$
dropped. The operation $R^I_a$ is extended to lists of regexps in the
obvious way.
\end{definition}

We have several rules for deriving a list of regexps along $a$. 
If $E$ is split into $E_l, E_r$ and neither of them is nullable, then,
in the $N$-bounded case, we require that the given list is shorter
than $N+1$ since the new list will be longer by 1. If one of
$E_l, E_r$ is nullable, not the first resp.\ last in the list and we
choose to drop it, then the new list will be of the same length.  If
both are nullable, not the first resp.\ last and we opt to drop both,
then the new list will be shorter by 1.  They must be droppable under
these conditions to handle the situation when a word $z$ has been
split as $v_0u_1v_1 \ldots u_kv_ku_{k+1} \ldots u_nv_n$ and $v_k$ is
further being split as $v_lav_r$ while $v_l$ or $v_r$ is empty.  If
$k \neq 0$ and $v_l$ is empty, we must join $u_k$ and $a$ into
$u_ka$. If $k \neq n$ and $v_r$ is empty, we must join $a$ and
$u_{k+1}$ into $au_{k+1}$. If $k$ is neither $0$ nor $n$ and both
$v_l$ and $v_r$ are empty, we must join all three of $u_k$, $a$ and
$u_{k+1}$ into $u_kau_{k+1}$.  The length of the new list of regexps
is always at least 2.

\begin{proposition} 
\label{prop:holes-main}
For any $E$,
\begin{enumerate} 
\item for any $a \in \Sigma, v_l, v_r \in \Sigma^*$, 
\[
v_l I a \con v_lav_r \in \sem{E}  \iff 
    \exists E_l, E_r.\, E \to^I (a; E_l, E_r) 
   \con v_l \in \sem{E_l} \con v_r \in \sem{E_r};
\]

\item for any $u \in \Sigma^*, n \in \Nat, v_0 \in \Sigma^*,  v_1, \ldots, v_{n-1} \in \Sigma^+, v_n \in \Sigma^*$, 
\[
\begin{array}{l}
\exists z \in \sem{E}, u_1, \ldots, u_n \in \Sigma^+.\, u \sim^I u_1\ldots u_n
\con u_1, \ldots , u_n \lhd z \rhd v_0, \ldots, 
v_n  \\ 
\iff \\
  \exists E_0, \ldots, E_n.\, E \to^{I*} (u; E_0, \ldots, E_n) 
  \con \forall j.\ v_j \in \sem{E_j}.
\end{array}
\]
\end{enumerate}
\end{proposition}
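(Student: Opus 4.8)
The plan is to prove the two parts in sequence: part~(1) by structural induction on $E$, and part~(2) by induction on $|u|$, peeling off the last letter, with part~(1) supplying the inductive step of part~(2).

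For part~(1), the base cases $0$, $1$, $b$ are immediate once one notes that $v_l\,I\,a$ together with irreflexivity of $I$ forces $a \notin \Sigma(v_l)$, so the displayed occurrence of $a$ in $v_lav_r$ is the \emph{first} occurrence of $a$. For $E+F$ the two injection rules match $\sem{E+F}=\sem{E}\cup\sem{F}$. For $EF$ and $E^*$ the key point is that this first $a$ lies in exactly one factor of the corresponding factorization of $v_lav_r\in\sem{E}\cdot\sem{F}$ (resp.\ $\in\sem{E}^*$), and everything of the word preceding it --- in particular all \emph{earlier} factors --- is a subword of $v_l$, hence independent of $a$. By Prop.~\ref{prop:brzozowski-sem-deriv}(1) and Def.~\ref{def:reord-deriv} we have $\sem{R^I_aE}=R^I_a\sem{E}=\{w\in\sem{E}\mid w\,I\,a\}$, so these earlier factors are exactly what $R^I_aE$ denotes, which is precisely the shape of the rules $EF\to^I(a;(R^I_aE)F_l,F_r)$ and $E^*\to^I(a;(R^I_aE)^*E_l,E_rE^*)$. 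Both directions of the iff then come from reading the matching rule forwards and backwards, invoking the induction hypothesis on the factor that contains the $a$.

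For part~(2), the base case $u=\eps$ forces $n=0$ (since $u_1,\dots,u_n\in\Sigma^+$) and is exactly the rule $E\to^{I*}(\eps;E)$. For the step $u=u'a$: on the derivation side, unfold the last step as $E\to^{I*}(u';\Gamma)$, $\Gamma\Rightarrow^I(a;\Gamma')$; on the scattering side, apply Prop.~\ref{prop:scat-uv} to $u'a\sim^I u_1\cdots u_n$ to locate the fresh occurrence of $a$ --- it sits in some $u_k=xay$ with $a\,I\,yu_{k+1}\cdots u_n$, and deleting it yields a word $\sim^I u'$. This induces a four-way case split on the emptiness pattern of $(x,y)$, which I expect to correspond bijectively to the four rules of $\Rightarrow^I$: (i)~$x,y\neq\eps$ --- the $a$ is strictly inside a $u$-part, so in a scattering of $u'$ it is a singleton $v$-block separating two $u$-parts, and deriving along $a$ absorbs that block and merges its neighbours: rule~4; (ii)~$x=\eps\neq y$ --- the $a$ is at the front of $u_k$, glued to the end of the preceding $v$-block; deriving splits that block and reattaches $a$ to the following $u$-part: rule~3; (iii)~$x\neq\eps=y$ --- symmetric: rule~2; (iv)~$x=y=\eps$ --- the $a$ is its own $u$-part, so the $v$-blocks flanking it were a single block in the scattering of $u'$; deriving splits that block and inserts $a$ between the halves: rule~1. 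In each case I would translate the $\Rightarrow^I$-step into a re-bracketing of the \emph{same} witness word $z$ (merging or splitting $v$-blocks according to the pattern), apply the induction hypothesis at $u'$ to obtain the regexp list $\Gamma$, and use part~(1) on the distinguished block to split or merge the regexps into $\Gamma'$; the converse runs the same bijection backwards. The side conditions line up because part~(1) delivers exactly the independence fact $v_l\,I\,a$ needed to make the new $a$ a legal $u$-part; because the $R^I_a$ applied to the list prefix forces every $v$-block to the left of the fresh $a$ to be independent of $a$, matching both $\sem{R^I_aE_j}=R^I_a\sem{E_j}$ and the scattering constraint $\forall j<i.\ v_j\,I\,u_i$ instantiated at the $u_i$ containing $a$; and because $a\,I\,u_{k+1}\cdots u_n$ from the decomposition gives $u'a\sim^I$ the re-bracketed product of $u$-parts. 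The boundary indices $k=1$ and $k=n$ are absorbed because rule~2 constrains only the list before the merge point ($|\Gamma|>0$) and rule~3 only the list after it ($|\Delta|>0$), while rule~4 meets its nonemptiness tests automatically in case~(i) and rule~1 imposes none.

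The main obstacle is the $(\Rightarrow)$ direction of part~(2): turning a scattering of $u'a$ into a scattering of $u'$ plus one $\Rightarrow^I$-step. The genuine content there is the trace-combinatorial bookkeeping --- pinning the fresh $a$ down via Prop.~\ref{prop:scat-uv}, handling the four emptiness patterns and the boundary indices uniformly, and checking case by case that the independence side conditions of the $\Rightarrow^I$-rules (the $R^I_a$-prefixes, the nullability tests $E_l\dn$/$E_r\dn$, the nonemptiness tests on $\Gamma$/$\Delta$) are exactly what the re-bracketing of $z$ forces. Since the relation $\Rightarrow^I$ is in effect reverse-engineered from this analysis, the difficulty lies in verifying the correspondence rather than in discovering it.
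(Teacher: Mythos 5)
Your proposal is correct and takes essentially the same route as the paper's proof: part~(1) by structural induction on $E$ (with rule inversion for the converse), and part~(2) by peeling off the last letter of $u$, locating the fresh $a$ inside some block $u_k = u_l a u_r$ with $a$ independent of everything to its right, and matching the four emptiness patterns of $(u_l, u_r)$ to the four $\Rightarrow^I$-rules exactly as you describe (the paper explicitly works out the $u_l = u_r = \eps$ and $u_l \neq \eps = u_r$ subcases and notes the other two are analogous). The only cosmetic difference is that the paper asserts the decomposition of $u_k$ and the accompanying independence facts directly rather than extracting them from Prop.~\ref{prop:scat-uv}, and runs the converse of part~(2) as an induction on the derivation of $E \to^{I*}(u; E_0,\ldots,E_n)$.
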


\begin{proof}~ 

\begin{enumerate}
\item $\imp$: By induction on $E$.

\begin{itemize}
\item Case $a'$ where $a' \neq a$: $v_l a v_r \in \sem{a'} = \{a'\}$
  is impossible.

\item Case $a$: Suppose $v_l a v_r \in \sem{a} = \{a\}$. Then
  $v_l = v_r = \eps$. We have $a \to^I (a; 1, 1)$ and
  $\eps \in \sem{1}$ as required.

\item Case $0$: $v_l a v_r \in \sem{0} = \emptyset$
  is impossible.

\item Case $E_1 + E_2$: Suppose
  $v_l a v_r \in \sem{E_1 + E_2} = \sem{E_1} \cup \sem{E_2}$ and
  $v_l I a$.  Then $v_l a v_r \in \sem{E_i}$ for one of two possible
  $i$. By IH for $E_i, a, v_l, v_r$, there are $E_l$, $E_r$ such that
  $E_i \to^I (a; E_l, E_r)$, $v_l \in \sem{E_l}$, $v_r \in \sem{E_r}$
  and we also obtain $E_1 + E_2 \to^I (a; E_l, E_r)$.

\item Case $1$: $v_l a v_r \in \sem{1} = \one$
  is impossible.

\item Case $E F$: Suppose
  $v_l a v_r \in \sem{E F} = \sem{E} \cdot \sem{F}$ and $v_l I a$.
  Then $v_l a v_r = x y$ for some $x \in \sem{E}$ and $y \in \sem{F}$.
  Either (i) there exists $v'$ such that $x = v_l a v'$ and
  $v_r = v' y$ or (ii) there exists $v'$ such that $y = v' a v_r$ and
  $v_l = x v'$.

  If (i), then, by IH for $E, a, v_l, v'$, there are $E_l, E_r$ such
  that $E \to^I (a; E_l, E_r)$, $v_l \in \sem{E_l}$,
  $v' \in \sem{E_r}$.  We then also have $E F \to^I (a; E_l, E_r F)$
  and $v_r = v'y \in \sem{E_r F}$.

  If (ii), then $x I a$ and $v' I a$, so $x \in \sem{R^I_a E}$ and, by
  IH for $F, a, v', v_r$, there are $F_l, F_r$ such that
  $F \to^I (a; F_l, F_r)$, $v' \in \sem{F_l}$, $v_r \in \sem{F_r}$.
  We then also have $E F \to^I (a; (R^I_a E) F_l, F_r)$ and
  $v_l = xv' \in \sem{(R^I_a E) F_l}$.

\item Case $E^*$: Suppose $v_l a v_r \in \sem{E^*}$ and $v_l I a$.
  Then $v_l = x v'_l$, $v_r = v'_r y$ for some $x, y \in \sem{E^*}$ and
  $v'_l, v'_r$ such that $v'_l a v'_r \in \sem{E}$. We have $x I a$,
  $v'_l I a$. Hence $x \in \sem{R^I_a E^*}$ and, by IH for
  $E, a, v'_l, v'_r$, we get that there are $E_l$, $E_r$ such that
  $E \to^I (a; E_l, E_r)$ and $v'_l \in \sem{E_l}$,
  $v'_r \in \sem{E_r}$. We also obtain
  $v_l = xv'_l \in \sem{(R^I_a E^*)E_l}$ and
    $v_r = v'_ry \in \sem{E_r E^*}$.

\end{itemize}

$\bwdimp$: By induction on the derivation of $E \to^I (a; E_l, E_r)$.
\begin{itemize}
\item Case $a \to^I (a; 1, 1)$ as an axiom: Suppose
  $v_l, v_r \in \sem{1} = \one$. Then $v_l = v_r = \eps$ and
  we have $\eps a \eps = a \in \sem{a}$ as required.

\item Case $E_1 + E_2 \to^I (a; E_l, E_r)$ inferred from
  $E_i \to^I (a; E_l, E_r)$ where $i$ is 1 or 2: Suppose
  $v_l \in \sem{E_l}$, $v_r \in \sem{E_r}$. We can then apply IH to
  the subderivation, $v_l, v_r$ and obtain that $v_l I a$ and $v_l a v_r \in \sem{E_i}$, which gives us
  also that
  $v_l a v_r \in \sem{E_1} \cup \sem{E_2} = \sem{E_1 + E_2}$.

\item Case $E F \to^I (a; E_l, E_r F)$ inferred from
  $E \to^I (a; E_l, E_r)$: Suppose $v_l \in \sem{E_l}$,
  $v_r \in \sem{E_r F} = \sem{E_r} \cdot \sem{F}$. Then $v_r = x y$
  for some $x \in \sem{E_r}$ and $y \in \sem{F}$. We can then apply IH
  to the subderivation, $v_l, x$ and obtain that $v_l I a$ and
  $v_l a x \in \sem{E}$. As $v_r = xy$, we obtain
  $v_l a v_r = (v_l a x) y \in \sem{E} \cdot \sem{F} = \sem{E F}$.

\item Case $E F \to^I (a, (R_a E) F_l, F_r)$ inferred from
  $F \to^I (a; F_l, F_r)$: Suppose
  $v_l \in \linebreak \sem{(R_aE) F_l} = R_a \sem{E} \cdot \sem{F_l}$,
  $v_r \in \sem{F_r}$. Then $v_l = x y$ for some $x \in R_a \sem{E}$
  and $y \in \sem{F_l}$, which also gives us $x I a$ and
  $x \in \sem{E}$. We can then apply IH to the subderivation, $y, v_r$
  and obtain that $y I a$ and $y a v_r \in \sem{F}$. As $v_l = xy$, we
  get $v_l I a$ and
  $v_l a v_r = x (y a v_r) \in \sem{E} \cdot \sem{F} = \sem{E F}$.

\item Case $E^* \to^I (a; (R_a E^*) E_l, E_r E^*)$ inferred from
  $E \to^I (a; E_l, E_r)$: Suppose
  $v_l \in \sem{(R_aE^*) E_l} = R_a \sem{E^*} \cdot \sem{E_l}$,
  $v_r \in \sem{E_r E^*} = \sem{E_r} \cdot \sem{E^*}$. Then
  $v_l = x y$ for some $x \in R_a \sem{E^*}$ and $y \in \sem{E_l}$,
  which also gives us $x I a$ and $x \in \sem{E^*}$, and $v_r = z w$
  for some $z \in \sem{E_r}$ and $w \in \sem{E^*}$. We can then apply
  IH to the subderivation, $y, z$ and obtain that $y I a$ and
  $y a z \in \sem{E}$. As $v_l = xy$ and $v_r = zw$, we get that
  $v_l I a$ and
  $v_l a v_r = x (y a z) w \in \sem{E^*} \cdot \sem{E} \cdot \sem{E^*} \subseteq
  \sem{E^*}$.

\end{itemize}

\item $\imp$: For any $E$ by induction on $u$.
\begin{itemize}
\item
  Case $\eps$: Suppose $\eps \sim^I u_1 \ldots u_n$ and
  $z = v_0u_1v_1\ldots u_nv_n \in \sem{E}$. Then necessarily $n = 0$, which
  means that we actually have $v_0 \in \sem{E}$. We also have
  $E \to^{I^*} (\eps; E)$ as required.

\item
  Case $ua$: Suppose $ua \sim^I u_1 \ldots u_n$ and
  $\forall i. \forall j < i.\, v_j I u_i$ and
  $z = v_0u_1v_1\ldots u_nv_n \in \sem{E}$. It must be that $n > 0$ and
  there must exist $k$ and $u_l, u_r \in \Sigma^*$ such that
  $u_k = u_l a u_r$, $v_j I a$ for all $j < k$, $a I u_r$,
  $a I u_i$ for all $i> k$ and
  $u \sim^I u_1 \ldots u_{k-1}u_lu_ru_{k+1} \ldots u_n$.

\begin{itemize}
\item  If $u_l = u_r = \eps$, then, as $v_{k-1}av_k I u_i$ for all $i> k$,
  we are entitled to apply IH to $u$, $n-1$,
  $v_0, v_1, \ldots, v_{k-2}, v_{k-1}av_k, v_{k+1}, \ldots, v_n$, $z$,
  $u_1, \ldots, u_{k-1}, u_{k+1}, \ldots, u_n$.  We get
  $E_0, \ldots, E_{k-2}$, $E'$, $E_{k+1}$, \ldots $E_n$ such that
  $E \to^{I*} (u; E_0, \ldots, E_{k-2}, E', E_{k+1}, \ldots, E_n)$ and
  $v_j \in \sem{E_j}$ for all $j < k-1$, $v_{k-1}av_k \in \sem{E'}$,
  $v_j \in \sem{E_j}$ for all $j > k$. As $v_{k-1} I a$, we can apply
  1. to $E'$, $a$, $v_{k-1}$, $v_k$ and get $E_{k-1}, E_k$ such that \linebreak
  $E' \to^I (a; E_{k-1}, E_k)$, $v_{k-1} \in \sem{E_{k-1}}$ and
  $v_k \in \sem{E_k}$. This allows us to infer \linebreak
  $E \to^{I*} (ua; R^I_aE_0, \ldots, R^I_aE_{k-2}, E_{k-1}, E_k, E_{k+1},
  \ldots, E_n)$.
  As $v_j I a$ also for all \linebreak $j < k-1$, we in fact also have
  $v_j \in \sem{R^I_a E_j}$ for all $j < k-1$.

\item  If $u_l \neq \eps$, $u_r = \eps$, we note that $av_k I u_i$ for all
  $i> k$ and apply IH to $u$, $n$,
  $v_0$, $v_1, \ldots, v_{k-1}$, $av_k, v_{k+1}, \ldots, v_n$, $z$,
  $u_1, \ldots, u_{k-1}, u_l, u_{k+1}, \ldots, u_n$.  We get 
  $E_0, \ldots, E_{k-1}$, $E'$, $E_{k+1}, \ldots E_n$ such that
  $E \to^{I*} (u; E_0, \ldots, E_{k-1}, E', E_{k+1}, \ldots, E_n)$ and
  $v_j \in \sem{E_j}$ for all $j < k$, $av_k \in \sem{E'}$,
  $v_j \in \sem{E_j}$ for all $j > k$.  As $\eps I a$, we can apply
  1. to $E'$, $a$, $\eps$, $v_k$ and get $E'', E_k$ such that
  $E' \to^I (a; E'', E_k)$, $\eps \in \sem{E''}$, $v_k \in
  \sem{E_k}$. As $\eps \in \sem{E''}$ tells us that $E'' \dn$,
  we can infer
  $E \to^{I*} (ua; R^I_aE_0, \ldots, R^I_aE_{k-1}, E_k, E_{k+1},
  \ldots, E_n)$.
  As $v_j I a$ also for all $j <k$, we in fact also have
  $v_j \in \sem{R^I_a E_j}$ for all $j < k$.

\item The cases $u_l = \eps$, $u_r \neq \eps$ and $u_l \neq \eps$,
  $u_r \neq \eps$ are handled similarly to the previous case.


\end{itemize}
\end{itemize}

$\bwdimp$: By induction on the derivation of
$E \to^{I*} (u; E_0, \ldots, E_n)$.
\begin{itemize}
\item Case $E \to^{I*} (\eps; E)$ as an axiom: Suppose that
    $v_0 \in \sem{E}$. We have $\eps \sim \eps$ as well as
    $z = v_0 \in \sem{E}$ directly. 

  \item Case
    $E\to^{I*} (ua; R^I_a E_0, \ldots , R^I_a E_{k-1}, E_l, E_r,
    E_{k+1}, \ldots, E_n)$
    inferred from subderivations  $E \to^{I*} (u; E_0, \ldots, E_n)$ and
    $E_k \to^I (a; E_l, E_r)$: Suppose that $v_0 \in \sem{R^I_a E_0}$,
    \ldots, $v_{k-1} \in \sem{R^I_a E_{k-1}}$, $v_l \in \sem{E_l}$,
    $v_r \in \sem{E_r}$, $v_{k+1} \in \sem{E_{k+1}}$, \ldots,
    $v_n \in \sem{E_n}$, which gives us  $v_0 I a$,
    \ldots, $v_{k-1} I a$, $v_0 \in \sem{E_0}$, \ldots,
    $v_{k-1} \in \sem{E_{k-1}}$. Applying (1.$\bwdimp$) to
    $E_k$, $a$, $v_l$, $v_r$, we learn that $v_l I a$ and
    $v_lav_r \in \sem{E_k}$.  Applying IH to the subderivation,
    $v_0, \ldots, v_{k-1}$, $v_lav_r$, $v_{k+1}, \ldots, v_n$, we obtain
    $z, u_1, \ldots, u_n \in \Sigma^+$ such that
    $u \sim^I u_1\ldots u_n$,
    $\forall i.\, \forall j < i.\ v_j I u_i$,
    $\forall i > k.\, v_lav_r I u_i$ and
    $z = v_0u_1v_1\ldots v_{k-1}u_k(v_lav_r)u_{k+1}v_{k+1} \ldots u_nv_n
    \in \sem{E}$.
    Clearly $ua \sim^I u_1 \ldots u_k a u_{k+1} \ldots u_n$ and 
    $z = v_0u_1v_1\ldots v_{k-1}u_kv_lav_ru_{k+1}v_{k+1}\ldots u_nv_n \in
    \sem{E}$.

  \item Case
    $E\to^{I*} (ua; R^I_a E_0, \ldots, R^I_a E_{k-1}, E_r, E_{k+1},
    \ldots, E_n)$
    inferred from subderivations $E \to^{I*} (u; E_0, \ldots, E_n)$ and
    $E_k \to^I (a; E_l, E_r)$ and $E_l \dn$ whereby $k \neq 0$: 
    Suppose that $v_0 \in \sem{R^I_a E_0}$,
    \ldots, $v_{k-1} \in \sem{R^I_a E_{k-1}}$,
    $v_r \in \sem{E_r}$, $v_{k+1} \in \sem{E_{k+1}}$, \ldots,
    $v_n \in \sem{E_n}$, which gives us  $v_0 I a$,
    \ldots, $v_{k-1} I a$, $v_0 \in \sem{E_0}$, \ldots,
    $v_{k-1} \in \sem{E_{k-1}}$. Applying (1.$\bwdimp$) to
    $E$, $a$, $\eps$, $v_r$, we learn that 
    $av_r \in \sem{E_k}$.  Applying IH to the subderivation,
    $v_0, \ldots, v_{k-1}$, $av_r$, $v_{k+1}, \ldots, v_n$, 
    we obtain $z, u_1, \ldots, u_n \in \Sigma^+$ such
    that $u \sim^I u_1\ldots u_n$,
    $\forall i.\, \forall j < i.\ v_j I u_i$, $\forall i > k.\, av_r I u_i$ and
    $z =  v_0u_1v_1\ldots v_{k-1}u_k(av_r)u_{k+1}v_{k+1} \ldots u_nv_n \in \sem{E}$.   Now clearly
    $ua \sim^I u_1 \ldots (u_k a) u_{k+1} \ldots u_n$ and
    $z = v_0u_1v_1\ldots v_{k-1}(u_ka)v_ru_{k+1}v_{k+1}\ldots u_nv_n \in
    \sem{E}$.

\item The two remaining cases are treated similarly to the previous case. \qedhere

   
\end{itemize}

\end{enumerate}
\end{proof}

\begin{proposition} 
\label{prop:holes-cor}
For any $E$,
\begin{enumerate} 
\item for any $a \in \Sigma, v \in \Sigma^*$, the following are equivalent:
\begin{enumerate}
 \item 
$av \in \sem{E}^I$;
 \item 
$\exists  v_l, v_r \in \Sigma^*.\, 
 v \sim^I v_l v_r \con v_l I a \con 
 v_l a v_r \in \sem{E}$;
 \item 
$\exists  v_l, v_r\in \Sigma^*.\, \\
\ssp v \sim^I v_l v_r \con
\exists E_l, E_r.\, E \to^I (a; E_l, E_r) 
   \con v_l \in \sem{E_l} \con v_r \in \sem{E_r}$;
 \item 
$\exists  v_l, v_r \in \Sigma^*.\, \\
\ssp v \in v_l \cdot^I v_r \con 
\exists E_l, E_r.\, E \to^I  (a; E_l, E_r) 
   \con v_l \in \sem{E_l}^I \con v_r \in \sem{E_r}^I$.
\end{enumerate}

\item for any $u, v \in \Sigma^*$, the following are equivalent:
\begin{enumerate}
 \item 
$uv \in \sem{E}^I$;
 \item 
$\exists z \in \sem{E}.\  u \lsim z \rsim v$;
 \item 
$\exists n \in \Nat, v_0 \in \Sigma^*, v_1, \ldots, v_{n-1} \in \Sigma^+, v_n \in \Sigma^*.\,  v \sim^I v_0v_1\ldots v_n \con \\
\ssp \exists E_0, \ldots, E_n.\, E \to^{I*} (u; E_0, \ldots, E_n) 
  \con \forall j.\, v_j \in \sem{E_j}$;
\item 
$\exists n \in \Nat, v_0 \in \Sigma^*, v_1, \ldots, v_{n-1} \in \Sigma^+, v_n \in \Sigma^*.\, v \in v_0 \cdot^I v_1 \cdot^I \ldots \cdot^I v_n \con \\
\ssp \exists E_0, \ldots, E_n.\, E \to^{I*} (u; E_0, \ldots, E_n) 
  \con \forall j.\, v_j \in \sem{E_j}^I$.
\end{enumerate}

\item for any $u \in \Sigma^*$, \\ \quad\quad
$
u \in \sem{E}^I \iff 
    (u = \eps \con E \dn) \dis 
    (u \neq \eps \con \exists E_0, E_1.\, E \to^{I*} (u; E_0, E_1) \con E_0 \dn 
     \con E_1 \dn)
$.
\end{enumerate}
\end{proposition}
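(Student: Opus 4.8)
The plan is to obtain all three parts as corollaries of Proposition~\ref{prop:holes-main}, using four facts already available: $\sem{E}^I = [\sem{E}]^I$; $uv \sim^I z \iff u \lsim z \rsim v$ (Proposition~\ref{prop:scat-uv}); $z \in u \cdot^I v \iff u \lhd z \rhd v$ (Proposition~\ref{prop:scat}); and $[L \cdot L']^I = [L]^I \cdot^I [L']^I$ (Proposition~\ref{prop:reord-concat}). For Part~1 I would run the chain (a)$\iff$(b)$\iff$(c)$\iff$(d). Step (a)$\iff$(b) is pure unfolding: $av \in \sem{E}^I = [\sem{E}]^I$ says there is $z \in \sem{E}$ with $av \sim^I z$, which by Proposition~\ref{prop:scat-uv} (with the single letter $a$ as the prefix, hence not internally reorderable) means precisely that there are $v_l, v_r$ with $v \sim^I v_lv_r$, $v_l I a$ and $v_lav_r = z \in \sem{E}$. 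Step (b)$\iff$(c) just replaces the subformula $v_l I a \con v_lav_r \in \sem{E}$, under the existential binder $\exists v_l, v_r$, by the equivalent formula from Proposition~\ref{prop:holes-main}(1), which is stated in exactly the matching shape. Step (c)$\iff$(d): fixing $E_l, E_r$ with $E \to^I (a; E_l, E_r)$, the remaining conjunct of (c) holds for exactly the words $v \in [\sem{E_l}\cdot\sem{E_r}]^I$, and the remaining conjunct of (d) for exactly the words in $\sem{E_l}^I \cdot^I \sem{E_r}^I = [\sem{E_l}]^I \cdot^I [\sem{E_r}]^I = [\sem{E_l}\cdot\sem{E_r}]^I$ by Proposition~\ref{prop:reord-concat}; a union over the finitely many admissible $E_l, E_r$ finishes it.

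Part~2 repeats this pattern with $\to^{I*}$ and lists replacing $\to^I$ and pairs. Here (a)$\iff$(b) is immediate from $\sem{E}^I = [\sem{E}]^I$ and Proposition~\ref{prop:scat-uv}. For (b)$\iff$(c) I would unfold $u \lsim z \rsim v$ via the definitions of $\lsim z \rsim$, of $\lhd z \rhd$ and of the scattering predicate (Definition~\ref{def:scat}) into: there exist $n$, words $u_1, \ldots, u_n \in \Sigma^+$ with $u \sim^I u_1\ldots u_n$, and words $v_0 \in \Sigma^*, v_1, \ldots, v_{n-1} \in \Sigma^+, v_n \in \Sigma^*$ with $v \sim^I v_0\ldots v_n$ and $u_1, \ldots, u_n \lhd z \rhd v_0, \ldots, v_n$; for each fixed $n$ and $v_0, \ldots, v_n$ this is exactly the left-hand side of Proposition~\ref{prop:holes-main}(2), which yields (c). For (c)$\iff$(d), as in Part~1 I would fix $E_0, \ldots, E_n$ with $E \to^{I*}(u; E_0, \ldots, E_n)$ and check that both remaining conjuncts describe the same set of $v$; the identities needed are $v_0 \cdot^I \ldots \cdot^I v_n \subseteq [\{v_0\ldots v_n\}]^I$ and $[\{v_0\}]^I \cdot^I \ldots \cdot^I [\{v_n\}]^I = [\{v_0\ldots v_n\}]^I$, obtained by iterating Proposition~\ref{prop:reord-concat} with associativity of $\cdot^I$, plus the fact that $\sim^I$ preserves nonemptiness (so the side conditions $v_1, \ldots, v_{n-1} \in \Sigma^+$ transfer).

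For Part~3 I would instantiate Part~2, equivalence (a)$\iff$(c), at $v \eqdf \eps$. If $u = \eps$, then $u \in \sem{E}^I = [\sem{E}]^I$ reduces to $\eps \in \sem{E}$, i.e.\ to $E \dn$ by syntactic nullability, matching the right-hand side (whose second disjunct is then false). If $u \neq \eps$: since $v = \eps$ forces every $v_j = \eps$, the side condition that $v_1, \ldots, v_{n-1}$ be nonempty forces $n \le 1$; and a one-line induction on the derivation---each $\Rightarrow^I$-step outputs a list of length $\geq 2$, because the ``drop'' rules require a nonempty $\Gamma$ resp.\ $\Delta$---shows any $\Gamma$ with $E \to^{I*}(u; \Gamma)$ has length $\ge 2$, i.e.\ $n \ge 1$. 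Hence $n = 1$, and (c) reads $\exists E_0, E_1.\, E \to^{I*}(u; E_0, E_1) \con \eps \in \sem{E_0} \con \eps \in \sem{E_1}$, which is $\exists E_0, E_1.\, E \to^{I*}(u; E_0, E_1) \con E_0 \dn \con E_1 \dn$, as required.

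I expect the (c)$\iff$(d) steps in Parts~1 and~2 to be the only delicate ones. The task there is to migrate the representational slack between ``$v$ is $\sim^I$-equivalent to a plain concatenation of words drawn from the (non-closed) languages $\sem{E_j}$'' and ``$v$ lies in a reordering concatenation of words drawn from the closures $\sem{E_j}^I$'', all while respecting nonemptiness of the interior blocks; this is precisely what iterating Proposition~\ref{prop:reord-concat} (with associativity of $\cdot^I$) is for. Everything else in the argument is definitional bookkeeping.
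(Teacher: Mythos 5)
Your proposal is correct and follows essentially the same route as the paper's own proof: chaining (a)$\iff$(b) via Proposition~\ref{prop:scat-uv}, (b)$\iff$(c) via Proposition~\ref{prop:holes-main}, (c)$\iff$(d) via Proposition~\ref{prop:reord-concat}, and obtaining Part~3 by instantiating Part~2 at $v = \eps$. You merely spell out the bookkeeping (the single-letter case of $\lsim z \rsim$, the transfer of the nonemptiness side conditions, and the length-$\geq 2$ invariant of $\Rightarrow^I$) that the paper leaves implicit.
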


\begin{proof}~
\begin{enumerate}
\item (a) $\iff$ (b) follows from Proposition \ref{prop:scat-uv}.
  (b) $\iff$ (c) follows from Proposition \ref{prop:holes-main}(1). 
(c) $\iff$ (d) follows from Proposition \ref{prop:reord-concat}.

\item (a) $\iff$ (b) follows from Proposition \ref{prop:scat-uv}.
  (b) $\iff$ (c) follows from Proposition \ref{prop:holes-main}(2).
  (c) $\iff$ (d) follows from Proposition \ref{prop:reord-concat}.

\item From (2) for $E, u, \eps$. \qedhere
\end{enumerate}
\end{proof}

\begin{proposition} \label{prop:holes-uniform}
For any $E$, $N \in \Nat$, $u \in \Sigma^*$, $z \in \sem{E}$,
\[
\begin{array}{l}
(\forall u', u''.\, u = u'u'' \imp \exists v.\, u' \lsim_N z \rhd v) \\  
\imp \\
  \exists E_0, \ldots, E_n.\, E \to^{I*}_N (u; E_0, \ldots, E_n) 
  \con \forall j.\ v_j \in \sem{E_j} \\
\quad \textrm{for the unique~}
  n, v_0, \ldots, v_n \textrm{~such that~} 
  u \lsim z \rhd v_0, \ldots, v_n
  \end{array}
\]
\end{proposition}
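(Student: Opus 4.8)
The plan is to redo the induction from the proof of Proposition~\ref{prop:holes-main}(2) (the ``$\imp$'' direction, case $ua$), but now building derivations with the \emph{bounded} relations $\to^{I*}_N$ and $\Rightarrow^{I}_N$ in place of $\to^{I*}$ and $\Rightarrow^{I}$. These bounded relations differ from the unbounded ones \emph{only} in the side condition $|\Gamma, \Delta| < N$ carried by the first $\Rightarrow^{I}_N$-rule; all other side conditions ($|\Gamma| > 0$, $|\Delta| > 0$, $E_l \dn$, $E_r \dn$) are common to both. So the induction will be on $u$, the hypothesis being that every prefix $u'$ of $u$ (itself included) satisfies $u' \lsim_N z \rhd v$ for some $v$---equivalently, by uniqueness of the scattering decomposition, that $u'$ scatters in $z$ and its unique scattering has degree at most $N$---and this is exactly the hypothesis of Proposition~\ref{prop:holes-uniform}, since letting $u = u'u''$ makes $u'$ range over all prefixes of $u$. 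It is worth stressing that this hypothesis is about \emph{one} fixed $z$ serving all splits at once; this is why uniform rank, and not plain rank, is what the statement needs, as a prefix of $u$ may scatter in a given witness $z$ with strictly larger degree than $u$ itself (for $\sem{(ab)^*(a^*+b^*)}$ with $z = (ab)^n$, the word $a^nb^n$ scatters with degree $1$ but its prefix $a^n$ with degree $n$), and the automaton state reached after consuming such a prefix is a list of regexps whose length is governed by that intermediate degree.

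In the base case $u = \eps$, $E \to^{I*}_N (\eps; E)$ is an axiom and $v_0 = z \in \sem{E}$. In the step $u = \bar u a$, every prefix of $\bar u$ is a prefix of $u$, so the induction hypothesis yields $E \to^{I*}_N (\bar u; \bar E_0, \ldots, \bar E_{\bar n})$ with $\bar v_j \in \sem{\bar E_j}$, where $\bar n, \bar v_0, \ldots, \bar v_{\bar n}$ is the scattering of $\bar u$ in $z$. As in the proof of Proposition~\ref{prop:holes-main}(2), one locates the peeled occurrence of $a$ and distinguishes the four cases according to whether, in $u$'s scattering, that $a$ forms a whole block by itself, or sits at the beginning, at the end, or strictly inside a block (with the remaining material nonempty). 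Correspondingly, inside $\bar u$'s scattering that $a$ sits in some suffix $\bar v_j = x\,a\,y$ with $x\,I\,a$, and the transition is by exactly one of the $\Rightarrow^{I}_N$-rules: the first rule (fresh block) in the singleton case; the ``drop-left'', ``drop-right'' or ``drop-both'' rule when $a$ merges with the block to its left, to its right, or with both. Splitting $\bar E_j$ via Proposition~\ref{prop:holes-main}(1) supplies the required $E_l, E_r$ with $x \in \sem{E_l}$, $y \in \sem{E_r}$; the $R^I_a$ prefixes applied to the earlier list entries are justified by $v_i\,I\,a$ for $i$ to the left of $j$; and in the three merge cases the side conditions $|\Gamma| > 0$ resp.\ $|\Delta| > 0$ hold because the neighbouring block into which $a$ merges exists---all exactly as in Proposition~\ref{prop:holes-main}(2). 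The resulting list then corresponds precisely to the unique scattering $v_0, \ldots, v_n$ of $u = \bar u a$.

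What is new is the side condition $|\Gamma, \Delta| < N$ of the first rule. The invariant to carry through the case analysis is that, throughout such a derivation, the length of the current list of regexps equals one plus the degree of the scattering of the prefix consumed so far: the first rule grows both by one, the drop-left and drop-right rules change neither, and the drop-both rule shrinks both by one. Hence, whenever the first rule fires, the list $\Gamma, \bar E_j, \Delta$ being extended has length $1 + d$ with $d$ the degree of the prefix consumed so far, while the prefix obtained after the step---a prefix of $u$---has degree $d + 1 \le N$ by hypothesis, so $|\Gamma, \Delta| = d \le N - 1 < N$, as required. (Equivalently: a $\to^{I*}$-derivation all of whose lists have length at most $N+1$ is automatically a $\to^{I*}_N$-derivation, since $|\Gamma, \Delta| < N$ says exactly that the first rule's output list has length at most $N+1$; and the induction hypothesis already supplies a derivation of $\bar u$ with all lists of length at most $N+1$.) I expect the only real work, beyond invoking Propositions~\ref{prop:holes-main}(1) and~\ref{prop:holes-main}(2), to be this bookkeeping linking list length to scattering degree through the boundary cases ($j$ first or last, $x$ or $y$ empty) of the case analysis---but since that case analysis is already carried out in the proof of Proposition~\ref{prop:holes-main}(2), the remaining obligation is light.
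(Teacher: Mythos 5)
Your proposal is correct and follows essentially the same route as the paper, which proves this proposition by replaying the induction of Proposition~\ref{prop:holes-main}(2.$\imp$) with the bounded relations and observing that the hypothesis instance for the intermediate prefix (the split $u'=\bar u$, $u''=a$) is what is needed in the fourth subcase, where the prefix's scattering degree exceeds that of the whole word. Your explicit invariant (list length $=$ $1$ $+$ degree of the consumed prefix) and the accompanying bookkeeping for the side condition $|\Gamma,\Delta|<N$ make precise exactly the point the paper's terse proof relies on.
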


\begin{proof}
  By replaying the proof of
  Proposition~\ref{prop:holes-main}(2.$\imp$).  In the fourth subcase
  ($u_l \neq \eps$, $u_r \neq \eps$) of the case $ua$ of induction, IH
  for $u' = u$, $u'' = a$ is needed.
\end{proof}

\begin{corollary}
\label{cor:holes-bounded}
For any $E$ such that $\sem{E}$ has uniform rank at most $N$,
\begin{enumerate}
\item for any
  $u, v \in \Sigma^*$, the following are equivalent:
\begin{enumerate}
 \item 
$uv \in \sem{E}^I$;
 \item 
$\exists z \in \sem{E}.\ \forall u',u''.\, u=u' u'' \imp u' \lsim_N z \rsim u''v$;

 \item 
$\exists n \leq N, v_0 \in \Sigma^*, v_1, \ldots, v_{n-1} \in \Sigma^+, v_n \in \Sigma^*.\,  v \sim^I v_0v_1\ldots v_n \con \\
\ssp \exists E_0, \ldots, E_n.\, E \to^{I*}_N (u; E_0, \ldots, E_n) 
  \con \forall j.\, v_j \in \sem{E_j}$;
\item 
$\exists n \leq N, v_0 \in \Sigma^*, v_1, \ldots, v_{n-1} \in \Sigma^+, v_n \in \Sigma^*.\, v \in v_0 \cdot^I v_1 \cdot^I \ldots \cdot^I v_n \con \\
\ssp \exists E_0, \ldots, E_n.\, E \to^{I*}_N (u; E_0, \ldots, E_n) 
  \con \forall j.\, v_j \in \sem{E_j}^I$.
\end{enumerate}

\item for any $u \in \Sigma^*$, \\ \quad\quad
$
u \in \sem{E}^I \iff 
    (u = \eps \con E \dn) \dis 
    (u \neq \eps \con \exists E_0, E_1.\, E \to^{I*}_N (u; E_0, E_1) \con E_0 \dn 
     \con E_1 \dn)
$.
\end{enumerate}
\end{corollary}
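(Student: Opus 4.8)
Since this is a corollary, the plan is to reduce every claim to the already-established Propositions~\ref{prop:holes-main}, \ref{prop:holes-cor}, \ref{prop:holes-uniform}, the definition of uniform rank, and the basic trace facts (Propositions~\ref{prop:scat}, \ref{prop:scat-uv}, \ref{prop:reord-concat}), doing essentially no new work. For part~(1) I would establish the cycle (a)\,$\imp$\,(b)\,$\imp$\,(c)\,$\imp$\,(a) and then (c)\,$\iff$\,(d) separately. The implication (a)\,$\imp$\,(b) is immediate from the definition of $\sem{E}$ having uniform rank at most $N$: the single witness $z \in \sem{E}$ it provides for the word $w \eqdf uv \in [\sem{E}]^I$ works for \emph{every} split of $w$, in particular for every split $w = u'\,(u''v)$ whose cut lies within $u$, which is exactly what (b) asserts. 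Conversely, (b)\,$\imp$\,(a) follows by instantiating the split at $u' \eqdf u$, $u'' \eqdf \eps$ to get $u \lsim_N z \rsim v$, hence $u \lsim z \rsim v$, hence $uv \sim^I z \in \sem{E}$ by Proposition~\ref{prop:scat-uv}, so $uv \in [\sem{E}]^I = \sem{E}^I$.

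The one substantive step is (b)\,$\imp$\,(c). Given the $z$ from (b), every split $u = u'u''$ yields in particular some word $w$ with $u' \lsim_N z \rhd w$, which is precisely the hypothesis of Proposition~\ref{prop:holes-uniform} for this $E$, $N$, $u$, $z$; applying it produces a bounded derivation $E \to^{I*}_N (u; E_0, \ldots, E_n)$ together with the guarantee $v_j \in \sem{E_j}$, where $n, v_0, \ldots, v_n$ is the unique tuple with $u \lsim z \rhd v_0, \ldots, v_n$. The bound $n \le N$ then comes for free from the length restriction $\RE^{+\le N+1}$ built into $\to^{I*}_N$, and the shape constraints $v_0, v_n \in \Sigma^*$, $v_1, \ldots, v_{n-1} \in \Sigma^+$ from Definition~\ref{def:scat}. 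The only point still needing an argument is $v \sim^I v_0 v_1 \ldots v_n$: recombining blocks gives $u \lsim z \rhd v_0\ldots v_n$, so by Propositions~\ref{prop:scat} and \ref{prop:scat-uv} one gets $z \sim^I u(v_0\ldots v_n)$, while instantiating (b) at $u' \eqdf u$, $u'' \eqdf \eps$ gives $z \sim^I uv$; cancelling the common prefix $u$ in the (cancellative) trace monoid $\Sigma^*/{\sim^I}$ yields $v \sim^I v_0\ldots v_n$. For (c)\,$\imp$\,(a) I would simply weaken $\to^{I*}_N$ to $\to^{I*}$ and discard the bound $n \le N$, landing in condition~(c) of Proposition~\ref{prop:holes-cor}(2), which is equivalent to (a) there; and (c)\,$\iff$\,(d) runs exactly as in Proposition~\ref{prop:holes-cor}(2), rewriting both sides via Proposition~\ref{prop:reord-concat} (iterated) and $\sem{E_j}^I = [\sem{E_j}]^I$ as ``$v \in [\sem{E_0}\cdot\sem{E_1}\cdots\sem{E_n}]^I$''.

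Part~(2) I would obtain by instantiating part~(1) at $v \eqdf \eps$, mirroring how Proposition~\ref{prop:holes-cor}(3) is derived from \ref{prop:holes-cor}(2). With $v = \eps$ the blocks $v_0, \ldots, v_n$ must all be empty, so the constraint $v_1, \ldots, v_{n-1} \in \Sigma^+$ forces $n \le 1$ and $v_j \in \sem{E_j}$ becomes $E_j \dn$; for $u = \eps$ the only derivation is the axiom $E \to^{I*}_N(\eps; E)$, giving $u \in \sem{E}^I \iff E\dn$, and for $u \ne \eps$ every $\to^{I*}_N$-derivation produces a list of length at least $2$ (as noted after the definition), forcing $n = 1$ and hence exactly the two-regexp form $E \to^{I*}_N(u; E_0, E_1) \con E_0\dn \con E_1\dn$.

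I expect the main obstacle to be the bookkeeping in (b)\,$\imp$\,(c): matching the scattering blocks $v_0, \ldots, v_n$ handed back by Proposition~\ref{prop:holes-uniform} (which are defined relative to the chosen $z$) against the externally prescribed suffix $v$. Everything else is either a definitional unfolding or a direct appeal to an already-established proposition; the only genuinely new micro-ingredients are the cancellativity argument pinning down $v \sim^I v_0\ldots v_n$ and the routine observation that the list-length bound on $\to^{I*}_N$ delivers $n \le N$.
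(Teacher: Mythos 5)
Your proposal is correct and follows essentially the same route as the paper: (a)$\imp$(b) from the definition of uniform rank, (b)$\imp$(c) via Proposition~\ref{prop:holes-uniform}, and the remaining links borrowed from Proposition~\ref{prop:holes-cor}, with part~(2) obtained by instantiating part~(1) at $v = \eps$. The only differences are cosmetic (you close the cycle at (c)$\imp$(a) rather than (c)$\imp$(d)$\imp$(a)) and that you make explicit two details the paper leaves implicit, namely the cancellativity argument giving $v \sim^I v_0\ldots v_n$ and the list-length bookkeeping forcing $n \le N$ and, in part~(2), $n = 1$.
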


\begin{proof}[Proof of 1]
  (a) $\imp$ (b) is from $E$ having uniform rank at most $N$.  (b)
  $\imp$ (c) follows from Proposition \ref{prop:holes-uniform}.  (c)
  $\imp$ (d) and (d) $\imp$ (a) are those from Proposition
  \ref{prop:holes-cor}.
\end{proof}

\begin{example} 
  We go back to Example~\ref{ex:brz}. Recall that $E \eqdf aa + ab + b$ and $E_b\eqdf R^I_b E = aa + a0 + 0$. Here is one of the refined reordering parts-of-derivatives of $E^*$ along $bb$.
\[
\begin{array}{l}
\small
  \infer{E^* \to^{I*}_2 (bb; E_b^*(a1), 1(E_b^*(a1)), 1E^*)}{
  \infer{E^* \to^{I*}_2 (b; E_b^*(a1), 1E^*)}{
  \infer{E^* \to^{I*}_2 (\eps; E^*)}{} 
& 
  \infer{E^* \Rightarrow^I_2 (b; E_b^*(a1), 1E^*)}{
  \infer{E^* \to^I (b; E^*_b(a1), 1E^*)}{
  \infer{aa + ab + b \to^I (b; a1, 1)}{
  \infer{ab + b \to^I (b; a1, 1)}{
  \infer{ab \to^I (b; a1, 1)}{
  \infer{b \to^I (b; 1, 1)}{}
  }
  }
  }
  }
& 0 < 2}
  }
& 
  \infer{E_b^*(a1), 1E^* \Rightarrow^I_2 (b; E_b^*(a1), 1(E_b^*(a1)), 1E^*)}{
  \infer{1E^* \to^I (b; 1 (E^*_b (a1)), 1E^*)}{
  \infer{E^* \to^I (b; E^*_b (a1), 1E^*)}{
  \infer{aa + ab + b \to^I (b; a1, 1)}{
  \infer{ab + b \to^I (b; a1, 1)}{
  \infer{ab \to^I (b; a1, 1)}{
  \infer{b \to^I (b; 1, 1)}{}
  }
  }
  }
  }
  }
&
  1 < 2}
  }
\end{array}
\]

In this example, we chose $N \eqdf 2$. The regexp
$1(E_b^*(a1)) \doteq (aa)^*a$ is not nullable, so we could not have
dropped it. From here we cannot continue by deriving along a third $b$
by again taking it from the summand $ab$ of $E$ in $1E^*$, as this
would produce another nondroppable $1(E_b^*(a1))$ and make the list
too long (longer than 3). For example, we are not allowed to establish
$w \eqdf bbbaaa \in \sem{E^*}^I$ (by deriving $E^*$ along $w$ and
checking if we can arrive at $E_0, E_1$ with both $E_0, E_1$
nullable), mandated by $z \eqdf ababab \in \sem{E^*}$, but we are
allowed to do so because of $z' \eqdf bbabaa \in \sem{E^*}$.
The word $z$ is not useful since among the splits of $w$ as $w = uv$
there is $u \eqdf bbb$, $v \eqdf aaa$, which splits $z$ as
$u \lsim z \rsim v$ scattering $u$ into 3 blocks as
$z = a\underline{b}a\underline{b}a\underline{b}$ (we underline the
letters from $u$); the full sequence of these corresponding splits of
$z$ is $ababab$, $a\underline{b}abab$,
$a\underline{b}a\underline{b}ab$,
$a\underline{b}a\underline{b}a\underline{b}$,
$\underline{ab}a\underline{b}a\underline{b}$,
$\underline{abab}a\underline{b}$, $\underline{ababab}$.
The word $z'$, on the contrary, is fine because, for every split of
$w$ as $w = uv$, there are at most two blocks of letters from $u$ in
$z'$: $bbabaa$, $\underline{b}babaa$, $\underline{bb}abaa$,
$\underline{bb}a\underline{b}aa$, $\underline{bbab}aa$,
$\underline{bbaba}a$, $\underline{bbabaa}$. The choice $N = 2$
suffices for accepting all of $\sem{E^*}^I$, since $\sem{E^*}$ happens
to have uniform rank 2.
\end{example}

The refined Antimirov reordering parts-of-derivatives of a regexp $E$
give a nondeterministic automaton by
$Q^E \eqdf \{\Gamma \mid \exists u \in \Sigma^*.\, E \to^{I*}(u ;
\Gamma)\}$,
$I^E \eqdf \{E\}$,
$F^E \eqdf \{E \mid E \dn\} \cup \{ E_0, E_1 \in Q^E \mid E_0 \dn \con
E_1 \dn \}$,
$\Gamma \to^E (a; \Gamma') \eqdf \Gamma \Rightarrow^I (a; \Gamma')$
for $\Gamma, \Gamma' \in Q^E$. By Prop.~\ref{prop:holes-cor}, this
automaton accepts $\sem{E}^I$. It is generally not finite as $Q^E$ can
contain states $\Gamma$ of any length.

Given $N \in \Nat$, another automaton is obtained by restricting
$Q^E$, $F^E$ and $\to^E$ to
$Q^E_N \eqdf \{\Gamma \mid \exists u \in \Sigma^*.\, E \to^{I*}_N (u ;
\Gamma)\}$,
$F^E_N \eqdf \{E \mid E \dn\} \cup \{ E_0, E_1 \in Q^E_N \mid E_0 \dn
\con E_1 \dn \}$,
$\Gamma \to^E_N (a; \Gamma') \eqdf \Gamma \Rightarrow^I_N (a;
\Gamma')$
for $\Gamma, \Gamma' \in Q^E_N$. By Cor.~\ref{cor:holes-bounded}, if
$\sem{E}$ has uniform rank at most $N$, then this smaller automaton
accepts $\sem{E}^I$ despite the truncation. If $\sem{E}$ does not have
uniform rank or we choose $N$ smaller than the uniform rank, then the
$N$-truncated automaton recognizes a proper subset of
$\sem{E}^I$. Prop.~\ref{prop:fin-rank-no-urank} gives an example of
this: however we choose $N$, the $N$-truncated automaton fails to
accept the word $a^nb^nca^nb^n$ for $n > N$. This happens because
$\sem{E}$ does not have uniform rank (and that it has rank 2 does not
help).

\subsection{Automaton Finiteness for Regular Expressions with Uniform Rank}

Is the $N$-truncated Antimirov automaton finite? The states $\Gamma$
of $Q^E_N$ are all of length at most $N+1$, so there is hope. The
automaton will be finite if we can find a finite set containing all
the individual regexps $E'$ appearing in the states $\Gamma$. We now
define such a set $E^{\to*}$.


\begin{definition} We define functions
  $(\_)^{\squig+}, \RR, (\_)^{\to+}, (\_)^{\to*} : \RE \to \P \RE$ by
\[
\small
\begin{array}{rcl@{\hspace*{3cm}}rcl}
  a^{\squig+}       & \eqdf & \{ 1 \} 
& (E + F)^{\squig+} & \eqdf & E^{\squig+} \cup F^{\squig+} \\
  0^{\squig+}       & \eqdf & \emptyset 
& 1^{\squig+}       & \eqdf & \emptyset \\[1ex]
  (EF)^{\squig+}    & \eqdf & \multicolumn{4}{l}{E^{\squig+} \cup F^{\squig+} \cup E^{\squig+} \cdot \{ F \}
                              \cup \{ E \} \cdot F^{\squig+}
                              \cup E^{\squig+} \cdot F^{\squig+}} \\
  (E^*)^{\squig+}   & \eqdf & \multicolumn{4}{l}{E^{\squig+} \cup \{ E^* \} \cdot E^{\squig+} \cup
                              E^{\squig+} \cdot \{ E^* \} \cup
                              E^{\squig+} \cdot  (\{ E^* \} \cdot
                              E^{\squig+})
\cup \ (E^{\squig+} \cdot \{ E^* \}) \cdot
                            E^{\squig+} }
\end{array}
\]
\[
\small
\begin{array}{rcl} 
  \RR E & \eqdf & \{ R^I_X E \mid X \subseteq \Sigma \} \\
  E^{\to+}         & \eqdf & \RR (E^{\squig+}) \\
  E^{\to*}         & \eqdf & \{ E \} \cup E^{\to+}
\end{array}
\]
\end{definition}

\begin{proposition} ~ 
\begin{enumerate}
\item  For any $E$, the set $E^{\to*}$ is finite.

\item  For any $E$ and $X$, we have $(R^I_X E)^{\to*} \subseteq R^I_X (E^{\to*})$. 

\item For any $E$, $a$ and $E_l, E_r$, if
  $E \to^I (a; E_l, E_r)$, then $E_l \in R^I_a (E^{\squig+})$ and
  $E_r \in E^{\squig+}$.

\item
  For any $E, E', X, a, E'_l, E'_r$, if $E' \in R^I_X (E^{\squig+})$ and
  $E' \to^I (a; E'_l, E'_r)$, \\ then $E'_l \in R^I_{Xa}(E^{\squig+})$ and
  $E'_r \in R^I_X (E^{\squig+})$.

\item  For any $E$, $u$ and $E_0, \ldots , E_n$, if $E \to^{I*} (u; E_0, \ldots, E_n)$, then
  $\forall j.\, E_j \in E^{\to*}$.
\end{enumerate}
\end{proposition}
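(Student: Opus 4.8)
The plan is to handle the five parts separately, with parts 1--3 routine inductions, part 4 the real work, and part 5 assembling parts 3 and 4. For part 1, a structural induction on $E$ shows $E^{\squig+}$ is finite: the base cases are $\emptyset$ or a singleton, and each recursive clause is a finite union of finite products of sets finite by the induction hypothesis (together with singletons); since $\Sigma$ is finite, $\RR F = \{R^I_X F \mid X \subseteq \Sigma\}$ is then finite for every $F$, so $E^{\to+} = \RR(E^{\squig+}) = \bigcup_{F \in E^{\squig+}} \RR F$ is finite, and hence so is $E^{\to*}$. For part 2, I would first record the elementary facts --- using that $R^I_X E$ is just $E$ with every letter dependent with $X$ replaced by $0$ --- that $R^I_X$ commutes with the constructors $+$, $\cdot$ and $(\_)^*$, that $R^I_\emptyset E = E$, and that $R^I_X(R^I_Y E) = R^I_{X\cup Y} E$, all lifting pointwise to sets of regexps; then prove $(R^I_X E)^{\squig+} \subseteq R^I_X(E^{\squig+})$ by structural induction on $E$, pushing $R^I_X$ through the recursive clauses; then conclude from $\RR(R^I_X S) = \{R^I_{X\cup Y} F \mid Y \subseteq \Sigma, F \in S\} = R^I_X(\RR S)$ that $(R^I_X E)^{\to*} = \{R^I_X E\}\cup\RR((R^I_X E)^{\squig+}) \subseteq \{R^I_X E\}\cup R^I_X(\RR(E^{\squig+})) = R^I_X(E^{\to*})$. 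For part 3, induction on the derivation of $E\to^I(a;E_l,E_r)$ suffices: in each rule the conclusion matches, via the recursive definition of $E^{\squig+}$ and the homomorphism properties of $R^I_a$, the appropriate constructor --- for instance, from $EF\to^I(a;(R^I_a E)F_l,F_r)$ inferred from $F\to^I(a;F_l,F_r)$, the induction hypothesis gives $F_l=R^I_a G$ with $G\in F^{\squig+}$, whence $(R^I_a E)F_l=R^I_a(EG)$ with $EG\in\{E\}\cdot F^{\squig+}\subseteq(EF)^{\squig+}$ and $F_r\in F^{\squig+}\subseteq(EF)^{\squig+}$; the $E^*$ rule is analogous, using $R^I_a(E^*)=(R^I_a E)^*$ and the clauses $\{E^*\}\cdot E^{\squig+}$ and $E^{\squig+}\cdot\{E^*\}$.

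Part 4 is the crux. I would prove the slightly stronger statement in which $E'$ is additionally allowed to be $R^I_X E$ itself (i.e., $E'\in R^I_X(\{E\}\cup E^{\squig+})$), which also packages up the commutation of $R^I_X$ with $\to^I$ that the proof needs, by structural induction on $E$, generalising over $X,a,E',E'_l,E'_r$. The letter, $0$, $1$ and $+$ cases are immediate. For $E=E_1E_2$, one writes $E'$ as $R^I_X$ applied to $E_1E_2$ or to one of the five shapes of $(E_1E_2)^{\squig+}$; since $R^I_X$ distributes over $\cdot$, exactly one of the two $EF$-rules fires, and one applies the induction hypothesis for $E_1$ or for $E_2$ to the pertinent factor (which is $R^I_X$ of $E_1$, of $E_2$, or of one of their hole-parts --- all covered by the strengthened statement), reassembling the pieces with the identity $R^I_a(R^I_X G)=R^I_{X\cup\{a\}}G$ and the observation that the clauses of $(E_1E_2)^{\squig+}$ are closed under appending or prepending a hole-part. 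The case $E=E_1^*$ is the main obstacle. Here $R^I_X(E_1^*)=(R^I_X E_1)^*$, so when the derived letter is taken ``from the star'' the rule $(R^I_X E_1)^*\to^I(a;(R^I_a(R^I_X E_1))^* C_l,\,C_r(R^I_X E_1)^*)$ fires; combined with the induction hypothesis for $E_1$ this yields a left part equal to $R^I_{X\cup\{a\}}$ of $E_1^* J$ for some $J\in E_1^{\squig+}$ and a right part equal to $R^I_X$ of $C'_r E_1^*$ for some $C'_r\in E_1^{\squig+}$. One must run through the five shapes of $E'\in R^I_X((E_1^*)^{\squig+})$, crossed with the possible positions of the derived letter, and verify that each reassembles into $R^I_{Xa}$ resp.\ $R^I_X$ of something in $(E_1^*)^{\squig+}$; this goes through precisely because the refined star rule puts \emph{no} trailing $E^*$ into the left part (with a trailing $E^*$ the last factor would carry $R^I_X$ instead of $R^I_{X\cup\{a\}}$ and the claim would fail), and because the clauses $E_1^{\squig+}\cdot(\{E_1^*\}\cdot E_1^{\squig+})$ and $(E_1^{\squig+}\cdot\{E_1^*\})\cdot E_1^{\squig+}$ are exactly the shapes produced by slotting a freshly derived hole-part next to an existing $E_1^*$.

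For part 5, I would do induction on the derivation of $E\to^{I*}(u;E_0,\ldots,E_n)$, maintaining the invariant that if $u=\eps$ the list is $(E)$ and otherwise every entry lies in $\bigcup_{X\subseteq\Sigma}R^I_X(E^{\squig+})$. The base case $u=\eps$ is immediate. In a step $\Gamma\Rightarrow^I(a;\Gamma')$ from a list $\Gamma$ reached along $u'$: if $u'=\eps$ then $\Gamma=(E)$ and only the first $\Rightarrow^I$-rule is applicable (the others need a nonempty prefix or suffix), producing $(E_l,E_r)$ with $E_l\in R^I_a(E^{\squig+})$ and $E_r\in E^{\squig+}$ by part 3; if $u'\neq\eps$, the prefix entries get $R^I_a$ applied and remain in $\bigcup_Y R^I_Y(E^{\squig+})$ by $R^I_a(R^I_X G)=R^I_{X\cup\{a\}}G$, the pivot contributes $E_l,E_r$ (or whichever survive the nullability-guarded dropping rules), which lie in $R^I_{Xa}(E^{\squig+})$ resp.\ $R^I_X(E^{\squig+})$ by part 4, and the suffix entries are unchanged. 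Since $\bigcup_X R^I_X(E^{\squig+})\subseteq\RR(E^{\squig+})=E^{\to+}\subseteq E^{\to*}$ and $\{E\}\subseteq E^{\to*}$, the invariant gives $E_j\in E^{\to*}$ for all $j$. I expect the Kleene-star bookkeeping in part 4 --- the five hole-part shapes of $E_1^*$ against the positions of the derived letter --- to be the real work; everything else is a mechanical pushing of homomorphisms through recursive definitions.
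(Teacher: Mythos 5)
Your proof is correct; the paper states this proposition without giving a proof, and your inductions follow exactly the route that the definitions of $(\_)^{\squig+}$ and $\RR$ are evidently designed for. In particular, you correctly isolate the one non-routine point: part 4 must be strengthened to cover $E' = R^I_X E$ itself, so that derivations entering an un-deconstructed subexpression (as in the clauses $\{E\}\cdot F^{\squig+}$ and $\{E^*\}\cdot E^{\squig+}$) are reachable by the induction hypothesis, and the five star clauses together with the absence of a trailing $E^*$ in the left component of the refined star rule are precisely what make that case analysis close under $R^I_{Xa}$ on the left and $R^I_X$ on the right.
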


\begin{proposition} \label{prop:urank-refined-antim-finite}
  For every $E$ and $N$, the state set
  $\{ \Gamma \mid \exists u \in \Sigma^*.\, E \to^{I*}_N (u; \Gamma) \}$
  of the $N$-truncated refined Antimirov automaton for $E$ 
  (accepting $\sem{E}^I$ if $\sem{E}$ has uniform rank at most $N$) 
  is finite.
\end{proposition}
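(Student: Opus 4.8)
The plan is to exhibit the state set $Q^E_N \eqdf \{\Gamma \mid \exists u \in \Sigma^*.\, E \to^{I*}_N (u; \Gamma)\}$ as a subset of an obviously finite set, namely the set of nonempty lists over $E^{\to*}$ of length at most $N+1$, and then invoke the preceding proposition (items 1 and 5).

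First I would observe that the $N$-bounded derivation relations are restrictions of the unbounded ones. Indeed, each inference rule for $\Rightarrow^I_N$ is obtained from the corresponding rule for $\Rightarrow^I$ by adding the side condition $|\Gamma, \Delta| < N$ to the first rule only; hence $\Rightarrow^I_N \subseteq \Rightarrow^I$, and by a trivial induction on derivations $\to^{I*}_N \subseteq \to^{I*}$ as well. Consequently, whenever $E \to^{I*}_N (u; E_0, \ldots, E_n)$ we also have $E \to^{I*} (u; E_0, \ldots, E_n)$, and part 5 of the preceding proposition gives $E_j \in E^{\to*}$ for every $j$.

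Second, any $\Gamma$ with $E \to^{I*}_N (u; \Gamma)$ is a nonempty list of regexps of length at most $N+1$: this is already built into the signature $\to^{I*}_N \subseteq \RE \times \Sigma^* \times \RE^{+\leq N+1}$, and it can also be seen directly by induction, since each application of a $\Rightarrow^I_N$ rule either leaves the length of the list unchanged, decreases it by one, or increases it by one but only when the current length is at most $N$. Combining the two observations, $Q^E_N \subseteq \bigcup_{k=1}^{N+1} (E^{\to*})^k$.

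Finally, $E^{\to*}$ is finite by part 1 of the preceding proposition, so each $(E^{\to*})^k$ is finite and their union over $1 \leq k \leq N+1$ is a finite union of finite sets, hence finite; therefore $Q^E_N$ is finite. I do not expect a genuine obstacle here: the substantive content --- that every individual regexp occurring in a reachable state is drawn from the finite set $E^{\to*}$ --- is precisely part 5 of the previous proposition, and all that remains is the routine embedding $\to^{I*}_N \subseteq \to^{I*}$ together with the length bound, which is part of the definition of $\to^{I*}_N$.
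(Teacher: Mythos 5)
Your proof is correct and follows exactly the route the paper intends (the paper leaves this proposition without an explicit proof, but the preceding proposition's items 1 and 5 are set up precisely for this argument): states are nonempty lists of length at most $N+1$ over the finite set $E^{\to*}$, using the inclusion $\to^{I*}_N \subseteq \to^{I*}$ to transfer item 5 to the bounded relation. Nothing is missing.
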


\section{Related Work}

Syntactic derivative constructions for regular expressions extended
with constructors for (versions of) the shuffle operation have been
considered, for example, by Sulzmann and Thiemann~\cite{SulzmannT15}
for the Brzozowski derivative and by Broda et al.~\cite{BrodaMMR15}
for the Antimirov derivative. This is relevant to our derivatives
since $L \cdot^I L'$ is by definition a language between $L \cdot L'$
and $L \shuffle L'$. Thus our Brzozowski and Antimirov reordering
derivatives of $EF$ must be between the classical Brzozowski and
Antimirov derivatives of $EF$ and $E \shuffle F$.


\section{Conclusion and Future Work}

We have shown that the Brzozowski and Antimirov derivative operations
generalize to trace closures of regular languages in the form of
reordering derivative operations. The sets of Brzozowski resp.\
Antimirov reordering (parts-of-)derivatives of a regexp are generally
infinite, so the deterministic and nondeterministic automata that they
give, accepting the trace closure, are generally infinite. Still, if
the regexp is star-connected, their appropriate quotients are
finite. Also, the set of $N$-bounded refined Antimirov reordering
parts-of-derivatives is finite without quotienting, and we showed
that, if the language of the regexp has uniform rank at most $N$, the
$N$-truncated refined Antimirov automaton accepts the trace
closure. We also proved that star-connected expressions define
languages with finite uniform rank.

In summary, we have established the following picture.

\[
\small
\hspace*{-7mm}
\xymatrix@R=4pc@C=0pc{
& \textrm{$E$ star-connected} \ar[dl]_{\textrm{Prop.~\ref{prop:starconn-antimirov-finite}}} \ar[d]^{\textrm{Prop.~\ref{prop:starconn-urank}}} \ar@/^1pc/[ddr]^{\textrm{Klunder et al.~\cite{KlunderOS05}}}
& \\
\textrm{Quot of Antim for $\sem{E}^I$ finite}  \ar@/^1pc/[ddr]^{\textrm{Kleene}}
& \textrm{$\sem{E}$ has uniform rank} \ar[dl]_{\textrm{Prop.~\ref{prop:urank-refined-antim-finite}}} \ar[dr]_{\textrm{triv.}}
& \\
\textrm{Refined Antim for $\sem{E}^I$ finite} \ar[dr]_{\textrm{Kleene}} 
& & \textrm{$\sem{E}$ has rank} \ar[dl]^{\textrm{Hashiguchi \cite{Hashiguchi91}}} \\
& \textrm{$\sem{E}^I$ regular} \ar@{<->}[d]^{\textrm{Ochma\'nski~\cite{Ochmanski85}}} 
& \\
& \textrm{$\sem{E}^I = \sem{E'}^I$ for some star-conn $E'$} 
&
}
\]

Our intended application of 
reordering derivatives is operational semantics in the
context of relaxed memory (where, e.g., shadow writes, i.e., writes
from local buffers to shared memory, can be reorderable with other
actions). For sequential composition $EF$ it is usually required that,
to execute any action from $F$, execution of $E$ must have
completed. In the jargon of derivatives, this is to say that for an
action from $F$ to become executable, what is left of $E$ has to have
become nullable (i.e., one can consider the execution of $E$
completed). With reordering derivatives, we can execute an action from
$F$ successfully even when what is left of $E$ is not nullable. It
suffices that some sequence of actions to complete the residual of $E$
is reorderable with the selected action of $F$.

In the definitions of the derivative
operations we only use $I$ in one direction, i.e., we do not make use
of its symmetry. It would be interesting to see if our results can
be generalized to the setting of semi-commutations~\cite{ClerboutL87} 
and which changes are required for that.




\end{document}